\setlist[1]{topsep=0.8ex,itemsep=.3ex}
\newcommand{\upset}[2][\Cs]{\ensuremath{{\mathord\uparrow_{#1}\,#2}}}
\newcommand\vari{quotienting Boolean algebra\xspace}
\newcommand\varis{quotienting Boolean algebras\xspace}
\newcommand\pvari{quotienting lattice\xspace}
\newcommand\pvaris{quotienting lattices\xspace}
\newcommand\Bs{\ensuremath{\mathscr{B}}\xspace}
\newcommand\Cs{\ensuremath{\mathscr{C}}\xspace}
\newcommand\Ds{\ensuremath{\mathscr{D}}\xspace}
\newcommand\Hs{\ensuremath{\mathscr{H}}\xspace}
\newcommand\Ss{\ensuremath{\mathscr{S}}\xspace}
\newcommand\Xs{\ensuremath{\mathscr{X}}\xspace}
\newcommand\nfa{{{NFA}}\xspace}
\renewcommand\min{\ensuremath{\text{\scriptsize min}}\xspace}
\renewcommand\max{\ensuremath{\text{\scriptsize max}}\xspace}
\newcommand\sigenr{\ensuremath{<,+1,\min,\max,\varepsilon}\xspace}
\newcommand\sigen{\ensuremath{<,+1,\min,\max,\varepsilon}\xspace}
\newcommand\fo{\ensuremath{\textup{FO}}\xspace}
\newcommand\fow{\mbox{\ensuremath{\fo(<)}}\xspace}
\newcommand\fows{\mbox{\ensuremath{\fo(\sigenr)}}\xspace}
\newcommand\foeq{\mbox{\ensuremath{\fo(=)}}\xspace}
\newcommand\folab{\mbox{\ensuremath{\fo(\emptyset)}}\xspace}
\newcommand\reg{\ensuremath{\textup{REG}}\xspace}
\newcommand\sic[1]{\ensuremath{\Sigma_{#1}}\xspace}
\newcommand\sicu{\sic{1}}
\newcommand\sicd{\sic{2}}
\newcommand\sict{\sic{3}}
\newcommand\pic[1]{\ensuremath{\Pi_{#1}}\xspace}
\newcommand\picu{\pic{1}}
\newcommand\picd{\pic{2}}
\newcommand\pict{\pic{3}}
\newcommand\bsc[1]{\ensuremath{\Bs\Sigma_{#1}}\xspace}
\newcommand\bsw[1]{\ensuremath{\Bs\Sigma_{#1}(<)}\xspace}
\newcommand\bscu{\bsc{1}}
\newcommand\bscd{\bsc{2}}
\newcommand\bswd{\bsw{2}}
\newcommand\bsct{\bsc{3}}
\newcommand\pt{\ensuremath{\textup{PT}}\xspace}
\newcommand\wat{\ensuremath{\textup{WAT}}\xspace}
\newcommand\at{\ensuremath{\textup{AT}}\xspace}
\newcommand\sfr{\ensuremath{\textup{SF}}\xspace}
\newcommand\patt[1]{\ensuremath{\ensuremath{#1\textup{-ATT}}}\xspace}
\newcommand\datt{\patt{d}}
\newcommand\att{\ensuremath{\textup{ATT}}\xspace}
\newcommand\sttp[1]{\ensuremath{\textup{ST}[#1]}\xspace}
\newcommand\bool[1]{\ensuremath{Bool(#1)}\xspace}
\newcommand\pol[1]{\ensuremath{Pol(#1)}\xspace}
\newcommand\bpol[1]{\ensuremath{BPol(#1)}\xspace}
\newcommand\polp[2]{\ensuremath{Pol_{#2}(#1)}\xspace}
\newcommand\polk[1]{\polp{#1}{k}}
\newcommand\bpolp[2]{\ensuremath{BPol_{#2}(#1)}\xspace}
\newcommand\bpolk[1]{\bpolp{#1}{k}}
\tikzstyle{char}=[anchor=mid,inner sep=0pt]
\tikzstyle{pebb}=[line width=1.5pt,->]
\tikzstyle{fiar}=[shorten >= 1pt,thick,->]
\tikzstyle{bag}=[inner sep=1pt]
\tikzstyle{non}=[inner sep=1pt]
\tikzstyle{lbox}=[rounded corners=5pt,draw=black!60,very thick,align=center]
\tikzstyle{wbox}=[rounded corners=5pt,align=center,minimum height=0.65cm,inner ysep=0pt]
\tikzstyle{gbox}=[rounded corners=5pt,fill=green!20,align=center,minimum height=0.65cm,inner ysep=0pt]
\tikzstyle{bbox}=[rounded corners=5pt,fill=blue!20,align=center,minimum height=0.65cm,inner ysep=0pt]
\tikzstyle{rbox}=[rounded corners=5pt,fill=red!20,align=center,minimum height=0.65cm,inner ysep=0pt]
\tikzstyle{ledg}=[draw=black,very thick]
\tikzstyle{linc}=[fill=white,draw=black,rotate=90,inner sep=2pt,thick,circle]
\tikzstyle{linc2}=[fill=white,draw=black,rotate=0,inner sep=2pt,thick,circle]
\tikzstyle{leg}=[draw,minimum width = 1.0cm,minimum height = 0.5cm]
\tikzstyle{tag}=[draw,fill=white,sloped,circle,inner sep=1pt]
\tikzstyle{word}=[draw,very thick,rectangle,rounded corners=2pt,anchor=west,minimum height=0.5cm]
\tikzstyle{word2}=[rectangle,rounded corners=2pt,anchor=west,minimum height=0.5cm]
\tikzstyle{stmono}=[inner sep=2pt]
\tikzset{every state/.style={draw=blue!50!green,very thick,fill=blue!50!green!20}}
\tikzset{statesub/.style={state,minimum size=1.3cm,inner sep=1pt}}
\tikzset{pattstate/.style={state,draw=red!50!yellow,line width=2pt,fill=red!50!yellow!20}}
\tikzset{pdotstate/.style={state,minimum size=0.75cm,inner sep=0.5pt,draw=red!50!yellow,line
    width=2pt,dashed,fill=red!50!yellow!20}}
\tikzstyle{trans}=[shorten >= 1pt,thick,->]
\tikzstyle{initial by arrow}=   [after node path=
\tikzstyle{accepting by arrow}=   [after node path=
\newcommand\mhline[3][]{
  \pgfmathtruncatemacro\hc{#3-1}
  \node[fit=(#2-#3-1),inner sep=0pt](R){};
  \node[fit=(#2-\hc-1),inner sep=0pt](L){};
  \node (K) at ($(R)!0.5!(L)$) {};
  \draw[#1] (K -| #2.west) -- (K -| #2.east);
}
\newcommand\polrelp[1]{\ensuremath{\leqslant_{#1}}\xspace}
\newcommand\polrelk{\polrelp{k}}
\newcommand\bpolrelp[1]{\ensuremath{\simeq_{#1}}\xspace}
\newcommand\bpolrelk{\bpolrelp{k}}
\newcommand\atteq[1]{\ensuremath{\simeq^{#1}}\xspace}
\newcommand\datteq{\atteq{d}}
\newcommand{\dotdp}[1]{\ensuremath{\textup{DD}[#1]}\xspace}
\newcommand{\dotzer}{\dotdp{0}}
\newcommand{\dothone}{\dotdp{\frac{1}{2}}}
\newcommand{\dotone}{\dotdp{1}}
\newcommand{\dothtwo}{\dotdp{\frac{3}{2}}}
\newcommand{\dottwo}{\dotdp{2}}
\newcommand{\doththree}{\dotdp{\frac{5}{2}}}
\newcommand{\dotthree}{\dotdp{3}}
\newcommand{\dothfour}{\dotdp{\frac{7}{2}}}
\newcommand{\stzer}{\sttp{0}}
\newcommand{\sthone}{\sttp{\frac{1}{2}}}
\newcommand{\stone}{\sttp{1}}
\newcommand{\sthtwo}{\sttp{\frac{3}{2}}}
\newcommand{\sttwo}{\sttp{2}}
\newcommand{\sththree}{\sttp{\frac{5}{2}}}
\newcommand{\stthree}{\sttp{3}}
\newcommand{\sthfour}{\sttp{\frac{7}{2}}}
\tikzstyle{nor}=[minimum size=0.35cm,draw,rectangle,inner sep=2pt]
\tikzstyle{nod}=[minimum size=0.35cm,draw,circle,inner sep=2pt]
\tikzstyle{nof}=[minimum size=0.35cm,draw,circle,double,double distance=1pt]
\tikzstyle{nol}=[minimum size=0.35cm,draw,rectangle,inner sep=1pt,rotate=90]
\tikzstyle{ar}=[line width=0.5pt,->,double]
\tikzstyle{siar}=[line width=1.5pt,->]
\theoremstyle{plain}
\newtheorem{theorem}{Theorem}[section]
\newtheorem{proposition}[theorem]{Proposition}
\newtheorem{lemma}[theorem]{Lemma}
\newtheorem{corollary}[theorem]{Corollary}
\newtheorem{fact}[theorem]{Fact}
\newtheorem{problem}[theorem]{Problem}
\newtheorem{example}[theorem]{Example}
\newtheorem*{claim}{Claim}
\newtheorem*{remark}{Remark}
\theoremstyle{definition}
\newcommand\cont[1]{\ensuremath{{\mathord{\mathrm{alph}}}(#1)}\xspace}
\newcommand\nat{\ensuremath{\mathbb{N}}\xspace}
\definecolor{ocre}{RGB}{40,130,80}
\definecolor{bookgreen}{RGB}{40,130,80}
\definecolor{bookblue}{RGB}{50,110,150}
\definecolor{bookred}{RGB}{180,15,47}
\let\dropQED\relax
\title{Generic Results for Concatenation Hierarchies}
\author{Thomas Place}
\address{LaBRI, Bordeaux University, France}
\email{tplace@labri.fr}
\urladdr{www.labri.fr/perso/tplace}
\thanks{Funded by the DeLTA project (ANR-16-CE40-0007)}
\author{Marc Zeitoun}
\address{LaBRI, Bordeaux University, France}
\email{mz@labri.fr}
\urladdr{www.labri.fr/perso/zeitoun}
\thanks{Funded by the DeLTA project (ANR-16-CE40-0007)}
\begin{document}

\begin{abstract}
  In the theory of formal languages, the understanding of concatenation hierarchies of regular languages is one of the most fundamental and challenging  topic. In this paper, we survey progress made in the comprehension of this problem since 1971, and we establish new generic statements regarding this~problem.
\end{abstract}

\maketitle

\section{Introduction}\label{sec:introduction}
This paper has a dual vocation. The first is to outline progress seen during the last 50 years about \emph{concatenation hierarchies} of regular languages. The second is to provide \emph{generic statements} and \emph{elementary proofs} of some of the core results on this topic, which were obtained previously in restricted cases. In this introduction, we present the historical background, first highlighting the motivations and the key ideas that emerged since the mid 60s. In a second part, we describe the contributions of the paper, which are either new proofs of existing results or generalizations thereof.

\medskip\noindent
\textbf{Historical background: a short survey of 50 years of research.} Concatenation hierarchies were introduced in order to understand the interplay between two basic constructs used to build \emph{regular languages}: Boolean operations and concatenation. The story started in 1956 with Kleene's theorem~\cite{kleene}, one of the key results in automata theory. It states that languages of finite words recognized by finite automata are exactly the ones that can be described by regular expressions, \emph{i.e.}, are built from the singleton languages and the empty set using a finite number of times operations among three basic ones: union, concatenation, and iteration (a.k.a.\ Kleene~star).

As Kleene's theorem provides another syntax for regular languages, it makes it possible to classify them according to the hardness of describing a language by such an expression. The notion of star-height was designed for this purpose. The \emph{star-height} of a regular expression is its maximum number of nested Kleene stars. The \emph{star-height} of a regular language is the minimum among the star-heights of all regular expressions that define the language. Since there are languages of arbitrary star-height~\cite{eggan1963,DejeanSchutz:1966}, this makes the notion an appropriate complexity measure, and justifies the question of computing the star-height of a regular language, which was raised in 1963 by Eggan~\cite{eggan1963} (see also~Brzozowski~\cite{BrzozowskiOpen80}): ``Given a regular language and a natural number~$n$, is there an expression of star-height $n$ defining the language?''

This question, called the \emph{star-height problem}, is an instance of the \emph{membership problem}. Given a class $\Cs$ of regular
languages, the membership problem for \Cs simply asks whether \Cs is a decidable class, that is:

\smallskip
\begin{tabular}{rl}
  {\bf Input:\quad}  &  A regular language $L$. \\
  {\bf Output:\quad} &  Does $L$ belong to \Cs?
\end{tabular}

\smallskip Thus, the star-height problem asks whether membership is decidable for each class $\Hs_n$ consisting of languages of star-height~$n$. It was first solved in 1988 by Hashiguchi~\cite{Hashiguchi:1988}, but it took 17 more years to obtain simpler proofs, see~\cite{Kirsten:Distance-desert-automata-star:2005,BookSaka,mb-SHviagames}.

\smallskip
Kleene's theorem also implies that adding \emph{complement} to our set of basic operations does not make it possible to define more languages. Therefore, instead of just considering regular expressions, one may consider \emph{generalized} regular expressions, where complement is allowed (in addition to union, concatenation and Kleene star). This yields the notion of \emph{generalized star-height}, which is defined as the star-height, but replacing ``regular expression'' by ``generalized regular expression''. One may then ask the same question: is there an algorithm to compute the \emph{generalized star-height} of a regular language? Despite its simple statement, this question, also raised in 1980 by Brzozowski~\cite{BrzozowskiOpen80,brpbs80}, is still open. Even more, one does not know whether there exists a regular language of generalized star-height greater than~1. In other terms, membership is open for the class of languages of generalized star-height~1 (see~\cite{jep-openreg35} for a historical presentation).

\medskip This makes it relevant to already focus on languages of generalized star height~0, \emph{i.e.}, that can be described using only union, concatenation and Boolean operations (including complement), but \emph{without} the Kleene star. Such languages are called \emph{star-free}. It turns out that even this restricted problem is difficult. It was solved in 1965 by Schützenberger in a seminal paper.

\begin{theorem}[Schützenberger~\cite{schutzsf}]
  Membership is decidable for the class of star-free languages.
\end{theorem}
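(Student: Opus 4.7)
The plan is to reduce decidability of star-freeness to the decidability of an algebraic property of a finite object canonically computable from~$L$. Concretely, I would prove the stronger statement that $L$ is star-free if and only if its \emph{syntactic monoid} $M(L)$ is \emph{aperiodic}, meaning that there exists $n \geq 1$ with $x^n = x^{n+1}$ for every $x \in M(L)$. Since $M(L)$ can be effectively computed from any \dfa recognizing $L$, and aperiodicity reduces to checking finitely many equations $x^{|M(L)|} = x^{|M(L)|+1}$, this characterization immediately yields the desired algorithm for the membership problem.

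For the easy direction, I would show by structural induction on star-free expressions that the syntactic monoid is aperiodic. The base cases $\emptyset$, $\{\varepsilon\}$ and $\{a\}$ are immediate. For the induction step, I would use that aperiodic monoids form a \emph{variety}: the syntactic monoid of a Boolean combination of languages divides the direct product of their syntactic monoids, and products, submonoids and quotients of aperiodic monoids remain aperiodic. The nontrivial closure property is under marked concatenation $L_1 a L_2$, which can be established either via a direct analysis based on the absence of nontrivial subgroups, or via the Schützenberger product construction.

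The main obstacle is the converse: every language recognized by an aperiodic finite monoid is star-free. My plan would be to proceed by induction on the pair (size of the alphabet, size of the monoid), ordered lexicographically. Given a surjective morphism $\eta \colon A^* \to M$ with $M$ aperiodic and an element $m \in M$, I would express $\eta^{-1}(m)$ as a Boolean combination of marked concatenations of languages recognized by strictly smaller aperiodic monoids. The key is a case analysis on the $\Jrel$-class of~$m$: if $m$ does not lie in the top $\Jrel$-class of $M$, one reduces to a proper submonoid obtained by removing all letters whose image is $\Jrel$-above $m$; otherwise, one picks such a letter $a \in A$ and cuts each word in $\eta^{-1}(m)$ at a canonical (say leftmost) occurrence of~$a$, producing a marked concatenation over the strictly smaller alphabet $A \setminus \{a\}$. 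Aperiodicity is used critically to guarantee that this cut is well-defined and that the predicate selecting the canonical occurrence is itself a Boolean combination of marked concatenations of star-free languages on~$A \setminus \{a\}$; this is the delicate combinatorial heart of the whole argument. Once this inductive step goes through, the theorem follows by combining both directions with the trivial observation that aperiodicity of a finite monoid is decidable.
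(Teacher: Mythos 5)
The paper does not prove this statement: it is quoted as Schützenberger's 1965 theorem, with the proof delegated to the citation~\cite{schutzsf}, so there is no in-paper argument to compare against. Your top-level plan is nonetheless the canonical one---$L$ is star-free if and only if its syntactic monoid is aperiodic, and aperiodicity of a computable finite monoid is trivially decidable---and the easy direction as you outline it (closure of aperiodic monoids under division and direct products for Boolean operations, plus the Schützenberger product for marked concatenation) is sound.

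The sketch of the hard direction, however, has a concrete gap. When $m$ is not in the top $\Jrel$-class, you propose to recurse over the sub-alphabet obtained by discarding every letter whose image is $\Jrel$-above $m$. But $\eta^{-1}(m)$ is not contained in the free monoid over that sub-alphabet: for any $w\in\eta^{-1}(m)$ and any letter $a$ occurring in $w$, $\eta(a)$ is $\Jrel$-above or $\Jrel$-equivalent to $m$, and it is typically \emph{strictly} above for every letter of $w$ (take $m=0$ in any aperiodic monoid with a zero; every letter of $w$ may then sit strictly higher in the $\Jrel$-order). Your proposed reduction therefore throws away exactly the words it needs to describe. The complementary branch is also vacuous: for an aperiodic monoid the top $\Jrel$-class is $\{1\}$, and no letter has image strictly $\Jrel$-above the identity, so ``such a letter $a$'' does not exist. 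A correct induction of the flavour you have in mind---for instance the local-divisor proof of Diekert and Kufleitner, or the classical arguments in Pin's and Straubing's monographs---factors words at occurrences of a fixed letter $c$ with $\eta(c)\neq 1$ and recurses into a strictly smaller aperiodic monoid (such as $\eta(c)M\cap M\eta(c)$ with its induced product), not into a $\Jrel$-filtered sub-alphabet. The overall strategy is salvageable, but the inductive invariant and the case split realizing it need to be replaced.
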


Star-free languages rose to prominence because of numerous characterizations, and in particular the logical one, due to McNaughton and Papert (1971). The key point is that one may describe languages with logical sentences: any word may be viewed as a logical structure made of a linearly ordered sequence of positions, each carrying  a label. In first-order logic over words (denoted by \fow), one may quantify these positions, compare them with a predicate ``$<$'' interpreted as the (strict) linear order, and check their labels (for any letter $a$, a unary ``label'' predicate selecting positions with label $a$ is available). Therefore, each \fow sentence states a property over words and defines the language of all words that satisfy it.

\begin{theorem}[McNaughton \& Papert~\cite{mnpfosf}]\label{thm:mnp}
  For a regular language $L$, the following properties are \hbox{equivalent}:
  \begin{itemize}
  \item $L$ is star-free.
  \item $L$ can be defined by an \fow sentence.
  \end{itemize}
\end{theorem}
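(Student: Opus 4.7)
The plan is to establish the two implications separately. The direction from star-free to \fow-definable is a direct induction, while the converse relies on Schützenberger's theorem, cited just above, which reduces the problem to showing that \fow-definable languages have aperiodic syntactic monoids.

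\textbf{From star-free to \fow-definable.} I would proceed by structural induction on a star-free expression defining $L$. The base cases are easy: $\emptyset$ is defined by $\exists x\,(x\neq x)$, the singleton $\{\varepsilon\}$ by $\forall x\,(x\neq x)$, and each singleton $\{a\}$ by a sentence asserting the existence of a unique position with label $a$. Boolean operations translate transparently into the corresponding logical connectives. The only substantive case is concatenation $L_1\cdot L_2$: given sentences $\varphi_1,\varphi_2$ defining $L_1, L_2$, one defines $L_1 L_2$ by a sentence that either handles the empty-prefix/empty-suffix cases directly (using $\varphi_1$ and $\varphi_2$ applied to the full word or to the empty word) or, otherwise, guesses a cut position $x$ and asserts $\varphi_1^{<x}\wedge\varphi_2^{\geq x}$, where $\varphi_i^{S}$ is the relativization of $\varphi_i$ obtained by restricting every quantifier to range over positions in the definable set $S$. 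Routine book-keeping shows this sentence defines $L_1 L_2$.

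\textbf{From \fow-definable to star-free.} By Schützenberger's theorem it suffices to show that the syntactic monoid $M(L)$ is aperiodic whenever $L$ is defined by an \fow sentence $\varphi$ of quantifier rank $k$. The key ingredient is the following composition lemma for \fow-types of rank $k$: the $k$-type of a concatenation $xy$ is determined by the $k$-types of $x$ and $y$, so the set of $k$-types carries a finite monoid structure and $L$ is a union of classes of $\equiv_k$. An Ehrenfeucht--Fra\"iss\'e argument then shows that for any word $u$ and any $n$ sufficiently large relative to $k$ (e.g.\ $n\geq 2^k$) one has $u^n\equiv_k u^{n+1}$: Duplicator wins the $k$-round game by matching positions that are close to previously played pebbles exactly, and absorbing the extra copy of $u$ inside a ``homogeneous middle'' that Spoiler cannot pinpoint with only $k$ moves, using a halving-distance argument (at round $i$ she can afford to be accurate up to distance $2^{k-i}$). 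Composing, $xu^n y\equiv_k xu^{n+1}y$ for all words $x,y$, hence $xu^ny\in L\iff xu^{n+1}y\in L$, so $u^n$ and $u^{n+1}$ are syntactically equivalent and $M(L)$ is aperiodic.

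The main obstacle is the Ehrenfeucht--Fra\"iss\'e argument for $u^n$ vs $u^{n+1}$. Making the strategy precise requires carefully indexing occurrences of $u$ in both words and maintaining, throughout the play, a partial order- and label-preserving correspondence between chosen positions that also preserves distances up to the resolution available at the current round; the pigeonhole-style bound $n\geq 2^k$ is tight enough that exhibiting Duplicator's response at each round needs some care. Everything else, including the composition lemma and the reduction to aperiodicity via Schützenberger's theorem, is routine once the stabilization lemma is in hand.
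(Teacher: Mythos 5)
The paper does not give a proof of Theorem~\ref{thm:mnp}; it is recalled in the introduction as a classical result of McNaughton and Papert and is cited as background rather than re-established, so there is no in-paper argument to compare against. Your outline is the standard two-step proof and is correct.

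Two remarks worth recording. First, you invoke ``Schützenberger's theorem'' in its algebraic form --- a regular language is star-free if and only if its syntactic monoid is aperiodic --- whereas the paper's statement of Schützenberger's result is only the decidability of membership for star-free languages, which is a corollary of that characterization. The characterization, not the decidability corollary, is exactly what the reduction needs, so be explicit that this is the version being used. Second, the composition lemma (the $k$-type of a concatenation is determined by the $k$-types of its factors) is load-bearing: it is what licenses passing from $u^n\equiv_k u^{n+1}$ to $xu^ny\equiv_k xu^{n+1}y$ for all $x,y$, hence to syntactic equivalence of $u^n$ and $u^{n+1}$, hence to aperiodicity of a surjective image; it deserves a clean statement and proof (by a back-and-forth relay of two separate games) before the EF argument, not just a mention. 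The stabilization bound you quote ($n\geq 2^k$) is in the right range (the usual statement is $u^n\equiv_k u^m$ for all $n,m\geq 2^k-1$, proved by the halving-distance strategy you describe), and the precise constant is immaterial since aperiodicity only needs some bound uniform in $u$. You correctly identify the EF-game stabilization lemma as the one substantive combinatorial step; everything else in both directions is routine once it and the composition lemma are in place.
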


Let us point out that this connection between star-free and first-order definable languages is rather intuitive. Indeed, there is a clear correspondence between union, intersection and complement for star-free languages with the Boolean connectives in \fow sentences. Moreover, concatenation corresponds to existential quantification.

\medskip
Just as the star-height measures how complex a regular language is, a natural complexity for star-free languages is the required number of \emph{alternations} between \emph{concatenation} and \emph{complement} operations for building a given star-free language from basic ones. This led Brzozowski and Cohen~\cite{BrzoDot} to introduce in 1971 a hierarchy of classes of regular languages, called the \emph{dot-depth hierarchy}. It classifies all star-free languages into full levels, indexed by natural numbers: 0, 1, 2,\ldots, and half levels, indexed by half natural numbers: $\frac12$,~$\frac32$,~$\frac52$, etc. Roughly speaking, levels count the number of alternations between concatenation and Boolean operations that are necessary to express a given star-free language.

More formally, the hierarchy is built by using, alternately, two closure operations starting from level 0: \emph{Boolean} and \emph{polynomial} closures.  Given a class of languages \Cs, its \emph{Boolean closure}, denoted \bool{\Cs}, is the smallest Boolean algebra containing \Cs. Polynomial closure is slightly more complicated as it involves \emph{marked concatenation}. Given two languages $K$ and $L$, a marked concatenation of $K$ with $L$ is a language of the form $KaL$ for some $a \in A$. The \emph{polynomial closure} of \Cs, denoted \pol{\Cs}, is the smallest class of languages containing \Cs and closed under union, intersection and marked concatenation (\emph{i.e.}, $K\cup L$, $K\cap L$ and $KaL$ belong to $\Cs$ for $K,L \in \Cs$,~$a \in A$).

\medskip\noindent
The dot-depth hierarchy is now defined as follows:
\begin{itemize}
\item Level 0 is the class $\{\emptyset,\{\varepsilon\},A^+,A^*\}$ (where $A$ is the working alphabet).
\item Each \emph{half level}  $n+\frac{1}{2}$ is the \emph{polynomial closure} of the previous full level~$n$.
\item Each \emph{full level} $n+1$ is the \emph{Boolean closure} of the previous half level  $n+\frac{1}{2}$.
\end{itemize}

A side remark is that this definition is not the original one. First, the historical definition of the dot-depth started from another class at level~0. However, both definitions coincide at level~1 and above. Second, the polynomial closure of a class~\Cs was defined as the smallest class containing~\Cs and closed under \emph{union} and \hbox{concatenation}. This definition is seemingly weaker, as it does not explicitly insist for \pol{\Cs} to be closed under intersection. However, Arfi~\cite{arfi87,arfi91} and Pin~\cite{jep-intersectPOL} showed that the two definitions are equivalent, provided that \Cs satisfies some mild closure~properties.

\medskip

The union of all levels in the dot-depth hierarchy is the whole class of star-free languages. Moreover, Brzozowski and Knast proved in 1978 that the dot-depth hierarchy is strict: any level contains strictly more languages than the previous~one.

\begin{theorem}[Brzozowski \& Knast~\cite{BroKnaStrict}]\label{thm:hintro:ddstrict}
  The dot-depth hierarchy is strict when the alphabet contains at least two letters.
\end{theorem}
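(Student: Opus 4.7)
The plan is to exhibit, for every $n \geq 0$, a witness language $L_n$ that lies in the $(n{+}1)$-th level of the dot-depth hierarchy but not in the $n$-th level. It suffices to work over the alphabet $\{a,b\}$, since enlarging the alphabet can only enlarge every level.

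My first step is to appeal to the logical characterization of the dot-depth hierarchy, due to Thomas: level $n + \tfrac{1}{2}$ coincides with the languages definable by $\siws{n+1}$-sentences, and level $n+1$ coincides with those definable by $\bsws{n+1}$-sentences. This is the natural extension of Theorem~\ref{thm:mnp} to the whole hierarchy, obtained by tracking the quantifier alternations induced by the polynomial and Boolean closures. Under this correspondence, the strictness of the dot-depth hierarchy is reduced to the strictness of the quantifier-alternation hierarchy of \fowss on words.

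Next, I would exhibit concrete witnesses. A natural family is to take $L_n \subseteq \{a,b\}^*$ to be the language of words containing a factor of the form $(ab)^{N_n}$ for a sufficiently fast-growing sequence $N_n$ (e.g.\ a tower of exponentials in $n$). That $L_n$ is definable in $\bsws{n+1}$ is straightforward by explicit syntactic construction. To show $L_n \notin \bsws{n}$, I would exhibit two words $u_n, v_n$, one belonging to $L_n$ and the other not, that are indistinguishable by any $\bsws{n}$-sentence. Classically, this is checked by designing a winning strategy for duplicator in the Ehrenfeucht--Fra\"\i ss\'e game played on $(u_n, v_n)$ with $n$ alternating rounds of $\exists$- and $\forall$-moves over the signature $\{<, +1\}$.

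The main obstacle is this EF-game analysis. One must design $u_n$ and $v_n$ with enough internal redundancy that whenever spoiler picks a position, duplicator can answer in a position whose immediate neighborhood (for the successor predicate) is isomorphic, and whose relative placement in the word (for the order) remains indistinguishable at the remaining depth. The standard technique is to let $u_n$ and $v_n$ be concatenations of many copies of a common pattern, with the number of copies growing as a tower of exponentials in $n$, so that duplicator can always ``shift'' within an as-yet-untouched block. The restriction to alphabets with at least two letters is essential: over a unary alphabet, words are determined by their lengths and a counting argument collapses the hierarchy to a finite number of levels.
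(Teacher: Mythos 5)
Your overall strategy---reduce strictness of the dot-depth hierarchy to strictness of the quantifier-alternation hierarchy of \fows via Thomas's theorem, then separate consecutive logical levels with an \efgame analysis---is a legitimate alternative to the paper's route. The paper instead works entirely at the language level: it shows a general Proposition (\ref{prop:hintro:genstrict}) that any basis which is \emph{non-separating} for an \emph{unambiguous family} yields a strict hierarchy, and verifies non-separation for $\dotzer$ using a stratification of polynomial closure and canonical preorders. The two approaches are related in spirit (the preorder compatibility lemmas~\ref{lem:hintro:propreo1}--\ref{lem:hintro:propreo2} play the role of a duplicator strategy), and the paper itself cites the game-theoretic proof of Thomas as a known alternative. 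One small imprecision: the relevant signature is $\langle\sigenr\rangle$, not just $\{<,+1\}$; Thomas's correspondence needs $\min$, $\max$ and $\varepsilon$ as well.

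However, the witness languages you propose do not work. The language of all words containing a fixed factor $(ab)^{N_n}$ is simply $A^*(ab)^{N_n}A^*$, and for \emph{any} fixed word $w = c_1\cdots c_k$ the language $A^*wA^* = A^*\,c_1\{\varepsilon\}c_2\{\varepsilon\}\cdots\{\varepsilon\}c_k\,A^*$ is a $\dotzer$-monomial, since $\{\varepsilon\}$ and $A^*$ belong to $\dotzer$. Hence every $L_n$ of the form you describe lies in $\dotdp{\frac12}$ (equivalently, is $\Sigma_1(\sigenr)$-definable), no matter how fast $N_n$ grows. In particular, no choice of $u_n,v_n$ can exist with $u_n\in L_n$, $v_n\notin L_n$, and duplicator winning the depth-$n$ game, because any such $u_n, v_n$ would already be distinguished by a $\Sigma_1$ sentence. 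The \efgame strategy you sketch therefore cannot get off the ground. What is needed are witnesses whose hardness grows with the \emph{nesting} of iteration, not with the length of a fixed pattern. The classical Brzozowski--Knast languages $L_1=(ab)^*$ and $L_n=(aL_{n-1}b)^*$ for $n\geq 2$ do exactly this: $L_n$ has dot-depth $n$ but not $n-1$, and the real content of the theorem is the lower bound $L_n\notin\dotdp{n-1}$, which is where the game (or preorder) analysis genuinely bites. Your ``internal redundancy / shifting within blocks'' intuition is correct for that analysis, but it must be applied to nested words of this kind, not to single large powers.
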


This shows in particular that
classes built using Boolean and polynomial closures do not satisfy the same closure properties, in general. Typically, when \Cs is a class of languages, \pol{\Cs} is closed under marked concatenation but \textbf{\emph{not}} under complement, while \bool{\Cs} is closed under complement but {\bf\emph{not}} under marked concatenation. The fact that the hierarchy is strict motivates the investigation of the membership problem for all levels.

\begin{problem}[Membership for the dot-depth hierarchy]
  Given some level in the dot-depth hierarchy, is membership decidable for this level?
\end{problem}

Using the framework developed by Schützenberger in his proof for deciding whether a language is star-free, Knast proved in 1983 that level 1 enjoys decidable membership, via an intricate proof from the combinatorial point of view.
\begin{theorem}[Knast~\cite{knast83}]\label{thm:knast}
  Level 1 in the dot-depth has decidable membership.
\end{theorem}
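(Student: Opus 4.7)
The plan is to follow the algebraic paradigm pioneered by Schützenberger: to every regular language $L$ one associates its finite syntactic semigroup $S(L)$, which is effectively computable from any automaton or monoid presentation of $L$. The goal is to prove that $L$ lies in dot-depth $1$ if and only if $S(L)$ satisfies some algebraic condition which can be checked on a finite semigroup. Once such a condition is formulated as a pseudoidentity in $\omega$-terms, decidability follows by brute-force evaluation on all tuples of elements of $S(L)$.

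Step 1 (finding the invariant). I would first isolate the class \Vb of finite semigroups defined by an $\omega$-identity of the shape
$$(xsy)^{\omega}\, x\, t\, y\, (xsy)^{\omega} \;=\; (xsy)^{\omega} (xsy)^{\omega},$$
or a mild variant thereof, capturing the intuition that in a dot-depth $1$ language, ``long enough'' factorizations stabilize and bounded inner perturbations can be absorbed. The form of this candidate identity is motivated by inspecting syntactic semigroups of simple level $1$ languages such as $A^*uA^*$ and $A^*uA^*vA^*$, and by contrasting them with witnesses of strictness coming from Theorem~\ref{thm:hintro:ddstrict}.

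Step 2 (soundness). I would then prove by structural induction on the Boolean/polynomial construction that any language built from level $0$ languages using union, intersection, complement and marked concatenation has its syntactic semigroup in \Vb. Level $0$ languages have trivial syntactic semigroups, so the base case is immediate. Boolean operations are handled by noting that $S(K_1 \cup K_2)$, $S(K_1 \cap K_2)$ and $S(A^+\setminus K)$ all divide $S(K_1)\times S(K_2)$ (resp.\ $S(K)$), and that \Vb is closed under taking divisors. The delicate case is marked concatenation: one shows that $S(KaL)$ divides a suitable Schützenberger-style product of $S(K)$, $S(L)$ and a two-element semigroup, and that \Vb is stable under this construction.

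Step 3 (completeness). This is the main obstacle and the combinatorial heart of Knast's argument. Given $L$ with $S(L) \in \Vb$, one must concretely write $L$ as a finite Boolean combination of marked products of level $0$ languages. The strategy is to analyse the syntactic morphism $\eta\colon A^+ \to S(L)$ one image at a time: for every $s \in S(L)$, exhibit $\eta^{-1}(s)$ as a Boolean combination of languages of the form $A^* u_1 A^* \cdots u_k A^*$ (possibly anchored on the left or right by fixed prefixes or suffixes). The key consequence of the Step $1$ identity that one invokes is that for sufficiently long words $w$, the value $\eta(w)$ is determined by the two bounded endpoint factors of $w$ together with the set of bounded internal factors that occur. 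Turning this semantic stabilization into an explicit level $1$ expression is precisely where Knast's intricate bookkeeping lives; once accomplished, packaging each class $\eta^{-1}(s)$ as a dot-depth $1$ expression and taking the union over $s \in \eta(L)$ is routine.
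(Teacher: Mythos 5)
The paper itself does not re-prove Knast's theorem from scratch: it derives it as an immediate corollary of the generic Theorem~\ref{thm:sep:hiera} (separation, hence membership, is decidable for $\pol{\Cs}$, $\bpol{\Cs}$ and $\pol{\bpol{\Cs}}$ for any finite \vari~\Cs, a result established in~\cite{pzboolpol}), applied to $\Cs=\dotdp{0}$ since $\dotdp{1}=\bpol{\dotdp{0}}$. You instead reconstruct Knast's original algebraic route through syntactic semigroups and an $\omega$-pseudoidentity, which is a genuinely different argument. That is a legitimate choice, but your Step~2 contains a gap that would make the argument prove far too much.

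You propose to show that \Vb is stable under the Schützenberger-style product encoding marked concatenation, with inductive hypothesis $S(K),S(L)\in\Vb$. Together with closure under division and finite direct products (which every pseudovariety has), this would force \Vb to contain the whole pseudovariety of aperiodic semigroups; equivalently, every star-free language would then have its syntactic semigroup in \Vb and hence have dot-depth~$1$. That contradicts the strictness of the hierarchy (Theorem~\ref{thm:hintro:ddstrict}), so the pseudovariety characterizing dot-depth~$1$ cannot be closed under Schützenberger products. The correct soundness statement is much weaker: marked concatenations may only be applied at the base, i.e.\ one shows that the iterated marked products $K_0a_1K_1\cdots a_nK_n$ with all $K_i$ of level~$0$ (so with trivial syntactic semigroups) already have syntactic semigroup in \Vb, and then uses the pseudovariety closure properties only to absorb the subsequent Boolean operations. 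Your ``structural induction on the Boolean/polynomial construction'' must therefore respect the two-stage structure $\bool{\pol{\cdot}}$, rather than treating union, intersection, complement and marked concatenation as freely interleavable constructors with a single inductive invariant ``$S\in\Vb$''. A smaller issue: the candidate identity $(xsy)^{\omega}xty(xsy)^{\omega}=(xsy)^{\omega}(xsy)^{\omega}$ is not Knast's, which involves two idempotents and four further parameters; pinning down the right identity and then carrying out the completeness argument of Step~3 is exactly the hard content of the theorem, as you yourself acknowledge.
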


The case of half levels required to adapt Schützenberger's approach, which was designed to deal with Boolean algebras only (recall that half levels are \emph{not} Boolean algebras, otherwise the hierarchy would collapse). In 1995, Pin~\cite{pinordered} modified the framework to handle half levels. Membership was then solved for level~$\frac12$ by Pin and Weil in 2002, as well as for level $\frac32$ by Gla\ss er and Schmitz in 2007.
\begin{theorem}[Pin \& Weil~\cite{pwdelta,pwdelta2,PinWeilVD}, Gla\ss er \& Schmitz~\cite{glasserdd}]\label{thm:dd32}
  Levels~$\frac12$ and $\frac32$ in the dot-depth hierarchy have decidable membership.
\end{theorem}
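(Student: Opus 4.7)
The plan is to use the classical algebraic approach: to each regular language $L$ one assigns its syntactic ordered monoid $M(L)$, a finite structure effectively computable from any automaton for $L$, and one proves that $L$ belongs to the class under consideration if and only if $M(L)$ satisfies an explicit finite list of ordered identities. Since identity-checking on a finite ordered monoid is obviously decidable, this yields decidability of membership. Working with \emph{ordered} rather than plain monoids is forced by the fact that half levels of the dot-depth are closed under union, intersection and marked concatenation but \emph{not} complement; they correspond to \emph{positive} varieties of languages, which by Pin's correspondence match pseudo-varieties of finite ordered monoids.

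For level $\frac12$, which is the polynomial closure of $\{\emptyset,\{\varepsilon\},A^+,A^*\}$, the typical elements are finite unions of languages of the form $u_0 A^* u_1 A^* \cdots A^* u_k$. Following Pin and Weil, one pinpoints an explicit $\omega$-identity on ordered semigroups that must hold in $M(L)$ for $L$ to belong to the level. Soundness is a routine induction on the length of an expression building $L$. Completeness, the nontrivial direction, is carried out by producing, for every language recognized by a finite ordered monoid satisfying the identity, an explicit positive combination of the basic building blocks that defines it.

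For level $\frac32$, equal to $\pol{\bool{\pol{\text{level 0}}}}$, one first invokes Theorem~\ref{thm:knast} of Knast to obtain an effective description of the pseudo-variety of ordered monoids corresponding to level~$1$. One then analyses the action of the polynomial closure operator on this pseudo-variety at the algebraic level and exhibits a finite list of additional ordered identities which, together with those defining level~$1$, characterize membership in level~$\frac32$. Soundness is again by induction on expressions; completeness requires finding, inside any ordered monoid satisfying all the candidate identities, enough structure---typically a factorization into \Jrel-chains dictated by the identities---to reconstruct a polynomial expression for any language it recognizes.

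The main obstacle is the completeness half for level $\frac32$. For level $\frac12$ the building blocks are simple enough that the converse direction is essentially a direct construction. For level $\frac32$, however, the base class $\bool{\pol{\text{level 0}}}$ already has the rich structure of Knast's class, which interacts nontrivially with the outer marked concatenations. Consequently, both the statement of the right identities and the proof that any finite ordered monoid satisfying them arises as the syntactic image of a language in the class require substantial combinatorial work on patterns of idempotents; this is where the Gla\ss er--Schmitz proof concentrates its effort.
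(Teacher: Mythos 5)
Your outline is mathematically sound and matches the historical literature, but it is a genuinely different route from the one the paper takes. The paper does not prove Theorem~\ref{thm:dd32} by exhibiting effective algebraic characterizations; instead it obtains it as an immediate corollary of the generic separation result, Theorem~\ref{thm:sep:hiera}: for \emph{any} finite \vari $\Cs$ used as a basis, levels $\frac12$, $1$ and $\frac32$ of the concatenation hierarchy have decidable separation (hence membership, by the standard reduction, since $L\in\Cs$ iff $L$ is $\Cs$-separable from its complement). Applying this to $\Cs=\dotzer=\{\emptyset,\{\varepsilon\},A^+,A^*\}$, which is a finite \vari, gives Theorem~\ref{thm:dd32} at once, and the same instantiation also yields Theorems~\ref{thm:knast}, \ref{thm:st1} and \ref{thm:st32}. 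Your approach is the original Pin--Weil / Gla\ss er--Schmitz one: compute the syntactic ordered algebraic structure of the input, write explicit $\omega$-identities, and prove soundness and completeness. What your route buys is a concrete, finitely presented decision criterion specific to the dot-depth; what the paper's route buys is uniformity (one statement for every finite basis), a stronger conclusion (separation, not just membership), and the ability to sidestep re-deriving the hard combinatorics of Knast's theorem and of Gla\ss er--Schmitz as black boxes, since Theorem~\ref{thm:sep:hiera} (proved in the cited covering papers) subsumes all of them.

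One technical caveat in your write-up: for the dot-depth hierarchy the Eilenberg-type correspondence goes through finite ordered \emph{semigroups} (or, more precisely, the appropriate ``ne-variety'' / stamp framework), not ordered monoids, because level~$0$ already contains $\{\varepsilon\}$ and $A^+$. You switch between ``ordered monoids'' and ``ordered semigroups'' in the text; for dot-depth levels one must consistently use the semigroup side, otherwise the identities you intend to state (those of Knast, and the $\omega$-identities for the half levels) are not even well-formed. This is a known but easy-to-overlook subtlety, and it is one reason the paper prefers the purely language-theoretic framework of quotienting lattices, which avoids the semigroup/monoid bookkeeping altogether.
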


One may now wonder why level~0 in the dot-depth hierarchy is $\{\emptyset,\{\varepsilon\},A^+,A^*\}$. It would be natural to start from $\{\emptyset,A^*\}$, and to apply the very same construction for higher levels. This is exactly the definition of the Straubing-Thérien hierarchy, introduced independently in 1981 by Straubing~\cite{StrauConcat} and Thérien~\cite{TheConcat}. Its definition follows the same scheme as that of the dot-depth, except that level~0 is $\{\emptyset,A^*\}$.

Like the dot-depth, the Straubing-Thérien hierarchy is strict and spans the whole class of star-free languages. One can show this by proving that level $n$ in the dot-depth hierarchy sits between levels $n$ and $n+1$ in the Straubing-Thérien hierarchy. This makes membership a relevant problem for each level in this hierarchy as well.

\begin{problem}[Membership for the Straubing-Thérien hierarchy]
  Given some level in the Straubing-Thérien hierarchy, is membership decidable for this level?
\end{problem}

Just as for the dot-depth hierarchy, level 1 in the Straubing-Thérien hierarchy was shown to be decidable by Simon in 1972 (actually before the formal definition of the hierarchy itself). The first half levels were solved in 1987 by~Arfi who relied, for level $\frac32$, on a difficult result of Hashiguchi~\cite{Hashiguchi:1983}. In 1995, Pin and Weil presented a self-contained proof using the adaptation~\cite{pinordered} of the framework of Schützenberger to classes that are not closed under complement.
\begin{theorem}[Simon~\cite{simonphd,simonthm}]\label{thm:st1}
  Level 1 in the Straubing-Thérien hierarchy has decidable membership.
\end{theorem}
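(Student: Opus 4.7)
The plan is to establish Simon's algebraic characterization: a regular language $L$ over an alphabet $A$ lies in level~1 of the Straubing-Thérien hierarchy if and only if its syntactic monoid $M(L)$ is $\Jrel$-trivial (i.e., Green's relation $\Jrel$ reduces to equality). Since $M(L)$ is effectively computable from any finite automaton or regular expression for $L$, and since $\Jrel$-triviality of a finite monoid is itself a computable property, this characterization immediately yields the decidability of level-1 membership.

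Before proving the characterization, I would identify level~1 with the class $\pt$ of piecewise testable languages. Applying $Pol$ to $\{\emptyset, A^*\}$ produces finite unions of languages of the form $A^* a_1 A^* \cdots a_n A^*$ with $a_i \in A$, which are exactly the sets of words admitting $a_1 \cdots a_n$ as a (scattered) subword. The Boolean closure of these ``piece'' languages is by definition $\pt$, so the problem reduces to showing that a regular $L$ belongs to $\pt$ if and only if $M(L)$ is $\Jrel$-trivial.

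For the easy direction, given $k \in \nat$ I would introduce the congruence $\sim_k$ on $A^*$ defined by $u \sim_k v$ iff $u$ and $v$ contain the same subwords of length at most $k$. This congruence has finite index, and a direct calculation shows that $A^* / {\sim_k}$ is $\Jrel$-trivial (two words that multiply into one another from both sides necessarily share the same subwords up to length $k$). If $L$ is a Boolean combination of piece languages of length at most $k$, then $L$ is saturated by $\sim_k$, so $M(L)$ is a quotient of $A^*/{\sim_k}$ and remains $\Jrel$-trivial.

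The hard direction is the main obstacle. Given $L$ whose syntactic monoid $M = M(L)$ is $\Jrel$-trivial, with syntactic morphism $\eta: A^* \to M$, I would prove that every fiber $\eta^{-1}(m)$ for $m \in M$ is piecewise testable, arguing by induction on the pair $(|A|, |M|)$ ordered lexicographically. The inductive step rests on analyzing words of $\eta^{-1}(m)$ via their content $c(m) \subseteq A$ (the minimal alphabet whose image contains $m$) and cutting each word at the first and last occurrence of letters that drop the computed element into a $\Jrel$-class strictly below $m$. $\Jrel$-triviality guarantees that the resulting factors are images under morphisms into $\Jrel$-trivial monoids which are strictly smaller than $M$, either because they use a strictly smaller alphabet or because they avoid $m$'s $\Jrel$-class, so the inductive hypothesis provides piecewise testable descriptions of them. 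Gluing these back together using marked concatenation and Boolean operations recovers $\eta^{-1}(m)$ as a piecewise testable set. The combinatorial heart of the argument lies in showing that this decomposition is exhaustive and that the glued description is truly equal to $\eta^{-1}(m)$, which is the delicate step of Simon's original proof.
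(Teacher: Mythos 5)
Your proof is correct, but it takes a genuinely different route from the paper. You give the classical argument of Simon: identify level~1 with the class $\pt$ of piecewise testable languages, establish the algebraic characterization that a regular $L$ belongs to $\pt$ if and only if its syntactic monoid is $\Jrel$-trivial, and observe that $\Jrel$-triviality of a computable finite monoid is decidable. The paper, by contrast, does not prove this theorem from scratch: Theorem~\ref{thm:st1} is stated in the introduction as a historical fact and is recovered in Section~\ref{sec:memb-separ-conc} as an immediate corollary of Theorem~\ref{thm:sep:hiera}, which asserts that levels $\frac12$, $1$ and $\frac32$ of \emph{any} finitely based concatenation hierarchy have decidable \emph{separation} (hence membership); level~1 of the Straubing-Thérien hierarchy qualifies because its basis $\{\emptyset,A^*\}$ is finite. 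Theorem~\ref{thm:sep:hiera} itself is not proved in this paper either; its proof is delegated to the authors' earlier work. Your route buys a concrete, effective algebraic characterization that is sharp and illuminating, but it is tailored to this single level of this single hierarchy, and the hard direction you sketch (induction on $(|A|,|M|)$ with cuts at the first and last alphabet-dropping letters) is exactly the delicate combinatorial core of Simon's theorem that you would have to flesh out. The paper's route buys uniformity across all finitely based hierarchies and decides the strictly stronger separation problem, at the cost of resting on a substantially harder theorem whose proof lies outside the present text.
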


\begin{theorem}[Arfi~\cite{arfi87,arfi91}, Pin \& Weil~\cite{pwdelta,pwdelta2}]\label{thm:st32}
  Levels $\frac12$ and $\frac32$ in the Straubing-Thérien hierarchy have decidable membership.
\end{theorem}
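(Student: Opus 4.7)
The plan is to leverage the algebraic characterization of positive varieties of regular languages, as developed by Pin~\cite{pinordered} to extend Eilenberg's theorem to classes that are not closed under complement. Since both half levels $\frac12$ and $\frac32$ are closed under finite unions, finite intersections, quotients, and inverse images of monoid morphisms, but not under complement, they are \emph{positive} varieties and hence correspond to pseudovarieties of finite ordered monoids definable by sets of ordered profinite identities. Once such identities are pinpointed, decidability of membership for a regular language $L$ reduces to effectively computing its ordered syntactic monoid $(M(L), \leq_L)$ and then verifying that the relevant identities hold --- both steps being algorithmic since $M(L)$ is finite.

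For level $\frac12$, the languages in $\pol{\{\emptyset,A^*\}}$ are precisely the finite unions of \emph{shuffle ideals} $A^*a_1A^*\cdots a_nA^*$. The first step is to prove that $L \in \pol{\{\emptyset,A^*\}}$ if and only if its ordered syntactic monoid satisfies $1 \leq x$ for every element $x$. Soundness is a direct verification on the generating languages, combined with stability of this class of ordered monoids under the operations corresponding to union, intersection, marked concatenation and quotient. For completeness, one reconstructs $L$ as a union of marked products of $A^*$: each element $s$ of the syntactic positive image of $L$ contributes a shuffle pattern read off from some word mapping to $s$, and the condition $1 \leq x$ ensures that inserting extra letters cannot destroy membership --- exactly the defining closure property of shuffle ideals.

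For level $\frac32$, one must characterize $\pol{\bool{\pol{\{\emptyset,A^*\}}}}$, which by Theorem~\ref{thm:st1} coincides with $\pol{\Cs}$ where $\Cs$ is the class of piecewise testable languages. The strategy is again to exhibit an ordered profinite identity --- typically involving $\omega$-powers of elements whose alphabet contents agree --- that defines the corresponding pseudovariety of ordered monoids. Soundness is a routine check on generators and closure properties. Completeness is the main obstacle: given that $(M(L),\leq_L)$ satisfies the identity, one must effectively exhibit a decomposition of $L$ as a polynomial of piecewise testable languages. Arfi's original proof deduces this step from a deep combinatorial result of Hashiguchi on limitedness, while the self-contained argument of Pin and Weil proceeds by an induction on the structure of $(M(L),\leq_L)$, carefully factoring each element through idempotents and tracking alphabet contents along factorizations. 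Once the identity characterization is established, membership is decidable because testing a finite identity on a finite effectively computable ordered monoid is routine.
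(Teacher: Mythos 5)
Your proposal is essentially the \emph{original} route of Arfi and Pin \& Weil, and as such it is correct in outline, but it is not the route this paper takes. The paper does not characterize levels $\frac12$ or $\frac32$ by profinite identities on ordered syntactic monoids; instead, after noting that the basis $\sttp{0}=\{\emptyset,A^*\}$ is a finite \vari, it invokes the generic Theorem~\ref{thm:sep:hiera} (decidability of \emph{separation} --- and hence membership --- for $\pol\Cs$, $\bpol\Cs$, and $\pol{\bpol\Cs}$ whenever $\Cs$ is a finite \vari) and records Theorem~\ref{thm:st32} as an immediate corollary. This is stated explicitly in Section~\ref{sec:memb-separ-conc}, right after Theorem~\ref{thm:sep:hiera}; the paper deliberately does not reprove that theorem here, deferring to~\cite{pzboolpol,pzcovering}.

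The two approaches buy different things. Your algebraic route gives concrete, checkable ordered identities ($1\leq x$ for level $\frac12$; an $\omega$-power identity with matched alphabet contents for level $\frac32$), and it is self-contained in the sense that it does not depend on separation technology. It does, however, have to prove a hard completeness direction for level $\frac32$ (which you correctly flag as the crux --- Arfi via Hashiguchi, or Pin--Weil's structural induction), and it is tailored to this particular hierarchy. The paper's route is coarser-grained but strictly more general: it works uniformly for any finite basis, gives decidability of separation (strictly stronger than membership), and --- combined with the alphabet trick (Theorem~\ref{thm:hintro:alphatrick}) --- even propagates to levels $2$, $\frac52$ and~$\frac72$. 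One small caveat on your write-up: the paper's notion of \pvari requires only closure under quotients, not under inverse morphisms, so the ``positive variety'' framing you use is a strictly stronger hypothesis; it happens to hold for the Straubing-Th\'erien levels, but you should say so explicitly rather than assume it, since it is what makes the ordered-Eilenberg correspondence available.
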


In fact, the dot-depth and the Straubing-Thérien hierarchies are closely related. First, as already stated, they are interleaved. More importantly, Straubing proved in 1985 an effective reduction between the membership problems associated to their full levels, which Pin and Weil adapted to half levels in 2002.

\begin{theorem}[Straubing~\cite{StrauVD}, Pin \& Weil~\cite{PinWeilVD}]\label{STred}\label{thm:ddredst}
  Membership for a level in the dot-depth reduces to membership for the same level in the Straubing-Thérien~\hbox{hierarchy}.
\end{theorem}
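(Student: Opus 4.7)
The plan is to exhibit a \emph{uniform, computable} transformation $L \mapsto \widetilde{L}$ from languages over an alphabet $A$ to languages over the enlarged alphabet $B = A \cup \{\#\}$, where $\#$ is a fresh symbol, such that $L$ belongs to level~$n$ of the dot-depth hierarchy over $A$ if and only if $\widetilde{L}$ belongs to level~$n$ of the Straubing-Thérien hierarchy over~$B$. The natural candidate is the padding map $\widetilde{L} = \#\, L\, \#$: the fresh letters $\#$ serve as explicit markers for the two endpoints of an $L$-word, providing inside $\widetilde{L}$ the ``edge awareness'' which is built into level~$0$ of the dot-depth hierarchy but absent from level~$0$ of the Straubing-Thérien hierarchy.

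The argument then splits into two implications, both proved by induction on the level~$n$, respecting the alternation between the half levels (polynomial closure) and the full levels (Boolean closure). For the direction dot-depth~$\Rightarrow$~Straubing-Thérien, I would start from an expression witnessing that $L$ lies in dot-depth level~$n$ over $A$ and lift every building block to one over $B$. The crucial observation is that the ``anchor'' languages $\{\varepsilon\}$, $A^+$ and $A^*$ available at dot-depth~$0$ become, once placed inside $\#\cdots\#$, languages such as $\#\,A^*\,\#$ and $\#\,A^+\,\#$, which can be written over~$B$ using marked concatenations around the letter~$\#$ and hence already sit at a low Straubing-Thérien level. For the converse, I would take a Straubing-Thérien expression for $\widetilde{L}$ over~$B$ and intersect every subexpression with the regular language $\#\,A^*\,\#$ to extract a matching expression for $L$ over~$A$; the key point is that the only letter which can serve as the outermost marker of any word of $\widetilde{L}$ is the fresh symbol~$\#$, so intersection with $\#\,A^*\,\#$ effectively strips away the padding.

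The main obstacle lies in controlling the occurrences of $\#$ \emph{inside} a Straubing-Thérien expression for $\widetilde{L}$: the symbol $\#$ may appear in the middle of sub-words, not only at their endpoints, and a naive substitution will not produce a dot-depth expression of the same level for~$L$. To handle this I plan to rely on a normal form for polynomial closures in which the marked concatenations are sorted so that the outermost two occurrences of $\#$ are recognisable as the true endpoints and all internal occurrences are absorbed into intersections with $\#\,A^*\,\#$; this is exactly the place where the fact that polynomial closure commutes with intersection (Arfi, Pin) is essential. Carrying out this combinatorial surgery one level at a time mirrors precisely the inductive definition of both hierarchies, and once it is in place both the reduction and its correctness proof become routine. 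The statement then follows because the map $L \mapsto \#\,L\,\#$ is plainly effective on any reasonable presentation of~$L$.
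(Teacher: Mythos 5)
Your padding map $L\mapsto\#L\#$ does not preserve the level, so the equivalence you announce is false already at the base of the induction. With $A=\{a,b\}$ and $B=A\cup\{\#\}$, level $0$ of the dot-depth over $A$ is $\{\emptyset,\{\varepsilon\},A^+,A^*\}$, whereas level $0$ of the Straubing-Thérien hierarchy over $B$ is only $\{\emptyset,B^*\}$; yet $\#A^+\#$ is neither $\emptyset$ nor $B^*$, so the forward direction fails at $n=0$. The defect persists upward: $(ab)^*\in\dotdp{1}(A)$, but $\#(ab)^*\#\notin\sttp{1}(B)$. Indeed, in the syntactic monoid of $\#(ab)^*\#$ the classes of $ab$ and $ba$ are distinct (compare the context $(\#,\#)$) yet \Jrel-equivalent, since $[a]\cdot[ba]\cdot[b]=[abab]=[ab]$ and $[b]\cdot[ab]\cdot[a]=[baba]=[ba]$; so that monoid is not \Jrel-trivial, and by Simon's theorem $\#(ab)^*\#$ is not piecewise testable. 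You actually flag the trouble yourself when you only claim the padded anchors sit at ``a low'' Straubing-Thérien level: marking the endpoints still leaves a formula without successor, $\min$ and $\max$ unable to name ``the position right after the opening $\#$'', and recovering that information costs at least one quantifier alternation --- precisely the gap between \bswss{n} and \bsw{n} that the reduction is supposed to eliminate (Theorems~\ref{thm:thomas-citethomequ} and~\ref{thm:perr-pin-citepp}).

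The actual reduction encodes a bounded local window around \emph{every} position, not just the two endpoints. Algebraically this is Straubing's wreath-product decomposition $\Bb_n=\Vb_n\ast\mathbf{D}$ between the semigroup varieties attached to the two hierarchies; on the language side one replaces each letter by a tuple recording its neighbourhood and endpoint flags, and it is this per-position enrichment that must be shown to commute, level by level, with polynomial and Boolean closure. Note also that the present paper does not give its own proof of Theorem~\ref{thm:ddredst}: it cites Straubing and Pin--Weil, and defers the stronger separation-preserving version (Theorem~\ref{thm:enrichment}) to~\cite{pzsucc,pzsuccfull}. So there is no internal proof to compare against, but the examples above show that endpoint padding cannot carry the induction; the transformation itself needs to be replaced before the inductive ``surgery'' you describe has any chance of going through.
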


This theorem is crucial. Indeed, from a combinatorial point of view, the Straubing-Thérien hierarchy is much simpler to deal with than the dot-depth. This is evidenced by all recent publications on the topic: most results for the dot-depth are obtained indirectly as corollaries of results for the Straubing-Thérien hierarchy via Theorem~\ref{thm:ddredst}. This is the case for the last results about membership that we state, which date back to 2014--2015, and conclude the state of the art about membership for both hierarchies.

\begin{theorem}[Place \& Zeitoun~\cite{pzqalt,pzboolpol}, Place~\cite{pseps3}]\label{thm:5272}
  Membership is decidable for levels 2, $\frac52$ and $\frac72$ in both the dot-depth and the Straubing-Thérien hierarchies.
\end{theorem}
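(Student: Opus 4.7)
The plan is to reduce everything to the Straubing-Thérien hierarchy via Theorem~\ref{thm:ddredst}, so it suffices to prove that membership is decidable for levels $2$, $\tfrac52$ and $\tfrac72$ of that hierarchy. The key strategic idea is that one should not attack the membership problem directly: instead, work with a stronger decision problem (separation, or its natural generalization, covering) that behaves well with respect to the two operators $\pol{\cdot}$ and $\bpol{\cdot}$ generating the hierarchy, since each half level is $\pol{\Cs}$ and each full level is $\bpol{\Cs}$ for the previous full level \Cs.

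First I would establish two generic reductions. \textbf{Reduction~A:} membership for $\pol{\Cs}$ is decidable as soon as separation (equivalently, covering) for \Cs is decidable; this is obtained via an algebraic characterization of $\pol{\Cs}$ expressed through a computable invariant of the syntactic morphism (the optimal imprint associated to \Cs), which can be effectively extracted when separation for \Cs is algorithmically available. \textbf{Reduction~B:} separation for $\bpol{\Cs}$ is decidable whenever separation for $\pol{\Cs}$ is decidable; this is the more delicate statement and relies on a structural description of $\bpol{\Cs}$ as the fixpoint of a finite saturation process over the objects computed for $\pol{\Cs}$, rather than as a Boolean closure (which would not yield a bounded procedure directly).

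With these two generic tools, the cascade proceeds as follows. The input base case is that separation is decidable for level $1$ (piecewise testable languages), which is classical; Reduction~B then gives separation for $\bpol{\stone}=\sttwo$, and Reduction~A applied at level $2$ yields decidability of membership for $\sttp{\tfrac52}$. Iterating once more, a second use of Reduction~B gives separation at level~$3$, after which Reduction~A delivers decidability of membership at level~$\tfrac72$. Decidability of membership at level~$2$ itself follows a fortiori from separation at that level, and Theorem~\ref{thm:ddredst} transports all three results to the dot-depth hierarchy.

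The main obstacle is \textbf{Reduction~B}: lifting a separation algorithm from $\pol{\Cs}$ to $\bpol{\Cs}$ requires genuinely new machinery, because a bound on the size of the Boolean combinations needed to separate two regular languages does not follow from closure properties alone. Secondary difficulties are technical: one must verify that the mild closure hypotheses assumed by the generic theorems (finite index, closure under quotients, and compatibility with marked concatenation) propagate through the induction, and that the bases $\{\emptyset,A^*\}$ and $\{\emptyset,\{\varepsilon\},A^+,A^*\}$ satisfy them so that Theorem~\ref{thm:ddredst} applies uniformly at every relevant level.
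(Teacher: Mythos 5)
Your cascade breaks at the second application of what you call Reduction~B. You claim ``a second use of Reduction~B gives separation at level~$3$,'' i.e.\ that $\sttp{3}$-separation is decidable, after which $\sttp{\frac72}$-membership would follow. But the paper explicitly remarks that membership for level~$3$ is \emph{open} (see the discussion after Theorem~\ref{thm:5272} and \figurename~\ref{fig:hintro:strauther}, where $\sttp{3}$ is marked as having all three problems open). Since membership reduces to separation, decidability of $\sttp{3}$-separation would immediately settle $\sttp{3}$-membership, so the conclusion of your chain is a statement nobody has proved. This means Reduction~B as you state it — ``separation for $\bpol{\Cs}$ is decidable whenever separation for $\pol{\Cs}$ is decidable'' — cannot be a valid generic principle: the paper has no such result, and Theorem~\ref{thm:sep:hiera} only gives decidable separation for $\pol\Cs$, $\bpol\Cs$ and $\pol{\bpol\Cs}$ under the hypothesis that \Cs is \emph{finite}, which fails already for $\Cs=\stone$ (so the theorem does not directly reach $\sttp{2}$ or beyond when applied to the $\{\emptyset,A^*\}$ base).

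The ingredient your argument is missing is the \emph{alphabet trick}, Theorem~\ref{thm:hintro:alphatrick}: $\sttp{\frac32}=\pol{\at}$, and more generally every level $q\geq\frac32$ of the Straubing-Thérien hierarchy coincides with level $q-1$ of the concatenation hierarchy whose basis is the \emph{finite} \vari \at. Applying Theorem~\ref{thm:sep:hiera} to this new finitely based hierarchy gives decidable separation for its levels $\frac12$, $1$, $\frac32$, i.e.\ for $\sttp{\frac32}$, $\sttp{2}$, $\sttp{\frac52}$, hence decidable membership for those levels. Then the transfer result Theorem~\ref{thm:sep:transfer} (your Reduction~A) together with Proposition~\ref{prop:hintro:bypassfull}, applied inside the \at-based hierarchy, upgrades decidable separation at level $\frac32$ of that hierarchy (i.e.\ $\sttp{\frac52}$) to decidable \emph{membership} at level $\frac52$ of that hierarchy (i.e.\ $\sttp{\frac72}$), without ever needing separation at level~$3$. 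The dot-depth side then follows from Theorem~\ref{thm:ddredst}. In short: the index shift you need comes from re-basing the hierarchy on a richer finite class, not from a $\pol{\Cs}\to\bpol{\Cs}$ separation transfer — which, as your own chain demonstrates, would prove too much.
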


Note that there is a gap between levels $\frac52$ and $\frac72$: it is unknown whether level~3 has decidable membership. This is because full levels are actually harder to cope with than half levels. Indeed, the framework that was developed recently to solve membership problems relies on the more general separation problem (and in fact, on an even more general problem called covering~\cite{pzcovering}). Furthermore, closure under concatenation product---which holds for half but not for full levels---is an essential ingredient in the methodology elaborated for solving separation and covering.

\medskip

Now that we have surveyed the most prominent results regarding membership for two concatenation hierarchies, let us explain an extra but important motivation for investigating this problem. Recall that the initial incentive was to understand the interplay between Boolean operations and concatenation, two operations at the heart of language theory. But additionally, Thomas discovered in 1982 a tight connection between the dot-depth and first-order logic, which suffices by itself to motivate an in-depth investigation of this hierarchy. The core idea is the following: since star-free languages are exactly those that one can define in first-order logic, it is desirable to refine this correspondence level by level, in each of the hierarchies considered so far. The beautiful result of Thomas establishes such a correspondence.

\smallskip
To present it, we first slightly extend the standard signature used in first-order logic over words. In addition to the linear order and the label predicates, we add:
\begin{itemize}
\item The binary \emph{successor} ``$+1$'', interpreted as the successor between positions.
\item The unary \emph{minimum} ``$\min$'', that selects the leftmost position of the word.
\item The unary \emph{maximum} ``$\max$'', that selects the rightmost position of the word.
\item The nullary \emph{empty} ``$\varepsilon$'' predicate, which holds for the empty word only.
\end{itemize}
We denote by \fows the resulting logic. Since these predicates are all definable in \fow, adding them in the signature does not increase the overall expressive power of first-order logic. In other words, \fow and \fows are equally expressive. However, this enriched signature makes it possible to define fragments of first-order logic that correspond to levels of the dot-depth hierarchy.

To this end, we classify $\fows$ sentences by counting their number of quantifier \emph{alternations}. Given a natural number $n$, a sentence is said to be  ``$\sic{n}(\sigen)$'' (resp.\ ``$\pic{n}(\sigen)$'') when it is a formula from \fows whose prenex normal form has either:
\begin{itemize}
\item \emph{Exactly} $n$ blocks of quantifiers, the leftmost being an~``$\exists$'' (resp.\ a~``$\forall$'') block, or
\item \emph{Strictly less} than $n$ blocks of quantifiers.
\end{itemize}
For example, a formula of \fows whose prenex normal form is:
\[
  \exists x_1 \exists x_2\; \forall x_3\; \exists x_4
  \ \varphi(x_1,x_2,x_3,x_4) \quad \text{(with $\varphi$ quantifier-free)},
\]
\noindent
is $\sict$. Observe that while $\fow$ and $\fows$ have the same expressiveness, the enriched signature
increases the expressive power of individual levels.

The negation of a $\sic{n}(\sigen)$ sentence is not a $\sic{n}(\sigen)$ sentence in general (it is a $\pic{n}(\sigen)$ sentence). Thus, the corresponding classes of languages are not closed under complement, which makes it meaningful to define $\bsc{n}(\sigen)$ sentences as finite Boolean combinations of \hbox{$\sic{n}(\sigen)$} and $\pic{n}(\sigen)$ sentences. This yields a strict hierarchy of classes of languages depicted in \figurename~\ref{fig:hiera}, where, slightly abusing notation, each level denotes the class of languages defined by the corresponding set of formulas.

\tikzstyle{non}=[inner sep=1pt]
\tikzstyle{tag}=[draw,fill=white,sloped,circle,inner sep=1pt]
\begin{figure}[!htb]
  \centering
  \begin{tikzpicture}

    \node[non] (b0) at (-0.2,0.0) {$\sic{0} = \pic{0} = \bsc{0}$ };

    \node[non] (s1) at (1.0,-0.8) {\sicu};
    \node[non] (p1) at (1.0,0.8) {\picu};
    \node[non] (b1) at (2.2,0.0) {\bscu};

    \node[non] (s2) at ($(b1)+(1.2,-0.8)$) {\sicd};
    \node[non] (p2) at ($(b1)+(1.2,0.8)$) {\picd};
    \node[non] (b2) at ($(b1)+(2.4,0.0)$) {\bscd};

    \node[non] (s3) at ($(b2)+(1.2,-0.8)$) {\sict};
    \node[non] (p3) at ($(b2)+(1.2,0.8)$) {\pict};

    \draw[thick] (b0.-60) to [out=-90,in=180] node[tag] {\scriptsize $\subsetneq$} (s1.west);
    \draw[thick] (b0.60) to [out=90,in=-180] node[tag] {\scriptsize $\subsetneq$} (p1.west);

    \draw[thick] (s1.east) to [out=0,in=-90] node[tag] {\scriptsize $\subsetneq$} (b1.-120);
    \draw[thick] (p1.east) to [out=0,in=90] node[tag] {\scriptsize $\subsetneq$} (b1.120);

    \draw[thick] (b1.-60) to [out=-90,in=180] node[tag] {\scriptsize $\subsetneq$} (s2.west);
    \draw[thick] (b1.60) to [out=90,in=-180] node[tag] {\scriptsize $\subsetneq$} (p2.west);
    \draw[thick] (s2.east) to [out=0,in=-90] node[tag] {\scriptsize $\subsetneq$} (b2.-120);
    \draw[thick] (p2.east) to [out=0,in=90] node[tag] {\scriptsize $\subsetneq$} (b2.120);

    \draw[thick] (b2.-60) to [out=-90,in=180] node[tag] {\scriptsize $\subsetneq$} (s3.west);
    \draw[thick] (b2.60) to [out=90,in=-180] node[tag] {\scriptsize $\subsetneq$} (p3.west);

    \draw[thick,dotted] ($(s3.east)+(0.1,0.0)$) to ($(s3.east)+(0.6,0.0)$);
    \draw[thick,dotted] ($(p3.east)+(0.1,0.0)$) to ($(p3.east)+(0.6,0.0)$);

  \end{tikzpicture}
  \caption{Quantifier Alternation Hierarchy}\label{fig:hiera}
\end{figure}
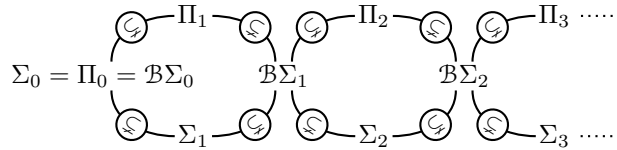
The correspondence discovered by Thomas relates levels of the dot-depth hierarchy with levels in the quantifier alternation hierarchy of enriched first-order logic.

\begin{theorem}[Thomas~\cite{ThomEqu}]\label{thm:thomas-citethomequ}
  For any alphabet $A$, any $n \in \nat$ and any language $L \subseteq A^*$, the two following properties hold:
  \begin{enumerate}
  \item $L$ has dot-depth $n$ iff $L$ belongs to $\bsc{n}(\sigen)$.
  \item $L$ has dot-depth $n + \frac{1}{2}$ iff $L$ belongs to $\sic{n+1}(\sigen)$.
  \end{enumerate}
\end{theorem}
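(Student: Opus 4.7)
The plan is to prove (1) and (2) simultaneously by induction on $n$, alternating a polynomial-closure step (which establishes (2) at level $n$) with a Boolean-closure step (which immediately promotes (1) to level $n+1$). Since $\bsc{n}(\sigen)$ is by definition the Boolean closure of $\sic{n}(\sigen)$, the Boolean-closure step is essentially free once the polynomial-closure step is available. For the base case of (1) at $n = 0$, the dot-depth class at level $0$ is $\{\emptyset, \{\varepsilon\}, A^+, A^*\}$ by definition, while $\bsc{0}(\sigen)$ consists of Boolean combinations of quantifier-free sentences of $\fows$; since the only closed atomic formula available in $\fows$ is the nullary predicate $\varepsilon$, its Boolean combinations yield precisely those four languages.

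Assume inductively that dot-depth $n$ coincides with $\bsc{n}(\sigen)$. I aim to prove $\pol{\bsc{n}(\sigen)} = \sic{n+1}(\sigen)$, which is (2) at level $n$; taking Boolean closures then gives (1) at level $n+1$. For the easier inclusion $\pol{\bsc{n}(\sigen)} \subseteq \sic{n+1}(\sigen)$, it suffices that $\sic{n+1}(\sigen)$ contains $\bsc{n}(\sigen)$ and is closed under union, intersection and marked concatenation. Union and intersection follow from variable renaming and merging of existential blocks, using closure of $\pic{n}(\sigen)$ under conjunction and disjunction. For marked concatenation, if $K$ and $L$ are defined by $\varphi_K, \varphi_L \in \sic{n+1}(\sigen)$, then $KaL$ is defined by $\exists x\,[P_a(x) \wedge \varphi_K^{<x} \wedge \varphi_L^{>x}]$, where $\varphi_K^{<x}$ and $\varphi_L^{>x}$ are relativizations of the two sentences to the prefix and suffix determined by $x$, which by the relativization lemma below stay inside $\sic{n+1}(\sigen)$.

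The more involved inclusion $\sic{n+1}(\sigen) \subseteq \pol{\bsc{n}(\sigen)}$ is the core of the argument. Given a sentence $\exists x_1 \cdots \exists x_k\,\psi(\bar x)$ with $\psi \in \pic{n}(\sigen)$, I enumerate the finitely many weak orderings of the $x_i$ together with their labels---call these \emph{position types}---and substitute equal variables, thereby rewriting the sentence as a finite disjunction, one disjunct per type. For a strict ordering $x_{i_1} < \cdots < x_{i_m}$ with labels $a_1,\ldots,a_m$, the word must factor as $u_0 a_1 u_1 \cdots a_m u_m$, and $\psi$ under this placement decomposes into a conjunction of relativizations of $\psi$ to the factors $u_j$, together with boundary conjuncts recording the letters $a_j$. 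By the relativization lemma, each relativized sub-formula stays in $\pic{n}(\sigen) \subseteq \bsc{n}(\sigen)$, so by induction hypothesis each defines a dot-depth $n$ language, and the disjunct is a marked concatenation of such languages. The overall disjunction therefore lies in $\pol{\bsc{n}(\sigen)}$.

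The main obstacle is the \emph{relativization lemma}: for every $\varphi \in \fows$ and every fresh position variable $x$, there must exist $\varphi^{<x}, \varphi^{>x} \in \fows$, of the same quantifier alternation rank as $\varphi$, expressing ``$\varphi$ holds on the factor of the input word strictly to the left (resp.\ right) of position $x$''. This is exactly where the enriched signature is indispensable. The translation proceeds by structural induction: every quantifier $\exists y$ (resp.\ $\forall y$) is guarded by $y < x$ or $y > x$, which does not affect alternations; the successor $+1$ needs no change away from the boundary; $\min$ on the right factor becomes $y = x + 1$; $\max$ on the left factor becomes $y + 1 = x$; and the $\varepsilon$ predicate on the left factor becomes ``$x$ is the global minimum''. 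Without $+1$, $\min$, $\max$ and $\varepsilon$, such uniform translations cannot be performed, which is precisely why the dot-depth correspondence genuinely requires the enriched signature rather than plain $\fow$.
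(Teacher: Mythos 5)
Your route is the classical Thomas/Perrin--Pin one: decompose a $\sic{n+1}(\sigen)$ sentence by enumerating the possible weak orderings and labels of the outer existential variables, then reduce the $\pic{n}(\sigen)$ matrix to conditions on the factors delimited by those positions. This is genuinely different from the paper, which proves the generic Theorem~\ref{thm:qalt:maintheo} by encoding words-with-assignments over an extended alphabet $A_\ell$ (Proposition~\ref{prop:qalt:qaltsfr}), handling first-order quantification via the Splitting Lemma~\ref{lem:fo:split}, and only at the end observing that $\fo(\dotdp{0})$ collapses to $\fows$. The paper's route avoids any type-composition argument; yours requires one, and that is exactly where the gap lies.

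The problematic claim is that, for a fixed placement of the variables in $w=u_0a_1u_1\cdots a_mu_m$, the matrix ``$\psi$ \emph{decomposes into a conjunction of relativizations of $\psi$ to the factors $u_j$}.'' This is false. A $\pic{n}(\sigen)$ formula with parameters at the cuts can quantify across them and compare positions sitting in different factors (e.g.\ $\forall y\,\forall z\,(y<x\wedge z>x\rightarrow(a(y)\leftrightarrow a(z)))$), and the dependence on the factors is at best a \emph{Boolean combination} of $\pic{n}(\sigen)$ conditions on the $u_j$ that are \emph{not} relativizations of $\psi$ itself. Making this precise requires a genuine Feferman--Vaught/Ehrenfeucht--Fraïssé composition theorem for the $\Sigma_n/\Pi_n$ levels of $\fows$: there are finitely many $\pic{n}(\sigen)$-types at each level, concatenation acts on them, and the type of $w$ together with the cut positions determines $w\models\psi$. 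The relativization lemma you state is a prerequisite for that composition theorem (it is indeed where the enriched signature is used), but it does not by itself deliver the decomposition, and the composition theorem is the actual crux of the classical proof. Until it is stated and proved (or cited), the hard inclusion $\sic{n+1}(\sigen)\subseteq\pol{\bsc{n}(\sigen)}$ is not established. You also need, but leave implicit, the step turning ``Boolean combination of type conditions on factors'' into ``finite union of marked concatenations of dot-depth~$n$ languages,'' which again uses finiteness of the type set.
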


Some years later in 1986, a similar correspondence was established between levels in the Straubing-Thérien hierarchy and in the quantifier alternation hierarchy over the signature consisting of the linear order and the label predicates. Such levels, denoted by $\bsc{n}(<)$ and $\sic{n}(<)$, are defined analogously as for the enriched~signature.

\begin{theorem}[Perrin \& Pin~\cite{PPOrder}]\label{thm:perr-pin-citepp}
  For any alphabet $A$, any $n \in \nat$ and any language $L \subseteq A^*$, the two following properties hold:
  \begin{enumerate}
  \item $L$ has level $n$ in the Straubing-Thérien hierarchy iff $L$ belongs to $\bsc{n}(<)$.
  \item $L$ has level $n + \frac{1}{2}$ in the Straubing-Thérien hierarchy iff $L$ belongs to $\sic{n+1}(<)$.
  \end{enumerate}
\end{theorem}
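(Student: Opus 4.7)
The plan is to proceed by induction on $n$, establishing items (1) and (2) in an interleaved fashion: item (2) at step $n$ relies on item (1) at the same step, and item (1) at step $n+1$ follows from item (2) at step $n$.

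For the base case of item (1), level $0$ of the Straubing-Thérien hierarchy is $\{\emptyset, A^*\}$. On the logical side, $\bsw{0}$ is the class of Boolean combinations of quantifier-free sentences, that is, of formulas built from $\top$ and $\bot$ with no free variables; these evaluate uniformly on every word, and so define exactly $\emptyset$ and $A^*$.

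For the inductive step on item (2) at step $n$, assuming item (1) at step $n$, the goal is to show $\pol{\bsw{n}} = \siw{n+1}$. For the $\subseteq$ inclusion, $\bsw{n}$ is contained in $\siw{n+1}$ (adding dummy quantifiers if needed), and $\siw{n+1}$ is closed under finite unions and intersections since two $\siw{n+1}$ prenex sentences can be merged block by block after variable renaming. Closure under marked concatenation is obtained by the standard trick: if $K$ and $L$ are defined by $\siw{n+1}$ sentences $\varphi_K$ and $\varphi_L$, then $KaL$ is defined by $\exists x\,(a(x) \wedge \varphi_K^{<x} \wedge \varphi_L^{>x})$, where $\varphi^{<x}$ and $\varphi^{>x}$ relativize every quantifier to positions strictly below (resp.\ above) $x$; such relativizations only restrict ranges of quantifiers and so preserve the $\siw{n+1}$ format, and prepending a fresh existential $\exists x$ merely enlarges the first existential block. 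For the $\supseteq$ inclusion, put a $\siw{n+1}$ sentence in prenex form $\exists x_1 \cdots \exists x_k\,\psi$ with $\psi \in \piw{n}$. Perform a case analysis on the relative ordering of the witnesses and their labels: for each choice of a permutation $\sigma$ and of labels $b_1, \ldots, b_k$, the word decomposes as $u_0 b_1 u_1 \cdots b_k u_k$, and pushing each quantifier of $\psi$ through the split into sub-intervals bounded by the distinguished positions turns $\psi$ into a Boolean combination of sentences $\chi_j$ each evaluated on an infix $u_j$. Each such $\chi_j$ remains $\piw{n}$, hence lies in $\bsw{n}$ by the induction hypothesis, so the language defined by $\varphi$ is a finite union of marked concatenations of $\bsw{n}$ languages, i.e., belongs to $\pol{\bsw{n}}$.

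Item (1) at step $n+1$ then follows by unrolling definitions: the full level $n+1$ of the Straubing-Thérien hierarchy is the Boolean closure of the level $n+\tfrac{1}{2}$, which by induction and the preceding step equals $\bool{\siw{n+1}}$; this coincides with $\bsw{n+1}$ since every Boolean combination of $\siw{n+1}$ and $\piw{n+1}$ sentences is a Boolean combination of $\siw{n+1}$ sentences alone, $\piw{n+1}$ sentences being negations of $\siw{n+1}$ ones. The main obstacle is the $\supseteq$ direction of the half-level step: justifying rigorously the ``guess the order and labels'' case analysis, and in particular verifying that after pushing quantifiers through the resulting sub-interval decomposition, each localized sub-sentence is still $\piw{n}$ in the signature $<$ restricted to its infix, so that the induction hypothesis truly applies.
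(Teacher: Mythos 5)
Your strategy for the hard direction --- showing that every $\siw{n+1}$-definable language lies in $\pol{\bsw{n}}$ --- is genuinely different from the paper's. The paper derives Theorem~\ref{thm:perr-pin-citepp} from the general Theorem~\ref{thm:qalt:maintheo}, whose hard inclusion is proved by encoding variable assignments into extended alphabets $A_\ell$ (a B\"uchi-style device), a structural induction on formulas (Proposition~\ref{prop:qalt:qaltsfr}), and the splitting lemma (Lemma~\ref{lem:fo:split}) for factoring a quotienting-lattice language at a marked position. You instead work from a prenex form $\exists x_1\cdots\exists x_k\,\psi$ with $\psi\in\piw{n}$, case-split on where the witnesses fall and what their labels are, and appeal to a Feferman--Vaught/composition argument to turn $\psi$ restricted to a factorization $u_0 b_1 u_1\cdots b_k u_k$ into a Boolean combination of $\piw{n}$ sentences about the individual factors $u_j$. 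This is closer to the original \efgame arguments of Thomas and of Perrin and Pin for these two specific hierarchies; the paper's route is what buys uniformity over an arbitrary basis $\Cs$, including the cross-factor predicates $I_L$, $P_L$, $S_L$, $N_L$.

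The gap is that the step you flag as the ``main obstacle'' is not a loose end --- it is essentially the whole argument. After pushing $\forall y$ (resp.\ $\exists y$) into the factorization you obtain a conjunction (resp.\ disjunction) over the factors in which $y$ may live, but the resulting formula still intertwines quantifiers across all the $u_j$ and is not yet a Boolean combination of sentences each local to one factor. Untangling it into such a combination \emph{without increasing alternation depth} is precisely a composition theorem for the $\piw{n}$ fragment, which must be stated and proved on its own --- typically by a nested induction on $n$ and, inside it, on quantifier rank, showing that the rank-$r$ $\piw{n}$-type of $u_0 b_1 u_1\cdots b_k u_k$ is determined by the letters $b_i$ and the rank-$r$ $\piw{n}$-types of the $u_j$. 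Until that lemma is supplied, the inclusion is asserted, not established. Two smaller points: the witnesses $x_1,\dots,x_k$ may coincide, so the case analysis should range over weak orderings rather than permutations (yielding factorizations with at most $k$ marked letters); and the empty word needs separate treatment, since a prenex sentence beginning with an $\exists$ block is vacuously false on $\varepsilon$ (compare Lemma~\ref{lem:qalt:bypassbool}).
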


\smallskip\noindent\textbf{Contributions.} The line of research that we just surveyed spans over~45 years. This explains why results are scattered in the literature, and often tailored to one or the other of the two hierarchies. Moreover, their proofs often rely on involved tools, such as algebraic or topological ones, and sometimes use other hard results as black boxes. In this paper, we present a unified framework that capture them all. In the following, we call \emph{simple} a concatenation hierarchy whose level~0 is a \emph{finite} Boolean algebra closed under quotient (see Section~\ref{sec:preliminaries}). We present 6 theorems that suffice to recover \emph{\textbf{all}} known results that we presented so far, providing new proofs for 3 of them:
\begin{enumerate}
\item We give a new proof that the polynomial closure of a lattice of regular languages closed under quotient is also closed under intersection.
\item We generalize Theorem~\ref{thm:hintro:ddstrict}: \emph{any} simple hierarchy is \emph{strict}.
\item We state that levels $\frac12$, 1 and $\frac32$ of \emph{any} simple hierarchy have decidable separation, hence also decidable membership.
\item We state the following transfer result (even for non-finitely based hierarchies): if level $n-\frac12$ has decidable separation, then level $n+\frac12$ has decidable membership.
\item We generalize Theorem~\ref{thm:ddredst} to separation, with a language-theoretic formulation.
\item We generalize Theorems~\ref{thm:thomas-citethomequ} and \ref{thm:perr-pin-citepp} to \emph{any} hierarchy, by showing that one can describe any concatenation hierarchy by an associated logical fragment.
\end{enumerate}

We provide new proofs for Items 1, 2 and 6 (Theorems~\ref{thm:hintro:polc}, \ref{thm:hintro:strict} and \ref{thm:qalt:maintheo}). For Items 3, 4 and 5 (Theorems~\ref{thm:sep:hiera}, \ref{thm:sep:transfer} and \ref{thm:enrichment}), see \cite{pzcovering,pzsucc,pzboolpol} or the full papers~\cite{pzcoveringfull,pzsuccfull,pzqaltfull}.

\medskip\noindent\textbf{Organization.} In Section~\ref{sec:preliminaries}, we set up the notation and introduce a tool which will be useful for proving Item~2 above: stratifications of a class of languages (it is actually also important for proving Item~4, see~\cite{pzboolpol}). We define Boolean and polynomial closures in Section~\ref{sec:polyn-bool-clos}, where we also prove that closure under intersection for polynomial closure is implied from closure under union and marked concatenation, if the class we start from is closed under mild properties. In Section~\ref{chap:hieraintro}, we define generic concatenation hierarchies and we state their basic properties. In Section~\ref{sec:hintro:strictness}, we prove that concatenation hierarchies with a finite basis are strict. Section~\ref{sec:two-fund-conc} investigates the two historical hierarchies: the dot-depth and the Straubing-Thérien hierarchies. Finally, Section~\ref{sec:link-with-logic} presents the generic logical definition of concatenation hierarchies, thus generalizing the result of Thomas~\cite{ThomEqu}. This paper is the full version of~\cite{pzcsr17}.

\section{Preliminary definitions and tools}\label{sec:preliminaries}
In this section, we set up the definitions and the notation. We start with the classical notions of words, languages and classes of languages. We then present the two problems we are interested in: ``membership'' and ``separation''. At last, we introduce the notion of \emph{stratification} of a class of languages.

\medskip\noindent\textbf{Finite words and classes of languages.}
Throughout the paper, we fix a finite alphabet~$A$. We let $\varepsilon$ be the empty word. The set of all finite words over $A$ is denoted by $A^*$, and the set $A^{*}\setminus\{\varepsilon\}$ of all nonempty words over~$A$ is denoted by $A^+$.  If $w$ is a word, we denote its length by $|w|$, that is, its number of letters. If $|w|=n$, then $w=a_1\cdots a_n$ with $a_i\in A$, and the set of \emph{positions} of $w$ is $\{1,\ldots, n\}$. Moreover, for two positions $i,j$ we define $w]i,j[$ as the word $a_{i+1}\cdots a_{j-1}$ if $i+1\leq j-1$, and as $\varepsilon$ if $i+1> j-1$. We define similarly $w[i,j[$ as $a_i\ldots a_{j-1}$ if $i\leq j-1$ and as $\varepsilon$ otherwise. We define $w]i,j]$ symmetrically. Given $w \in A^*$, we let $\cont{w}$ be the set of letters appearing in $w$, that is, the smallest set $B\subseteq A$ such that $w\in B^*$. We say that $\cont{w}$ is the \emph{alphabet} of $w$. A \emph{language (over $A$)} is a subset of $A^*$. Finally, a \emph{class of languages} is a set of languages over $A$.

\begin{remark}
  Our definition of a class of languages is simpler than the usual one. When dealing with several alphabets, a class of languages is often defined as a \emph{function} mapping any finite alphabet $A$ to a set of languages $\Cs(A)$ over~$A$. Our simpler definition is justified by the fact that we mainly use one fixed alphabet in the~paper.
\end{remark}

\noindent
There are several fundamental operations on languages that we shall consider:
\begin{itemize}
\item Boolean operations (union, intersection and complement),
\item Left and right quotients. If $w$ is a word and $L$ is a language, then the left (resp.\ right) quotient $w^{-1}L$ (resp.\ $Lw^{-1}$) of $L$ by $w$ is the following language:
  \[
    w^{-1}L \stackrel{\text{def}}{=}\{v\in A^*\mid wv\in L\}, \qquad     Lw^{-1} \stackrel{\text{def}}{=}\{v\in A^*\mid vw\in L\}.
  \]
  Note that if $a$ is a letter and $w$ is a word, then for every language $L$, we have $(wa)^{-1}L=a^{-1}(w^{-1}L)$ and $L(aw)^{-1}=(Lw^{-1})a^{-1}$. Therefore, a class is closed under taking quotients iff it is closed under taking quotients by any \emph{letter}. We shall freely use this fact throughout the paper. Another basic fact that we shall use without further reference is that quotients commute with Boolean operations. For instance, $w^{-1}(K\cup L)=w^{-1}K\cup w^{-1}L$ and $w^{-1}(A^*\setminus L)=A^*\setminus(w^{-1}L)$.
\item The concatenation of two languages $K,L\subseteq A^*$ is defined as:
  \[
    KL \stackrel{\text{def}}{=} \{uv\mid u\in K\text{ and }v\in L\}.
  \]

\item Finally, given a letter $a\in A$, the \emph{marked concatenation} of $K$ and $L$ by $a$ is the language $K\{a\}L$, also written $KaL$.
\end{itemize}

\smallskip\noindent
All classes considered in this paper satisfy robust properties, which we present~now.
\begin{itemize}
\item A class of languages is a \emph{lattice} if it contains $\emptyset$ and $A^*$ and it is closed under union and intersection.
\item A \emph{Boolean algebra} is a lattice closed under complement.
\item A class of languages is \emph{quotienting} when it is closed under taking (left and right) quotients.
\end{itemize}

\begin{example}\label{ex:at}
  Let \at be the class of languages over $A$ consisting of all finite Boolean combinations of languages $A^*aA^*$, for $a\in A$. The name ``\at'' stands for ``alphabet testable'': a language belongs to \at when membership of a word in this language depends only on the set of letters occurring in the word. It is straightforward to verify that \at is a \emph{finite} \vari.\qed
\end{example}

We denote by \reg the class of all regular languages over $A$. All classes that we consider consist of regular languages only, \emph{i.e.}, are sub-classes of \reg. Recall that regular languages can be equivalently defined by nondeterministic finite automata, regular expressions, finite monoids or monadic second-order logic. In this paper, we assume a basic knowledge in automata theory, but we shall rather rely on the following characterization of regular languages, due to Myhill and~Nerode.
\begin{theorem}[Myhill and Nerode~\cite{nerode58}]\label{thm:auto:nerode}
  Let $L\subseteq A^*$ be a language. The following properties are equivalent:
  \begin{enumerate}
  \item $L$ is regular,
  \item $L$ has finitely many left quotients,
  \item $L$ has finitely many right quotients.
  \end{enumerate}
\end{theorem}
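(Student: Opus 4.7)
The plan is to prove the equivalences by setting up a correspondence between left quotients of $L$ and the states of a deterministic finite automaton recognizing $L$, and then to deduce the statement about right quotients by a symmetry argument involving the reversal of $L$. Concretely, I would establish $(1)\Leftrightarrow(2)$ directly, and reduce $(1)\Leftrightarrow(3)$ to it by passing to $L^R=\{a_n\cdots a_1 \mid a_1\cdots a_n\in L\}$.

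For $(1)\Rightarrow(2)$, fix a DFA $(Q,\delta,q_0,F)$ recognizing $L$, and extend $\delta$ to $\delta^*\colon Q\times A^*\to Q$ in the usual way. For any $w\in A^*$ the quotient $w^{-1}L=\{v\in A^*\mid \delta^*(\delta^*(q_0,w),v)\in F\}$ depends only on $\delta^*(q_0,w)$. Hence the map $w\mapsto w^{-1}L$ factors through $Q$, so $L$ has at most $|Q|$ distinct left quotients.

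For $(2)\Rightarrow(1)$, I would build a DFA whose state set is the finite collection $\{w^{-1}L\mid w\in A^*\}$ of distinct left quotients, with initial state $L=\varepsilon^{-1}L$, transition $\delta(K,a)=a^{-1}K$, and accepting states those $K$ satisfying $\varepsilon\in K$. The well-definedness of $\delta$ uses the identity $a^{-1}(w^{-1}L)=(wa)^{-1}L$ recalled in the preliminaries. A straightforward induction on $|w|$ shows $\delta^*(L,w)=w^{-1}L$, so $w$ is accepted iff $\varepsilon\in w^{-1}L$ iff $w\in L$, proving $L$ regular. Finally, for $(1)\Leftrightarrow(3)$, I would use the well-known fact that $L$ is regular iff $L^R$ is regular, together with the bijection between right quotients of $L$ and left quotients of $L^R$ given by $Lw^{-1}\mapsto (w^R)^{-1}L^R$; applying $(1)\Leftrightarrow(2)$ to $L^R$ then yields the result.

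The argument is largely routine; the only point requiring a little care is checking that the transition $\delta(K,a)=a^{-1}K$ is well-defined in the construction of $(2)\Rightarrow(1)$, i.e.\ depends on $K$ and not on the chosen representative $w$ with $K=w^{-1}L$. This is immediate from the quotient identity mentioned above, and the correspondence between right quotients of $L$ and left quotients of $L^R$ needed for $(1)\Leftrightarrow(3)$ follows directly from unfolding the definitions, so no genuine obstacle arises.
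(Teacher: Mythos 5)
The paper does not prove this theorem; it cites it to Nerode and uses it as a black box. Your proof is the standard one and is essentially correct, so there is nothing in the paper to compare it against. One small remark on the exposition of $(2)\Rightarrow(1)$: since you define $\delta(K,a)=a^{-1}K$ intrinsically as an operation on the set $K$ (rather than, say, first choosing $w$ with $K=w^{-1}L$ and declaring $\delta(K,a)=(wa)^{-1}L$), there is no genuine well-definedness issue in the sense of independence from the representative $w$. What the identity $a^{-1}(w^{-1}L)=(wa)^{-1}L$ actually buys you is that the finite set of left quotients is \emph{closed} under the map $K\mapsto a^{-1}K$, i.e.\ that $\delta$ lands back in the state set; that is the point that needs checking. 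The correspondence you use for $(1)\Leftrightarrow(3)$, namely $(w^R)^{-1}L^R=(Lw^{-1})^R$, is correct and does give a bijection between right quotients of $L$ and left quotients of $L^R$, so the reduction via reversal goes through.
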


\medskip\noindent\textbf{Membership and separation.}
A class of regular languages is usually given by a syntax. For such a class~\Cs of languages, the most basic question is whether one can test membership of an input regular language in the class \Cs. In other words, we want to design an algorithm deciding whether an input language admits a description in the given syntax, or to prove that no such algorithm exists. The corresponding decision problem is called \emph{\Cs-membership} (or membership for \Cs).

\medskip
\noindent\textbf{Membership problem for \Cs:}

\smallskip\noindent
\begin{tabular}{ll}
  {\bf Input:}    & A regular language $L$.\\
  {\bf Question:} & Does $L$ belong to \Cs?
\end{tabular}

\smallskip
Recent solutions to the membership problem for specific classes actually consider a more general problem, the \emph{\Cs-separation problem} (or separation problem for \Cs), which is stated as follows:

\medskip
\noindent\textbf{Separation problem for \Cs:}

\smallskip\noindent
\begin{tabular}{ll}
  {\bf Input:}    & Two regular languages $L_1,L_2$.\\
  {\bf Question:} & Is there a language $K$ from \Cs such that $L_1\subseteq K$ and $K\cap L_2=\emptyset$?
\end{tabular}

\smallskip
We say that a language $K$ such that $L_1\subseteq K$ and $K\cap L_2=\emptyset$ is a \emph{separator} of $(L_1,L_2)$. Observe that since regular languages are closed under complement, there is a straightforward reduction from membership to separation. Indeed, an input language $L$ belongs to \Cs when it can be \Cs-separated from its complement.

\medskip\noindent\textbf{Finite lattices and canonical preorders.}
In this section, we present simple mathematical tools associated to any \emph{finite} class. Of course, most finite classes are not very interesting. The only example which is featured prominently in this paper is the class \at of alphabet testable languages. However, some of the decision problems can be lifted from finite to infinite classes thanks to stratifications.

\smallskip\noindent\textbf{Canonical preorders for finite lattices.} We fix an arbitrary finite lattice \Cs, to which we associate a \emph{canonical preorder relation over $A^*$} defined as follows. Given $w,w' \in A^*$, we write $w \leq_\Cs w'$ if and only if,
\[
  \text{For all $L \in \Cs$,} \quad w \in L \ \Rightarrow\ w' \in L.
\]
It is immediate from the definition that $\leq_\Cs$ is indeed transitive and reflexive.

We now present the applications of the relation $\leq_\Cs$. We start with an important lemma, which relies on the fact that \Cs is finite. We say that a language $L \subseteq A^*$ is an \emph{upper set} for $\leq_\Cs$ when for any two words $u,v \in A^*$, if $u \leq_\Cs v$ and $u \in L$, then $v \in L$. In other words, $L$ is an upper set when it coincides with its upwards closure $\upset L$ with respect to the preorder $\leq_{\Cs}$, where $\upset L$ is defined as the set of words that are above some word of $L$:
\[
  \upset L = \{u\in A^{*}\mid \exists w \in L,\ w\leq_{\Cs}u\}.
\]

\begin{lemma}\label{lem:metho:upper}
  Let $\Cs$ be a finite lattice. Then, for any word $w\in A^{*}$, we have:
  \[
    \upset w = \bigcap_{L\in\Cs\text{ and }w\in L}L.
  \]
  In particular, the canonical preorder $\leq_\Cs$ has finitely many upper sets.
\end{lemma}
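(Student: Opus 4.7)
The plan is to handle the equality first and then derive the finiteness statement as a corollary; both parts are essentially definition-chasing, with the lattice and finiteness assumptions entering only in the second part.

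For the equality $\upset w = \bigcap_{L \in \Cs,\; w \in L} L$, I would argue by mutual inclusion. For the forward inclusion, take $u \in \upset w$, which means $w \leq_\Cs u$; by definition of $\leq_\Cs$, every $L \in \Cs$ containing $w$ also contains $u$, so $u$ lies in the intersection. For the reverse inclusion, if $u$ lies in every $L \in \Cs$ containing $w$, this is literally the condition $w \leq_\Cs u$, hence $u \in \upset w$. This part uses neither finiteness nor the lattice structure of~\Cs.

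Next, since \Cs is a finite lattice, the intersection on the right-hand side is a finite intersection of elements of \Cs, hence itself an element of~\Cs. So $\upset w \in \Cs$ for every word $w$, and there are only finitely many distinct principal upper sets. For the finiteness of the collection of \emph{all} upper sets, I would use the identity $U = \bigcup_{w \in U} \upset w$, valid for any upper set $U$: the inclusion $\subseteq$ holds because $w \in \upset w$, and $\supseteq$ holds because $U$ is upward closed. Since only finitely many distinct $\upset w$ appear on the right, the union collapses to a finite union of elements of~\Cs, so $U \in \Cs$, and the number of upper sets is bounded by $2^{|\Cs|}$. The lemma is really a bookkeeping observation and I do not anticipate any serious obstacle; the only insight worth singling out is that the principal upper set of a word already lies in the class, which is what converts finiteness of~\Cs into finiteness of the family of all upper sets.
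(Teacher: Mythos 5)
Your proof is correct and follows the same route as the paper: the equality is an immediate unfolding of the definition of $\leq_\Cs$, and finiteness comes from observing that every upper set is a union of principal upper sets $\upset w$, of which there are only finitely many thanks to the equality and the finiteness of $\Cs$. You spell out a few extra details (e.g.\ that each $\upset w$ and hence each upper set lies in $\Cs$, anticipating Lemma~\ref{lem:metho:satur}), but the underlying argument is the one in the paper.
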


\begin{proof}
  The equality follows directly from the definition of $\leq_{\Cs}$. We prove the second assertion. By definition, an upper set is a union of languages of the form $\upset w$. Hence, it suffices to prove that there are finitely many languages $\upset w$, which follows immediately from the equality of the lemma and the finiteness of \Cs.
\end{proof}

\medskip

We now prove the second important property of the preorder $\leq_\Cs$: we use the fact that \Cs is a lattice to characterize the languages belonging to \Cs. They are exactly the upper sets for $\leq_\Cs$.

\begin{lemma}\label{lem:metho:satur}
  Let $\Cs$ be a finite lattice of languages. Then,
  for any $L \subseteq A^*$, we have $L \in \Cs$ iff $L$ is an upper set for~$\leq_\Cs$.
\end{lemma}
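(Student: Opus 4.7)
The plan is to prove the two implications separately, leaning on Lemma~\ref{lem:metho:upper} for the nontrivial direction. The forward direction, namely that any $L \in \Cs$ is an upper set for $\leq_\Cs$, is essentially a restatement of the definition of the canonical preorder: if $u \in L$, $L \in \Cs$, and $u \leq_\Cs v$, then applying the defining condition of $\leq_\Cs$ to the particular member $L$ of \Cs immediately yields $v \in L$. No finiteness or lattice structure is needed for this half.

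For the converse, I would start from an upper set $L$ and use the standard fact that any upper set coincides with the union of the principal upper sets generated by its elements, i.e.\ $L = \bigcup_{w \in L} \upset w$. Lemma~\ref{lem:metho:upper} then feeds in twice. First, it rewrites each $\upset w$ as $\bigcap_{K \in \Cs,\ w \in K} K$, a finite intersection of members of \Cs (finite because \Cs is finite); since \Cs is a lattice, closure under binary intersection together with $A^* \in \Cs$ (the empty-intersection case, when no $K \in \Cs$ contains $w$) puts each $\upset w$ in \Cs. Second, the same lemma implies that only finitely many distinct sets $\upset w$ can arise as $w$ ranges over $L$, so the union displaying $L$ collapses to a \emph{finite} union of members of \Cs. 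Closure of the lattice \Cs under finite union then yields $L \in \Cs$, completing the proof.

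There is no real obstacle here beyond correctly invoking Lemma~\ref{lem:metho:upper}; the only subtlety worth spelling out is that finiteness of \Cs is used twice (once to make each $\upset w$ a \emph{finite} intersection, and once to make $\{\upset w : w \in L\}$ a finite set), and that the lattice axiom $A^* \in \Cs$ is what handles the degenerate case of an empty defining intersection.
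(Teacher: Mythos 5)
Your proof is correct and follows essentially the same route as the paper's: the forward direction is a direct unwinding of the definition of $\leq_\Cs$, and the converse writes $L$ as the union of principal upper sets $\upset w$, each of which lies in $\Cs$ by Lemma~\ref{lem:metho:upper} and closure under finite intersection, with finiteness of $\Cs$ guaranteeing the union is finite. The only superfluous remark is the ``empty-intersection'' contingency: since $A^* \in \Cs$ always contains $w$, the defining intersection of $\upset w$ is never empty, so that degenerate case cannot arise.
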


\begin{proof}
  Assume first that $L \in \Cs$. Then for all $w \in L$ and all $w'$ such that $w \leq_\Cs w'$, we have $w' \in L$ by definition of $\leq_\Cs$. Hence, $L$ is an upper set.

  Assume now that $L$ is an upper set. Observe that since \Cs is finite and closed under intersection, for any word $w$, the upper set $\upset w$ belongs to \Cs by Lemma~\ref{lem:metho:upper}: it is the intersection of all languages in \Cs containing~$w$. Furthermore,
  $L = \bigcup_{w \in L} \upset w$.
  By Lemma~\ref{lem:metho:upper}, there are only finitely many sets of the form~$\upset w$. Since \Cs is closed under finite union, $L$ belongs to \Cs.
\end{proof}

While Lemma~\ref{lem:metho:satur} states an equivalence, we mainly use the left to right implication (or rather its contrapositive). It is useful for proving that a given language $L$ does not belong to \Cs, or that two languages $K,L$ are not \Cs-separable. We describe this application in the following corollary.

\begin{corollary}\label{cor:metho:satur}
  Let $\Cs$ be a finite lattice $K,L \subseteq A^*$ be two languages. Then, the following properties hold:
  \begin{enumerate}
  \item $L$ does {\bf not} belong to \Cs iff there exist $w \in L$ and $w' \not\in L$ such that $w \leq_\Cs w'$.
  \item $L$ is {\bf not} \Cs-separable from $K$ iff there exist $w \in L$ and $w' \in K$ such that $w \leq_\Cs w'$.
  \end{enumerate}
\end{corollary}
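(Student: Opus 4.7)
The plan is to derive both items directly from Lemma~\ref{lem:metho:satur}, which characterises membership in $\Cs$ as being an upper set for $\leq_\Cs$. The first item is an almost verbatim restatement of the lemma, while the second requires one extra observation, namely that the upward closure $\upset L$ is itself the smallest $\Cs$-separator of $L$ from any language.

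For item (1), I would simply contrapose Lemma~\ref{lem:metho:satur}. By the lemma, $L \in \Cs$ iff $L$ is an upper set for $\leq_\Cs$, i.e., iff for every $w \in L$ and every $w' \in A^*$ with $w \leq_\Cs w'$, we have $w' \in L$. Negating this gives precisely the statement: $L \notin \Cs$ iff there exist $w \in L$ and $w' \notin L$ with $w \leq_\Cs w'$.

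For item (2), I would prove the following intermediate claim: $L$ is $\Cs$-separable from $K$ iff $\upset L \cap K = \emptyset$. The ``if'' direction uses that $\upset L$ is, by construction, an upper set for $\leq_\Cs$, hence belongs to $\Cs$ by Lemma~\ref{lem:metho:satur}; since $L \subseteq \upset L$ and $\upset L \cap K = \emptyset$, it is a $\Cs$-separator. For the ``only if'' direction, take any separator $H \in \Cs$ with $L \subseteq H$ and $H \cap K = \emptyset$; by Lemma~\ref{lem:metho:satur}, $H$ is an upper set, so $L \subseteq H$ entails $\upset L \subseteq H$, and therefore $\upset L \cap K \subseteq H \cap K = \emptyset$. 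Negating the claim then gives: $L$ is not $\Cs$-separable from $K$ iff $\upset L \cap K \neq \emptyset$, which by the definition of $\upset L$ is exactly the existence of $w' \in K$ and $w \in L$ with $w \leq_\Cs w'$.

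No real obstacle is anticipated: the work is all in Lemma~\ref{lem:metho:satur}, and the corollary is a bookkeeping consequence. The only mildly delicate point is to recall that $\upset L$ lies in $\Cs$ even when $L$ itself does not, which is where the finiteness of $\Cs$ (and hence the lemma's left-to-right implication) is used.
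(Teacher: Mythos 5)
Your proof is correct and takes essentially the same approach as the paper: item (1) is the direct contrapositive of Lemma~\ref{lem:metho:satur}, and for item (2) the paper's key object $H=\bigcup_{w\in L}\upset w$ is precisely your $\upset L$, so your biconditional ``$L$ is $\Cs$-separable from $K$ iff $\upset L\cap K=\emptyset$'' is just a slightly tidier packaging of the same argument, with the finiteness of $\Cs$ used in the same place to get $\upset L\in\Cs$ from Lemma~\ref{lem:metho:satur}.
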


\begin{proof}
  The first item is the contrapositive of Item~1 in Lemma~\ref{lem:metho:satur}. We prove the second. Assume that there exist $w \in L$ and $w' \in K$ such that $w \leq_\Cs w'$. Hence, for any language $H$ separating $L$ from $K$, we have $w \in H$ and $w' \not\in H$. Since $w \leq_\Cs w'$, this means that $H \not\in \Cs$ by definition of $\leq_\Cs$. Hence, $L$ is not \Cs-separable from $K$.

  Conversely, assume $L$ is not \Cs-separable from $K$. For any $w \in A^*$, note that $\upset w\in \Cs$ by Lemma~\ref{lem:metho:satur}, since $\upset w$ is an upper set for $\leq_\Cs$. We define
  \[H = \bigcup_{w \in L} \upset w.\]
  This union is finite since $\leq_\Cs$ has finitely many upper sets. Hence, $H \in \Cs$. Moreover, $L \subseteq H$ by definition. Hence, since $L$ is not \Cs-separable from $K$, we have $H \cap K \neq \emptyset$. Let $w' \in H \cap K$. By definition $w' \in \upset w$ for some $w \in L$  which means that $w \leq_\Cs w'$. Thus, we have $w \in L$ and $w' \in K$ such that $w \leq_\Cs w'$.
\end{proof}

\begin{example}
  Let $L = a^*b^*$. Then $L \not\in \at$. Indeed, we have $ab \in L$ and $ba \not\in L$ while $\cont{ab} = \cont{ba}$. For the same reason, $a^+b^+$ is not \at-separable from $b^+a^+$.
\end{example}

\smallskip\noindent\textbf{Canonical preorders for \pvaris.} We now present additional properties of the canonical $\leq_\Cs$ preorder that hold when the finite lattice \Cs is closed under quotients. The key property is given in the following lemma: closure under quotients for \Cs corresponds to compatibility with word concatenation for~$\leq_\Cs$.

\begin{lemma}\label{lem:metho:quotients}
  A finite lattice~\Cs is closed under quotient if and only if its associated canonical preorder $\leq_\Cs$ is a precongruence for word concatenation. That is, for any words $u,v,u',v'$,
  \[
    u \leq_\Cs u' \quad \text{and} \quad v \leq_\Cs v' \quad \Rightarrow \quad uv \leq_\Cs u'v'.
  \]
\end{lemma}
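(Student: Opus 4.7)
The plan is to prove both directions by exploiting the characterization from Lemma~\ref{lem:metho:satur}, which identifies the languages of $\Cs$ with the upper sets of $\leq_{\Cs}$. This gives a convenient translation between statements about quotients of languages in $\Cs$ and statements about the monotonicity of concatenation for $\leq_{\Cs}$.

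For the forward direction, assume $\Cs$ is closed under quotients and suppose $u \leq_\Cs u'$ and $v \leq_\Cs v'$. I would pick an arbitrary $L \in \Cs$ with $uv \in L$ and show $u'v' \in L$ in two steps, peeling off one side at a time. First, $uv \in L$ rewrites as $v \in u^{-1}L$; since $u^{-1}L \in \Cs$ by hypothesis and $v \leq_{\Cs} v'$, we get $v' \in u^{-1}L$, i.e.\ $uv' \in L$. Second, this rewrites as $u \in L(v')^{-1}$; since $L(v')^{-1} \in \Cs$ by hypothesis and $u \leq_{\Cs} u'$, we conclude $u' \in L(v')^{-1}$, i.e.\ $u'v' \in L$, as needed.

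For the converse, assume $\leq_\Cs$ is a precongruence. Pick $L \in \Cs$ and a letter $a$; I will argue that $a^{-1}L \in \Cs$, and symmetrically for $La^{-1}$, which suffices by the paper's remark that closure under quotients reduces to closure under quotients by letters. By Lemma~\ref{lem:metho:satur}, it is enough to show that $a^{-1}L$ is an upper set for $\leq_\Cs$. So take $v \in a^{-1}L$ (i.e.\ $av \in L$) and $v \leq_\Cs v'$. Using reflexivity of $\leq_\Cs$ on $a$ together with $v \leq_\Cs v'$, the precongruence property gives $av \leq_\Cs av'$. Since $L \in \Cs$ is itself an upper set for $\leq_\Cs$ (again by Lemma~\ref{lem:metho:satur}) and contains $av$, we obtain $av' \in L$, i.e.\ $v' \in a^{-1}L$. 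The right quotient case is entirely symmetric.

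I do not expect any genuine obstacle here: once Lemma~\ref{lem:metho:satur} is available, both implications are short manipulations. The only point requiring minimal care is to use the precongruence in the correct direction (pairing the moving word with the fixed $a$ via reflexivity), and to remember that closure under quotienting by arbitrary words follows from closure under quotienting by letters, as observed earlier in the preliminaries.
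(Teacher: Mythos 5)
Your proof is correct and takes essentially the same approach as the paper: the forward direction peels off the two quotients in the same order, and the converse invokes Lemma~\ref{lem:metho:satur} to reduce to showing the quotient is an upper set. The only cosmetic difference is that you reduce to quotients by single letters, whereas the paper handles arbitrary words $w$ directly; both work since $\leq_\Cs$ is a precongruence for arbitrary words.
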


\begin{proof}
  We do the proof for lattices (the result for Boolean algebras is an immediate consequence). First assume that \Cs is closed under quotients and let $u,u',v,v'$ be four words such that $u \leq_\Cs u'$ and $v \leq_\Cs v'$. We have to prove that $uv \leq_\Cs u'v'$. Let $L \in \Cs$ and assume that $uv \in L$.
  This means that $v \in u^{-1} \cdot L$. By closure under left quotient, we have $u^{-1}L \in \Cs$, hence, since $v \leq_\Cs v'$, we obtain that $v'\in u^{-1} \cdot L$ and therefore that $uv' \in L$. It now follows that $u \in L(v')^{-1}$. Using closure under right quotient, we obtain that $L(v')^{-1} \in \Cs$. Therefore, since $u \leq_\Cs u'$, we conclude that $u' \in L(v')^{-1}$ which means that $u'v' \in L$, as desired.

  Conversely, assume that $\leq_\Cs$ is a precongruence.	Let $L \in \Cs$ and $w \in A^*$, we prove that $w^{-1}L \in \Cs$ (the proof for right quotients is symmetrical). By Lemma~\ref{lem:metho:satur}, we have to prove that $w^{-1}L$ is an upper set. Let $u \in w^{-1}L$ and $u' \in A^*$ such that $u \leq_\Cs u'$. Since $\leq_\Cs$ is a precongruence, we have $wu \leq_\Cs wu'$. Hence, since $L$ is an upper set (it belongs to \Cs) and $wu \in L$, we have $wu' \in L$. We conclude that $u' \in w^{-1}L$, which terminates the proof.
\end{proof}

\noindent We finish with a useful consequence of Lemma~\ref{lem:metho:quotients}: finite \pvaris contain only regular languages.

\begin{lemma}\label{lem:metho:omegapower}
  Let \Cs be a finite \pvari. Then any language in \Cs is regular. Moreover, there exists a natural number $p \geq 1$ such that for any word $u \in A^*$ and natural numbers $m,m' \geq 1$, we have $u^{pm} \leq_\Cs u^{pm'}$.
\end{lemma}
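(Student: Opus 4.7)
The statement has two parts, and both hinge on finiteness of $\Cs$ together with the preorder tools from Lemmas~\ref{lem:metho:upper}, \ref{lem:metho:satur} and~\ref{lem:metho:quotients}. For the regularity part, the plan is to invoke the Myhill--Nerode characterization (Theorem~\ref{thm:auto:nerode}): given any $L \in \Cs$ and any word $w$, closure of $\Cs$ under left quotients yields $w^{-1}L \in \Cs$, and since $\Cs$ has only finitely many elements, there are only finitely many distinct languages of the form $w^{-1}L$. Hence $L$ has finitely many left quotients and is therefore regular.

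For the second part, my plan is to quotient $A^{*}$ by the equivalence relation $\equiv$ associated to the preorder $\leq_{\Cs}$, namely $w \equiv w'$ iff $w \leq_{\Cs} w'$ and $w' \leq_{\Cs} w$. By Lemma~\ref{lem:metho:upper}, $\leq_{\Cs}$ admits only finitely many upper sets; since each $\equiv$-class is determined by the collection of upper sets containing it, $\equiv$ has finite index. Moreover, because $\Cs$ is closed under quotients, Lemma~\ref{lem:metho:quotients} tells us that $\leq_{\Cs}$ is a precongruence for word concatenation, and so $\equiv$ is a genuine congruence. Consequently, $M := A^{*}/{\equiv}$ is a finite monoid.

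Once $M$ is in hand, I would appeal to the classical semigroup fact that in any finite monoid there exists an integer $p \geq 1$ (one may take $p = |M|!$) such that $x^{p}$ is idempotent for every $x \in M$. Fixing such a $p$, for any word $u \in A^{*}$ and any $m \geq 1$, we get $[u]^{pm} = ([u]^{p})^{m} = [u]^{p}$ in $M$, hence $u^{pm} \equiv u^{p}$. Applied to arbitrary $m, m' \geq 1$, this gives $u^{pm} \equiv u^{pm'}$, and in particular $u^{pm} \leq_{\Cs} u^{pm'}$.

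The only mildly delicate point is the passage from the precongruence $\leq_{\Cs}$ to the congruence $\equiv$ together with its finiteness, but this is immediate from Lemmas~\ref{lem:metho:upper} and~\ref{lem:metho:quotients}; the rest is a textbook application of finite-semigroup theory, so I do not expect any real obstacle.
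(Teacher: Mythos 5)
Your proof is correct, and for the second assertion it follows essentially the same path as the paper: form the equivalence $\equiv$ generated by $\leq_{\Cs}$, use Lemmas~\ref{lem:metho:upper} and~\ref{lem:metho:quotients} to see that it is a congruence of finite index, pass to the finite quotient monoid $A^*/{\equiv}$, and take $p$ to be an exponent for which every element raised to the $p$-th power is idempotent; this gives $u^{pm}\equiv u^{pm'}$ and in particular $u^{pm}\leq_{\Cs}u^{pm'}$.

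Where you genuinely diverge is the regularity claim. The paper extracts it as a by-product of the monoid construction: the quotient morphism $\alpha\colon A^*\to A^*/{\equiv}$ recognizes every upper set, hence by Lemma~\ref{lem:metho:satur} every language in $\Cs$, and recognizability by a finite monoid yields regularity. You instead argue directly via Myhill--Nerode, bypassing the monoid altogether: since $\Cs$ is finite and closed under left quotients, every left quotient of an $L\in\Cs$ lands back in $\Cs$, so $L$ has at most $|\Cs|$ left quotients, and Theorem~\ref{thm:auto:nerode} concludes. Your route is a bit more self-contained for this particular claim---it uses only closure under quotient and Theorem~\ref{thm:auto:nerode}---whereas the paper's choice makes the monoid do double duty, delivering both conclusions from a single construction. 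Either way is sound.
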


\begin{proof}
  Let $\sim$ be the equivalence generated by $\leq_\Cs$. That is, for any $w,w' \in A^*$,
  \[
    w \sim w' \Longleftrightarrow w \leq_\Cs w' \text{ and } w' \leq_\Cs w.
  \]
  By Lemma~\ref{lem:metho:upper}, $\leq_\Cs$ has finitely upper sets and by Lemma~\ref{lem:metho:quotients}, it is a precongruence for concatenation. Therefore, $\sim$ is a congruence of finite index for concatenation. It follows that the quotient set ${A^*}/{\sim}$ is a finite monoid and that the morphism,
  \[
    \begin{array}{llll}
      \alpha:&  A^* & \to & {A^*}/{\sim} \\
             &  w   & \mapsto & [w]_{\sim}
    \end{array}
  \]
  which maps every word $w$ to its equivalence class $[w]_{\sim}$ recognizes any upper set $L$ with respect to $\leq_\Cs$. Indeed, $\alpha(u)=\alpha(v)$ implies $u\leq_{\Cs}v$, and therefore, if $u\in L$, then also $v\in L$ (since $L$ is an upper set). It follows from Lemma~\ref{lem:metho:satur} that $\alpha$ recognizes all languages in $\Cs$, whence all these languages are regular.

  For the second part of the lemma, it suffices to observe that since ${A^*}/{\sim}$ is a finite monoid, it has an idempotent power $\omega$ (\emph{i.e.}, $s^\omega=s^{2\omega}$ for all $s\in {A^*}/{\sim}$). It suffices to choose $p$ as this power.
\end{proof}

In particular, for any finite \pvari \Cs, we will call the minimal number $p \geq 1$ which satisfies the statement of Lemma~\ref{lem:metho:omegapower}, the \emph{period~of\/~$\Cs$}.

\medskip\noindent\textbf{Stratifications.}
We now turn to stratifications. We start with the definition and explain how to use them for lifting the methodology outlined in the previous section to infinite classes.

Let \Cs be a class of languages. A \emph{stratification of \Cs} is an infinite sequence of \emph{finite} classes $\Cs_0,\dots,\Cs_k,\dots$ that we call the \emph{strata}, which satisfy the following properties:
\[
  \Cs_{k} \subseteq \Cs_{k+1} \text{ for all $k \in \nat$} \quad \text{and} \quad \Cs = \bigcup_{k \in \nat} \Cs_k.
\]
When \Cs is infinite (which is the only case when there is a point to stratifying \Cs), it admits infinitely many stratifications. Of course, not all of them are relevant. The standard approach is to consider stratifications that are tied to a particular syntax which may be used to define languages in \Cs. An example is to stratify the regular languages by considering the size of regular expressions (the languages of level $k$ being those that are defined by a regular expression which is made of $k$ or less symbols).

Since we only work with classes that are either \varis or \pvaris, it is important that the strata are such classes as well. An nice consequence of the fact that we deal with classes of regular languages is that such a stratification always exists.

\begin{proposition}\label{prop:metho:alwaystrat}
  Let \Cs be a class of languages. The two following properties hold:
  \begin{enumerate}
  \item \Cs is a \pvari of regular languages if and only if there exists a stratification of \Cs in which each stratum is a \pvari.
  \item \Cs is a \vari of regular languages if and only if there exists a stratification of \Cs in which each stratum is a \vari.
  \end{enumerate}
\end{proposition}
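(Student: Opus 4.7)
The plan is to verify each equivalence by its two implications. Both items of the proposition are proved in parallel; the only substantive difference is lattice closure in item 1 versus Boolean closure in item 2. The heart of the argument is the construction of the stratification, and within it, the finiteness of the strata.

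The easy direction, from stratification to \pvari (resp.\ \vari). Given a stratification $\Cs = \bigcup_k \Cs_k$ whose strata are \pvaris (resp.\ \varis), I would check that $\Cs$ inherits each closure property. Since the strata form an increasing chain, any finite collection of languages in $\Cs$ lies in a common stratum $\Cs_k$; closure of $\Cs_k$ under the binary operations (union, intersection and, for item 2, complement) therefore transfers to $\Cs$, and the same holds for the unary quotients. Moreover $\emptyset, A^* \in \Cs_0 \subseteq \Cs$. Finally, each stratum is a \emph{finite} \pvari, so Lemma~\ref{lem:metho:omegapower} forces every language in $\Cs_k$ to be regular, whence $\Cs \subseteq \reg$.

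The nontrivial direction: constructing a stratification. Since $\reg$ is countable and $\Cs \subseteq \reg$, the class $\Cs$ is at most countable. If $\Cs$ is finite I would set $\Cs_k = \Cs$ for all $k$. Otherwise I would fix an enumeration $\Cs = \{L_0, L_1, \ldots\}$ and define $\Cs_k$ to be the smallest \pvari (resp.\ \vari) containing $\{L_0, \ldots, L_k\}$. Then $\Cs_k \subseteq \Cs_{k+1} \subseteq \Cs$ holds by construction and $\bigcup_k \Cs_k = \Cs$ because $L_k \in \Cs_k$, so the only remaining task is to prove that each $\Cs_k$ is finite.

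The main obstacle is precisely this finiteness, which I plan to establish in two stages. First, form $Q_k$, the closure of $\{\emptyset, A^*, L_0, \ldots, L_k\}$ under left and right quotients: by Theorem~\ref{thm:auto:nerode} each regular language has finitely many one-sided quotients and hence finitely many two-sided ones, so $Q_k$ is a finite set of regular languages. Second, take the lattice (resp.\ Boolean algebra) generated by $Q_k$ inside $2^{A^*}$: by distributivity every element of the lattice closure is a finite union of finite intersections of elements of $Q_k$, of which there are only finitely many; for item 2 the Boolean closure has at most $2^{2^{|Q_k|}}$ elements. The final ingredient, which ties the two stages together, is that quotients distribute over union, intersection and complement, so the lattice (resp.\ Boolean) closure of a quotient-closed set is itself quotient-closed and therefore coincides with $\Cs_k$. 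This yields the required finiteness and completes the proof.
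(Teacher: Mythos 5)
Your proof is correct and follows the same overall strategy as the paper: build an increasing sequence of finite subclasses that exhausts \Cs, then take the smallest \pvari (resp.\ \vari) generated by each, and argue finiteness via Myhill--Nerode. The only substantive difference is how the initial finite approximations are chosen: the paper takes $\Ds_k$ to be the languages of \Cs recognized by an \nfa with at most $k$ states, whereas you exploit the countability of $\reg$ (hence of $\Cs$) to fix an enumeration and take $\{L_0,\ldots,L_k\}$. Both choices give finite, increasing classes whose union is \Cs; yours avoids any appeal to automata models and handles the finite case of \Cs separately, while the paper's is slightly more canonical in that it does not require a choice of enumeration. Your finiteness argument is also more explicit than the paper's one-line remark: you spell out the intermediate quotient-closed set $Q_k$, observe that its lattice (resp.\ Boolean) closure is finite by distributivity, and note that this closure is itself quotient-closed because quotients commute with the Boolean operations, hence it is exactly $\Cs_k$. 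This is precisely what the paper leaves implicit, and it is a welcome clarification.
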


\begin{proof}
  We only prove the first item, the proof of the second one is similar. Let us first assume that there exists a stratification of \Cs in which each stratum is a \pvari. It is immediate that \Cs is a \pvari as well since it is the union of all strata. Moreover, since each stratum is a finite \pvari by definition, it may only contains regular languages by Lemma~\ref{lem:metho:omegapower}. We conclude that \Cs is a \pvari of regular languages.

  Conversely, assume that \Cs is a \pvari of regular languages. For all $k \in \nat$, we let $\Ds_k$ be the class of all languages $L \in \Cs$ that are recognized by an \nfa with at most $k$ states. Clearly, all classes $\Ds_k$ are finite and $\Ds_k \subseteq \Ds_{k+1}$ for all $k \in \nat$. Moreover, since $\Cs$ contains only regular languages, every $L \in \Cs$ is recognized by some \nfa and it follows that:
  \[
    \Cs = \bigcup_{k \in \nat} \Ds_k.
  \]
  Hence, this is indeed a stratification of \Cs. However the classes $\Ds_k$ need not be \pvaris. For all $k \in \nat$, let $\Cs_k$ be the smallest \pvari containing~$\Ds_k$. We still have $\Cs_k \subseteq \Cs_{k+1}$ for all $k \in \nat$ and since \Cs is a \pvari, we have,
  \[
    \Cs = \bigcup_{k \in \nat} \Cs_k.
  \]
  Hence, it suffices to prove that the classes $\Cs_k$ are finite. Since quotients commute with Boolean operations, any language in $\Cs_k$ is a Boolean combination of quotients of languages in $L \in \Ds_k$. Finally, recall that by Myhill-Nerode theorem (Theorem~\ref{thm:auto:nerode}) a regular language has finitely many quotients. Hence, since $\Ds_k$ is a finite class of regular languages, $\Cs_k$ is finite as well. This terminates the proof.
\end{proof}

Proposition~\ref{prop:metho:alwaystrat} is useful when trying to prove generic results for an arbitrary \pvari \Cs. However, when working with a specific one, we always define a tailored stratification which has better properties than the generic one given by  Proposition~\ref{prop:metho:alwaystrat}. Finding natural stratifications is motivated by two objectives. First, they are used as classifications of the languages belonging to the class: the lowest level that contains a particular language serves as a complexity measure for this language. Second, we use stratifications to lift some results from finite to infinite~classes.

The main idea is that proving a property for \Cs can be reduced to proving that it is satisfied by all strata $\Cs_k$, which are finite. For example, for showing that a given language $L$ does not belong to $\Cs$, it suffices to prove that it does not belong to $\Cs_k$ for all $k \in \nat$.

We finish with an overview of the notions that we have introduced and outline the general approach that we will take when using them with specific classes. Assume that an (infinite) \pvari \Cs is fixed. We will always begin by presenting a stratification of \Cs.
\begin{itemize}
\item When \Cs is a \pvari, we want all strata $\Cs_k$ to be \pvaris as well.
\item When \Cs is a \vari, we want all strata $\Cs_k$ to be \varis as well.
\end{itemize}

\begin{example}\label{ex:metho:att}
  Consider the class of alphabet threshold testable languages (denoted by \att), which consists of all finite Boolean combinations of languages of the form $(A^*aA^*)^d$ for $a \in A$ and $d \geq 1$. One may observe that \att is a \vari. Obtaining a natural stratification of \att is simple: for any $d \geq 1$, it suffices to define \datt as the class of all finite Boolean combinations of languages of the form $(A^*aA^*)^{d'}$ for some $a \in A$ and $d' \leq d$. Clearly all classes \datt are finite. Furthermore, $\datt \subseteq \patt{(d+1)}$ for all $d \geq 1$, and $\att$ is the union of all classes \datt. Thus, this is indeed a stratification of \att.
\end{example}

Once we have our stratification $\Cs_0,\dots,\Cs_k,\dots$ of \Cs into \pvaris in hand, our next move is to consider the canonical preorder relations associated to each stratum. For this overview, we denote by $\leq_k$ the preorder associated to $\Cs_k$ for all $k \in \nat$. By definition, $w \leq_k w'$ if and only if,
\[
  \text{For all $L \in \Cs_k$} \quad w \in L \ \Rightarrow\ w' \in L.
\]

Note that these relations are connected. Indeed, by definition of stratifications, we have $\Cs_k \subseteq \Cs_{k+1}$ for all $k \geq 1$. Hence, the preorder $\leq_{k+1}$ refines the preorder $\leq_k$.

\begin{fact}\label{fct:metho:finer}
  For any $k \in \nat$ and any two words $w,w'$, the following implication holds:
  \[
    w \leq_{k+1} w'\ \Rightarrow\ w \leq_k w'.
  \]
\end{fact}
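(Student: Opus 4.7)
The plan is to simply unfold the two definitions involved; no clever step is required. By hypothesis, $\Cs_0, \Cs_1, \ldots$ is a stratification of $\Cs$, so in particular $\Cs_k \subseteq \Cs_{k+1}$ holds for every $k \in \nat$. This containment is really all we need.

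Fix $k \in \nat$ and words $w, w' \in A^*$ such that $w \leq_{k+1} w'$. By the definition of the canonical preorder applied to the stratum $\Cs_{k+1}$, this means that for every language $L \in \Cs_{k+1}$, the implication $w \in L \Rightarrow w' \in L$ holds. Now take an arbitrary $L \in \Cs_k$. Since $\Cs_k \subseteq \Cs_{k+1}$, we have $L \in \Cs_{k+1}$ as well, so the implication applies to this $L$: if $w \in L$, then $w' \in L$. As $L \in \Cs_k$ was arbitrary, we have established $w \leq_k w'$.

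The main (and only) obstacle is ensuring that the reader recognizes this as a direct consequence of the definitions; there is no combinatorial content beyond the monotonicity of the canonical preorder under inclusion of the underlying class. The statement is essentially the observation that a finer family of languages induces a finer preorder (finer in the sense of having fewer comparable pairs going ``upward'').
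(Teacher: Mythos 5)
Your proof is correct and matches the paper's (implicit) argument exactly: the paper also derives the fact from the containment $\Cs_k\subseteq\Cs_{k+1}$ built into the definition of a stratification, noting that a larger class induces a finer preorder. Nothing to add.
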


Moreover, since we choose all strata to be \pvaris, these preorder relations satisfy all generic properties proved above, which we summarize in the following lemma.

\begin{lemma}\label{lem:metho:stratlem}
  For any $k \in \nat$, the relation $\leq_k$ is a preorder over $A^*$ which has finitely many upper sets and is compatible with the concatenation operation. Moreover, for any language $L \subseteq A^*$,
  \begin{enumerate}
  \item $L \in \Cs_k$ if and only if $L$ is an upper set for $\leq_k$.
  \item $L \in \Cs$ if and only if there exists $k \in \nat$ such that $L$ is an upper set for $\leq_k$.
  \end{enumerate}
\end{lemma}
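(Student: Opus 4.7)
The plan is to observe that this lemma is just a uniform packaging of the generic facts already established for an arbitrary finite \pvari, applied stratum by stratum. Every $\Cs_k$ is by choice a finite \pvari (this is precisely the situation ensured by Proposition~\ref{prop:metho:alwaystrat} and reiterated in the discussion preceding the lemma), so the hypotheses of all earlier technical results are automatically satisfied for each~$k$.

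I would first handle the three structural claims about $\leq_k$. That $\leq_k$ is a preorder on $A^*$ is immediate from its definition, since ``$w \in L \Rightarrow w' \in L$'' is reflexive and transitive as a condition on $(w,w')$. That $\leq_k$ has finitely many upper sets is exactly the second assertion of Lemma~\ref{lem:metho:upper}, applied to the finite lattice~$\Cs_k$. Finally, because $\Cs_k$ is a \pvari and therefore closed under quotients, Lemma~\ref{lem:metho:quotients} yields that $\leq_k$ is a precongruence for word concatenation, which is what ``compatible with concatenation'' means here.

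For Item~1, I would simply invoke Lemma~\ref{lem:metho:satur} with the finite lattice~$\Cs_k$: a language belongs to $\Cs_k$ if and only if it is an upper set for $\leq_k$. For Item~2, I would combine Item~1 with the defining property $\Cs = \bigcup_{k \in \nat} \Cs_k$ of a stratification. The ``only if'' direction picks $k$ such that $L \in \Cs_k$ and invokes Item~1 to conclude that $L$ is an upper set for $\leq_k$; conversely, if $L$ is an upper set for some $\leq_k$, Item~1 gives $L \in \Cs_k \subseteq \Cs$.

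No genuine obstacle is expected here: the proof is a bookkeeping exercise that records where the earlier lemmas are plugged in. The only subtle point to flag is that the hypotheses ``finite lattice'' and ``closed under quotients'' required by Lemmas~\ref{lem:metho:upper},~\ref{lem:metho:quotients} and~\ref{lem:metho:satur} are guaranteed for each stratum precisely because the stratification was chosen into \pvaris; had we allowed the strata to be arbitrary subclasses of~$\Cs$, Lemma~\ref{lem:metho:quotients} would fail and the compatibility part of the statement would be lost.
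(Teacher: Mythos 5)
Your proof is correct and follows the same route as the paper's one-line argument, which simply cites Lemma~\ref{lem:metho:satur}, Lemma~\ref{lem:metho:quotients}, and the fact that $\Cs = \bigcup_{k}\Cs_k$; you merely spell out the bookkeeping and additionally make explicit the appeal to Lemma~\ref{lem:metho:upper} for the finiteness of upper sets, which the paper leaves implicit.
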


\begin{proof}
  Immediate from Lemma~\ref{lem:metho:satur}, Lemma~\ref{lem:metho:quotients} and the fact that $\Cs = \bigcup_{k \in \nat} \Cs_k$.
\end{proof}

Finally, as we already explained, the main application of these notions will be to prove negative properties. This is what we described in Corollary~\ref{cor:metho:satur}. Let us generalize the statement to stratifications.

\begin{corollary}\label{cor:metho:stratcor}
  Let $K,L \subseteq A^*$ be two languages. The two following properties hold:
  \begin{enumerate}
  \item $L$ does {\bf not} belong to \Cs iff for all $k \in \nat$ there exist $w \in L$ and $w' \not\in L$ such that $w \leq_k w'$.
  \item $L$ is {\bf not} \Cs-separable from $K$ iff for all $k \in \nat$ there exist $w \in L$ and $w' \in K$ such that $w \leq_k w'$.
  \end{enumerate}
\end{corollary}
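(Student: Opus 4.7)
The plan is to apply Corollary~\ref{cor:metho:satur} stratum-by-stratum, using the defining property $\Cs = \bigcup_{k\in\nat} \Cs_k$ of a stratification to glue the results together. The key observation is a pair of elementary equivalences: first, a language $L$ belongs to \Cs if and only if it belongs to some stratum $\Cs_k$; second, $L$ is \Cs-separable from $K$ if and only if there exists some $k \in \nat$ such that $L$ is $\Cs_k$-separable from $K$. Both equivalences follow immediately from the fact that \Cs is the union of its strata, since belonging to \Cs or being a separator in \Cs just means being a single language that lies in \Cs.

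For Item~1, I would take the contrapositive viewpoint. By the first equivalence above, $L \notin \Cs$ is equivalent to $L \notin \Cs_k$ for every $k \in \nat$. Each $\Cs_k$ is a finite lattice (this is part of the setup of a stratification into \pvaris), so Corollary~\ref{cor:metho:satur}(1) applies and gives that $L \notin \Cs_k$ is equivalent to the existence of $w \in L$ and $w' \notin L$ with $w \leq_k w'$. Conjoining these equivalences over all $k$ yields Item~1.

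For Item~2, the same template works. By the second equivalence above, $L$ is not \Cs-separable from $K$ is equivalent to $L$ being not $\Cs_k$-separable from $K$ for every $k \in \nat$. Applying Corollary~\ref{cor:metho:satur}(2) to the finite lattice $\Cs_k$, the latter is equivalent to the existence of $w \in L$ and $w' \in K$ with $w \leq_k w'$, which gives Item~2.

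There is no real obstacle here: the proof is a routine lifting. The only point to handle with minimal care is the right-to-left direction of Item~2, namely that having such a pair $(w,w')$ at every level $k$ really forbids a separator in \Cs (not merely in each $\Cs_k$). This is immediate because any candidate separator $H \in \Cs$ lies in some $\Cs_k$, and would then constitute a $\Cs_k$-separator, contradicting the conclusion of Corollary~\ref{cor:metho:satur}(2) at that level. Thus the whole argument is a two-line proof invoking Corollary~\ref{cor:metho:satur} and the defining equality $\Cs = \bigcup_{k \in \nat} \Cs_k$.
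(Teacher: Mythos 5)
Your proof is correct and is precisely the argument the paper leaves implicit (Corollary~\ref{cor:metho:stratcor} is stated without a proof environment): reduce to each finite stratum $\Cs_k$, apply Corollary~\ref{cor:metho:satur}, and glue via $\Cs = \bigcup_k \Cs_k$. The two bridge equivalences you isolate — that $L \in \Cs$ iff $L \in \Cs_k$ for some $k$, and that $\Cs$-separability is equivalent to $\Cs_k$-separability for some $k$ — are exactly what is needed, and your treatment of the right-to-left direction of Item~2 (a candidate separator $H \in \Cs$ lands in some $\Cs_k$) correctly handles the only place where a careless reader might worry.
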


\begin{example}
  We consider the alphabet threshold testable languages of Example~\ref{ex:metho:att}. Since the strata $\datt$ are \varis, one may verify that the associated preorders are actually equivalence relations which we denote by $\datteq$. One may also verify from the definition that given two words $w,w' \in A^*$, we have $w \datteq w'$ if and only if for each letter $a \in A$, $w$ and $w'$ contains the same amount of occurrences of $a$ up to threshold~$d$ (that is, either $w$ and $w'$ both contain more than $d$ copies of $a$, or they contain exactly the same number of copies).

  With this alternate definition, it is simple to see that for all $d \geq 1$, $ab \datteq ba$. Hence since $ab \in (ab)^*$ and $ba \not\in (ab)^*$, this proves that $(ab)^* \not\in \att$.
\end{example}

\section{Boolean and polynomial closures}\label{sec:polyn-bool-clos}
Classes of languages are often built from simpler ones by using \emph{closure operators}. Such a construction pattern permits to rank classes of languages. We are interested in two such operators: Boolean and polynomial closure. In this section, we define these operators and then describe some of their important properties.

\smallskip\noindent\textbf{Boolean closure.} Given a class of languages \Cs, the \emph{Boolean closure of \Cs}, denoted by \bool{\Cs}, is the smallest Boolean algebra containing \Cs, \emph{i.e.},  the smallest class of languages containing $\Cs$ closed under union, intersection and complement. Observe that by definition, the Boolean closure of a class is a Boolean algebra. In particular, it follows that Boolean closure is an idempotent operation: for any class \Cs, we have $\bool{\bool{\Cs}} = \bool{\Cs}$. Furthermore, Boolean closure preserves many properties of the input class~\Cs. In particular, it preserves closure under quotient.

\begin{proposition}\label{prop:hintro:boolc}
  For any \pvari of languages \Cs, the Boolean closure of \Cs is a \vari.
\end{proposition}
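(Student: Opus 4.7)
The proof is short and essentially a direct calculation, so the plan is to make the argument as transparent as possible. Since $\bool{\Cs}$ is by definition a Boolean algebra, the only thing to verify is closure under left and right quotients. That is, given $L \in \bool{\Cs}$ and a word $w \in A^*$, I would establish that $w^{-1}L \in \bool{\Cs}$ (and symmetrically $Lw^{-1} \in \bool{\Cs}$).

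The strategy is to exploit the fact that quotients commute with all Boolean operations, a property already recorded in the preliminaries (for unions and complements) and which immediately extends to intersections via $w^{-1}(K \cap L) = w^{-1}K \cap w^{-1}L$. Consequently, I would proceed by a structural induction on the construction of an element of $\bool{\Cs}$ as a finite Boolean combination of languages from $\Cs$. For the base case, any $L \in \Cs$ satisfies $w^{-1}L \in \Cs \subseteq \bool{\Cs}$ since $\Cs$ is a \pvari. For the inductive step, if $w^{-1}K_1, w^{-1}K_2 \in \bool{\Cs}$, then so are $w^{-1}(K_1 \cup K_2)$, $w^{-1}(K_1 \cap K_2)$ and $w^{-1}(A^* \setminus K_1)$ by the commutation properties, since $\bool{\Cs}$ is a Boolean algebra. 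The argument for right quotients is entirely symmetric.

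There is essentially no obstacle here: the whole statement reduces to checking that a Boolean algebra generated by a quotient-closed class is again quotient-closed, which follows mechanically from the commutation between quotients and Boolean operations. The only subtlety worth flagging explicitly is that the commutation identities are applied inside $\bool{\Cs}$ rather than inside $\Cs$ itself, so one should present the argument as an induction on the formation of a Boolean combination rather than as a single line. I would conclude by noting that, combined with the fact that $\bool{\Cs}$ is by definition a Boolean algebra, this yields that $\bool{\Cs}$ is a \vari, as desired.
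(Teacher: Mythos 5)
Your argument is correct and is exactly the one the paper has in mind; the paper's proof is the single line ``Immediate since quotients commute with Boolean operations,'' and your structural induction on the Boolean combination is simply the explicit unpacking of that remark. Nothing is missing and no different idea is introduced.
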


\begin{proof}
  Immediate since quotients commute with Boolean operations.
\end{proof}

\begin{remark}
  Not all closure properties are preserved under Boolean closure. The most significant example is closure under concatenation. In fact, all classes that we build with Boolean closure will \emph{not} be closed under concatenation. This is a problem, since our techniques for solving membership and separation rely on concatenation.
\end{remark}

\smallskip\noindent\textbf{Polynomial closure.} We turn to the second operation: \emph{polynomial closure}. Let \Cs be a class of languages. We say that a language $L \subseteq A^*$ is a \emph{\Cs-monomial} when there exists a natural number $n \in \nat$, $L_0,\cdots,L_n \in \Cs$ and $a_1,\dots,a_n \in A$ such that,
\[
  L = L_0a_1L_1a_2L_2 \cdots a_nL_n.
\]
The minimal integer $n$ for which this property holds is called the \emph{degree} of $L$. Finally, a \emph{\Cs-polynomial} is a finite union of \Cs-monomials. The degree of a \Cs-polynomial $L$ is the minimal integer $n$ such that $L$ is a union of \Cs-monomials having degree at most~$n$.

We may now define polynomial closure. For any class of languages \Cs, we call \emph{polynomial closure of \Cs} the class of all \Cs-polynomials. We denote it by $\pol{\Cs}$. Note that $\Cs \subseteq \pol{\Cs}$, since the languages in \Cs are the \Cs-monomials of degree $0$.

\begin{example}\label{ex:hintro:sig1}
  Consider $\Cs = \{\emptyset,A^*\}$. Then $\pol{\Cs}$ consists of all finite unions of languages of the form $A^*a_1A^* \cdots a_nA^*$ for $n \geq 0$ and $a_1,\dots,a_n \in A$.
\end{example}

\begin{example}\label{ex:hintro:sig2}
  Consider the class  \at of alphabet testable languages from Example~\ref{ex:at}. One may check that $\pol{\at}$ consists of all finite unions of languages of the~form,
  \[
    B_0^*a_1B_1^*a_2B_2^* \cdots a_nB_n^* \quad \text{with $a_1,\dots,a_n \in A$ and $B_0,\dots,B_n \subseteq A$}.
  \]
\end{example}

It is not immediate from the definition that classes built with polynomial closure have much structure. In fact, without any hypothesis on the input class \Cs, we are only able to prove two properties. First, \pol{\Cs} is closed under union, by definition.
Moreover, it is simple to prove that any polynomial closure is closed under \emph{marked concatenation}.

\begin{lemma}\label{lem:hintro:polmarked}
  For any class \Cs, the class \pol{\Cs} is closed under marked concatenation.
\end{lemma}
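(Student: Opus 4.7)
The plan is to unfold the definitions of \Cs-monomial and \Cs-polynomial and observe that marked concatenation of monomials produces a monomial, after which marked concatenation of polynomials follows by distributing over unions.

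First I would take $K, L \in \pol{\Cs}$ and $a \in A$, and use the definition of polynomial closure to write $K = \bigcup_{i=1}^p M_i$ and $L = \bigcup_{j=1}^q N_j$, where each $M_i$ and each $N_j$ is a \Cs-monomial. By distributivity of concatenation over union, I get
\[
  K a L = \bigcup_{i,j} M_i\, a\, N_j,
\]
so it suffices to show each $M_i a N_j$ is a \Cs-monomial, since then $KaL$ is a finite union of \Cs-monomials and therefore a \Cs-polynomial.

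Next I would expand $M_i = L_0 a_1 L_1 \cdots a_n L_n$ and $N_j = L'_0 a'_1 L'_1 \cdots a'_m L'_m$ with $L_k, L'_k \in \Cs$ and $a_k, a'_k \in A$. Then
\[
  M_i\, a\, N_j = L_0 a_1 L_1 \cdots a_n L_n\, a\, L'_0 a'_1 L'_1 \cdots a'_m L'_m,
\]
which matches exactly the syntactic shape of a \Cs-monomial (of degree $n + m + 1$), since the letter $a$ sits between the two existing \Cs-factors $L_n$ and $L'_0$.

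There is no real obstacle here: the statement is essentially a bookkeeping check, and the only thing to be careful about is the edge case where $K$ or $L$ is a monomial of degree $0$ (that is, an element of \Cs itself) or even is empty; the union $\bigcup_{i,j}$ then has one term (or is empty, yielding $\emptyset$, which is the empty union of monomials and hence a polynomial by convention). So once the two bullets above are written out, the proof is complete.
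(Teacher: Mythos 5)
Your proof is correct and follows the same approach as the paper: distribute the marked concatenation over the unions expressing $K$ and $L$ as \Cs-polynomials, then observe that the marked concatenation of two \Cs-monomials is again a \Cs-monomial. Your additional remarks about degree-$0$ monomials and the empty union are harmless but unnecessary, since a \Cs-monomial of degree $0$ is just an element of \Cs and a \Cs-polynomial is already defined as a finite (possibly empty) union.
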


\begin{proof}
  Let $K,L$ over $A$ be two languages in \pol{\Cs}. Given $a \in A$, we prove that $KaL \in \pol{\Cs}$. By definition, there exists \Cs-monomials $K_1,\dots,K_m$ and $L_1,\dots,L_n$ such that $K = \bigcup_{i \leq m} K_i$ and $L = \bigcup_{i \leq n} L_i$. Hence, we have,
  \[
    KaL = \bigcup_{i \leq m} \bigcup_{j \leq n} K_iaL_j.
  \]
  Since $K_iaL_j$ is clearly a \Cs-monomial for all $i,j$, we conclude that $KaL \in \pol{\Cs}$.
\end{proof}

When \Cs contains the singleton language $\{\varepsilon\}$, one may prove a stronger variant of Lemma~\ref{lem:hintro:polmarked}. While simple, this observation is important as we will rely on it later.

\begin{lemma}\label{lem:hintro:epsilon}
  Let \Cs be a class of languages. Then, $\{\varepsilon\} \in \Cs$ if and only if $\{\varepsilon\} \in \pol{\Cs}$. Moreover, in that case, for any $w \in A^+$ and $K,L \in \pol{\Cs}$, we have:
  \[
    KwL \in \pol{\Cs} \quad wL \in \pol{\Cs} \quad Kw \in \pol{\Cs} \quad \{w\} \in \pol{\Cs}.
  \]
\end{lemma}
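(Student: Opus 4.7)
The plan is to split the lemma into two parts: the biconditional $\{\varepsilon\} \in \Cs \Leftrightarrow \{\varepsilon\} \in \pol{\Cs}$, and the consequences for marked concatenation with singleton words. Both parts will follow directly from the definitions once combined with Lemma~\ref{lem:hintro:polmarked}.

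For the biconditional, the forward direction is immediate since $\Cs \subseteq \pol{\Cs}$. For the converse, I would unfold the definition: if $\{\varepsilon\} \in \pol{\Cs}$, then $\{\varepsilon\}$ is a finite union of \Cs-monomials, so some non-empty monomial $M = L_0 a_1 L_1 \cdots a_n L_n$ must contain $\varepsilon$. Being non-empty, all the $L_i$'s are non-empty, so $M$ contains words of length at least $n$. Since $M \subseteq \{\varepsilon\}$, this forces $n = 0$ and $L_0 = \{\varepsilon\}$, giving $\{\varepsilon\} \in \Cs$.

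For the second part, assume $\{\varepsilon\} \in \pol{\Cs}$ and fix $w = a_1 \cdots a_n \in A^+$ together with $K, L \in \pol{\Cs}$. I would prove $KwL \in \pol{\Cs}$ by an iterated use of Lemma~\ref{lem:hintro:polmarked}. Set $M_0 = K$ and, for $1 \leq i < n$, define $M_i = M_{i-1} \cdot a_i \cdot \{\varepsilon\}$, which coincides with $K a_1 \cdots a_i$. By induction on $i$, each $M_i$ lies in \pol{\Cs} because marked concatenation of two languages of \pol{\Cs} stays in \pol{\Cs}, and $\{\varepsilon\}$ is one such language by hypothesis. Finally, $KwL = M_{n-1} \cdot a_n \cdot L$ is again a marked concatenation of two languages in \pol{\Cs}, hence in \pol{\Cs}. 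The three remaining assertions then follow as special cases: take $K = \{\varepsilon\}$ for $wL$, take $L = \{\varepsilon\}$ for $Kw$, and take both equal to $\{\varepsilon\}$ for $\{w\}$.

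The proof is essentially routine; the only delicate point is the reverse direction of the biconditional, where I must use a length argument to rule out monomials of positive degree. Everything else reduces to iterating Lemma~\ref{lem:hintro:polmarked} after bootstrapping with $\{\varepsilon\} \in \pol{\Cs}$, so I do not anticipate any real obstacle.
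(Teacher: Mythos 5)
Your proof is correct and follows essentially the same route as the paper: the forward direction from $\Cs \subseteq \pol{\Cs}$, the converse via the observation that a monomial of positive degree contains only words of positive length, and the marked-concatenation step by writing $KwL = Ka_1\{\varepsilon\}a_2\cdots\{\varepsilon\}a_n L$ and iterating Lemma~\ref{lem:hintro:polmarked}. The only difference is cosmetic: you spell out the length argument and phrase the concatenation step as an explicit induction on prefixes, where the paper states both in one line.
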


\begin{proof}
  Clearly, if $\{\varepsilon\} \in \Cs$, then $\{\varepsilon\} \in \pol{\Cs}$ since $\Cs \subseteq \pol{\Cs}$. Conversely, it suffices to observe that a \Cs-polynomial of degree $n \geq 1$ cannot be equal to $\{\varepsilon\}$. Hence, if $\{\varepsilon\} \in \pol{\Cs}$, then $\{\varepsilon\}$ is a \Cs-polynomial of degree $0$, \emph{i.e.}, an element of \Cs.

  We now assume that $\{\varepsilon\} \in \pol{\Cs}$. Consider $w \in A^+$ and $K,L \in \pol{\Cs}$. We have to prove that $KwL,wL,Kw,\{w\} \in \pol{\Cs}$. We present a proof for $KwL$ (the other cases follow by choosing $K$ or $L$ to be $\{\varepsilon\}$). Let $w = a_1 \cdots a_\ell$ with $a_1,\dots,a_\ell \in A$. It is immediate that $KwL = Ka_1\{\varepsilon\}a_2\{\varepsilon\} \cdots \{\varepsilon\}a_\ell L$.  Hence, $KwL \in \pol{\Cs}$ by closure under marked concatenation.
\end{proof}

A simple corollary of Lemma~\ref{lem:hintro:polmarked} is that there is a more elementary definition of \pol{\Cs}: it is the smallest class containing \Cs which is closed under both union and marked concatenation. Furthermore, a useful consequence of this observation is that polynomial closure is an idempotent operation: applying it to a class which is already closed under union and marked concatenation does not add any new language.

\begin{lemma}\label{lem:hintro:polidem}
  For any class of languages \Cs, we have $\pol{\pol{\Cs}} = \pol{\Cs}$.
\end{lemma}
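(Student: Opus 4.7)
The plan is to prove the two inclusions separately. The inclusion $\pol{\Cs} \subseteq \pol{\pol{\Cs}}$ is immediate: for any class $\Ds$, we have $\Ds \subseteq \pol{\Ds}$ (since the languages of $\Ds$ are exactly the $\Ds$-monomials of degree $0$), so applying this to $\Ds = \pol{\Cs}$ gives the desired inclusion.

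For the reverse inclusion $\pol{\pol{\Cs}} \subseteq \pol{\Cs}$, I would use the alternative characterization of polynomial closure mentioned just before the lemma: $\pol{\Ds}$ is the smallest class containing $\Ds$ that is closed under union and marked concatenation. This follows by combining the definition of $\pol{\Ds}$ (finite unions of $\Ds$-monomials) with Lemma~\ref{lem:hintro:polmarked} (closure under marked concatenation). Consequently, to show $\pol{\pol{\Cs}} \subseteq \pol{\Cs}$, it suffices to verify that $\pol{\Cs}$ itself is already closed under union and marked concatenation, and already contains $\pol{\Cs}$. The last point is trivial; closure under union holds directly from the definition of polynomial closure (a finite union of finite unions of $\Cs$-monomials remains a finite union of $\Cs$-monomials); and closure under marked concatenation is precisely Lemma~\ref{lem:hintro:polmarked}.

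No step looks particularly delicate here; the whole argument amounts to repackaging the characterization of $\pol{\cdot}$ as a closure operator and invoking Lemma~\ref{lem:hintro:polmarked}. If a more explicit, hands-on argument is preferred, one could instead directly unfold the definition: take any $\pol{\Cs}$-monomial $L = L_0 a_1 L_1 \cdots a_n L_n$ where each $L_i$ is a union of $\Cs$-monomials $L_i = \bigcup_{j} K_{i,j}$, distribute the concatenation over the union to write $L$ as a finite union of products $K_{0,j_0} a_1 K_{1,j_1} \cdots a_n K_{n,j_n}$, and observe that each such product, obtained by concatenating $\Cs$-monomials and letters, is itself a $\Cs$-monomial (by flattening the nested concatenations). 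Then a finite union of $\pol{\Cs}$-monomials becomes a finite union of $\Cs$-monomials, hence an element of $\pol{\Cs}$. The only mild subtlety is this flattening step, but it is purely syntactic.
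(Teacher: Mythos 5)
Your proof is correct and follows the approach the paper intends: the lemma appears in the paper without a formal proof, presented as a "simple corollary" of the observation (just before the statement) that $\pol{\Cs}$ is the smallest class containing $\Cs$ closed under union and marked concatenation, which itself rests on Lemma~\ref{lem:hintro:polmarked}. Both the closure-operator argument and the direct flattening argument you give are valid fill-ins for what the paper leaves implicit.
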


Without any hypothesis on the input class \Cs, these are the only properties of \pol{\Cs} that one may prove. However, it was proved by Arfi~\cite{arfi87,arfi91} that when \Cs is a \pvari containing only regular languages, the polynomial closure \pol{\Cs} is a \pvari as well. The hard part is proving closure under intersection. In fact the proof of this single property depends on all properties of \pvaris of regular languages: the fact that \Cs is closed under quotient and contains only regular languages is crucial for showing that \pol{\Cs} is closed under intersection.

This result is important since it yields an alternate definition of \pol{\Cs} which is much simpler to manipulate: \pol{\Cs} is the smallest lattice containing \Cs which is closed under marked concatenation.

\begin{theorem}[Arfi~\cite{arfi87,arfi91}, Pin~\cite{jep-intersectPOL}]\label{thm:hintro:polc}t
  Let \Cs be a \pvari of regular languages. Then \pol{\Cs} is also a \pvari of regular languages. In particular, \pol{\Cs} is the smallest lattice containing \Cs and closed under marked concatenation.
\end{theorem}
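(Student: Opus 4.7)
The plan is to check each clause of ``\pvari of regular languages'' for \pol{\Cs} in turn, with the final ``smallest lattice'' sentence falling out as an easy corollary. Several items are routine given what the paper has already collected. First, $\Cs\subseteq\pol{\Cs}$ gives $\emptyset,A^*\in\pol{\Cs}$, and closure under union is immediate from the definition of a $\Cs$-polynomial as a finite union of monomials. Regularity is also cheap: every $L_i\in\Cs$ is regular by hypothesis, each monomial is a concatenation of regular languages with singleton letters, and a finite union of regular languages is regular. These observations dispose of most of ``lattice of regular languages.''

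Next I would handle closure under quotient. It suffices to treat left quotients by a letter $b\in A$ of a single monomial $L=L_0 a_1 L_1\cdots a_n L_n$, because quotients commute with union and quotient by a word reduces to quotient by letters. Reading a word $bu\in L$, the letter $b$ is either absorbed by $L_0$ (producing the monomial $(b^{-1}L_0)\,a_1L_1\cdots a_n L_n$) or, provided $\varepsilon\in L_0$, matched with $a_1$ (requiring $b=a_1$, contributing $L_1 a_2\cdots a_n L_n$), and so on iteratively through $L_1,\ldots,L_{i-1}$ as long as each contains $\varepsilon$. Since \Cs is quotienting, $b^{-1}L_0\in\Cs$; the other factors are already in \Cs; thus $b^{-1}L$ is a finite union of $\Cs$-monomials, i.e.\ in \pol{\Cs}. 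Right quotients are symmetric.

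The main obstacle is closure under \emph{intersection}, which, as the paper emphasizes, is precisely the point where both the quotient property and the regularity of the elements of \Cs are used. My plan is to prove by induction on the total degree $m+n$ that the intersection of two monomials $K=K_0 a_1 K_1\cdots a_m K_m$ and $L=L_0 b_1 L_1\cdots b_n L_n$ lies in \pol{\Cs}. For the inductive step I would analyze, for a word $w\in K\cap L$, the position of the leftmost $a$-marker relative to the leftmost $b$-marker, splitting into three cases (strictly before, equal, strictly after). In each case one ``peels off'' a prefix ending at the earlier marker, and the remaining suffix must simultaneously belong to a shorter monomial of $K$ and to a quotient of the remaining monomial of $L$ (or vice versa); here the hypothesis that \Cs is quotienting keeps these quotients inside \Cs, and closure of \Cs under intersection handles the ``shared'' factor. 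The delicate point is making the case analysis \emph{finite}: the cut can fall anywhere inside the current $L_0$ (or $K_0$), producing a priori infinitely many different residual factors. Finiteness is rescued by Theorem~\ref{thm:auto:nerode}: $L_0$ is regular and hence has only finitely many distinct right quotients, and these right quotients are in \Cs by the quotienting hypothesis, so only finitely many distinct monomials arise. This is the step I expect to be the technical heart of the proof.

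Once intersection is in hand, \pol{\Cs} is a lattice (contains $\emptyset,A^*$, closed under union and intersection), closed under quotient, closed under marked concatenation by Lemma~\ref{lem:hintro:polmarked}, and contains only regular languages; this is exactly a \pvari of regular languages. For the ``smallest lattice'' clause, let $\Ds$ be any lattice with $\Cs\subseteq\Ds$ that is closed under marked concatenation. A straightforward induction on degree shows $\Ds$ contains every $\Cs$-monomial, and then closure under finite union gives every $\Cs$-polynomial; hence $\pol{\Cs}\subseteq\Ds$, which combined with the fact that \pol{\Cs} itself is such a lattice yields minimality.
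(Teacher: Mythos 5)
Your plan matches the paper's proof essentially step for step: union and regularity are the easy parts, quotients reduce to peeling a single letter off a monomial, and the heart is closure under intersection, which both you and the paper establish by induction on the degree, splitting on where the leftmost markers of the two monomials fall (the paper's special case with one monomial of degree $0$, then its $H_\ell, H_r, H_c$ decomposition) and invoking Theorem~\ref{thm:auto:nerode} so that only finitely many residual quotients appear. One small slip in your quotient argument: when a single letter $b$ is removed from $L_0a_1L_1\cdots a_nL_n$ it is either absorbed by $L_0$ (giving $(b^{-1}L_0)a_1L_1\cdots a_nL_n$) or, provided $\varepsilon\in L_0$ and $b=a_1$, matched with the marker $a_1$ (giving $L_1a_2\cdots a_nL_n$), and then the process \emph{stops}; there is no further iteration through $L_1,L_2,\ldots$, because each $a_i$ is a single, non-nullable letter that blocks propagation, and iterating would introduce spurious terms such as $L_2a_3\cdots a_nL_n$ that are not in general contained in $b^{-1}L$.
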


\begin{proof}
  Let us fix a \pvari \Cs of regular languages. We prove that \pol{\Cs} is a \pvari of regular languages. We already know that \pol{\Cs} is closed under union be definition. Moreover, it is also clear that $\pol{\Cs}$ consists only of regular languages, since so does $\Cs$ and the class \reg is closed under both union and concatenation. Hence, we may concentrate on closure intersection and quotient. We begin with the latter, which is simple (for this, we only need \Cs to be closed under quotient itself).

  \medskip
  Given any $w \in A^*$ and any \Cs-polynomial $L$ over $A$, we have to prove that $w^{-1}L$ and $Lw^{-1}$ are \Cs-polynomials as well. We present a proof for $w^{-1}L$ (the argument for $Lw^{-1}$ is symmetrical). We may assume without loss of generality that $w = a \in A$, since $(ua)^{-1}L=a^{-1}(u^{-1}L)$. Moreover, since quotients commute with union, it suffices to consider the case when $L$ is a \Cs-monomial. Hence, there exist $L_0,\ldots,L_n \in \Cs$ and $a_1,\dots,a_n \in A$ such that,
  \[
    L = L_0a_1L_1a_2L_2 \cdots a_nL_n.
  \]
  There are two cases depending on whether $\varepsilon \in L_0$ and $a_1 = a$, or not. We have:
  \[
    a^{-1}L = \left\{
      \begin{array}{ll}
        (a^{-1}L_0)a_1L_1a_2L_2 \cdots a_nL_n \cup L_1a_2L_2 \cdots a_nL_n & \text{~if $\varepsilon \in L_0$ and $a_1 = a$,} \\
        (a^{-1}L_0)a_1L_1a_2L_2 \cdots a_nL_n & \text{~otherwise.}
      \end{array}
    \right.
  \]
  Since \Cs is closed under quotient by hypothesis, $w^{-1}L$ is a union of \Cs-monomials and is itself a \Cs-polynomial. This terminates the proof for closure under quotient.

  \medskip

  We now turn to closure under intersection, which is more involved. In particular, we use all hypotheses on~\Cs, including closure under quotient. Let $K$ and $L$ be two \Cs-polynomials and let $m,n \in \nat$ be their degrees. We have to prove that $K \cap L$ is a \Cs-polynomial as well.  Note that since intersection is distributive over union, we may assume without loss of generality that $K$ and $L$ are both \Cs-monomials. We begin by treating the special case when either $K$ or $L$ has degree $0$ using induction on the degree of the other one. We then use this special case to prove the general one.

  \medskip
  \noindent
  {\bf Special Case: Either $K$ or $L$ has degree $0$.} By symmetry, we may assume that $K$ has degree $0$ which means that $K \in \Cs$. We use induction on the degree $n$ of the \Cs-monomial $L$. If $n=0$, $K$ and $L$ both belong to \Cs, which is closed under intersection by hypothesis. Hence we conclude that $K \cap L \in \Cs \subseteq \pol{\Cs}$. Otherwise, $L$ has degree $n \geq 1$ and by definition, it can be decomposed as follows: $L = L_1bL_2$ where $L_1 \in \Cs$ and $L_2$ is a \Cs-monomial of degree at most $n-1$.

  Observe that a word $w$ belongs to $K \cap L$ when it belongs to $K$ and can be decomposed as $w = w_1bw_2$ with $w_1 \in L_1$ and $w_2 \in L_2$. Given any word $u \in A^*$, we let $Q_u$ be the set of all words $x \in A^*$ such that $u \in K(bx)^{-1}$. We claim that the following equality holds:
  \begin{equation}\label{eq:hintro:intersec}
    K \cap L = \bigcup_{u \in A^*} \left(L_1 \cap \bigcap_{x \in Q_u} K(bx)^{-1}\right) \cdot b \cdot (L_2 \cap (ub)^{-1}K)
  \end{equation}
  Before proving this claim, let us explain why it concludes the proof that $K \cap L \in \pol{\Cs}$. First observe that since $\Cs\subseteq\reg$, we know from Myhill-Nerode Theorem (Theorem~\ref{thm:auto:nerode}) that there are finitely many quotients of $K$. Hence, there finitely many languages of the form $\bigcap_{x \in Q_u} K(bx)^{-1}$ and $(ub)^{-1}K$ which means that the union over all $u \in A^*$ in~\eqref{eq:hintro:intersec} actually ranges over finitely many distinct languages. Therefore, since \pol{\Cs} is closed under finite union, it suffices to prove that for any $u \in A^*$,
  \[
    \left(L_1 \cap \bigcap_{x \in Q_u} K(bx)^{-1}\right) \cdot b \cdot (L_2 \cap (ub)^{-1}K) \text{~~~belongs to~} \pol{\Cs}
  \]
  Since \pol{\Cs} is closed under marked concatenation (see Lemma~\ref{lem:hintro:polmarked}), it suffices to prove that the two following properties hold:
  \[
    L_1 \cap \bigcap_{x \in Q_u} K(bx)^{-1} \in \pol{\Cs} \quad \text{~~and~~} \quad L_2 \cap (ub)^{-1}K \in \pol{\Cs}.
  \]
  For the first property, we know by hypothesis that $L_1 \in \Cs$. Moreover, for any $x \in Q_u$, $K(bx)^{-1} \in \Cs$ since it is a quotient of $K \in \Cs$. Furthermore, since $K$ is regular (as $K\in\Cs\subseteq\reg$), it has finitely many quotients, which means that $L_1 \cap \bigcap_{x \in Q_u} K(bx)^{-1}$ is a finite intersection of languages in \Cs and belongs to \Cs as well. Since $\Cs\subseteq\pol\Cs$, it belongs also to $\pol\Cs$. For the second property, note that $(ub)^{-1}K \in \Cs$ since \Cs is  closed under quotient. Therefore, since $L_2$ is a \Cs-polynomial of degree at most $n-1$ by hypothesis, we may use induction to conclude that $L_2 \cap (ub)^{-1}K \in \pol{\Cs}$.

  \smallskip

  It remains to prove that~\eqref{eq:hintro:intersec} holds. Assume first that $w \in K \cap L$. Then $w \in K$ and it can be decomposed as $w = w_1bw_2$ with $w_1 \in L_1$ and $w_2 \in L_2$. It follows that $w_2 \in L_2 \cap (w_1b)^{-1}K$ and by definition of $Q_{w_1}$, we have $w_1 \in L_1 \cap \bigcap_{x \in Q_{w_1}} K(bx)^{-1}$.
  This terminates the proof of the left to right inclusion.

  Conversely, assume that $w$ belongs to the right hand side of~\eqref{eq:hintro:intersec}. We have to prove that $w \in K \cap L$. It is immediate from the definition that $w \in L_1bL_2 = L$. It remains to prove that $w \in K$. By definition, there exists $u \in A^*$ such that $w$ can be decomposed as $w = w_1bw_2$ with $w_1 \in \bigcap_{x \in Q_u} K(bx)^{-1}$ and $w_2 \in (ub)^{-1}K$. Since $w_2 \in (ub)^{-1}K$, we have $u \in K(bw_2)^{-1}$ and therefore $w_2 \in Q_u$ by definition. We conclude that $w_1 \in K(bw_2)^{-1}$, which exactly means that $w = w_1bw_2 \in K$.

  \medskip
  \noindent
  {\bf General case.} We now assume that $K$ and $L$ both have arbitrary degrees $m$ and~$n$. To prove that $K \cap L \in \pol{\Cs}$, we proceed by induction on the sum $m+n$ of the degrees. The argument is similar to the one above. If $m = 0$ or $n = 0$, this is exactly the special case. Hence, we assume that $m,n \geq 1$: $K$ and $L$ may be decomposed as,
  \[
    K = K_1aK_2 \quad \text{and} \quad L= L_1bL_2,
  \]
  where $K_1,K_2 \in \Cs$ and $L_1,L_2$ are \Cs-monomials of degree at most $m-1$ and $n-1$. Observe that a word~$w$ belongs to $K \cap L$ if and only if it admits two decompositions witnessing its membership in $K_1aK_2$ and $L_1bL_2$, respectively. We use this observation to break down $K \cap L$ as the union of two languages (or three depending on whether $a=b$ or not). Consider the three following languages:
  \[
    \begin{array}{lll}
      H_\ell & = & \{w_1aw_2bw_3 \mid w_1 \in K_1, w_2bw_3 \in K_2, w_1aw_2 \in L_1 \text{ and } w_3 \in L_2\},\\
      H_r & = & \{w_1bw_2aw_3 \mid w_1 \in L_1, w_2aw_3 \in L_2, w_1bw_2 \in K_1 \text{ and } w_3 \in K_2\},\\
      H_c & = & \{w_1aw_2 \mid w_1 \in L_1 \cap K_1 \text{ and } w_2 \in L_2 \cap K_2\}.
    \end{array}
  \]
  It is simple to verify that $K \cap L = H_\ell \cup H_r$ when $a \neq b$ and $K \cap L = H_\ell \cup H_r \cup H_c$ when $a = b$. Hence, it suffices to prove that $H_\ell,H_r$ and $H_c$ are \Cs-polynomials. Since the proof is similar for all three cases, we concentrate on $H_\ell$.

  \medskip

  Given any word $u \in A^*$, we write $P_u$ for the set of all words $x \in A^*$ such that $u \in L_1(ax)^{-1}$, \emph{i.e.}, $P_u=(ua)^{-1}L_1$. We claim that the following equality holds:
  \begin{equation}\label{eq:hintro:interfinal}
    H_\ell = \bigcup_{u \in A^*} \left(K_1 \cap \bigcap_{x \in P_u} L_1(ax)^{-1}\right) \cdot a \cdot (K_2 \cap ((ua)^{-1}L_1)bL_2)
  \end{equation}
  Before establishing~\eqref{eq:hintro:interfinal}, let us use it to prove that $H_\ell \in \pol{\Cs}$.	Since $\Cs \subseteq \reg$, we know from Myhill-Nerode Theorem (Theorem~\ref{thm:auto:nerode}) that there are finitely many quotients of $L_1$. Hence, there are finitely many languages of the form $\bigcap_{x \in P_u} L_1(ax)^{-1}$ and $((ua)^{-1}L_1)bL_2$ for $u \in A^*$. Since \pol{\Cs} is closed under finite union, we obtain from~\eqref{eq:hintro:interfinal} that in order to show $H_\ell\in\pol\Cs$, it suffices to prove that for all $u \in A^*$,
  \[
    \left(K_1 \cap \bigcap_{x \in P_u} L_1(ax)^{-1}\right) \cdot a \cdot (K_2 \cap ((ua)^{-1}L_1)bL_2) \text{~~belongs to} \pol{\Cs}.
  \]
  Let $u \in A^*$. Since \pol{\Cs} is closed under marked concatenation (see Lemma~\ref{lem:hintro:polmarked}) it suffices to prove that the two following properties hold:
  \[
    K_1 \cap \bigcap_{x \in P_u} L_1(ax)^{-1} \in  \pol{\Cs} \quad \text{~~and~~} \quad K_2 \cap ((ua)^{-1}L_1)bL_2 \in \pol{\Cs}.
  \]
  The argument for the first property is identical to the special case that we treated above. Indeed, we know that $K_1 \in \Cs$ by hypothesis. Moreover, for any $x \in P_u$, $L_1(ax)^{-1} \in \Cs$ (it is a quotient of $L_1 \in \Cs$). Finally, $L_1 \in \Cs\subseteq\reg$ has finitely many quotients and $K_1 \cap \bigcap_{x \in P_u} L_1(ax)^{-1}$ is a finite intersection of languages in \Cs and belongs to \Cs as well, whence to $\pol\Cs$. For the other property, we have $(ua)^{-1}L_1 \in \Cs$ since \Cs is closed under quotient. Moreover, since $L_2$ is a \Cs-monomial of degree at most $n-1$, this entails that $((ua)^{-1}L_1)bL_2$ is a \Cs-monomial of degree at most $n$. Finally, $K_2$ is a \Cs-monomial of degree at most $m-1$. Therefore, we may use induction on the sum of the degrees to conclude that $K_2 \cap ((ua)^{-1}L_1)bL_2 \in \pol{\Cs}$.

  \smallskip

  It remains to prove that~\eqref{eq:hintro:interfinal} holds. Assume first that $w \in H_\ell$. Then $w = w_1aw_2bw_3$ with $w_1 \in K_1$, $w_2bw_3 \in K_2$, $w_1aw_2 \in L_1$ and $w_3 \in L_2$.	It follows that $w_2 \in (w_1a)^{-1}L_1$ and therefore, $w_2bw_3 \in K_2 \cap ((w_1a)^{-1}L_1)bL_2$. Finally, by definition of $P_{w_1}$, we have $w_1 \in K_1 \cap \bigcap_{x \in P_{w_1}} L_1(ax)^{-1}$. Therefore, we conclude that,
  \[
    w \in (K_1 \cap \bigcap_{x \in P_{w_1}} L_1(ax)^{-1}) \cdot a \cdot (K_2 \cap ((w_1a)^{-1}L_1)bL_2).
  \]
  This terminates the proof of the left to right inclusion. Conversely, assume that $w$ belongs to the right hand side of~\eqref{eq:hintro:interfinal}. We have to prove that $w \in H_\ell$. By definition, there exists $u \in A^*$ such that $w = w_1ay$ with,
  \[
    w_1 \in K_1 \cap \bigcap_{x \in P_u} L_1(ax)^{-1} \quad \text{and} \quad y \in K_2 \cap ((ua)^{-1}L_1)bL_2.
  \]
  Therefore, $y$ can be decomposed as $y = w_2bw_3$ with $w_2 \in (ua)^{-1}L_1$ and $w_3 \in L_2$. Hence, we have $w = w_1aw_2bw_3$ with $w_1 \in K_1$, $w_2bw_3 \in K_2$ and $w_3 \in L_2$. To prove that $w \in H_\ell$, it remains to prove that $w_1aw_2 \in L_1$. Since $w_2 \in (ua)^{-1}L_1$, we have $u \in L_1(aw_2)^{-1}$ and therefore $w_2 \in P_u$ by definition. We conclude that $w_1 \in L_1(aw_2)^{-1}$, which exactly means that $w = w_1aw_2 \in L_1$, as desired.
\end{proof}

We finish the presentation of the closure properties of \pol{\Cs} with a few additional results that also require \Cs to be a \pvari of regular languages. The first one is closure concatenation.

\begin{lemma}\label{lem:hintro:polconcat}
  Let \Cs be a \pvari of regular languages. Then \pol{\Cs} is closed under concatenation.
\end{lemma}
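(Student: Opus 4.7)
The plan is to reduce unmarked concatenation to marked concatenation by splitting according to the first letter of the right factor. Fix $K, L \in \pol{\Cs}$. A word $w \in KL$ admits a factorization $w = k \cdot l$ with $k \in K$ and $l \in L$. Either $l = \varepsilon$, in which case $\varepsilon$ must belong to $L$ and $w \in K$; or $l$ begins with some letter $a \in A$, say $l = al'$, which forces $l' \in a^{-1}L$ and $w = kal'$. This case analysis yields the equality
\[
KL = H \cup \bigcup_{a \in A} K \cdot a \cdot (a^{-1}L),
\]
where $H = K$ if $\varepsilon \in L$ and $H = \emptyset$ otherwise.

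I would then invoke the previous theorem to conclude. By Theorem~\ref{thm:hintro:polc}, $\pol{\Cs}$ is itself a \pvari, hence in particular closed under left quotient, so $a^{-1}L \in \pol{\Cs}$ for every $a \in A$. By Lemma~\ref{lem:hintro:polmarked}, each language $K \cdot a \cdot (a^{-1}L)$ is then a marked concatenation of two elements of $\pol{\Cs}$ and therefore belongs to $\pol{\Cs}$. The extra term $H$ is either $K \in \pol{\Cs}$ or $\emptyset$, and $\emptyset$ belongs to $\pol{\Cs}$ because $\pol{\Cs}$ is a lattice. Since $A$ is finite and $\pol{\Cs}$ is closed under finite union, we conclude that $KL \in \pol{\Cs}$.

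There is no real obstacle here: the entire argument is a one-line decomposition plus bookkeeping, and the heavy lifting (closure of $\pol{\Cs}$ under quotient, marked concatenation, union, and intersection, as well as containing $\emptyset$ and $A^*$) is already packaged in Theorem~\ref{thm:hintro:polc}. The only point worth stating carefully in the write-up is that the hypothesis that $\Cs$ consists of regular languages is used only indirectly, through Theorem~\ref{thm:hintro:polc}, to guarantee that $\pol{\Cs}$ is closed under quotient.
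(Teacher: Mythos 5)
Your proof is correct and uses essentially the same decomposition as the paper: split $KL$ according to whether the $L$-factor is empty or begins with a letter $a$, reduce to $\bigcup_{a\in A}Ka(a^{-1}L)$ (plus $K$ when $\varepsilon\in L$), and conclude via closure of $\pol{\Cs}$ under quotient, marked concatenation, and finite union from Theorem~\ref{thm:hintro:polc} and Lemma~\ref{lem:hintro:polmarked}.
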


\begin{proof}
  We use closure under marked concatenations and quotient (see Lemma~\ref{lem:hintro:polmarked} and Theorem~\ref{thm:hintro:polc}). Let $K$ and $L$ be two languages in \pol{\Cs} over some alphabet $A$. If $\varepsilon \in L$, one may verify that:
  \[
    KL =
    \begin{cases}
      \bigcup_{a \in A} Ka(a^{-1}L) \cup K&\quad\text{if $\varepsilon \in L$},\\
      \bigcup_{a \in A} Ka(a^{-1}L) &\quad\text{if $\varepsilon \not\in L$}.
    \end{cases}
  \]
  In either case, we obtain a language from $\pol{\Cs}$, which terminates the proof.
\end{proof}

\smallskip\noindent\textbf{Additional operations.} We end this section by looking at two additional operations that are built from Boolean and polynomial closures. The first one is simply the composition of the two: given any class \Cs, we write \bpol{\Cs} for the class \bool{\pol{\Cs}}. Combining the results of the previous subsections, we obtain the following result.

\begin{proposition}\label{prop:hintro:bpolc}
  For any \pvari of regular languages \Cs, the class \bpol{\Cs} is a \vari.
\end{proposition}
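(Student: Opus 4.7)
The plan is to obtain this proposition as an immediate composition of the two main results already established in this section, namely Theorem~\ref{thm:hintro:polc} (polynomial closure preserves the \pvari structure, within regular languages) and Proposition~\ref{prop:hintro:boolc} (Boolean closure turns a \pvari into a \vari). There is essentially no new combinatorial content to supply.

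Concretely, I would proceed in two short steps. First, starting from the hypothesis that \Cs is a \pvari of regular languages, I would invoke Theorem~\ref{thm:hintro:polc} to conclude that \pol{\Cs} is again a \pvari of regular languages; in particular it is a lattice closed under left and right quotients. Second, I would apply Proposition~\ref{prop:hintro:boolc} to the class \pol{\Cs}: since \pol{\Cs} is a \pvari, its Boolean closure is a \vari. As $\bpol{\Cs}$ is defined as $\bool{\pol{\Cs}}$, this yields exactly the conclusion.

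There is no real obstacle here, since the two ingredients have already been proved. The only mild point to mention explicitly is that although Proposition~\ref{prop:hintro:boolc} is stated for an arbitrary \pvari (not necessarily of regular languages), applying it to $\pol{\Cs}$ is legitimate because Theorem~\ref{thm:hintro:polc} precisely provides the quotient-closure we need; and the regularity of the resulting languages in $\bpol{\Cs}$ follows from the fact that \pol{\Cs} consists of regular languages together with closure of \reg under Boolean operations. Thus the proof reduces to one or two lines citing the earlier results, and no further case analysis is required.
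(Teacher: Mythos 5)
Your proof is correct and matches the paper's intent exactly: the paper itself presents Proposition~\ref{prop:hintro:bpolc} as an immediate consequence of Theorem~\ref{thm:hintro:polc} combined with Proposition~\ref{prop:hintro:boolc}, with no further argument. Nothing to add.
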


The second operation is motivated by a simple observation about polynomial closure. We proved in Theorem~\ref{thm:hintro:polc} that \pol{\Cs} is a lattice provided that \Cs is a \pvari of regular languages. However, it is simple to verify that in general, \pol{\Cs} is \emph{not} closed under complement, even when \Cs is.

\begin{example}\label{ex:hintro:copol}
  Consider the class $\Cs = \{\emptyset,A^*\}$. It turns out that $\pol{\Cs}$ is not closed under complement. Indeed, it is clear that $A^+= \bigcup_{a \in A} A^*aA^* \in \pol{\Cs}$. However, it follows from Lemma~\ref{lem:hintro:epsilon} that its complement $\{\varepsilon\}$ does not belong to \pol{\Cs}.
\end{example}

When dealing with a lattice \Ds which is not closed under complement, it makes sense to consider the \emph{complement class} which we denote by $\overline{\Ds}$. More precisely, $\overline{\Ds}$ contains all complements of languages in \Ds: for any language $L \subseteq A^*$, we have $L \in \overline{\Ds}$ if and only if $A^* \setminus L \in \Ds$. Note that since \Ds is lattice, so is $\overline{\Ds}$ by DeMorgan's laws. We shall often consider the complement of classes which are polynomial closures: given a class \Cs, we consider $\overline{\pol{\Cs}}$.

It is simple to transfer most of the closure properties of \pol{\Cs} to its complement class $\overline{\pol{\Cs}}$. This yields the following proposition.

\begin{proposition}\label{prop:hintro:copolc}
  For any \pvari of regular languages \Cs, the class $\overline{\pol{\Cs}}$ is a \pvari.
\end{proposition}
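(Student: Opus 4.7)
The plan is to derive every required closure property of $\overline{\pol{\Cs}}$ directly from the corresponding property of $\pol{\Cs}$, which is already a \pvari of regular languages by Theorem~\ref{thm:hintro:polc}. The only tools needed are DeMorgan's laws, the fact that quotients commute with complement, and the fact that complements of regular languages are regular.

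First, I would verify that $\overline{\pol{\Cs}}$ is a lattice. Since $\emptyset, A^* \in \pol{\Cs}$ and each is the complement of the other, both belong to $\overline{\pol{\Cs}}$. For union and intersection, given $K, L \in \overline{\pol{\Cs}}$, the languages $A^*\setminus K$ and $A^*\setminus L$ lie in $\pol{\Cs}$; by DeMorgan,
\[
  A^*\setminus(K\cup L) = (A^*\setminus K)\cap(A^*\setminus L), \qquad A^*\setminus(K\cap L) = (A^*\setminus K)\cup(A^*\setminus L),
\]
and both right-hand sides lie in $\pol{\Cs}$ since $\pol{\Cs}$ is a lattice. Hence $K\cup L$ and $K\cap L$ lie in $\overline{\pol{\Cs}}$.

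Next, I would handle closure under quotients. For any word $w$ and any language $L$, quotients commute with complementation, that is, $w^{-1}(A^*\setminus L) = A^*\setminus(w^{-1}L)$ and symmetrically on the right. If $L \in \overline{\pol{\Cs}}$, then $A^*\setminus L \in \pol{\Cs}$, and closure of $\pol{\Cs}$ under quotient (part of Theorem~\ref{thm:hintro:polc}) gives $w^{-1}(A^*\setminus L) \in \pol{\Cs}$, hence $w^{-1}L \in \overline{\pol{\Cs}}$; the right-quotient case is identical. Finally, regularity is preserved under complement, so $\overline{\pol{\Cs}}$ contains only regular languages.

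There is no real obstacle here: the whole argument is a formal consequence of Theorem~\ref{thm:hintro:polc} and the self-duality of the Boolean lattice structure under complementation. The only care needed is to note that closure of $\overline{\pol{\Cs}}$ under \emph{intersection} does not require closure of $\pol{\Cs}$ under complement; it follows from closure of $\pol{\Cs}$ under union via DeMorgan, which is exactly why the statement holds even though $\pol{\Cs}$ itself is generally not a Boolean algebra (cf.\ Example~\ref{ex:hintro:copol}).
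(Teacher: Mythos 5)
Your proof is correct and follows exactly the same line as the paper's (which is just a one-sentence appeal to DeMorgan's laws and the commutation of quotients with Boolean operations); you have simply spelled out the routine details. Nothing to add.
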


\begin{proof}
  The properties of \pol{\Cs} can be easily transferred to $\overline{\pol{\Cs}}$ using DeMorgan's laws and the fact quotients commute with Boolean operations.
\end{proof}

Note that in contrast to \pol{\Cs}, in general, $\overline{\pol{\Cs}}$ is \emph{\textbf{not}} closed under (marked) concatenation, as shown by the example $\Cs=\{\emptyset,A^*\}$. However, this is less problematic than it is for Boolean closure. By definition, $\overline{\pol{\Cs}}$ and \pol{\Cs} are dual and the associated membership and separation problems are inter-reducible. Indeed, $L \in \overline{\pol{\Cs}}$ if and only if $A^* \setminus L \in \pol{\Cs}$ and $L_1$ is $\overline{\pol{\Cs}}$-separable from $L_2$ if and only if $L_2$ is \pol{\Cs}-separable from $L_1$. Hence, when considering these problems, one may simply work with \pol{\Cs} instead of $\overline{\pol{\Cs}}$.

In view of these observations, one might wonder about our motivation for considering the complement of polynomial closure. Indeed, we just explained that $\overline{\pol{\Cs}}$ is less robust than \pol{\Cs}, while the associated decision problems are symmetrical with the ones associated to \pol{\Cs}. Our motivation is explained by the next lemma, which can be used to bypass Boolean closure in some cases.

\begin{lemma}\label{lem:hintro:elimbool}
  Let \Cs be a \pvari of regular languages. Then,
  \begin{equation}\label{eq:polbp-=-polov}
    \pol{\bpol{\Cs}} = \pol{\overline{\pol{\Cs}}}.
  \end{equation}
\end{lemma}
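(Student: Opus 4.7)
The inclusion $\pol{\overline{\pol{\Cs}}} \subseteq \pol{\bpol{\Cs}}$ is immediate from $\overline{\pol{\Cs}} \subseteq \bpol{\Cs}$ (since $\bpol{\Cs}$ is a Boolean algebra containing $\pol{\Cs}$) combined with monotonicity of $\pol$. For the converse, the plan is to first establish $\bpol{\Cs} \subseteq \pol{\overline{\pol{\Cs}}}$; once this is in hand, applying $\pol$ to both sides and invoking its idempotence (Lemma~\ref{lem:hintro:polidem}) yields $\pol{\bpol{\Cs}} \subseteq \pol{\pol{\overline{\pol{\Cs}}}} = \pol{\overline{\pol{\Cs}}}$. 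Note that by Proposition~\ref{prop:hintro:copolc} and Theorem~\ref{thm:hintro:polc}, the target class $\pol{\overline{\pol{\Cs}}}$ is itself a \pvari of regular languages---in particular a lattice closed under marked concatenation---a fact I shall repeatedly exploit.

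To obtain $\bpol{\Cs} \subseteq \pol{\overline{\pol{\Cs}}}$, the first step is a disjunctive normal form argument. Because $\pol{\Cs}$ is closed under finite intersection (Theorem~\ref{thm:hintro:polc}) and $\overline{\pol{\Cs}}$ is closed under finite intersection as well (by DeMorgan applied to the union-closure of $\pol{\Cs}$), every element of $\bpol{\Cs} = \bool{\pol{\Cs}}$ can be rewritten as a finite union of languages of the shape $P \cap \overline{Q}$ with $P, Q \in \pol{\Cs}$. Since $\pol{\overline{\pol{\Cs}}}$ is a lattice and already contains every $\overline{Q} \in \overline{\pol{\Cs}}$, it suffices to prove $P \in \pol{\overline{\pol{\Cs}}}$ for all $P \in \pol{\Cs}$. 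In turn, as $\pol{\overline{\pol{\Cs}}}$ is closed under union and marked concatenation, and $\pol{\Cs}$ is the smallest such class containing $\Cs$, the entire argument reduces to the single containment $\Cs \subseteq \pol{\overline{\pol{\Cs}}}$.

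The crux of the proof---and the main obstacle---is thus to establish $\Cs \subseteq \pol{\overline{\pol{\Cs}}}$. My plan is to prove the stronger statement $\Cs \subseteq \overline{\pol{\Cs}}$, i.e., that for every $L \in \Cs$, the complement $A^* \setminus L$ is a $\Cs$-polynomial; given such a claim, $\Cs \subseteq \overline{\pol{\Cs}} \subseteq \pol{\overline{\pol{\Cs}}}$ follows at once. The hypothesis that $\Cs$ consists of \emph{regular} languages is essential here: by Myhill-Nerode (Theorem~\ref{thm:auto:nerode}), $L$ admits only finitely many left and right quotients, and these remain in $\Cs$ by closure under quotients. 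Exploiting this finite family of quotients, together with the closure properties of $\pol{\Cs}$ under union, intersection, and both marked and unmarked concatenation (Lemma~\ref{lem:hintro:polconcat}), one exhibits $A^* \setminus L$ as an explicit finite union of $\Cs$-monomials by decomposing each word $w \notin L$ according to the interplay of its left and right factors with the quotient structure of $L$. Verifying that the resulting expression indeed yields a $\Cs$-polynomial is where the bulk of the technical work lies.
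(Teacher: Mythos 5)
Your structural reduction---passing to disjunctive normal form, splitting each conjunct as $P\cap\overline{Q}$, and peeling off the closure operations until only $\Cs\subseteq\pol{\overline{\pol{\Cs}}}$ remains---is sound and tracks the paper's argument (the paper states more compactly that $\pol{\Cs}$ and $\overline{\pol{\Cs}}$ both sit inside $\pol{\overline{\pol{\Cs}}}$, which is a lattice). The problem lies in the ``stronger statement'' you propose for the crux: it is simply \emph{not true} that for a general \pvari $\Cs$ of regular languages the complement of every $L\in\Cs$ is a $\Cs$-polynomial. Take $A=\{a,b\}$ and $\Cs=\{\emptyset,A^*,A^*aA^*\}$, a quotienting lattice of regular languages (quotienting $A^*aA^*$ by $a$ gives $A^*$, by $b$ gives $A^*aA^*$). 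Then $A^*\setminus A^*aA^*=b^*$, but $b^*\notin\pol{\Cs}$: in any nonempty $\Cs$-monomial $L_0c_1L_1\cdots c_nL_n$ every factor $L_i$ is $A^*$ or $A^*aA^*$, both of which contain the word $a$, so the monomial contains a word with an occurrence of $a$ and is therefore not a subset of $b^*$. No Myhill-Nerode decomposition can produce an expression that does not exist, so the ``bulk of the technical work'' you point to is a dead end.

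The correct observation is that $\Cs\subseteq\overline{\pol{\Cs}}$ holds as a one-liner precisely when $\Cs$ is a \vari: closure under complement gives $A^*\setminus L\in\Cs\subseteq\pol{\Cs}$, hence $L\in\overline{\pol{\Cs}}$, with no regularity, Myhill-Nerode, or concatenation closure needed at all. This is the hypothesis under which the paper actually invokes the lemma (always with a full level as $\Cs$, and full levels are \varis), and the paper's step ``$\pol{\Cs}\subseteq\pol{\overline{\pol{\Cs}}}$'' tacitly rests on the same fact. If you want to retain the weaker \pvari hypothesis as stated, you would need a direct argument for $\Cs\subseteq\pol{\overline{\pol{\Cs}}}$ that does \emph{not} factor through $\Cs\subseteq\overline{\pol{\Cs}}$; as it stands, your sketch simultaneously overcomplicates the Boolean-algebra case and fails when $\Cs$ is not closed under complement.
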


\begin{proof}
  It is clear that $\overline{\pol{\Cs}} \subseteq \bpol{\Cs}$, whence $\pol{\overline{\pol{\Cs}}} \subseteq \pol{\bpol{\Cs}}$. We show that $\bpol{\Cs} \subseteq \pol{\overline{\pol{\Cs}}}$. Since polynomial closure is an idempotent operation by Lemma~\ref{lem:hintro:polidem}, the other inclusion will follow. Let $L$ be a language in \bpol{\Cs}. By definition, $L$ is a Boolean combination of \Cs-polynomials. Furthermore, using DeMorgan's laws, we obtain that $L$ is built by applying unions and intersections to languages that are either \Cs-polynomials (\emph{i.e.}, elements of \pol{\Cs}) or complements of \Cs-polynomials (\emph{i.e.}, elements of $\overline{\pol{\Cs}}$). It follows that $L \in \pol{\overline{\pol{\Cs}}}$, since $\pol{\Cs} \subseteq \pol{\overline{\pol{\Cs}}}$, $\overline{\pol{\Cs}} \subseteq \pol{\overline{\pol{\Cs}}}$ and $\pol{\overline{\pol{\Cs}}}$ is closed under union and intersection by Theorem~\ref{thm:hintro:polc}.
\end{proof}

Let us finish the section with a recap of all closure properties that we proved for the four operations on classes that we defined. We present it in \figurename~\ref{fig:hintro:props}.
\begin{figure}[!hbt]
  \centering
  \begin{tikzpicture}
    \matrix (M) [matrix of nodes, column  sep=0mm,row  sep=.5mm,draw,very thick,
    rounded corners=3pt,nodes={anchor=center,align=center,text width = 1.95cm, minimum height=.2cm}]
    {
      {} & Intersection and Union  & Complement & Quotient & Concat. & {Marked\\ concat.}      \\
      \bpol{\Cs} & \textcolor{bookgreen}{Y} & \textcolor{bookgreen}{Y} & \textcolor{bookgreen}{Y} & \textcolor{bookred}{N} & \textcolor{bookred}{N} \\
      \bool{\Cs} & \textcolor{bookgreen}{Y} & \textcolor{bookgreen}{Y} & \textcolor{bookgreen}{Y} & \textcolor{bookred}{N} & \textcolor{bookred}{N} \\
      \pol{\Cs}   & \textcolor{bookgreen}{Y} & \textcolor{bookred}{N} & \textcolor{bookgreen}{Y} & \textcolor{bookgreen}{Y} & \textcolor{bookgreen}{Y} \\
      $\overline{\pol{\Cs}}$  & \textcolor{bookgreen}{Y} & \textcolor{bookred}{N} & \textcolor{bookgreen}{Y} & \textcolor{bookred}{N} & \textcolor{bookred}{N} \\
    };

    \foreach \row in {2,3,4,5} {
      \mhline[thick]{M}{\row}
    }

  \end{tikzpicture}
  \caption{Closure properties satisfied for any \pvari of regular languages \Cs}
 \label{fig:hintro:props}
\end{figure}

\section{Concatenation hierarchies: definition and generic results}\label{chap:hieraintro}
We may now move to the main topic of this paper: concatenation hierarchies. As explained in the introduction, a natural complexity measure for star-free languages is the number of alternations between concatenation and complement that are required to build a given language from basic star-free languages. Generalizing this idea leads to the notion of concatenation hierarchy, which is meant to classify languages according to such a complexity measure.

The section is organized as follows. We first define concatenation hierarchies. Then, we present a stratification of polynomial closures. Finally, we prove that any concatenation hierarchy with a finite basis is~strict.

\medskip\noindent\textbf{Concatenation hierarchies: Definition.} A particular hierarchy is built from a starting class of languages~\Cs, which is called its \emph{basis}. In order to get nice properties, we restrict \Cs to be a \vari of regular languages. This is the only parameter of the construction, meaning that once the basis has been chosen, the construction is entirely generic: each new level is built from the previous one by applying a closure operation: either \emph{Boolean closure}, or \emph{polynomial closure}. We speak of the ``(concatenation) hierarchy of basis~\Cs''.

\smallskip
In the concatenation hierarchy of basis \Cs, languages are classified into levels of two distinct kinds: full levels (denoted by $0,1,2,3,\dots$) and half levels (denoted by $\frac{1}{2},\frac{3}{2}, \frac{5}{2},\dots$). The definition is as follows:
\begin{itemize}
\item Level $0$ is the basis (\emph{i.e.}, our parameter class \Cs). We denote it by $\Cs[0]$.
\item \emph{Half levels} are the \emph{polynomial closure} of the previous full level: for any $n \in \nat$, level $n+\frac{1}{2}$ is the polynomial closure of level $n$. We denote it by~$\Cs[n+\frac{1}{2}]$.
\item \emph{Integer levels} are the \emph{Boolean closure} of the previous half level: for any $n \in \nat$, level $n+1$ is the Boolean closure of level $n+\frac{1}{2}$. We denote it by~$\Cs[n+1]$.
\end{itemize}
We give a graphical representation of the construction process of a concatenation hierarchy in \figurename~\ref{fig:hintro:concat} below.

\begin{figure}[!htb]
   \begin{center}
    \begin{tikzpicture}
      \node[anchor=east] (l00) at (0.0,0.0) {{\large $0$}};
      \node[anchor=north] at ($(l00)-(0,0.25)$)  {(basis)};
      \node[anchor=east] (l12) at (1.5,0.0) {\large $\frac{1}{2}$};
      \node[anchor=east] (l11) at (3.0,0.0) {\large $1$};
      \node[anchor=east] (l32) at (4.5,0.0) {\large $\frac{3}{2}$};
      \node[anchor=east] (l22) at (6.0,0.0) {\large $2$};
      \node[anchor=east] (l52) at (7.5,0.0) {\large $\frac{5}{2}$};
      \node[anchor=east] (l33) at (9.0,0.0) {\large $3$};
      \node[anchor=east] (l72) at (10.5,0.0) {\large $\frac{7}{2}$};

      \draw[very thick,->] (l00) to node[above] {$Pol$} (l12);
      \draw[very thick,->] (l12) to node[below] {$Bool$} (l11);

      \draw[very thick,->] (l11) to node[above] {$Pol$} (l32);
      \draw[very thick,->] (l32) to node[below] {$Bool$} (l22);
      \draw[very thick,->] (l22) to node[above] {$Pol$} (l52);
      \draw[very thick,->] (l52) to node[below] {$Bool$} (l33);
      \draw[very thick,->] (l33) to node[above] {$Pol$} (l72);

      \draw[very thick,dotted] (l72) to ($(l72)+(1.0,0.0)$);

    \end{tikzpicture}

  \end{center}
  \caption{A concatenation hierarchy}
  \label{fig:hintro:concat}
\end{figure}

Observe that by definition, for any $n \in \nat$, we have $\Cs[n]\subseteq\Cs[n+\frac{1}{2}]\subseteq\Cs[n+1]$. However, these inclusions need not be strict. For instance, if the basis is closed under Boolean operations and marked concatenation (such as the class of star-free languages, or the whole class \reg), the associated hierarchy collapses at level~$0$. Of course, the interesting hierarchies are the ones that \emph{are} strict.

What we gain by imposing that the basis must be a \vari of regular languages are the following properties, obtained as an immediate corollary of Theorem~\ref{thm:hintro:polc} and Proposition~\ref{prop:hintro:boolc}.

\begin{proposition}\label{prop:hintro:concatvari}
  Let \Cs be a \vari of regular languages and consider the concatenation hierarchy of basis \Cs. Then, all half levels are \pvaris of regular languages and all full levels are \varis of regular languages.
\end{proposition}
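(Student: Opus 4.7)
The plan is to proceed by a straightforward induction on the level, alternating between the two construction steps that define the hierarchy. The base case is immediate: level~$0$ is the basis $\Cs$, which by hypothesis is a \vari of regular languages, hence in particular a \pvari of regular languages. From there one climbs the hierarchy one half-step at a time, using the two closure results already established in Section~\ref{sec:polyn-bool-clos}.

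For the inductive step I would split into two cases according to the parity of the current level. First, suppose $\Cs[n]$ is a \vari of regular languages for some $n \in \nat$; then it is in particular a \pvari of regular languages, so Theorem~\ref{thm:hintro:polc} applies and tells us that $\Cs[n+\tfrac12] = \pol{\Cs[n]}$ is a \pvari of regular languages, as required. Second, suppose $\Cs[n+\tfrac12]$ is a \pvari of regular languages; then Proposition~\ref{prop:hintro:boolc} yields that $\Cs[n+1] = \bool{\Cs[n+\tfrac12]}$ is a \vari. It remains to see that $\Cs[n+1]$ contains only regular languages, which is immediate since \reg itself is a Boolean algebra and any Boolean combination of regular languages is regular.

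There is no real obstacle here, since all the heavy lifting has been done in Theorem~\ref{thm:hintro:polc} (which packages closure under intersection of \pol{\Cs} together with preservation of quotients and regularity) and in Proposition~\ref{prop:hintro:boolc} (which handles the Boolean step). The only point worth mentioning explicitly is that the two inductive cases feed into each other: the conclusion of Case~1 (a \pvari of regular languages) is exactly the hypothesis needed to invoke Proposition~\ref{prop:hintro:boolc} in Case~2, and conversely a \vari of regular languages is a fortiori a \pvari of regular languages, which is what Theorem~\ref{thm:hintro:polc} requires in Case~1. Thus the induction closes cleanly and the proposition follows.
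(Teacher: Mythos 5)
Your proof is correct and matches the paper's intent exactly: the paper states Proposition~\ref{prop:hintro:concatvari} as "an immediate corollary of Theorem~\ref{thm:hintro:polc} and Proposition~\ref{prop:hintro:boolc}", leaving the alternating induction implicit, and you simply spell that induction out. The only detail you add beyond the paper's terse remark—that Boolean closure of regular languages stays regular—is correct and worth stating.
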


The half levels are even more robust: since they are polynomial closures, they are closed under concatenation and marked concatenation (by Lemmas~\ref{lem:hintro:polconcat} and~\ref{lem:hintro:polmarked}).

\begin{proposition}\label{prop:hintro:closconcat}
  Let \Cs be a \vari of regular languages. Then, all half levels in the concatenation hierarchy of basis \Cs are closed under concatenation and marked concatenation.
\end{proposition}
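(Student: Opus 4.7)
The plan is to observe that this proposition is a direct corollary of results already established in the excerpt, so the proof amounts to assembling the right pieces.

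First, I would unfold the definition: a half level has the form $\Cs[n+\frac{1}{2}] = \pol{\Cs[n]}$, so it suffices to show that $\pol{\Cs[n]}$ is closed under concatenation and under marked concatenation. To invoke the right closure lemmas, I need $\Cs[n]$ to be a \pvari of regular languages.

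The second step is to supply that hypothesis. Since \Cs is assumed to be a \vari of regular languages (in particular a \pvari of regular languages), Proposition~\ref{prop:hintro:concatvari} (which has just been stated and is an immediate corollary of Theorem~\ref{thm:hintro:polc} and Proposition~\ref{prop:hintro:boolc}) tells us that every full level $\Cs[n]$ of the hierarchy of basis \Cs is again a \vari of regular languages, hence in particular a \pvari of regular languages.

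With that hypothesis in hand, the conclusion follows from two previously proved lemmas applied to $\Cs[n]$: Lemma~\ref{lem:hintro:polmarked}, which states that the polynomial closure of any class is closed under marked concatenation (no hypothesis required), and Lemma~\ref{lem:hintro:polconcat}, which states that the polynomial closure of a \pvari of regular languages is closed under concatenation. Together they yield the proposition.

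There is essentially no obstacle: all the nontrivial work (namely that full levels remain \pvaris of regular languages, and that polynomial closure of a \pvari of regular languages is closed under concatenation) was carried out earlier, in particular in Theorem~\ref{thm:hintro:polc} and Lemma~\ref{lem:hintro:polconcat}. The only thing to be careful about is applying Lemma~\ref{lem:hintro:polconcat} with the correct hypothesis, which is why Proposition~\ref{prop:hintro:concatvari} must be cited first.
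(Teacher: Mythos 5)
Your proof is correct and matches the paper's intended argument exactly: half levels are polynomial closures of full levels, which by Proposition~\ref{prop:hintro:concatvari} are \pvaris of regular languages, so Lemmas~\ref{lem:hintro:polmarked} and~\ref{lem:hintro:polconcat} apply directly. The care you take to invoke Proposition~\ref{prop:hintro:concatvari} before Lemma~\ref{lem:hintro:polconcat} (which genuinely needs the \pvari hypothesis) is precisely the right checkpoint.
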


In contrast, it is important to note that for a hierarchy to be strict, half levels must not be closed under complement and full levels must not be closed under marked concatenation. Indeed, a half level that is closed under complement would be equal to its Boolean closure and the hierarchy would collapse at this level. Similarly, a full level that is closed under marked concatenation would be equal to its polynomial closure and the hierarchy would collapse as well.

\begin{proposition}\label{prop:hintro:notclosed}
  Let \Cs be a \vari of regular languages. The following properties are equivalent:
  \begin{enumerate}
  \item The concatenation hierarchy of basis \Cs is strict.
  \item No half level in the hierarchy of basis \Cs is closed under complement.
  \item No full level in the hierarchy of basis \Cs is closed under marked concatenation.
  \end{enumerate}
\end{proposition}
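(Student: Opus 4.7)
The plan is to establish the equivalences (1) $\Leftrightarrow$ (2) and (1) $\Leftrightarrow$ (3) separately by contrapositive, which together give the three-way equivalence. Both pairs of implications rest on two structural facts about any such hierarchy: every full level is a Boolean algebra (the basis $\Cs[0]$ is a \vari by hypothesis, and every higher full level is a Boolean closure), and every half level $\Cs[n+\frac{1}{2}] = \pol{\Cs[n]}$ is a lattice closed under marked concatenation, by Theorem~\ref{thm:hintro:polc} and Lemma~\ref{lem:hintro:polmarked}.

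For (1) $\Rightarrow$ (2) and (1) $\Rightarrow$ (3), I reason contrapositively. Suppose some half level $\Cs[n+\frac{1}{2}]$ is closed under complement; being already a lattice, it is then a Boolean algebra, so $\Cs[n+1] = \bool{\Cs[n+\frac{1}{2}]} = \Cs[n+\frac{1}{2}]$ and the hierarchy is not strict. Symmetrically, if a full level $\Cs[n]$ is closed under marked concatenation, then since $\Cs[n]$ is also closed under union (it is a lattice), the smallest class containing $\Cs[n]$ and closed under both these operations is $\Cs[n]$ itself; hence $\Cs[n+\frac{1}{2}] = \pol{\Cs[n]} = \Cs[n]$, again violating strictness.

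For the converse directions (2) $\Rightarrow$ (1) and (3) $\Rightarrow$ (1), I again argue contrapositively: assume the hierarchy is not strict, so two consecutive levels coincide. If the collapse is of the form $\Cs[n] = \Cs[n+\frac{1}{2}]$, then the half level $\Cs[n+\frac{1}{2}]$ equals the full level $\Cs[n]$ and is therefore closed under complement (violating (2)); simultaneously the full level $\Cs[n]$ equals $\pol{\Cs[n]}$ and is therefore closed under marked concatenation by Lemma~\ref{lem:hintro:polmarked} (violating (3)). If instead the collapse is $\Cs[n+\frac{1}{2}] = \Cs[n+1]$, then the half level $\Cs[n+\frac{1}{2}]$ equals the Boolean closure $\Cs[n+1]$ and so is closed under complement (violating (2)), while the full level $\Cs[n+1]$ equals $\pol{\Cs[n]}$ and so is closed under marked concatenation (violating (3)).

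No step is genuinely difficult; the entire argument is a bookkeeping exercise. The only mild care needed, which I make explicit, is that a non-strict hierarchy can collapse in two distinct ways (full-into-half or half-into-full), and in both cases one must check that a single collapse simultaneously produces witnesses falsifying (2) and falsifying (3). This symmetry is precisely what forces (2) and (3) to be equivalent rather than merely both consequences of (1).
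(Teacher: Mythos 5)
The paper does not write out a proof (it calls the statement a simple consequence of the definitions in Remark~\ref{rem:hintro:approach}), so there is nothing to compare against step by step, but your argument is exactly the intended one and it is correct. You use the two structural facts that matter: full levels are Boolean algebras (Proposition~\ref{prop:hintro:concatvari}), half levels are lattices closed under marked concatenation (Theorem~\ref{thm:hintro:polc} and Lemma~\ref{lem:hintro:polmarked}), and polynomial closure is the smallest class containing the input, closed under union and marked concatenation. The case split in the converse direction -- a collapse can occur either as $\Cs[n]=\Cs[n+\frac12]$ or as $\Cs[n+\frac12]=\Cs[n+1]$, and in either case both (2) and (3) fail simultaneously -- is the right bookkeeping, and it correctly delivers $(1)\Leftrightarrow(2)$ and $(1)\Leftrightarrow(3)$, hence the three-way equivalence.
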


\begin{remark}\label{rem:hintro:approach}
  Propositions~\ref{prop:hintro:closconcat} and~\ref{prop:hintro:notclosed} are simple consequences of the definitions. However, they are important for understanding our approach when considering membership and separation for levels within a concatenation hierarchy. Indeed, our techniques for solving these questions rely heavily on the concatenation operation. Hence, we prefer to work with classes that are closed under concatenation. This excludes full levels which cannot be closed under marked concatenation in a strict hierarchy. This will be reflected by our approach: all results---even those that apply to full levels---are based on the investigation of a half level.
\end{remark}

Remark~\ref{rem:hintro:approach} is complemented by the following useful observation. When only considering the half levels, one may bypass the full levels in the definition by applying polynomial closure to the complements of half levels. This trick is based on Lemma~\ref{lem:hintro:elimbool}. Let \Cs be a basis. Observe that by definition, for any $n \geq 1$, level $\Cs[n + \frac{1}{2}]$ is defined~as,
\[
  \Cs[n + \tfrac{1}{2}] = \pol{\Cs[n]} = \pol{\bpol{\Cs[n-1]}}.
\]
Applying Lemma~\ref{lem:hintro:elimbool}, we obtain the following alternate definition of level $\Cs[n + \frac{1}{2}]$:
\[
  \Cs[n + \tfrac{1}{2}] = \pol{\bpol{\Cs[n-1]}} = \pol{\overline{\pol{\Cs[n-1]}}} = \pol{\overline{\Cs[n-\tfrac{1}{2}]}}.
\]
The important point here is that the class $\overline{\Cs[n-\frac{1}{2}]}$ is much simpler to manipulate than $\Cs[n]$. Indeed, the associated membership and separation problems are dual with the ones for the class $\Cs[n-\frac{1}{2}]$, which is closed under concatenation and marked concatenation. Altogether, we obtain the following proposition.

\begin{proposition}\label{prop:hintro:bypassfull}
  Let \Cs be a \vari of regular languages and consider the associated concatenation hierarchy. Then, for any natural number $n \geq 1$,
  \[
    \Cs[n+\tfrac{1}{2}] = \pol{\overline{\Cs[n-\tfrac{1}{2}]}}.
  \]
\end{proposition}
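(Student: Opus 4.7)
The plan is to observe that this proposition is essentially a packaging of Lemma~\ref{lem:hintro:elimbool} in the specialized language of hierarchy levels, so the proof should be a short chain of equalities chaining definitions with that lemma. I would not attempt to reprove anything about polynomial or Boolean closure: all the real work has already been done.

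First, I would unwind the definition of $\Cs[n+\tfrac12]$ down to the full level $\Cs[n-1]$. By the definition of a concatenation hierarchy, for every $n\geq 1$ we have
\[
  \Cs[n+\tfrac12] \;=\; \pol{\Cs[n]} \;=\; \pol{\bool{\Cs[n-\tfrac12]}} \;=\; \pol{\bpol{\Cs[n-1]}},
\]
where the last equality uses $\Cs[n-\tfrac12]=\pol{\Cs[n-1]}$ together with the definition $\bpol{\Cs[n-1]}=\bool{\pol{\Cs[n-1]}}$. At this point I would check that Lemma~\ref{lem:hintro:elimbool} is applicable: it requires $\Cs[n-1]$ to be a \pvari of regular languages. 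Since $n\geq 1$, $\Cs[n-1]$ is either the basis \Cs (a \vari of regular languages by assumption) or a full level of the hierarchy, which is a \vari of regular languages by Proposition~\ref{prop:hintro:concatvari}. Either way, the hypothesis of Lemma~\ref{lem:hintro:elimbool} is satisfied.

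I would then apply Lemma~\ref{lem:hintro:elimbool} to $\Cs[n-1]$, yielding
\[
  \pol{\bpol{\Cs[n-1]}} \;=\; \pol{\overline{\pol{\Cs[n-1]}}}.
\]
Finally, folding $\pol{\Cs[n-1]}$ back into $\Cs[n-\tfrac12]$ gives
\[
  \pol{\overline{\pol{\Cs[n-1]}}} \;=\; \pol{\overline{\Cs[n-\tfrac12]}},
\]
which combined with the chain above yields the desired equality $\Cs[n+\tfrac12]=\pol{\overline{\Cs[n-\tfrac12]}}$.

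There is no substantial obstacle in this argument: the hard work sits inside Lemma~\ref{lem:hintro:elimbool} (itself relying on Theorem~\ref{thm:hintro:polc} and idempotence of polynomial closure), and the only thing to be careful about is to justify that Lemma~\ref{lem:hintro:elimbool} applies, i.e.\ that $\Cs[n-1]$ is a \pvari of regular languages. The case $n=1$, where $\Cs[n-1]=\Cs$ is the basis, is handled directly by the hypothesis of the proposition; the cases $n\geq 2$ are handled by invoking Proposition~\ref{prop:hintro:concatvari}.
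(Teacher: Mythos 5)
Your proof is correct and matches the paper's argument exactly: unwind $\Cs[n+\tfrac12]=\pol{\bpol{\Cs[n-1]}}$, apply Lemma~\ref{lem:hintro:elimbool}, and refold to $\pol{\overline{\Cs[n-\tfrac12]}}$. You are slightly more careful than the paper in explicitly verifying that $\Cs[n-1]$ satisfies the hypothesis of Lemma~\ref{lem:hintro:elimbool} (citing Proposition~\ref{prop:hintro:concatvari} for $n\geq 2$), but the structure of the argument is the same.
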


\noindent
In view of Proposition~\ref{prop:hintro:bypassfull}, we update the construction process of a concatenation hierarchy in \figurename~\ref{fig:hintro:concat2}.

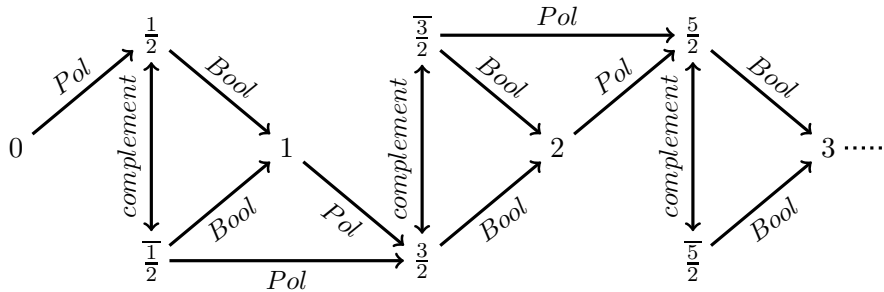
\begin{figure}[!htb]
  \begin{center}
    \begin{tikzpicture}[xscale=.9]
      \node (l00) at (0.0,0.0) {{\large $0$}};

      \node (l12) at (2.0,1.5) {\large $\frac{1}{2}$};
      \node (l12c) at (2.0,-1.5) {\large $\overline{\frac{1}{2}}$};

      \node (l11) at (4.0,0.0) {\large $1$};

      \node (l32c) at (6.0,1.5) {\large $\overline{\frac{3}{2}}$};
      \node (l32) at (6.0,-1.5) {\large $\frac{3}{2}$};

      \node (l22) at (8.0,0.0) {\large $2$};

      \node (l52) at (10.0,1.5) {\large $\frac{5}{2}$};
      \node (l52c) at (10.0,-1.5) {\large $\overline{\frac{5}{2}}$};

      \node (l33) at (12.0,0.0) {\large $3$};

      \draw[very thick,->] (l00) to node[above,sloped] {$Pol$} (l12);
      \draw[very thick,->] (l12) to node[above,sloped] {$Bool$} (l11);
      \draw[very thick,<->] (l12c) to node[above,sloped] {$complement$} (l12);
      \draw[very thick,->] (l12c) to node[below,sloped] {$Bool$} (l11);

      \draw[very thick,->] (l11) to node[below,sloped] {$Pol$} (l32);
      \draw[very thick,->] (l32) to node[below,sloped] {$Bool$} (l22);
      \draw[very thick,<->] (l32) to node[above,sloped] {$complement$} (l32c);
      \draw[very thick,->] (l32c) to node[above,sloped] {$Bool$} (l22);

      \draw[very thick,->] (l22) to node[above,sloped] {$Pol$} (l52);
      \draw[very thick,->] (l52) to node[above,sloped] {$Bool$} (l33);
      \draw[very thick,<->] (l52c) to node[above,sloped] {$complement$} (l52);
      \draw[very thick,->] (l52c) to node[below,sloped] {$Bool$} (l33);

      \draw[very thick,->] (l32c) to node[above,sloped] {$Pol$} (l52);

      \draw[very thick,->] (l12c) to node[below,sloped] {$Pol$} (l32);

      \draw[very thick,dotted] (l33) to ($(l33)+(0.8,0.0)$);
    \end{tikzpicture}

  \end{center}
  \caption{A concatenation hierarchy with complement levels}
 \label{fig:hintro:concat2}
\end{figure}

\medskip\noindent\textbf{Stratifying polynomial closures.}
In this section, we define a generic method for stratifying the class \pol{\Cs} when \Cs is a {\bf finite} \pvari, which will play a crucial role in many proofs. We shall use it to prove that any concatenation hierarchy with a finite basis is strict.

We assume that an arbitrary \emph{finite} \pvari of regular languages \Cs is fixed for the whole section. All definitions and results that we present now are parameterized by \Cs. We begin by presenting our stratification of \pol{\Cs}. Then, we introduce the preorder relations associated to each stratum and analyze their properties. Finally, we illustrate the definitions with a few examples.

\medskip\noindent\textbf{Definition.} We present a stratification of the class \pol{\Cs}. More precisely, given any $k \in \nat$, we define a \emph{finite} class \polk{\Cs} such that,
\begin{equation}\label{eq:strat-polc}
  \text{For all $k \in \nat$, } \polk{\Cs} \subseteq \polp{\Cs}{k+1} \quad \text{and} \quad \pol{\Cs} = \bigcup_{k \in \nat} \polk{\Cs}.
\end{equation}

Intuitively, the definition counts the number of marked concatenations that are necessary to define a particular language in \pol{\Cs}. We use induction on $k$.
\begin{itemize}
\item When $k = 0$, we simply define $\polp{\Cs}{0} = \Cs$.
\item When $k \geq 1$, \polk{\Cs} is the smallest lattice such that:
  \begin{enumerate}
  \item $\polp{\Cs}{k-1} \subseteq \polk{\Cs}$.
  \item For any $a \in A$ and $L_1,L_2 \in \polp{\Cs}{k-1}$, we have $L_1 a L_2 \in \polk{\Cs}$.
  \end{enumerate}
\end{itemize}

This concludes the definition. Since \Cs is a finite lattice, it is immediate that all classes \polk{\Cs} are finite lattices as well. Moreover,~\eqref{eq:strat-polc} indeed holds.
Hence, we did define a stratification of \pol{\Cs}. Let us prove that the classes \polk{\Cs} are not only lattices but \pvaris as well.

\begin{lemma}\label{lem:hintro:stratl1vari}
  For any $k \in \nat$, \polk{\Cs} is a finite \pvari.
\end{lemma}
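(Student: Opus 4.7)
The plan is to proceed by induction on $k$, proving simultaneously that $\polk{\Cs}$ is finite and that it is closed under quotients. The base case $k=0$ is immediate, since $\polp{\Cs}{0} = \Cs$ is a finite \pvari by hypothesis. For the inductive step, I would separate the two properties, treating finiteness first and then using it to streamline the argument for closure under quotients.

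For finiteness, I would observe that $\polk{\Cs}$ is by definition the smallest lattice containing the set
\[
  \Gs_k = \polp{\Cs}{k-1} \;\cup\; \{L_1 a L_2 : a \in A,\, L_1,L_2 \in \polp{\Cs}{k-1}\}.
\]
By the induction hypothesis $\polp{\Cs}{k-1}$ is finite, and $A$ is finite, so $\Gs_k$ is a finite collection of languages. A standard fact about lattices of subsets (languages) is that the lattice generated by finitely many subsets of $A^*$ under finite union and intersection is itself finite, because it is a distributive lattice on finitely many generators. Hence $\polk{\Cs}$ is finite.

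For closure under quotients, I would use two well-known reductions: quotients commute with union and intersection, and closure under quotients is equivalent to closure under quotients by individual letters. Consequently, it is enough to verify that for every $b \in A$ and every $L \in \Gs_k$, both $b^{-1}L$ and $Lb^{-1}$ lie in $\polk{\Cs}$. If $L \in \polp{\Cs}{k-1}$, this is immediate by induction, since $\polp{\Cs}{k-1} \subseteq \polk{\Cs}$. If $L = L_1 a L_2$ with $L_1, L_2 \in \polp{\Cs}{k-1}$, a direct computation gives
\[
  b^{-1}(L_1 a L_2) =
  \begin{cases}
    (b^{-1}L_1)\, a\, L_2 \cup L_2 & \text{if $\varepsilon \in L_1$ and $b = a$,}\\
    (b^{-1}L_1)\, a\, L_2 & \text{otherwise,}
  \end{cases}
\]
and a symmetric formula for $Lb^{-1}$. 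By the induction hypothesis, $b^{-1}L_1$ and $L_2 b^{-1}$ belong to $\polp{\Cs}{k-1}$, so each summand on the right-hand side is either an element of $\polp{\Cs}{k-1}$ or of the form $L_1' a L_2'$ with $L_1', L_2' \in \polp{\Cs}{k-1}$. In all cases, it belongs to $\polk{\Cs}$ by its defining clauses, and the union lies in $\polk{\Cs}$ because it is a lattice.

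I do not expect any significant obstacle here: the result is essentially a bookkeeping argument. The only mildly subtle point is justifying that a finite set of generators produces a finite lattice of languages, but this follows from distributivity of set-theoretic union and intersection. The rest is a standard unfolding of the marked-concatenation quotient formula, which the paper has already used in the proof of Theorem~\ref{thm:hintro:polc}.
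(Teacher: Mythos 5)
Your proof is correct and follows essentially the same route as the paper: reduce closure under quotients to quotients by single letters, then to quotients of the generating languages (using that quotients commute with unions and intersections), and handle the marked concatenations via the standard quotient formula for $L_1aL_2$ by induction on $k$. The paper disposes of finiteness in one sentence before the lemma (it is immediate since $\polp{\Cs}{k-1}$ and $A$ are finite, so there are finitely many generators and the generated lattice of sets is finite), whereas you spell out the distributive-lattice justification; this is a reasonable elaboration of the same fact, not a different approach.
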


\begin{proof}
  It is clear from the definition that for any $k \in \nat$, \polk{\Cs} is a lattice. Hence, it suffices to prove closure under quotients. We use induction on $k$. When $k = 0$, then $\polp{\Cs}{0} = \Cs$ which is a \pvari by hypothesis. Assume now that $k \geq 1$. Let $L \in \polk{\Cs}$ and $w \in A^*$, we prove that $Lw^{-1} \in \polk{\Cs}$ (as usual, the argument is symmetrical for left quotients). Since for any $b\in A$ and $u\in A^*$, we have $L(bu)^{-1}=(Lu^{-1})b^{-1}$, we may assume without loss of generality that $w$ is a letter, say $w=b\in A$. Finally, by definition of \polk{\Cs} and since quotients commute with unions and intersections, we only have two cases to consider:
  \begin{enumerate}
  \item $L \in \polp{\Cs}{k-1}$.
  \item $L = L_1aL_2$ with $L_1,L_2 \in \polp{\Cs}{k-1}$ and $a \in A$.
  \end{enumerate}
  The first case is immediate by induction on $k$. For the second, observe that we have:
  \[
    Lb^{-1} = \left\{
      \begin{array}{ll}
        L_1a(L_2 b^{-1}) \cup L_1 & \quad\text{if $a = b$ and $\varepsilon \in L_2$}, \\
        L_1a(L_2 b^{-1})          & \quad\text{otherwise}.
      \end{array}\right.
  \]
  An immediate induction on $k$ yields $L_2 b^{-1}\in \polp{\Cs}{k-1}$. Hence, we conclude that $Lb^{-1}$ is a union of languages in \polk{\Cs} and thus belongs to \polk{\Cs} itself.
\end{proof}

Finally, a useful observation is that this stratification may be lifted to the Boolean closure \bpol{\Cs}. Indeed, it suffices to choose the strata as the classes $\bpolk{\Cs} = \bool{\polk{\Cs}}$ for all $k \in \nat$. It is clear that these classes are finite and we have:
\[
  \text{For all $k \in \nat$, } \bpolk{\Cs} \subseteq \bpolp{\Cs}{k+1} \quad \text{and} \quad \bpol{\Cs} = \bigcup_{k \in \nat} \bpolk{\Cs}.
\]
In particular, since we proved in Lemma~\ref{lem:hintro:stratl1vari} that the classes \polk{\Cs} are \pvaris, we obtain the following result from Proposition~\ref{prop:hintro:boolc}.

\begin{lemma}\label{lem:hintro:stratbool}
  For any $k \in \nat$, \bpolk{\Cs} is a finite \vari.
\end{lemma}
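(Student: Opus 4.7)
The plan is to combine Lemma~\ref{lem:hintro:stratl1vari} with Proposition~\ref{prop:hintro:boolc}, and then separately argue finiteness. By definition, $\bpolk{\Cs} = \bool{\polk{\Cs}}$. Lemma~\ref{lem:hintro:stratl1vari} already tells us that $\polk{\Cs}$ is a \pvari (closed under finite unions, intersections, and quotients). Applying Proposition~\ref{prop:hintro:boolc} to $\polk{\Cs}$ immediately yields that $\bool{\polk{\Cs}}$ is a \vari, that is, it is a quotienting Boolean algebra. This takes care of the ``variety'' part of the statement without any new work.

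The only remaining task is to argue that the Boolean closure of a finite class of languages is finite. I would do this by a standard generator count: if $\polk{\Cs}$ contains $N$ languages $L_1,\dots,L_N$, then every element of $\bool{\polk{\Cs}}$ can be written as a finite union of intersections of the form $\bigcap_{i\in I} L_i \cap \bigcap_{j\notin I}(A^*\setminus L_j)$ for some $I\subseteq\{1,\dots,N\}$, so there are at most $2^{2^N}$ distinct Boolean combinations. Since $\polk{\Cs}$ is finite by Lemma~\ref{lem:hintro:stratl1vari}, this bound is finite.

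There is no real obstacle here: both ingredients are already available in the paper. The proof is essentially a two-line observation combining Lemma~\ref{lem:hintro:stratl1vari}, Proposition~\ref{prop:hintro:boolc}, and the elementary fact that the Boolean algebra generated by finitely many sets is itself finite.
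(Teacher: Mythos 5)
Your proof is correct and follows the same path the paper takes: the paper also derives this lemma directly from Lemma~\ref{lem:hintro:stratl1vari} and Proposition~\ref{prop:hintro:boolc}, simply asserting the finiteness of the Boolean closure as clear. Your explicit atom-counting argument fills in that elementary step the paper glosses over, but introduces no new idea.
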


\smallskip\noindent\textbf{Canonical relations.} Now that we have a stratification for \pol{\Cs}, we consider the canonical relations associated to the strata. For $k \in \nat$, let \polrelk~be the preorder associated to \polk{\Cs}.
Recall from Fact~\ref{fct:metho:finer} that, for all $k \in \nat$, $w_1 \polrelp{k+1} w_2  \Rightarrow  w_1 \polrelk w_2$.
Moreover, since all classes \polk{\Cs} are \pvaris, Lemma~\ref{lem:metho:stratlem}  yields the following lemma.

\begin{lemma}\label{lem:hintro:stratpol}
  For any $k \in \nat$, \polrelk is a precongruence with finitely many upper sets. Similarly, \bpolrelk is a congruence of finite index. Moreover, for any language $L \subseteq A^*$,
  \begin{enumerate}
  \item $L \in \polk{\Cs}$ if and only if $L$ is an upper set for \polrelk.
  \item $L \in \pol{\Cs}$ if and only if there exists $k \in \nat$ such that $L$ is an upper set for \polrelk.
  \end{enumerate}
\end{lemma}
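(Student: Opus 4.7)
The plan is to derive this lemma as an immediate specialization of the generic machinery developed for finite lattices/\pvaris/\varis in Section~\ref{sec:preliminaries}, applied to the strata $\polk{\Cs}$ and $\bpolk{\Cs}$. The key observation is that Lemma~\ref{lem:hintro:stratl1vari} tells us each $\polk{\Cs}$ is a finite \pvari, and Lemma~\ref{lem:hintro:stratbool} tells us each $\bpolk{\Cs}$ is a finite \vari, so every generic property of canonical preorders associated with such classes instantly transfers to \polrelk and \bpolrelk.

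First I would handle the claim that \polrelk is a precongruence with finitely many upper sets. Since $\polk{\Cs}$ is a finite lattice, Lemma~\ref{lem:metho:upper} gives finitely many upper sets for \polrelk. Since $\polk{\Cs}$ is additionally closed under quotient (as a \pvari), Lemma~\ref{lem:metho:quotients} gives that \polrelk is a precongruence for concatenation. Next, for \bpolrelk, I would use that $\bpolk{\Cs}$ is a finite \vari (hence closed under complement). The same two lemmas already show \bpolrelk is a precongruence with finitely many upper sets; to promote it to an equivalence relation of finite index, the standard argument is that for $u \leq_{\bpolk{\Cs}} v$, any $L \in \bpolk{\Cs}$ containing $v$ satisfies (by complementing $L$) that $u \in L$ as well, so $v \leq_{\bpolk{\Cs}} u$ follows. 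Finite index is then the same as finitely many upper sets.

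For items (1) and (2), item (1) is a direct application of Lemma~\ref{lem:metho:satur} to the finite lattice $\polk{\Cs}$. Item~(2) follows by combining item~(1) with the fact that $\pol{\Cs} = \bigcup_{k \in \nat} \polk{\Cs}$: if $L \in \pol{\Cs}$ it lies in some $\polk{\Cs}$ and is upper for \polrelk by item~(1); conversely, if $L$ is upper for \polrelk for some $k$, then $L \in \polk{\Cs} \subseteq \pol{\Cs}$ by item~(1) again.

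There is essentially no obstacle here; the whole statement is a bookkeeping corollary of Lemmas~\ref{lem:metho:upper}, \ref{lem:metho:satur}, \ref{lem:metho:quotients} together with Lemmas~\ref{lem:hintro:stratl1vari} and~\ref{lem:hintro:stratbool}. The only mildly non-trivial piece is the upgrade from precongruence to congruence for \bpolrelk, which only requires invoking closure under complement of $\bpolk{\Cs}$ to turn the one-sided implication defining \polrelk into a two-sided biconditional.
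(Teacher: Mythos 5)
Your proposal is correct and follows essentially the same route as the paper, which derives the lemma directly from Lemma~\ref{lem:metho:stratlem} (itself a combination of Lemmas~\ref{lem:metho:upper}, \ref{lem:metho:satur} and \ref{lem:metho:quotients}) applied to the strata \polk{\Cs}, together with Lemma~\ref{lem:hintro:stratl1vari}. Your explicit argument upgrading the precongruence \bpolrelk to an equivalence relation of finite index via closure under complement of $\bpolk{\Cs}$ is exactly the standard step the paper leaves implicit.
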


We shall use Lemma~\ref{lem:hintro:stratpol} to prove that languages do not belong to \pol{\Cs}.

\begin{corollary}\label{cor:hintro:stratpol}
  Let $K,L \subseteq A^*$ be two languages. The following properties hold:
  \begin{enumerate}
  \item $L$ does {\bf not} belong to \pol{\Cs} iff for all $k \in \nat$ there exists $w \in L$ and $w' \not\in L$ such that $w \polrelk w'$.
  \item $L$ is {\bf not} \pol{\Cs}-separable from $K$ iff for all $k \in \nat$ there exists $w \in L$ and $w' \in K$ such that $w \polrelk w'$.
  \end{enumerate}
\end{corollary}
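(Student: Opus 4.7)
The plan is to observe that this corollary is exactly the instantiation of the generic Corollary~\ref{cor:metho:stratcor} in the case where the class is $\pol{\Cs}$ and the stratification is the one built above by the finite \pvaris $\polk{\Cs}$, whose associated canonical preorders are precisely $\polrelk$. So the strategy reduces to checking the hypotheses of Corollary~\ref{cor:metho:stratcor} and unpacking what it says in this setting.

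First I would verify that the sequence $(\polk{\Cs})_{k\in\nat}$ genuinely qualifies as a stratification of $\pol{\Cs}$ into finite \pvaris: the inclusions $\polk{\Cs}\subseteq\polp{\Cs}{k+1}$ and the equality $\pol{\Cs}=\bigcup_{k\in\nat}\polk{\Cs}$ are built into the definition (see~\eqref{eq:strat-polc}), and Lemma~\ref{lem:hintro:stratl1vari} says that each stratum is a finite \pvari. The canonical preorder attached to $\polk{\Cs}$ is $\polrelk$ by definition, and Lemma~\ref{lem:hintro:stratpol} records the two facts we need: $\polrelk$ is a precongruence with finitely many upper sets, and $L\in\polk{\Cs}$ iff $L$ is an upper set for $\polrelk$. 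From here, Corollary~\ref{cor:metho:stratcor} applied with $\Cs:=\pol{\Cs}$ and strata $\Cs_k:=\polk{\Cs}$ yields both items of the statement.

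For completeness I would sketch the self-contained version as well. For item~(1), $L\notin\pol{\Cs}$ is equivalent to $L\notin\polk{\Cs}$ for every $k$, since $\pol{\Cs}=\bigcup_k\polk{\Cs}$; each $\polk{\Cs}$ is a finite lattice, so Corollary~\ref{cor:metho:satur}(1) gives, for each $k$, words $w\in L$, $w'\notin L$ with $w\polrelk w'$, and conversely. For item~(2), the key observation is that $L$ is $\pol{\Cs}$-separable from $K$ iff it is $\polk{\Cs}$-separable from $K$ for some $k$: any separator $H\in\pol{\Cs}$ lies in some $\polk{\Cs}$ by definition of the stratification, and conversely $\polk{\Cs}\subseteq\pol{\Cs}$. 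So $L$ is not $\pol{\Cs}$-separable from $K$ iff for every $k$ it is not $\polk{\Cs}$-separable from $K$, and Corollary~\ref{cor:metho:satur}(2) applied stratum by stratum gives the stated characterization.

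There is no real obstacle here — all the heavy lifting (closure under quotient of the strata, finite-lattice characterization of belonging via upper sets, and the corresponding finite-class non-membership/non-separation lemma) has been done in the preceding lemmas. The only point that deserves to be made explicit is that the existential quantifier on $k$ shifts between the two directions of item~(2): on the positive side, a separator exists in some $\polk{\Cs}$; on the negative side, one must exhibit comparable pairs $w\polrelk w'$ at \emph{every} level $k$ of the stratification, which is exactly the formulation given in the statement.
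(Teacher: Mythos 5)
Your proposal is correct and matches the paper's intent exactly: the corollary is left unproved precisely because it is the instantiation of Corollary~\ref{cor:metho:stratcor} to the class $\pol{\Cs}$ with the stratification $(\polk{\Cs})_{k\in\nat}$, whose canonical preorders are the $\polrelk$, and the hypotheses you check (Lemmas~\ref{lem:hintro:stratl1vari} and~\ref{lem:hintro:stratpol}) are the right ones. Your optional self-contained unpacking is also accurate, in particular the observation that a $\pol{\Cs}$-separator must already lie in some stratum $\polk{\Cs}$.
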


\medskip\noindent\textbf{Properties.} We now present specific properties of the preorders \polrelk. We start with an alternate definition of \polrelk which is easier to manipulate for proving these properties. Recall that $\leq_\Cs$ is the canonical preorder associated to the finite \pvari \Cs.

\begin{lemma}\label{lem:hintro:preoinduc}
  Let  $k$ be a natural number. For any two words $w,w' \in A^*$, we have $w \polrelk w'$ if and only if the two following properties hold:
  \begin{enumerate}
  \item $w \leq_\Cs w'$
  \item If $k > 0$, for any decomposition $w = uav$ with $u,v \in A^*$ and $a \in A$, there exist $u',v' \in A^*$ such that $w' = u'av'$, $u \polrelp{k-1} u'$ and $v \polrelp{k-1} v'$.
  \end{enumerate}
\end{lemma}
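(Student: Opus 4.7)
The plan is to prove the lemma by induction on $k$, using the inductive definition of $\polk{\Cs}$ together with Lemma~\ref{lem:hintro:stratpol}, which identifies languages in $\polp{\Cs}{k-1}$ with the upper sets for the preorder $\polrelp{k-1}$. The base case $k = 0$ is immediate: since $\polp{\Cs}{0} = \Cs$, the preorder $\polrelp{0}$ coincides with $\leq_\Cs$, and condition~(2) is vacuous.

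For the forward direction at step $k > 0$, condition~(1) follows from $\Cs \subseteq \polk{\Cs}$, which forces $w \polrelk w'$ to imply $w \leq_\Cs w'$. For condition~(2), given a decomposition $w = uav$, the idea is to exhibit a single witness language in $\polk{\Cs}$ whose membership constrains $w'$. Writing $U_u$ and $U_v$ for the $\polrelp{k-1}$-upper closures of $u$ and $v$, Lemma~\ref{lem:hintro:stratpol} gives $U_u, U_v \in \polp{\Cs}{k-1}$, so by construction the marked concatenation $L = U_u \, a \, U_v$ lies in $\polk{\Cs}$. Reflexivity gives $w = uav \in L$, hence $w' \in L$, and any decomposition witnessing this membership provides $u', v'$ with $w' = u' a v'$, $u \polrelp{k-1} u'$, and $v \polrelp{k-1} v'$.

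For the backward direction, I would first establish the auxiliary claim that, if (1) and (2) hold at level $k$, then $w \polrelp{k-1} w'$. For $k = 1$ this reduces to~(1). For $k \geq 2$, Fact~\ref{fct:metho:finer} lets me weaken the witnesses $u \polrelp{k-1} u'$ and $v \polrelp{k-1} v'$ from~(2) into $\polrelp{k-2}$, so that~(1) and~(2) hold at level $k-1$, and the outer induction hypothesis then yields $w \polrelp{k-1} w'$. With the claim in hand, I proceed by structural induction on $L \in \polk{\Cs}$ following its inductive definition. The case $L \in \polp{\Cs}{k-1}$ is handled directly by the claim. The critical case $L = L_1 a L_2$ with $L_1, L_2 \in \polp{\Cs}{k-1}$ proceeds as follows: any witness $w = uav$ with $u \in L_1$ and $v \in L_2$ yields via~(2) a decomposition $w' = u' a v'$ with $u \polrelp{k-1} u'$ and $v \polrelp{k-1} v'$, and Lemma~\ref{lem:hintro:stratpol} ensures $u' \in L_1$ and $v' \in L_2$ since $L_1$ and $L_2$ are upper sets for $\polrelp{k-1}$. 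Closure under finite union and intersection is then routine.

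The main subtlety lies in the interplay between the two induction levels in the backward direction: the outer induction on $k$ is invoked inside the auxiliary claim to convert level-$k$ hypotheses into a level-$(k-1)$ assertion, while a separate structural induction on $L$ unwinds the construction of $\polk{\Cs}$. Once Lemma~\ref{lem:hintro:stratpol} is available, no further technical difficulty arises; the proof is essentially an exercise in carefully aligning the two inductions.
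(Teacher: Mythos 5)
Your proof is correct, and the forward direction is exactly the paper's argument: item~(1) from $\Cs\subseteq\polk{\Cs}$, and item~(2) by considering the witness language $K_u\,a\,K_v$, where $K_u$ and $K_v$ are the $\polrelp{k-1}$-upward closures of $u$ and $v$ (which lie in $\polp{\Cs}{k-1}$ by Lemma~\ref{lem:hintro:stratpol}).

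The backward direction, though, takes a genuinely different route. The paper runs a single structural induction on $L\in\polk{\Cs}$, after first describing $L$ as a finite union/intersection combination whose \emph{leaves} are languages in $\Cs$ (not in $\polp{\Cs}{k-1}$) together with marked concatenations $L_1aL_2$ with $L_1,L_2\in\polp{\Cs}{k-1}$. The $\Cs$-leaves are then dispatched immediately by item~(1), so no induction on $k$ is required. You instead keep the generating set $\polp{\Cs}{k-1}\cup\{L_1aL_2\}$ that comes verbatim from the definition of $\polk{\Cs}$, which means you must handle leaves $L\in\polp{\Cs}{k-1}$; this is precisely why you need the auxiliary claim that items~(1) and~(2) at level $k$ already imply $w\polrelp{k-1}w'$, which you prove by an outer induction on $k$ using Fact~\ref{fct:metho:finer} to weaken the witnesses of item~(2) to level $k-2$. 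Both routes are sound. Yours makes fully explicit the recursive descent from level-$(k-1)$ base languages down to level-$0$ ones, a step the paper compresses into a single ``by definition''; the paper's version trades that transparency for a shorter argument with only one induction.
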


\begin{proof}
  Assume first that $w \polrelk w'$. We have to prove that the two items in the lemma hold. For the first item, observe that by definition, $\Cs \subseteq \polk{\Cs}$. Therefore, $w \polrelk w' \Rightarrow w \leq_\Cs w'$. We turn to the second item. Assume that $k > 0$ and consider a decomposition $w = uav$ of $w$. We have to find an appropriate decomposition of~$w'$. Let $K_u$ and $K_v$ be the upper sets of $u$ and $v$ for \polrelp{k-1}. By Lemma~\ref{lem:hintro:stratpol}, we know that $K_u,K_v \in \polp{\Cs}{k-1}$. Hence, $K_u aK_v \in \polk{\Cs}$ by definition. Moreover, since $w = uav \in  K_u aK_v$ and $w \polrelk w'$, it follows that $w' \in K_u a K_v$. Therefore, we obtain $u' \in K_u$ and $v' \in K_v$ such that $w' = u'av'$. It is then immediate by definition of $K_u$ and $K_v$ that $u \polrelp{k-1} u'$ and $v \polrelp{k-1} v'$.

  Conversely, assume that the two items in the lemma hold. We prove that $w \polrelk w'$. When $k = 0$, this is immediate since $\polp{\Cs}{0} = \Cs$ by definition. Therefore, $\polrelp{0}$ and $\leq_\Cs$ are the same relation, and the first item says that $w \leq_\Cs w'$. Assume now that $k>0$. Let $L \in \polk{\Cs}$, we have to prove that $w \in L \Rightarrow w' \in L$. By definition, $L$ is constructed by applying finitely many unions and intersections to the two following kinds of languages:
  \begin{enumerate}
  \item Languages in $\Cs$.
  \item Languages of the form $L_1aL_2$ with $L_1,L_2 \in \polp{\Cs}{k-1}$.
  \end{enumerate}
  The proof is by induction on this construction.
  \begin{itemize}
  \item When $L \in \Cs$ the implication is immediate since $w \leq_\Cs w'$ by the first item.
  \item Assume now that $L = L_1aL_2$ with $L_1,L_2 \in \polp{\Cs}{k-1}$. If $w \in L = L_1aL_2$, then it admits a decomposition $w = uav$ with $u \in L_1$ and $v \in L_2$. By the second item, we obtain $u',v' \in A^*$ such that $w' = u'av'$, $u \polrelp{k-1} u'$ and $v \polrelp{k-1} v'$. In particular, since $L_1,L_2 \in \polp{\Cs}{k-1}$, it follows by definition of $\polrelp{k-1}$ that $u' \in L_1$ and $v' \in L_2$, \emph{i.e.}, that $w' =u'av' \in L_1aL_2 = L$.
  \item Finally, if $L = L_1 \cup L_2$ or $L = L_1 \cap L_2$, we obtain inductively that $w \in L_1 \Rightarrow w' \in L_1$ and $w \in L_2 \Rightarrow w' \in L_2$ and therefore, $w \in L \Rightarrow w' \in L$.
  \end{itemize}
  This terminates the proof of Lemma~\ref{lem:hintro:preoinduc}.
\end{proof}

We now use Lemma~\ref{lem:hintro:preoinduc} to present and prove two characteristic properties of the relations \polrelk. Recall that since \Cs is a finite \pvari, we know from Lemma~\ref{lem:metho:omegapower} that there exists a natural number $p \geq 1$ called the \emph{period of $\Cs$} such that for any word $u$ and any $m,m' \geq 1$, we have:
\begin{equation}\label{eq:period}
  u^{pm} \leq_\Cs u^{pm'}.
\end{equation}

The two properties that we state now depend on this important parameter of~\Cs.

\begin{lemma}\label{lem:hintro:propreo1}
  Let $p$ be the period of $\Cs$. Consider some natural number $k \in \nat$. Then, for any $m,m' \geq 2^{k+1}-1$ and any word $u \in A^*$, we have:
  \[
    u^{pm} \polrelk u^{pm'}.
  \]
\end{lemma}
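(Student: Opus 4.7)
The plan is to proceed by induction on $k$, reducing the goal at each step via Lemma~\ref{lem:hintro:preoinduc}, which characterizes $\polrelk$ in terms of $\leq_\Cs$ together with matching decompositions with respect to $\polrelp{k-1}$. For the base case $k=0$, the relation $\polrelp{0}$ coincides with $\leq_\Cs$, and $u^{pm} \leq_\Cs u^{pm'}$ for all $m, m' \geq 1 = 2^1 - 1$ is exactly the statement of Lemma~\ref{lem:metho:omegapower} (by the definition of the period $p$).

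For the inductive step with $m, m' \geq 2^{k+1} - 1$, condition~(1) of Lemma~\ref{lem:hintro:preoinduc} is immediate from Lemma~\ref{lem:metho:omegapower}. For condition~(2), given an arbitrary decomposition $u^{pm} = x a y$ with $a \in A$, I would localize the chopped letter inside a single copy of $u$: find $\alpha, \beta \in A^*$ with $u = \alpha a \beta$ and some $0 \leq i \leq pm - 1$ such that $x = u^i \alpha$ and $y = \beta u^{pm-1-i}$. For any $i' \in \{0,\dots,pm'-1\}$, the factorization $u^{pm'} = (u^{i'}\alpha)\, a\, (\beta u^{pm'-1-i'})$ is valid, so using that $\polrelp{k-1}$ is a precongruence (Lemma~\ref{lem:hintro:stratpol}), it suffices to choose $i'$ so that $u^i \polrelp{k-1} u^{i'}$ and $u^{pm-1-i} \polrelp{k-1} u^{pm'-1-i'}$.

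The meat of the argument is selecting $i'$. Performing Euclidean division by $p$, write $i = p a_1 + r$ and $pm - 1 - i = p a_2 + s$ with $0 \leq r, s < p$; this forces $r + s = p - 1$ and $a_1 + a_2 = m - 1$. Setting $i' := p a_1' + r$ then makes $pm' - 1 - i' = p a_2' + s$ with $a_1' + a_2' = m' - 1$. The induction hypothesis yields $u^{p a_j} \polrelp{k-1} u^{p a_j'}$ whenever $a_j, a_j' \geq 2^k - 1$, so I would choose $a_1', a_2'$ by three cases. If $a_1 < 2^k - 1$, then the bound $m \geq 2^{k+1}-1$ forces $a_2 \geq 2^k$; I set $a_1' := a_1$ and $a_2' := m' - 1 - a_1$, and the bound $m' \geq 2^{k+1}-1$ forces $a_2' \geq 2^k$. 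The case $a_2 < 2^k - 1$ is handled symmetrically. Otherwise both $a_j \geq 2^k - 1$, in which case I set $a_1' := 2^k - 1$ and $a_2' := m' - 2^k \geq 2^k - 1$. In all cases, each pair $(a_j, a_j')$ either coincides (trivially $u^{p a_j} \polrelp{k-1} u^{p a_j'}$) or satisfies $a_j, a_j' \geq 2^k - 1$ (invoke the induction hypothesis); multiplying by $u^r$, resp.\ $u^s$, on the appropriate side via the precongruence property gives the two required relations.

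The main obstacle is precisely this case analysis, and it explains the exact bound $2^{k+1}-1$: the induction hypothesis at level $k-1$ requires \emph{both} exponents on either side of the chopped letter to reach $2^k - 1$, so the threshold essentially doubles from one level to the next, producing the exponential form $2^{k+1} - 1$.
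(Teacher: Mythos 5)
Your proof is correct and follows essentially the same approach as the paper's: induction on $k$ via the characterization in Lemma~\ref{lem:hintro:preoinduc}, isolating the chopped letter and then balancing the two exponent halves so that the induction hypothesis at level $k-1$ applies to each (or one half coincides trivially), which is exactly where the bound doubles. The only difference is cosmetic: you localize the letter within a single $u$ factor and recover multiples of $p$ by Euclidean division, whereas the paper localizes it directly within a $u^p$ factor; the resulting case split and the use of the precongruence property are identical.
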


\begin{proof}
  Let $m,m' \geq 2^{k+1}-1$ and let $u$ be some word. We prove that $u^{pm} \polrelk u^{pm'}$. This amounts to proving that the two items in Lemma~\ref{lem:hintro:preoinduc} hold. We argue by induction on $k$. For the first item, it suffices to prove that $u^{pm} \leq_\Cs u^{pm'}$. This is immediate since the period $p$ of~\Cs satisfies~\eqref{eq:period} by Lemma~\ref{lem:metho:omegapower}. This concludes the case $k = 0$.

  When $k\geq1$, it remains to prove Item~2 of Lemma~\ref{lem:hintro:preoinduc}. Given a decomposition $u^{pm}  = w_1aw_2$, we have to decompose $u^{pm'} = w'_1aw'_2$ so that $w_1 \polrelp{k-1} w'_1$ and $w_2 \polrelp{k-1} w'_2$. By definition, the letter $a$ in the decomposition $u^{pm}  = w_1aw_2$ falls within some factor $u^p$ of $u^{pm}$. Let us refine the decomposition to isolate this factor. We have $u^{pm}  = u^{pm_1}v_1av_2u^{pm_2}$ where,
  \begin{itemize}
  \item $m = m_1+1+m_2$
  \item $v_1av_2 = u^p$.
  \item $u^{pm_1}v_1 = w_1$ and $v_2u^{pm_2} = w_2$.
  \end{itemize}
  Since $m \geq 2^{k+1}-1$ by hypothesis and $m = m_1+1+m_2$, either $m_1 \geq 2^{k}-1$ or $m_2 \geq 2^{k}-1$ (possibly both). By symmetry, let us assume that $m_1 \geq 2^{k} - 1$. We use the following claim.

  \begin{claim}
    There exist $m'_1,m'_2 \geq 1$ such that $m' = m'_1+1+m'_2$, $u^{pm_1} \polrelp{k-1} u^{pm'_1}$ and $u^{pm_2} \polrelp{k-1} u^{pm'_2}$.
  \end{claim}

  \begin{proof}
    There are two cases depending on whether $m_2 \geq 2^{k}-1$ or not. Assume first that $m_2 \geq 2^{k}-1$. Since $m' \geq 2^{k+1}-1$, we may choose $m'_1,m'_2 \geq 2^k-1$ such that $m' = m'_1 + 1 + m'_2$. It is now immediate by induction on $k$ that $u^{pm_1} \polrelp{k-1} u^{pm'_1}$ and $u^{pm_2} \polrelp{k-1} u^{pm'_2}$. Otherwise, $m_2 < 2^{k}-1$. We let $m'_2 = m_2$ and $m'_1 = m' - 1 - m'_2$. Clearly, $m'_1 \geq 2^{k} -1$ since $m' \geq 2^{k+1}-1$. Hence, we get $u^{pm_1} \polrelp{k-1} u^{pm'_1}$ by induction on $k$. Furthermore, $u^{pm_2} \polrelp{k-1} u^{pm'_2}$ is immediate since $m_2 = m'_2$ by definition.
  \end{proof}

  We may now finish the proof of Item~2. Let $m'_1,m'_2 \geq 1$ be as defined in the claim. We let $w'_1 = u^{pm'_1}v_1$ and $w'_2 = v_2u^{pm'_2}$. Clearly, $w'_1 aw'_2 = u^{pm'}$ since $v_1 av_2 = u^p$ and $m' = m'_1+1+m'_2$. Moreover, since \polrelp{k-1} is compatible with multiplication, we have
  \[
    \begin{array}{rll}
      w_1 = u^{pm_1}v_1 & \polrelp{k-1} & u^{pm'_1}v_1 = w'_1 \\
      w_2 = v_2u^{pm_2} & \polrelp{k-1} & v_2u^{pm'_2} = w'_2
    \end{array}
  \]
  This terminates the proof of Item~2 of Lemma~\ref{lem:hintro:preoinduc}.
\end{proof}

\noindent We turn to the second property which will be crucial to establish the strictness of finitely based hierarchies.

\begin{lemma}\label{lem:hintro:propreo2}
  Let $p$ be the period of $\Cs$. Let $k\in\nat$ and let $u,v \in A^*$ be two words such that $u^p \leq_\Cs v$. Then, for any $m,m'_1,m'_2 \geq 2^{k+1}-1$, we have:
  \[
    u^{pm} \polrelk u^{pm'_1}vu^{pm'_2}.
  \]
\end{lemma}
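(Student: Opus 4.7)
The plan is to apply Lemma~\ref{lem:hintro:preoinduc} and induct on $k$, mirroring the structure of the proof of Lemma~\ref{lem:hintro:propreo1}. For Item~1 of Lemma~\ref{lem:hintro:preoinduc}, I would establish $u^{pm} \leq_\Cs u^{pm'_1}vu^{pm'_2}$ without induction: by the hypothesis $u^p \leq_\Cs v$ and the fact that $\leq_\Cs$ is a precongruence (Lemma~\ref{lem:metho:quotients}), we obtain $u^{p(m'_1+1+m'_2)} \leq_\Cs u^{pm'_1}vu^{pm'_2}$; then the period property of Lemma~\ref{lem:metho:omegapower} gives $u^{pm} \leq_\Cs u^{p(m'_1+1+m'_2)}$, and transitivity concludes. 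This also dispatches the base case $k=0$.

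For the inductive step $k \geq 1$, I must handle Item~2 of Lemma~\ref{lem:hintro:preoinduc}. Given a decomposition $u^{pm} = w_1 a w_2$, the letter $a$ lies within a single copy of $u^p$, so as in the proof of Lemma~\ref{lem:hintro:propreo1} I refine this to $u^{pm} = u^{pm_1} v_1 a v_2 u^{pm_2}$ with $m = m_1+1+m_2$, $v_1 a v_2 = u^p$, $w_1 = u^{pm_1}v_1$ and $w_2 = v_2u^{pm_2}$. The task is to produce a matching decomposition $u^{pm'_1}vu^{pm'_2} = w'_1 a w'_2$ with $w_1 \polrelp{k-1} w'_1$ and $w_2 \polrelp{k-1} w'_2$, and I distinguish three cases according to how $m_1$ and $m_2$ compare to $2^k-1$.

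In the balanced case where both $m_1 \geq 2^k-1$ and $m_2 \geq 2^k-1$, I place the chosen $a$ inside the right block by writing $u^{pm'_2} = u^{p\alpha}(v_1 a v_2)u^{p\beta}$ with $\alpha,\beta \geq 2^k-1$ (possible since $m'_2 \geq 2^{k+1}-1$), producing $w'_1 = u^{pm'_1}vu^{p\alpha}v_1$ and $w'_2 = v_2u^{p\beta}$. Since \polrelp{k-1} is a precongruence (Lemma~\ref{lem:hintro:stratpol}), the required comparisons reduce to $u^{pm_1}\polrelp{k-1}u^{pm'_1}vu^{p\alpha}$, which is Lemma~\ref{lem:hintro:propreo2} at level $k-1$ by induction, and $u^{pm_2}\polrelp{k-1}u^{p\beta}$, which is Lemma~\ref{lem:hintro:propreo1} at level $k-1$. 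In the unbalanced case $m_1 < 2^k-1$, the hypothesis $m \geq 2^{k+1}-1$ forces $m_2 \geq 2^k$; I leave the left side untouched by writing $u^{pm'_1} = u^{pm_1}(v_1 a v_2)u^{p(m'_1-m_1-1)}$, so that $w'_1 = u^{pm_1}v_1 = w_1$ trivially, while $w'_2 = v_2u^{p(m'_1-m_1-1)}vu^{pm'_2}$ reduces the second comparison to $u^{pm_2}\polrelp{k-1}u^{p(m'_1-m_1-1)}vu^{pm'_2}$, a fresh instance of Lemma~\ref{lem:hintro:propreo2} at level $k-1$ in which all relevant parameters sit comfortably above $2^k-1$. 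The remaining case $m_2 < 2^k-1$ is symmetric, placing the $a$ inside $u^{pm'_2}$ so that $w'_2 = w_2$.

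The main obstacle is purely arithmetical bookkeeping: one must verify in each subcase that the parameters handed to the inductive hypotheses reach the threshold $2^k-1$. The key observation that makes this routine is that whenever $m_1$ falls below $2^k-1$, the hypothesis $m \geq 2^{k+1}-1$ forces $m_2 \geq 2^k$ (and symmetrically), so one side of the decomposition can be kept rigid while the other side comfortably absorbs the extra factor~$v$.
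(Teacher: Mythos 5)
Your proposal is correct and follows essentially the same route as the paper's proof: apply Lemma~\ref{lem:hintro:preoinduc}, handle Item~1 via the precongruence property of $\leq_\Cs$ and the period, and for Item~2 refine the decomposition, split on the thresholds $2^k-1$, invoking the inductive hypothesis on the side that absorbs the extra $v$ and Lemma~\ref{lem:hintro:propreo1} on the other. The only cosmetic difference is that the paper invokes symmetry to assume $m_1\geq 2^k-1$ from the outset and always places the chosen $a$ inside the final block $u^{pm'_2}$, whereas you handle three explicit cases and, when $m_1$ is small, place the $a$ inside $u^{pm'_1}$ instead; the arithmetic and the calls to the induction hypothesis are otherwise identical.
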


\begin{proof}
  The proof is similar to that of Lemma~\ref{lem:hintro:propreo1}. Let $k\geq0$, let $u,v$ satisfying $u^p \leq_\Cs v$, and let $m,m'_1,m'_2 \geq 2^{k+1}-1$. We prove that $u^{pm} \polrelk u^{pm'_1}vu^{pm'_2}$. This amounts to proving the two items in Lemma~\ref{lem:hintro:preoinduc}. The argument is an induction on $k$.

  For Item~1, we prove that $u^{pm} \leq_\Cs u^{pm'_1}vu^{pm'_2}$. By hypothesis on $u,v$, we know that $u^p \leq_\Cs v$. Hence, since $\leq_\Cs$ is compatible with concatenation, it suffices to prove that $u^{pm} \leq_\Cs u^{p(m'_1+ 1 + m'_2)}$. This is immediate by choice of $p$ in Lemma~\ref{lem:metho:omegapower}. This proves Item~1, and the case $k = 0$.

  When $k\geq1$, it remains to prove Item~2 of Lemma~\ref{lem:hintro:preoinduc}. Consider a decomposition $u^{pm}  = w_1aw_2$. We have to find a decomposition $u^{pm'_1}vu^{pm'_2} = w'_1aw'_2$ such that $w_1 \polrelp{k-1} w'_1$ and $w_2 \polrelp{k-1} w'_2$. By definition, the letter~$a$ in the decomposition $u^{pm}  = w_1aw_2$ falls within some factor $u^p$ of $u^{pm}$. Let us refine the decomposition to isolate this factor. We have $u^{pm}  = u^{pm_1}v_1av_2u^{pm_2}$ where,
  \begin{itemize}
  \item $m = m_1+1+m_2$
  \item $v_1av_2 = u^p$.
  \item $u^{pm_1}v_1 = w_1$ and $v_2u^{pm_2} = w_2$.
  \end{itemize}
  Since $m \geq 2^{k+1}-1$ by hypothesis and $m = m_1+1+m_2$, either $m_1 \geq 2^{k}-1$ or $m_2 \geq 2^{k}-1$ (possibly both). By symmetry, let us assume that $m_1 \geq 2^{k} - 1$. We use the following claim.

  \begin{claim}
    There exist $\ell'_1,\ell'_2 \in \nat$ such that $m'_2 = \ell'_1+1+\ell'_2$, $u^{pm_1} \polrelp{k-1} u^{pm'_1}vu^{p\ell'_1}$ and $u^{pm_2} \polrelp{k-1} u^{p\ell'_2}$.
  \end{claim}

  \begin{proof}
    There are two cases depending on whether $m_2 \geq 2^{k}-1$ or not. Assume first that $m_2 \geq 2^{k}-1$. Since $m'_2 \geq 2^{k+1}-1$, we may choose $\ell'_1,\ell'_2 \geq 2^k-1$ such that $m'_2 = \ell'_1 + 1 + \ell'_2$. That $u^{pm_1} \polrelp{k-1} u^{pm'_1}vu^{p\ell'_1}$ follows from induction on $k$. Moreover, that $u^{pm'_2} \polrelp{k-1} u^{p\ell'_2}$ follows from Lemma~\ref{lem:hintro:propreo1}.

    Otherwise, $m_2 < 2^{k}-1$. We let $\ell'_2 = m_2$ and $\ell'_1 = m'_2 - 1 - \ell'_2$. Clearly, $\ell'_1 \geq 2^{k} -1$ since $m'_2 \geq 2^{k+1}-1$. Hence, we get $u^{pm_1} \polrelp{k-1} u^{pm'_1}vu^{p\ell'_1}$ from induction on $k$. Furthermore, $u^{pm_2} \polrelp{k-1} u^{p\ell'_2}$ is immediate since $m_2 = \ell'_2$ by definition.
  \end{proof}

  We may now finish the proof of Item~2. Let $\ell'_1,\ell'_2 \geq 1$ be as defined in the claim. We let $w'_1 = u^{pm'_1}vu^{p\ell'_1}v_1$ and $w'_2 = v_2u^{p\ell'_2}$. Clearly, $w'_1 aw'_2 = u^{pm'_1}vu^{pm'_2}$ since $v_1 av_2 = u^p$ and $m_2' = \ell'_1+1+\ell'_2$. Moreover, since \polrelp{k-1} is compatible with multiplication, we obtain:
  \[
    \begin{array}{rll}
      w_1 = u^{pm_1}v_1 & \polrelp{k-1} & u^{pm'_1}vu^{p\ell'_1}v_1 = w'_1, \\
      w_2 = v_2u^{pm_2} & \polrelp{k-1} & v_2u^{pm'_2} = w'_2.
    \end{array}
  \]
  This terminates the proof of Item~2.
\end{proof}

\section{Strictness of finitely based hierarchies}\label{sec:hintro:strictness}
As explained in the introduction, concatenation hierarchies first appeared in the literature with two specific hierarchies: the dot-depth was introduced in 1971~\cite{BrzoDot} and the Straubing-Thérien hierarchy ten years later~\cite{StrauConcat,TheConcat}. Although both of them were investigated intensively, their understanding is still far from being satisfactory. For instance, membership algorithms are known only for the lower levels in both hierarchies. A common feature to these two hierarchies is that their bases are \emph{finite}.

In this section, we look at \emph{finitely based} hierarchies. We prove that \emph{any} such hierarchy is strict for alphabets of size~2 or more, meaning that any half or full level contains strictly more languages than the preceding ones. Moreover, this holds as soon as the alphabet contains at least two letters. The condition that the basis is finite may seem to be very restrictive, but it is already very challenging and it covers the two classical cases (namely, the dot-depth and Straubing-Thérien hierarchies).

\begin{theorem}\label{thm:hintro:strict}
  Let \Cs be a finite \vari of languages. Then, the concatenation hierarchy of basis~\Cs is strict for any alphabet of size at least two.
\end{theorem}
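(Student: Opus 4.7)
The plan is to apply Proposition~\ref{prop:hintro:notclosed}, which reduces strictness of the hierarchy to showing that no half level $\Cs[n+\frac{1}{2}]$ is closed under complement. I will prove this stronger statement by induction on $n \geq 0$, using the stratification machinery of Section~\ref{chap:hieraintro} as the main engine.

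For the base case $n = 0$, the task is to show that $\pol{\Cs}$ is not closed under complement. The strategy rests on Lemma~\ref{lem:hintro:propreo2}: once one exhibits words $u, v$ satisfying $u^p \leq_\Cs v$ (where $p$ is the period of $\Cs$ given by Lemma~\ref{lem:metho:omegapower}), together with a language $L \in \pol{\Cs}$ such that $u^{pm} \notin L$ while $u^{pm'_1} v u^{pm'_2} \in L$ for all sufficiently large exponents, Corollary~\ref{cor:hintro:stratpol} immediately forces $A^* \setminus L \notin \pol{\Cs}$. Finding such $u, v, L$ is exactly where the hypothesis $|A| \geq 2$ enters: since $\Cs$ is finite, only finitely many upper sets for $\leq_\Cs$ exist by Lemma~\ref{lem:metho:upper}, so for a suitably chosen letter $a$ and a large enough exponent one can pick $v$ sitting above $a^p$ for $\leq_\Cs$ but containing a second letter $b$; then $L$ is built as a marked concatenation from $\pol{\Cs}$ that discriminates the letter pattern of $u^{pm}$ from that of $u^{pm'_1} v u^{pm'_2}$ (for instance a language of the form $A^* b A^*$ when $A^* \in \Cs$).

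For the inductive step, assume $\Cs[n+\frac{1}{2}]$ is not closed under complement and consider $\Cs[n+\frac{3}{2}] = \pol{\Cs[n+1]}$. Although $\Cs[n+1]$ is in general infinite, Proposition~\ref{prop:hintro:concatvari} identifies it as a \vari of regular languages, and Proposition~\ref{prop:metho:alwaystrat} supplies a stratification into finite \varis $\Ds_0, \Ds_1, \ldots$ By applying the construction of Section~\ref{chap:hieraintro} to each $\Ds_j$, one obtains the strata $\polk{\Ds_j}$ together with preorders $\polrelk$ that satisfy exact analogues of Lemmas~\ref{lem:hintro:propreo1} and~\ref{lem:hintro:propreo2} (since $\Ds_j$ is itself a finite \vari). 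The plan is to build a single witness language that inserts a level-$n{+}\tfrac{1}{2}$ separating witness, supplied by the induction hypothesis, inside a marked concatenation at the outer polynomial layer, and then to diagonalize across $j$ to conclude via Corollary~\ref{cor:hintro:stratpol} applied to $\pol{\Cs[n+1]}$ stratified by the classes $\pol{\Ds_j}$.

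The main obstacle is the inductive step: coordinating the uniform pumping argument across all strata $\Ds_j$ so that one language of $\Cs[n+\frac{3}{2}]$ diagonalizes against every stratum at once, and lifting Lemmas~\ref{lem:hintro:propreo1}--\ref{lem:hintro:propreo2} from the finite \vari $\Cs$ to the infinite \vari $\Cs[n+1]$ via its stratification. The base case is comparatively smoother, although it still requires a careful case analysis on the structure of $\leq_\Cs$-upper sets on long words to ensure the pair $(u, v)$ and the language $L$ exist for every finite \vari basis $\Cs$ when $|A| \geq 2$.
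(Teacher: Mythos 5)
Your high-level plan matches the paper's: reduce to Proposition~\ref{prop:hintro:notclosed} and show by induction on~$n$ that no half level is closed under complement, using the stratification machinery and Lemma~\ref{lem:hintro:propreo2}. But there are two genuine gaps, one in the base case and a more serious one in the inductive step.

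In the base case, you propose choosing a letter $a$ and a word $v$ with $a^p \leq_\Cs v$ that contains a second letter $b$, then using a language like $A^*bA^*$ as the witness. This fails already for $\Cs = \at$: there $\leq_\at$ is an equivalence and $v \sim_\at a^p$ forces $\cont{v} = \{a\}$, so no such $v$ exists. The paper sidesteps this by exploiting the period relation $b^p \leq_\Cs b^{2p}$, which holds unconditionally for any finite \pvari, and instantiating Lemma~\ref{lem:hintro:propreo2} with $u = ab^pa$, $v = (ab^{2p}a)^p$ (so $u^p \leq_\Cs v$ by congruence), together with $L = A^*ab^{2p}aA^*$; the witness words are $u_k = (ab^pa)^{p2^{k+1}}$ and $v_k = u_k\cdot(ab^{2p}a)^p\cdot u_k$. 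One also has to reduce to the case $\{\varepsilon\} \in \Cs$ first (the paper does this in Lemma~\ref{lem:hintro:epsilonok}) so that $L\in\pol\Cs$.

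The inductive step is where the real idea is missing. You acknowledge the obstacle of ``coordinating the uniform pumping argument across all strata $\Ds_j$'' and propose to ``diagonalize across $j$,'' but this is not an argument: the bare fact that $\Cs[n+\tfrac12]$ is not closed under complement gives you no structural handle on a witness that you could insert into a marked concatenation at the next level and keep control of. The paper resolves this by \emph{strengthening the inductive invariant}. Instead of propagating mere non-closure under complement, it propagates the condition of being \emph{non-separating for an unambiguous family~$U$}: there exist $L$ with $A^*LA^*=L$, $L\in\pol\Cs$, and $V\subseteq U$ with $U\setminus V$ infinite, such that $(A^*\setminus L)\cap V^+$ is not $\pol\Cs$-separable from $L\cap V^+$. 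The unambiguity of $U$ is the technical device that makes the nested construction $K = A^*w(A^+\setminus L)wA^*$, $W = V\cup\{w\}$ (with $w\in U\setminus V$) go through: one shows $(A^*\setminus K)\cap W^+$ is not $\pol{\bpol\Cs}$-separable from $K\cap W^+$ precisely because unambiguity controls where the infixes $wzw$ can sit in words of $W^+$. Without this strengthened, separation-flavored invariant, the induction does not close, and your sketch does not supply a substitute.
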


The remainder of the section is devoted to proving Theorem~\ref{thm:hintro:strict}. Let us fix an alphabet $A$ containing at least two distinct letters $a$ and $b$. Our objective is to prove that for any finite \vari of (regular) languages \Cs, the associated concatenation hierarchy is strict, that is, for any~$n \in \nat$:
\[
  \Cs[n] \subsetneq \Cs[n+ \tfrac{1}{2}] \subsetneq \Cs[n+1].
\]
We prove this result as the corollary of a more general one. Let us first introduce some terminology that we require in order to state this result. We call \emph{unambiguous family} an infinite language $U \subseteq A^+$ satisfying the two following conditions:
\begin{enumerate}
\item For any $u \in U^+$, the decomposition $u = u_1 \cdots u_n$ with $u_1,\dots,u_n \in U$ witnessing membership in $U^+$ is \emph{unique}.
\item Moreover, if $v \in U$ is an infix of $u$, then $v = u_i$ for some $i \leq n$.
\end{enumerate}

\begin{example}\label{ex:hintro:unambig}
  A typical example of unambiguous family is $U = \{a{b^n}a \mid n \geq 1\}$. In fact, this is exactly the family that we use below to prove Theorem~\ref{thm:hintro:strict}.
\end{example}

Consider an arbitrary (possibly infinite) \vari of regular languages \Cs and an unambiguous family $U$. We say that $\Cs$ is \emph{non-separating for $U$} when there exist a language $L$ and $V \subseteq U$ satisfying the four following conditions:
\begin{equation}\label{eq:ns}
  \left\{
    \begin{aligned}
      &U \setminus V\text{ is infinite.}\\
      &A^*LA^* = L.\\
      &L \in \pol{\Cs}.\\
      &(A^* \setminus L) \cap V^+\text{ is {\bf not} \pol{\Cs}-separable from }L \cap V^+.
    \end{aligned}
  \right.
\end{equation}

We may now state our general result. Any concatenation hierarchy (even with an infinite basis) which is non-separating for some unambiguous family $U$, is strict.

\begin{proposition}\label{prop:hintro:genstrict}
  Let \Cs be a \vari of regular languages. Assume that there exists an unambiguous family $U \subseteq A^*$ such that \Cs is non-separating for $U$. Then, the concatenation hierarchy of basis \Cs over $A$ is strict.
\end{proposition}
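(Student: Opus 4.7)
The plan is to invoke Proposition~\ref{prop:hintro:notclosed}: to prove that the concatenation hierarchy of basis $\Cs$ is strict, it suffices to show that no half level $\Cs[n+\tfrac{1}{2}]$ is closed under complement. I would proceed by induction on $n$.

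The base case $n=0$ is essentially the hypothesis. By assumption, $L\in\pol{\Cs}=\Cs[\tfrac{1}{2}]$; and if $A^*\setminus L$ were also in $\Cs[\tfrac{1}{2}]$, it would $\pol{\Cs}$-separate $(A^*\setminus L)\cap V^+$ from $L\cap V^+$, contradicting the fourth clause of~\eqref{eq:ns}. Hence $\Cs[\tfrac{1}{2}]$ is not closed under complement.

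For the inductive step, the idea is to use the infinite supply of words in $U\setminus V$ as \emph{markers} to transport the obstruction up the hierarchy. Given that $\Cs[n+\tfrac{1}{2}]$ is not closed under complement, I would pick sufficiently many pairwise distinct marker words in $U\setminus V$ and build at level $\Cs[n+\tfrac{3}{2}]$ a new language by concatenating, between consecutive markers, patterns that locally encode the level-$(n+\tfrac{1}{2})$ obstruction. Because $U$ is unambiguous, the markers are uniquely identifiable infixes within any word of $U^+$, so any would-be separator at the new level must respect the marker positions. Combining this with Theorem~\ref{thm:hintro:polc} and the condition $A^*LA^*=L$ (which ensures that $L$ is invariant under insertion of arbitrary context, and in particular survives the sandwiching between markers), one reduces the existence of such a separator to a separation statement at the previous level, contradicting the inductive hypothesis.

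The main obstacle will be the precise formulation of the inductive invariant and of the decomposition argument in the inductive step. One must show that every candidate separator in $\Cs[n+\tfrac{3}{2}]$ can be decomposed along the marker positions into finitely many pieces, each residing at level $\Cs[n+\tfrac{1}{2}]$. The stratification of polynomial closures developed in Section~\ref{chap:hieraintro}, and specifically the inductive characterization of the preorders supplied by Lemma~\ref{lem:hintro:preoinduc}, appear to be the right tools: they allow a letter-by-letter analysis of membership in the stratified classes that naturally respects the marker structure. The combinatorial Lemmas~\ref{lem:hintro:propreo1} and~\ref{lem:hintro:propreo2} can then be used to absorb or insert extra copies of marker words so as to align patterns on both sides of any candidate separator, turning the marker-respecting decomposition into an actual separator at the lower level and triggering the contradiction.
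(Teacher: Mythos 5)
Your high-level plan matches the paper's: reduce strictness to ``no half level is closed under complement'' via Proposition~\ref{prop:hintro:notclosed}, use the $\pol{\Cs}$-non-separability of $(A^*\setminus L)\cap V^+$ from $L\cap V^+$ as the base-case obstruction, use words from $U\setminus V$ as markers, use the unambiguity of $U$ and the invariance $A^*LA^*=L$ to control marker structure, and invoke the stratified preorders from Section~\ref{chap:hieraintro} together with Lemmas~\ref{lem:hintro:propreo1} and~\ref{lem:hintro:propreo2}. The base case is also handled exactly as in the paper (this is Lemma~\ref{lem:hintro:genstrict1}).

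There are, however, two genuine gaps in the inductive step. First, the invariant you state---``$\Cs[n+\tfrac12]$ is not closed under complement''---is too weak to drive the construction: it gives you no concrete data to work with at the next level. The paper instead propagates the \emph{stronger} invariant that $\bpol{\Cs}$ is again non-separating for $U$ (Lemma~\ref{lem:hintro:genstrict2}). Concretely, it builds $K=A^*w(A^+\setminus L)wA^*\in\pol{\bpol{\Cs}}$ and $W=V\cup\{w\}$ for some fresh marker $w\in U\setminus V$, and verifies that the full package~\eqref{eq:ns} holds again for $(K,W)$. This re-establishes, at each step, the existence of a language and a subfamily of markers to feed into the next iteration. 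You gesture at ``the precise formulation of the inductive invariant'' being the obstacle, but do not resolve it, and without this strengthening the induction cannot be set up. Second, your proposed mechanism---``show that every candidate separator in $\Cs[n+\tfrac32]$ can be decomposed along marker positions into pieces at level $\Cs[n+\tfrac12]$''---is not how one argues here and is unlikely to be achievable: there is no reason that an arbitrary $\pol{\bpol{\Cs}}$ language should split along a prescribed set of infix occurrences into lower-level pieces. The paper avoids analysing candidate separators entirely. Via Corollary~\ref{cor:metho:stratcor}, it instead \emph{directly} exhibits, for each $k$, words $u_k=(wx_kw)^{q_k 2^{k+1}}$ and $v_k=(wx_kw)^{q_k 2^{k+1}}(wy_kw)^{q_k}(wx_kw)^{q_k 2^{k+1}}$ lying on opposite sides, and shows $u_k\polrelk v_k$ by Lemma~\ref{lem:hintro:propreo2}, where $x_k\preceq_k y_k$ are lower-level witnesses obtained from the non-separability hypothesis (Fact~\ref{fct:hintro:induc}). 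This constructs the obstruction bottom-up rather than decomposing a top-down separator, which is both simpler and the direction that the stratification preorders actually support.
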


We divide the proof in two parts: first, we explain how Proposition~\ref{prop:hintro:genstrict} can be used to prove Theorem~\ref{thm:hintro:strict}. Then, we present the proof of Proposition~\ref{prop:hintro:genstrict} itself.

\begin{proof}[Proof of Theorem~\ref{thm:hintro:strict}, assuming Proposition~\ref{prop:hintro:genstrict}]
  Our objective is to show that the concatenation hierarchy of basis \Cs is strict for $A$. We first prove that we may assume without loss of generality that for $\{\varepsilon\} \in \Cs$.

  \begin{lemma}\label{lem:hintro:epsilonok}
    There exists a finite \vari \Ds such that $\{\varepsilon\} \in \Ds$ and the concatenation hierarchy of basis \Cs is strict iff the one of basis \Ds is strict.
  \end{lemma}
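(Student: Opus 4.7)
The plan is to take $\Ds$ to be the smallest \vari containing $\Cs \cup \{\{\varepsilon\}\}$, and then to exploit the sandwich $\Cs \subseteq \Ds \subseteq \Cs[1]$ in order to propagate an interleaving $\Cs[m] \subseteq \Ds[m] \subseteq \Cs[m+1]$ to every level of the two hierarchies.

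I would first verify that $\Ds$ is a finite \vari containing $\{\varepsilon\}$. The quotients of $\{\varepsilon\}$ are either $\{\varepsilon\}$ or $\emptyset$, and the quotients of languages in $\Cs$ stay in $\Cs$; since quotients commute with Boolean operations, the Boolean closure of the finite set $\Cs \cup \{\{\varepsilon\}\}$ is already closed under quotients, hence it \emph{is} $\Ds$, and it is finite.

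Next I would establish the sandwich $\Cs \subseteq \Ds \subseteq \Cs[1]$. The first inclusion is by construction. For the second, $A^+ = \bigcup_{a \in A} A^*aA^*$ is a $\Cs$-polynomial (using $A^* \in \Cs$), so $A^+ \in \pol{\Cs} = \Cs[\tfrac{1}{2}]$, and therefore $\{\varepsilon\} = A^* \setminus A^+ \in \bool{\Cs[\tfrac{1}{2}]} = \Cs[1]$. By Proposition~\ref{prop:hintro:bpolc}, $\Cs[1]$ is a \vari and it contains both $\Cs$ and $\{\varepsilon\}$, hence it contains $\Ds$. Since $\pol{\cdot}$ and $\bool{\cdot}$ are monotone, an easy induction on the level $m$ then yields $\Cs[m] \subseteq \Ds[m] \subseteq \Cs[m+1]$ for every $m$ (integer or half-integer).

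Finally I would derive the equivalence of strictness from this interleaving. By Proposition~\ref{prop:hintro:notclosed}, a hierarchy is non-strict iff some full level is closed under marked concatenation, in which case all higher levels coincide with it. If the $\Cs$-hierarchy collapses at some integer $n_0$, the sandwich gives $\Cs[n_0] \subseteq \Ds[n_0] \subseteq \Cs[n_0+1] = \Cs[n_0]$, so $\Ds[n_0] = \Cs[n_0]$ is closed under marked concatenation and the $\Ds$-hierarchy collapses. Conversely, if $\Ds[n_0]$ is closed under marked concatenation then $\Ds[n_0+1] = \Ds[n_0]$, and the sandwich at level $n_0+1$ yields $\Cs[n_0+1] \subseteq \Ds[n_0+1] = \Ds[n_0] \subseteq \Cs[n_0+1]$, so $\Cs[n_0+1] = \Ds[n_0]$ is closed under marked concatenation and the $\Cs$-hierarchy collapses. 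The only slightly delicate point is verifying that adjoining the single language $\{\varepsilon\}$ really produces a finite \vari; everything else reduces to pure monotonicity.
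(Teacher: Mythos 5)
Your proposal is correct and follows essentially the same route as the paper: take $\Ds = \bool{\Cs \cup \{\{\varepsilon\}\}}$, observe it is a finite \vari, establish the sandwich $\Cs \subseteq \Ds \subseteq \Cs[1]$ via $\{\varepsilon\} = A^* \setminus A^+ \in \Cs[1]$, and propagate $\Cs[m] \subseteq \Ds[m] \subseteq \Cs[m+1]$ by monotonicity of $\pol{\cdot}$ and $\bool{\cdot}$. The only difference is that you spell out the final deduction (strictness from the interleaving) explicitly using Proposition~\ref{prop:hintro:notclosed}, where the paper simply declares it immediate.
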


  \begin{proof}
    We define $\Ds$ as the smallest Boolean algebra containing $\Cs$ and such that $\{\varepsilon\} \in \Ds$. By definition, \Ds is also finite and it is a Boolean algebra. Moreover, since quotients commute with Boolean operations, since \Cs is closed under quotient and since the only quotients of $\{\varepsilon\}$ are $\{\varepsilon\}$ and $\emptyset$, \Ds is closed under quotient as~well.

    It remains to verify that the concatenation hierarchy of basis \Cs is strict if and only if the one of basis \Ds is strict. We prove that for any $n \in \nat$, $\Cs[n] \subseteq \Ds[n] \subseteq \Cs[n+1]$. The result will then be immediate. By definition of concatenation hierarchies, it suffices to verify that these inclusions hold for $n = 0$, \emph{i.e.}, $\Cs \subseteq \Ds \subseteq \Cs[1]$. Clearly, we have $\Cs \subseteq \Ds$. For the other inclusion, we have $\Cs \subseteq \Cs[1]$ and $\Cs[1]$ is a Boolean algebra. Hence, by definition of \Ds, it suffices to prove that $\{\varepsilon\} \in \Cs[1]$ to conclude that $\Ds \subseteq \Cs[1]$. This is immediate, since $A^+ = \bigcup_{a \in A} A^*aA^* \in \Cs[\tfrac{1}{2}]$.
    Therefore, $\{\varepsilon\} = A^* \setminus A^+ \in \Cs[1]$, which terminates the proof.
  \end{proof}

  In view of Lemma~\ref{lem:hintro:epsilonok}, we now assume that $\{\varepsilon\} \in \Cs$. We first show how to use Proposition~\ref{prop:hintro:genstrict} to prove that the concatenation hierarchy of basis \Cs is strict. Let $U = \{a{b^n}a \mid n \geq 1\}$. Clearly, $U$ is unambiguous. If we prove that \Cs is non-separating for $U$, it will follow from Proposition~\ref{prop:hintro:genstrict} that the concatenation hierarchy of basis \Cs is strict. Our objective is therefore to exhibit $L\subseteq \{a,b\}^*$ and $V \subseteq U$ satisfying~\eqref{eq:ns}.

  Recall that since \Cs is a finite \vari, Lemma~\ref{lem:metho:omegapower} yields a period $p \geq 1$ such that for any $w \in A^*$ and any $m,m' \geq 1$,
  \[
    w^{pm} \leq_\Cs w^{pm'}.
  \]
  We define
  \[
    L = A^*ab^{2p}aA^*.
  \]
  Note that since $A^* \in \Cs$ (as \Cs is a \vari) and $\{\varepsilon\} \in \Cs$ by hypothesis, it is immediate from Lemma~\ref{lem:hintro:epsilon} that $L \in \pol{\Cs}$. Moreover, $A^*LA^* = L$ by definition. Finally, we define $V = \{a{b^p}a,a{b^{2p}}a\} \subseteq U$. Since $V$ is finite, $U \setminus V$ is infinite. It remains to show that $(A^* \setminus L) \cap V^+$ is not \pol{\Cs}-separable from $L \cap V^+$.

  We use our generic stratification for polynomial closures of finite classes. Since~\Cs is a finite \vari, we have defined a stratification of \pol{\Cs} (see Equation~\eqref{eq:strat-polc} p.~\pageref{eq:strat-polc}). For all $k \in \nat$, we let $\polrelk$ be the canonical preorder associated to the stratum \polk{\Cs}. By Corollary~\ref{cor:hintro:stratpol}, proving that $(A^* \setminus L) \cap V^+$ is not \pol{\Cs}-separable from $L \cap V^+$ amounts to showing that for any $k \in \nat$, there exist $u_k,v_k \in A^*$ such that $u_k \in (A^* \setminus L) \cap V^+$, $v_k \in L \cap V^+$ and $u_k \polrelk v_k$. For $k \in \nat$, we define,
  \[
    \begin{array}{lll}
      u_{k} & = & (ab^{p}a)^{p2^{k+1}},\\
      v_{k} & = & (ab^{p}a)^{p2^{k+1}} \cdot (ab^{2p}a)^{p} \cdot (ab^{p}a)^{p2^{k+1}}.\\
    \end{array}
  \]
  Clearly, by definition of $L$ and $V$, we have $u_k \in (A^* \setminus L) \cap V^+$ and $v_k \in L \cap V^+$. It remains to prove that $u_k \polrelk v_k$. Recall that we chose $p$ as the period of $\Cs$ given by Lemma~\ref{lem:metho:omegapower} for the \vari \Cs. In particular, it follows that $b^p \leq_\Cs b^{2p}$. Moreover, since \Cs is closed under quotient, it follows from Lemma~\ref{lem:metho:quotients} that $\leq_\Cs$ is a congruence and we conclude that,
  \[
    (ab^{p}a)^{p} \leq_\Cs (ab^{2p}a)^{p}.
  \]
  It is now immediate from Lemma~\ref{lem:hintro:propreo2} and the definition of $u_k$ and $v_k$ that $u_k \polrelk v_k$, which concludes the proof of Theorem~\ref{thm:hintro:strict}.
\end{proof}

\noindent
We now prove Proposition~\ref{prop:hintro:genstrict}, as a consequence of two lemmas that we present now.

\begin{lemma}\label{lem:hintro:genstrict1}
  Let \Cs be a \vari of regular languages. Assume that there exists an unambiguous family $U \subseteq A^*$ such that \Cs is non-separating for~$U$. Then, $\pol{\Cs}$ is not closed under complement.
\end{lemma}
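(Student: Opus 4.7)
The plan is to argue by contradiction using the separator $A^* \setminus L$ directly. Assume that $\Cs$ is non-separating for some unambiguous family $U$, with witnesses $L$ and $V \subseteq U$ satisfying the four conditions in \eqref{eq:ns}. Suppose toward contradiction that $\pol{\Cs}$ is closed under complement.

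Since $L \in \pol{\Cs}$ by the third condition, closure under complement would give $A^* \setminus L \in \pol{\Cs}$. Now observe that $A^* \setminus L$ is by construction a language that contains $(A^* \setminus L) \cap V^+$ and is disjoint from $L \cap V^+$. Hence $A^* \setminus L$ is a $\pol{\Cs}$-separator of the pair $((A^* \setminus L) \cap V^+,\ L \cap V^+)$. This directly contradicts the fourth condition in \eqref{eq:ns}, which asserts that $(A^* \setminus L) \cap V^+$ is \textbf{not} $\pol{\Cs}$-separable from $L \cap V^+$. We conclude that $\pol{\Cs}$ cannot be closed under complement.

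There is essentially no obstacle here: the lemma is an immediate consequence of the definition of non-separating, and the other clauses of \eqref{eq:ns} (namely $U \setminus V$ infinite and $A^*LA^* = L$) are not needed at this step; they will be used later, presumably in the companion lemma which lifts non-closure under complement at level~$\tfrac12$ to strictness of all higher levels of the hierarchy.
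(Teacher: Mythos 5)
Your proof is correct and is essentially the same as the paper's: both observe that $A^*\setminus L$ separates $(A^*\setminus L)\cap V^+$ from $L\cap V^+$, so the non-separability clause in \eqref{eq:ns} forces $A^*\setminus L\notin\pol\Cs$, while $L\in\pol\Cs$. The paper phrases it contrapositively rather than by explicit contradiction, but the content is identical.
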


\begin{lemma}\label{lem:hintro:genstrict2}
  Let \Cs be a \vari of regular languages. Assume that there exists an unambiguous family $U \subseteq A^*$ such that \Cs is non-separating for~$U$. Then, \bpol{\Cs} is non-separating for $U$.
\end{lemma}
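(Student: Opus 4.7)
The plan is to promote the given $\Cs$-witness $(L,V)$ to a $\bpol{\Cs}$-witness $(L',V')$ by inserting a fresh ``delimiter'' word drawn from $U \setminus V$, then exploiting the unambiguity of $U$ to reduce the new non-separability back to the old.

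\medskip\noindent\textbf{Step 1 (Construction of $(L',V')$).} Since $U \setminus V$ is infinite, pick $w \in U \setminus V$ and set
\[
  V' := V \cup \{w\}, \qquad L' := A^* w L w A^*.
\]
Then $U \setminus V'=(U\setminus V)\setminus\{w\}$ is still infinite (Condition~1 of~\eqref{eq:ns}) and $A^* L' A^* = L'$ is immediate (Condition~2). For Condition~3, note that $A^+ = \bigcup_{a\in A}A^*aA^* \in \pol{\Cs}$, so $\{\varepsilon\} = A^* \setminus A^+ \in \bpol{\Cs}$. Applying Lemma~\ref{lem:hintro:epsilon} with $\bpol{\Cs}$ as base class yields $\{w\} \in \pol{\bpol{\Cs}}$, and hence $L' \in \pol{\bpol{\Cs}}$ by closure under marked concatenation, using $L,A^* \in \pol{\Cs} \subseteq \pol{\bpol{\Cs}}$.

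\medskip\noindent\textbf{Step 2 (A consequence of unambiguity).} Since $U$ is unambiguous and $w\in U$ is distinct from every element of $V$, Condition~2 in the definition of unambiguous family forbids $w$ from appearing as an internal factor of any word of $V^+$. Therefore, for every $u\in V^+$, the word $wuw$ has exactly two occurrences of $w$ (the outer ones), and
\[
  wuw \in L' \iff u \in L.
\]
So $u \mapsto wuw$ injects $L\cap V^+$ into $L'\cap V'^+$ and $(A^*\setminus L)\cap V^+$ into $(A^*\setminus L')\cap V'^+$.

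\medskip\noindent\textbf{Step 3 (Non-separability, Condition~4).} Suppose toward a contradiction that some $K \in \pol{\bpol{\Cs}}$ separates $L'\cap V'^+$ from $(A^*\setminus L')\cap V'^+$. By Lemma~\ref{lem:hintro:elimbool} we may assume $K \in \pol{\overline{\pol{\Cs}}}$, i.e., $K$ is a finite union of monomials all of whose factors have the shape $A^*\setminus P$ for some $P \in \pol{\Cs}$; let $P_1,\dots,P_r \in \pol{\Cs}$ enumerate these. From $K$ we aim to construct $K^\star \in \pol{\Cs}$ such that, for every $u\in V^+$, $u\in K^\star \iff wuw \in K$. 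The construction analyses, for each monomial of $K$, how its pivot letters can distribute among the prefix $w$, the inner word $u$, and the suffix $w$ of $wuw$: the pivots landing in the outer $w$'s yield a finite case distinction depending only on $w$, and each segment of $u$ caught between pivots is constrained to belong to a quotient of one of the $P_j$'s, which lies in $\pol{\Cs}$ by closure under quotients (Theorem~\ref{thm:hintro:polc}). By Step~2, such a $K^\star$ would separate $L\cap V^+$ from $(A^*\setminus L)\cap V^+$, contradicting the hypothesis that $\Cs$ is non-separating for $U$.

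\medskip\noindent\textbf{Main obstacle.} The delicate point is precisely the extraction of $K^\star$. A naive pullback of $K$ along $u \mapsto wuw$ gives a language only in $\pol{\bpol{\Cs}} = \pol{\overline{\pol{\Cs}}}$, because quotients of complements of $\pol{\Cs}$-languages remain in $\overline{\pol{\Cs}}$, not in $\pol{\Cs}$. To turn the pullback into a genuine $\pol{\Cs}$-separator one must either (i) use the absorbing identity $A^* L A^* = L$ to collapse the outer negations of $K$ into positive conditions on the $V^+$-segments, or (ii) refine the construction by using several distinct delimiters from $U\setminus V$ positioned so that no monomial pivot can fall strictly inside a delimiter; in that case the $V^+$-side conditions survive only as quotients of the $P_j$'s themselves, producing the desired $\pol{\Cs}$-separator and closing the argument.
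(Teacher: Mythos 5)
Your Steps~1 and~2 are sound as far as they go, but Step~3 contains a genuine gap that you have yourself flagged, and I do not think your two suggested repairs close it. The core obstruction is this: if some $K \in \pol{\bpol{\Cs}}$ separated $L' \cap V'^+$ from $(A^* \setminus L')\cap V'^+$, then the natural pullback $K^\star = w^{-1}(Kw^{-1}) = \{u : wuw \in K\}$ is merely a \emph{quotient} of $K$, hence still in $\pol{\bpol{\Cs}}$, not in $\pol{\Cs}$. Proving that $(A^*\setminus L)\cap V^+$ is $\pol{\bpol{\Cs}}$-separable from $L\cap V^+$ does not contradict the hypothesis, which only asserts $\pol{\Cs}$-non-separability. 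Your suggestion~(i) (use $A^*L'A^*=L'$ to ``collapse the outer negations'') does not change the class membership of the pulled-back factors, and suggestion~(ii) (several delimiters) still leaves the monomial factors of $K$ as languages in $\overline{\pol{\Cs}}$; quotients of such languages stay in $\overline{\pol{\Cs}}$, never descend to $\pol{\Cs}$. So the contradiction argument fails as written, and I see no cheap fix.

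Furthermore, your choice of promoted language, $L'=A^*wLwA^*$, differs from the paper's and is incompatible with the strategy that does succeed. The paper sets $K = A^*w(A^+\setminus L)wA^*$, with the \emph{complement} inside, and proves Condition~4 \emph{directly}---not by contradiction---by exhibiting, for each $k$, explicit witnesses $u_k\in(A^*\setminus K)\cap W^+$ and $v_k\in K\cap W^+$ with $u_k\polrelk v_k$. The key ingredients are: Fact~\ref{fct:hintro:induc}, which by complement-class duality produces $x_k\in L\cap V^+$ and $y_k\in(A^*\setminus L)\cap V^+$ with $x_k\preceq_k y_k$; and Lemma~\ref{lem:hintro:propreo2}, which, starting from $(wx_kw)^{q_k}\preceq_k(wy_kw)^{q_k}$, gives $u_k=(wx_kw)^{q_k2^{k+1}}\polrelk(wx_kw)^{q_k2^{k+1}}(wy_kw)^{q_k}(wx_kw)^{q_k2^{k+1}}=v_k$. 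Putting the complement $A^+\setminus L$ in $K$ is what makes the witnesses land correctly: unambiguity forces every factor $wzw$ of $u_k$ to contain $x_k$, and the absorption $A^*LA^*=L$ then yields $z\in L$, so $u_k\notin K$; while $v_k$ contains $wy_kw$ with $y_k\notin L$, so $v_k\in K$. With your $L'=A^*wLwA^*$ the orientations clash: the word built from the $x_k$'s already lies inside $L'$, and one cannot build the outside witness from the $y_k$'s because $A^*LA^*=L$ gives no control over whether blocks glued from $y_k$'s avoid $L$, and moreover Lemma~\ref{lem:hintro:propreo2} would require the preorder to run from $y_k$ to $x_k$, opposite to what Fact~\ref{fct:hintro:induc} provides. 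So you should replace $L$ by $A^+\setminus L$ inside the sandwich and switch from the contradiction/pullback approach to the direct witness construction using the stratification of $\pol{\bpol{\Cs}}=\pol{\overline{\pol{\Cs}}}$.
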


\begin{proof}[Proof of Proposition~\ref{prop:hintro:genstrict} assuming Lemmas~\ref{lem:hintro:genstrict1} and~\ref{lem:hintro:genstrict2}]
  Combining Lemmas~\ref{lem:hintro:genstrict1} and~\ref{lem:hintro:genstrict2} yields that for any \vari of regular languages~\Cs which is non-separating for some unambiguous family, all half levels in the associated concatenation hierarchy are not closed under complement. Proposition~\ref{prop:hintro:notclosed} entails that the concatenation hierarchy of basis \Cs is strict over~$A$. Thus, Proposition~\ref{prop:hintro:genstrict} is proved.
\end{proof}

\noindent
To conclude the proof of Theorem~\ref{thm:hintro:strict}, it remains to prove Lemmas~\ref{lem:hintro:genstrict1} and~\ref{lem:hintro:genstrict2}.
\begin{proof}[Proof of Lemma~\ref{lem:hintro:genstrict1}]
  It follows from our hypothesis that we have $L \in \pol{\Cs}$, and $V\subseteq U$ such that $(A^* \setminus L) \cap V^+$ is {\bf not} \pol{\Cs}-separable from $L \cap V^+$. Observe that $A^* \setminus L$ clearly separates from $(A^* \setminus L) \cap V^+$ from $L \cap V^+$. Hence, $A^* \setminus L \not\in \pol{\Cs}$ by hypothesis. Since $L \in \pol{\Cs}$ by~\eqref{eq:ns}, we conclude that \pol{\Cs} is not closed under complement, which terminates the proof of Lemma~\ref{lem:hintro:genstrict1}.
\end{proof}

\noindent We turn to Lemma~\ref{lem:hintro:genstrict2} whose proof is more involved.

\begin{proof}[Proof of Lemma~\ref{lem:hintro:genstrict2}]
  By hypothesis, \Cs is non-separating for $U$, \emph{i.e.}, we have $L$ and $V \subseteq U$ satisfying~\eqref{eq:ns}. We need to prove that \bpol{\Cs} is non-separating for $U$ as well. By definition, this amounts to finding $K \in \pol{\bpol{\Cs}}$ and $W \subseteq U$ satisfying the appropriate properties. We first build $K$ and $W$. Since $U \setminus V$ is infinite, it is in particular nonempty, so that we may choose some word $w \in U \setminus V$. Let us define:
  \[
    \left\{
      \begin{array}{ll}
        K &= A^*w(A^+ \setminus L)wA^*,\\
        W &= V\cup \{w\}.
      \end{array}
    \right.
  \]
  Observe that since $U \setminus V$ is infinite, so is $U \setminus W$. Furthermore, $A^*KA^*=K$. Let us verify that $K \in \pol{\bpol{\Cs}}$. First, $A^+ = \bigcup_{a \in A}A^*aA^* \in \pol{\Cs}$, and since $L \in \pol{\Cs}$ as well, we have $A^+ \setminus L \in \bpol{\Cs}$.  Moreover, observe that $\{\varepsilon\}=A^* \setminus A^+\in \bpol{\Cs}$. Hence, Lemma~\ref{lem:hintro:epsilon} shows that $K$ belongs to $\pol{\bpol{\Cs}}$.

  \smallskip

  What remains to show is that $(A^* \setminus K) \cap W^+$ is {\bf not} \pol{\bpol{\Cs}}-separable from $K \cap W^+$ (see~\eqref{eq:ns}). We first define a stratification of \pol{\bpol{\Cs}} which we will use to prove this result. Recall that by Lemma~\ref{lem:hintro:elimbool}, we know that,
  \[
    \pol{\bpol{\Cs}} = \pol{\overline{\pol{\Cs}}}.
  \]
  Intuitively, we want to stratify $\pol{\overline{\pol{\Cs}}}$ with our generic stratification for polynomial closures. However, this is not possible since $\overline{\pol{\Cs}}$ may {\bf not} be finite. To solve this issue, we first consider an arbitrary stratification of $\overline{\pol{\Cs}}$.

  Let $\Ds = \overline{\pol{\Cs}}$. By definition, \Ds is a \pvari. Hence, Proposition~\ref{prop:metho:alwaystrat} yields a stratification of \Ds into finite \pvaris $\Ds_0,\dots,\Ds_k,\dots$. For any $k \in \nat$, we denote by $\preceq_k$ the canonical preorder associated to $\Ds_k$. Moreover, since all $\Ds_k$ are finite, for any $k \in \nat$, Lemma~\ref{lem:metho:omegapower} yields a period $q_k \geq 1$ for $\Ds_k$ such that for any  $w \in A^*$ and any $m,m' \geq 1$, we have
  \[
    w^{q_{k}m} \preceq_k w^{q_{k}m'}.
  \]
  Finally, since $(A^* \setminus L) \cap V^+$ is {\bf not} \pol{\Cs}-separable from $L \cap V^+$ by hypothesis, we get the following important fact about the relations $\preceq_k$.

  \begin{fact}\label{fct:hintro:induc}
    For any $k \in \nat$, there exist $x_k,y_k \in A^*$ such that $x_k \in L \cap V^+$, $y_k \in (A^* \setminus L) \cap V^+$ and $x_k \preceq_k y_k$.
  \end{fact}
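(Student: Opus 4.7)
The plan is to transfer the non-separability assumption from $\pol{\Cs}$ to the complement class $\Ds = \overline{\pol{\Cs}}$, and then to invoke the stratification criterion for non-separability provided by Corollary~\ref{cor:metho:stratcor}.

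First, I would use the duality between $\pol{\Cs}$ and $\Ds$ that was recorded earlier in Section~\ref{sec:polyn-bool-clos}: a language $H$ separates $L \cap V^+$ from $(A^* \setminus L) \cap V^+$ in $\Ds$ if and only if $A^* \setminus H$ separates $(A^* \setminus L) \cap V^+$ from $L \cap V^+$ in $\pol{\Cs}$. By the last condition of~\eqref{eq:ns}, the latter separation is impossible, so the former is too. Hence $L \cap V^+$ is \emph{not} $\Ds$-separable from $(A^* \setminus L) \cap V^+$.

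Second, I would apply Corollary~\ref{cor:metho:stratcor} to the \pvari $\Ds$ with its stratification $(\Ds_k)_{k \in \nat}$ and the associated canonical preorders $\preceq_k$. The corollary states that non-$\Ds$-separability of two languages is equivalent to the existence, at every stratum $k$, of a pair of words (one from each language) comparable via $\preceq_k$. Applied to the pair $L \cap V^+$, $(A^* \setminus L) \cap V^+$, this yields for each $k \in \nat$ two words $x_k \in L \cap V^+$ and $y_k \in (A^* \setminus L) \cap V^+$ with $x_k \preceq_k y_k$, which is exactly the statement of the fact.

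The proof is essentially a two-step translation---complement duality followed by the stratification criterion---and there is no real obstacle. The only subtle point is to keep track of the direction of the separation when passing through the complement class: the roles of the two languages are swapped, which is precisely why a word of $L \cap V^+$ ends up on the left of $\preceq_k$ and a word of $(A^* \setminus L) \cap V^+$ on the right.
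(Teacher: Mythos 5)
Your proof is correct and follows exactly the same two steps as the paper's own argument: transfer the non-separability hypothesis through the complement duality between $\pol{\Cs}$ and $\Ds = \overline{\pol{\Cs}}$, then apply the second item of Corollary~\ref{cor:metho:stratcor} to the stratification $(\Ds_k)_k$. The paper states this more tersely, but your tracking of the swap in the roles of the two languages when passing to the complement class is precisely the subtlety the argument hinges on.
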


  \begin{proof}
    Since $(A^* \setminus L) \cap V^+$ is not \pol{\Cs}-separable from $L \cap V^+$, it follows that $L \cap V^+$ is not $\overline{\pol{\Cs}}$-separable from $(A^* \setminus L) \cap V^+$. Since $\Ds = \overline{\pol{\Cs}}$, the fact is now immediate from Corollary~\ref{cor:metho:stratcor}.
  \end{proof}

  We are now ready to stratify $\pol{\bpol{\Cs}} = \pol{\overline{\pol{\Cs}}}$. For all $k \in \nat$ we consider the class $\polk{\Ds_k}$ (\emph{i.e.}, the stratum $k$ in our generic stratification of $\pol{\Ds_k}$). Since the classes $\Ds_k$ are \pvaris, the classes $\polk{\Ds_k}$ are all \pvaris as well (by Lemma~\ref{lem:hintro:stratl1vari}). Moreover, since $\Ds = \overline{\pol{\Cs}}$, one may verify that,
  \[
    \text{For all $k \in \nat$, } \polk{\Ds_k} \subseteq \polp{\Ds_{k+1}}{k+1} \quad \text{and} \quad \pol{\overline{\pol{\Cs}}} = \bigcup_{k \in \nat} \polk{\Ds_k}.
  \]
  In summary, we now have a stratification of $\pol{\bpol{\Cs}} = \pol{\overline{\pol{\Cs}}}$. For any $k \in \nat$, we denote by $\polrelk$ the canonical preorder associated to $\polk{\Ds_k}$.

  Recall that our goal is to prove that $(A^* \setminus K) \cap W^+$ is {\bf not} \pol{\bpol{\Cs}}-separable from $K \cap W^+$. Now that we have a stratification of \pol{\bpol{\Cs}}, we know from Corollary~\ref{cor:metho:stratcor} that this amounts to showing that for any $k \in \nat$, there exist $u_k,v_k \in A^*$ such that $u_k \in (A^* \setminus K) \cap W^+$, $v_k \in K \cap W^+$ and $u_k \polrelk v_k$. Let $k \in \nat$, we define,
  \[
    \begin{array}{lll}
      u_{k} & = & (w \cdot x_k \cdot w)^{q_k2^{k+1}}, \\
      v_{k} & = & (w \cdot x_k \cdot w)^{q_k2^{k+1}}\cdot (w \cdot y_k \cdot w)^{q_k} \cdot (w \cdot x_k \cdot w)^{q_k2^{k+1}}.
    \end{array}
  \]
  It remains to prove that $u_k$ and $v_k$ satisfy the appropriate properties. We begin with $u_k \in (A^* \setminus K) \cap W^+$. Note that this is where we use the fact that $U$ is an unambiguous family.
  Since $w \in W$ and $x_k \in V^+ \subseteq W^+$, it follows that $u_k \in W^+$. It remains to prove that $u_k \in A^* \setminus K$, \emph{i.e.}, $u_k \not\in K$. Since $K = A^*w (A^+ \setminus L) w A^*$, this amounts to proving that for any infix of the form $wzw$ in $u_k$ with $z\not=\varepsilon$, we have $z \in L$. Consider such an infix. Since $w \in U$, $u_k\in U^+$ and $U$ is unambiguous, it is immediate from the definition of $u_k$ that $z$ must
  contain $x_k \in L$ as an infix. Since $L = A^*LA^*$ by hypothesis, it follows that $z \in L$.

  \smallskip

  We now prove that $v_k \in K \cap W^+$. Since $w\in W$ and $x_k,y_k\in V^+\subseteq W^+$, we have indeed $v_k\in W^+$. Furthermore, since $y_k \in A^* \setminus L\cap V^+\subseteq A^+\setminus L$ by definition, we have $v_k\in K = A^* w (A^+ \setminus L) w A^*$.  We conclude that $v_k \in K \cap W^+$ as desired.

  \smallskip

  We finish with $u_k \polrelk v_k$. Recall that by definition, $x_k \preceq_k y_k$. Moreover, since $\Ds_k$ is a \pvari, $\preceq_k$ is compatible with concatenation by Lemma~\ref{lem:metho:quotients} and so:
  \[
    (w \cdot x_k \cdot w)^{q_k} \preceq_k (w \cdot y_k \cdot w)^{q_k}
  \]
  Since $\polrelk$ is the canonical preorder associated to \polk{\Ds_k}, it follows from Lemma~\ref{lem:hintro:propreo2} and our choice of $q_k$ that $u_k \polrelk v_k$, finishing the proof of Lemma~\ref{lem:hintro:genstrict2}.
\end{proof}

\section{Membership and separation}\label{sec:memb-separ-conc}
Now that we know that finitely based concatenation hierarchies are strict, we focus on membership and separation for each individual level in such a hierarchy. We present an exhaustive and \emph{generic} state of the art regarding these problems in this section and the following. However, note that presenting the algorithms themselves would require introducing too much material. For this reason, we shall simply state the problems which are known to be decidable, without describing the algorithms. Both problems are unexpectedly hard, despite their straightforward formulations. The overall state of the art consists in only three theorems. We state two of them in this section and the last one in Section~\ref{sec:two-fund-conc}.

\noindent
The first result is that separation (hence also membership) is decidable up to level~$\frac32$.

\begin{theorem}[Place \& Zeitoun~\cite{pzboolpol}]\label{thm:sep:hiera}
  If \Cs is a finite \vari, separation and membership are decidable for the following classes:
  \begin{enumerate}
  \item \pol\Cs,
  \item \bpol\Cs,
  \item \pol{\bpol\Cs}.
  \end{enumerate}
  In other words, in any finitely based concatenation hierarchy, levels $\frac12$, $1$ and $\frac32$ have decidable separation.
\end{theorem}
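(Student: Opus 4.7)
The plan is to decide each problem via a finite saturation procedure on a syntactic monoid. Given regular $L_1, L_2 \subseteq A^*$, I would fix a surjective morphism $\alpha : A^* \to M$ into a finite monoid recognizing both. For each target class $\Ds \in \{\pol{\Cs}, \bpol{\Cs}, \pol{\bpol{\Cs}}\}$ I would study the set $\Gamma_\Ds \subseteq M \times M$ of pairs $(s,t)$ such that $\alpha^{-1}(s)$ is not $\Ds$-separable from $\alpha^{-1}(t)$. Deciding separation then reduces to checking whether $\Gamma_\Ds$ meets $\alpha(L_1) \times \alpha(L_2)$, provided $\Gamma_\Ds$ itself is effectively computable. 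Membership is then a special case via the standard reduction mentioned in Section~\ref{sec:preliminaries}.

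For level $\tfrac12$, I would characterize $\Gamma_{\pol{\Cs}}$ as the least subset of $M \times M$ containing the finite seed $\{(\alpha(w),\alpha(w')) : w \leq_\Cs w'\}$ (which is finite by Lemma~\ref{lem:metho:upper}), closed under componentwise product $(s_1,t_1) \cdot (s_2,t_2) = (s_1 s_2, t_1 t_2)$ — reflecting closure of $\pol{\Cs}$ under marked concatenation from Lemma~\ref{lem:hintro:polmarked} — and closed under a rule directly derived from the inductive characterization in Lemma~\ref{lem:hintro:preoinduc}. Correctness would be proved stratum by stratum using the stratification $\polk{\Cs}$ together with Corollary~\ref{cor:hintro:stratpol}: by induction on $k$, the pairs produced after $k$ iterations coincide with those witnessed by $\polrelp{k}$. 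Termination is immediate since $M \times M$ is finite.

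For level $1$, closure of $\bpol{\Cs}$ under complement turns the relation into a compatible equivalence; I would refine the saturation to work with triples $(s,e,t)$ carrying an idempotent witness $e \in M$, reflecting the role of pumpable factors in $\Jrel$-class analysis. For level $\tfrac32$, I would invoke the identity $\pol{\bpol{\Cs}} = \pol{\overline{\pol{\Cs}}}$ from Lemma~\ref{lem:hintro:elimbool} to reapply the level-$\tfrac12$ scheme, but relative to the base class $\overline{\pol{\Cs}}$. Since $\overline{\pol{\Cs}}$ is typically infinite, one cannot plug in a finite preorder for the seed; instead, the seed is provided by the already-computed oracle for $\pol{\Cs}$-separation from the first step, yielding a three-layer modular procedure in which each level calls the previous one as a subroutine.

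The main obstacle I expect is \emph{completeness} of each saturation. Soundness — that every pair built by the procedure really is non-separable — is a routine induction on the syntactic construction of the candidate separator. Completeness, by contrast, requires that given any pair \emph{missed} by the saturation, one can explicitly build a separator in $\Ds$. The standard route is to pump sufficiently long $\alpha$-preimages of the pair to extract idempotents, then assemble a separator by induction on the structure of the target class. At level $\tfrac32$ this argument compounds: one must stratify the infinite base $\overline{\pol{\Cs}}$ as permitted by Proposition~\ref{prop:metho:alwaystrat}, carry the pumping uniformly across strata, and interleave it with the level-$\tfrac12$ construction — this interleaving is where most of the technical weight of the proof resides.
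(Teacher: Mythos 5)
The paper does not actually prove Theorem~\ref{thm:sep:hiera}: it is quoted from~\cite{pzboolpol}, and Remark~\ref{rem:covering} explicitly says the algorithms are obtained as byproducts of solving the strictly more general \emph{covering} problem, with the details deferred to \cite{pzcoveringfull,pzcovering}. So there is no in-paper argument to compare yours against; the comparison has to be with the approach of those references, which I do on the points you raise.

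Your saturation-on-pairs outline is the right first intuition for $\pol\Cs$: computing a least fixed point in $M\times M$ from the finite seed $\leq_\Cs$ under product is essentially how level~$\frac12$ is handled, and the stratum-by-stratum correctness via $\polrelk$ and Lemma~\ref{lem:hintro:preoinduc} is the right proof scheme there. The trouble starts exactly where you flag it, at level~$\frac32$, but it is worse than a completeness burden that can be carried through by uniform pumping. The set of pairs $\Gamma_{\pol\Cs}\subseteq M\times M$ is \emph{not} enough information to seed a level-$\frac32$ saturation and recover a complete procedure: a candidate separator in $\pol{\bpol\Cs}$ is a union of products $L_0a_1L_1\cdots a_nL_n$ with $L_i\in\bpol\Cs$, and to construct such a separator from a failed saturation one must know, not just which individual pairs $(s,t)$ are $\pol\Cs$-inseparable, but which \emph{finite families} of $\alpha$-preimages admit a simultaneous $\pol\Cs$-cover with prescribed overlap pattern. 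That richer object (the optimal covers/imprints of the covering framework) is precisely what~\cite{pzboolpol,pzcoveringfull} compute, and it cannot be reconstructed from the binary relation $\Gamma_{\pol\Cs}$ alone. Your ``three-layer modular procedure calling the previous level as an oracle'' is therefore sound but not complete as stated: it would answer ``inseparable'' too rarely. Note that the corollary to Theorem~\ref{thm:sep:transfer} only gets you \emph{membership} for level~$\frac52$ from separation at level~$\frac32$, never separation itself from a separation oracle; separation does not transfer up a level by that route, which is why the covering machinery is needed. Finally, the level-$1$ sketch with ``triples $(s,e,t)$ carrying an idempotent witness'' is not the right characterization either; the $\bpol\Cs$-analysis in~\cite{pzboolpol} is built on $\bpol\Cs$-pairs derived from $\pol\Cs$-covers, not on an ad hoc idempotent tagging. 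In short: the backbone is plausible for $\pol\Cs$, but for $\bpol\Cs$ and especially $\pol{\bpol\Cs}$ you need a finer invariant than non-separable pairs, and that gap is not one that can be patched by more careful pumping.
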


Theorem~\ref{thm:sep:hiera} applies in particular to the dot-depth and the Straubing-Thérien hierarchies, since their bases are finite. Therefore, several classical results that we presented in Section~\ref{sec:introduction}, namely Theorems~\ref{thm:knast}, \ref{thm:dd32}, \ref{thm:st1}, \ref{thm:st32} and \ref{thm:ddredst}, are immediate corollaries of Theorem~\ref{thm:sep:hiera}. Note however that we do not recover Theorem~\ref {thm:5272} yet. Nevertheless, we will see in the next section that this result is also a corollary of Theorem~\ref{thm:sep:hiera}.

\begin{remark}\label{rem:covering}
  Theorem~\ref{thm:sep:hiera} is obtained by investigating a more general problem, \emph{covering}~
  \cite{pzcoveringfull,pzcovering}: any finitely based concatenation hierarchy has decidable covering. It turns out that algorithms for separation are byproducts of covering algorithms.
\end{remark}
The second generic result reduces membership for \pol\Cs to separation for~\Cs.

\begin{theorem}[Place \& Zeitoun~\cite{pzboolpol}]\label{thm:sep:transfer}
  For any quotienting lattice \Cs, $\pol\Cs$-membership reduces to \Cs-separation.
\end{theorem}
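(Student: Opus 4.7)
The plan is to characterize $\pol{\Cs}$-membership via the syntactic structure of the input language and then verify the resulting condition using finitely many $\Cs$-separation oracle calls. First, I would fix a morphism $\eta : A^* \to M$ onto a finite monoid recognizing the input regular language $L$, with accepting set $F = \eta(L)$. For any pair $(s,t) \in M \times M$ the preimages $\eta^{-1}(s)$ and $\eta^{-1}(t)$ are regular, so a single $\Cs$-separation query decides whether $\eta^{-1}(s)$ is $\Cs$-separable from $\eta^{-1}(t)$. Write $s \preceq t$ when the answer is negative. Computing $\preceq$ completely uses at most $|M|^2$ oracle calls. Since $\Cs$ is a quotienting lattice, quotients commute with Boolean operations and one checks that $\preceq$ is compatible with the monoid multiplication: $s_1 \preceq t_1$ and $s_2 \preceq t_2$ entail $s_1 s_2 \preceq t_1 t_2$.

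The heart of the proof is the following characterization: $L \in \pol{\Cs}$ if and only if, for every $n \geq 0$, every $a_1,\ldots,a_n \in A$, and all $s_0,t_0,\ldots,s_n,t_n \in M$ with $s_i \preceq t_i$ for each $i$,
\[
  s_0 \eta(a_1) s_1 \cdots \eta(a_n) s_n \in F \ \Longrightarrow\ t_0 \eta(a_1) t_1 \cdots \eta(a_n) t_n \in F.
\]
The forward direction is a routine structural induction on a polynomial expression $L = \bigcup_{j} L^j_0 \, a^j_1 \, L^j_1 \cdots a^j_{n_j} \, L^j_{n_j}$ with each $L^j_i \in \Cs$: if some $w_i \in L^j_i$ is replaced by a $\preceq$-larger $w'_i$, then $w'_i$ still lies in $L^j_i$, since otherwise $L^j_i$ itself would be a $\Cs$-separator for $\eta^{-1}(\eta(w_i))$ from $\eta^{-1}(\eta(w'_i))$. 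This direction does not use the quotienting hypothesis.

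For the reverse direction, assuming the closure condition, I would reconstruct $L$ as a finite union of $\Cs$-monomials, one for each accepting tuple $(s_0,a_1,s_1,\ldots,a_n,s_n)$ with $s_0 \eta(a_1) s_1 \cdots \eta(a_n) s_n \in F$. Each $s_i$ is replaced by a language $H_{s_i} \in \Cs$ that contains $\eta^{-1}(s_i)$ and is disjoint from $\eta^{-1}(u)$ for every $u$ with $s_i \not\preceq u$; such an $H_{s_i}$ is obtained as the finite intersection, over those finitely many $u$, of the $\Cs$-separators provided by the definition of~$\preceq$, exploiting closure of $\Cs$ under finite intersection. The closure condition then guarantees that the union of these monomials equals $L$.

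The main obstacle will be bounding the length $n$ of the tuples one must examine: a priori the characterization quantifies over arbitrarily long patterns. To reduce to a finite check, I would exploit the finiteness of $M$ together with the compatibility of $\preceq$ with multiplication to argue that any long pattern can be ``pumped down'' to a bounded-length one (essentially a Ramsey or Simon-factorization style argument inside $M$) without changing whether its image lies in $F$. Once this bound is in hand, only polynomially many tuples and hence polynomially many $\Cs$-separation queries are needed, yielding the desired effective reduction from $\pol{\Cs}$-membership to $\Cs$-separation.
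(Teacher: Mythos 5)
Your proposed biconditional characterizing $\pol\Cs$-membership is false, so there is a genuine gap. Take $\Cs = \{\emptyset,A^*\}$ over $A=\{a\}$ and $L = A^*aA^*aA^*$, which plainly lies in $\pol\Cs$. Its syntactic monoid is $M=\{1,e,0\}$ with accepting set $F=\{0\}$; every element has a nonempty preimage, and since the only languages in $\Cs$ are $\emptyset$ and $A^*$, no two nonempty preimages are $\Cs$-separable, so $\preceq$ is the full relation $M\times M$. Your condition already fails at $n=0$: it would force $F$ to be $\preceq$-upward closed, i.e.\ $F\in\{\emptyset,M\}$. The error is visible in the justification of the forward direction: from $w_i\in L^j_i$ you only get $L^j_i\cap\eta^{-1}(\eta(w_i))\neq\emptyset$, not the inclusion $\eta^{-1}(\eta(w_i))\subseteq L^j_i$ that would make $L^j_i$ a candidate separator, so the step ``$w'_i$ still lies in $L^j_i$'' does not follow. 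There is also a second, independent problem: even when $w_0a_1w_1\cdots a_nw_n\in L$, the monomial decomposition of this word witnessing membership in $L$ need not align with the marked positions $a_1,\dots,a_n$, so componentwise replacement inside a monomial is not sound.

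The known reduction (this paper only cites it, it does not reprove it) does not go through a pattern-closure condition on $F$; it goes through an equation on the \emph{ordered} syntactic monoid. Writing $(M,\leq)$ for the ordered syntactic monoid of $L$ and calling $(s,t)$ a $\Cs$-pair when $\eta^{-1}(s)$ is not $\Cs$-separable from $\eta^{-1}(t)$, one has $L\in\pol\Cs$ if and only if $s^{\omega+1}\leq s^{\omega}ts^{\omega}$ for every $\Cs$-pair $(s,t)$. The $\omega$-power is precisely the ingredient your condition lacks: it encodes the ``pumping'' slack that a $\Cs$-monomial enjoys when a factor $t$ is inserted between long powers of $s$, which is why the condition remains satisfiable for nontrivial members of $\pol\Cs$ such as the $L$ above. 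With this equational characterization in hand, the reduction is immediate---compute the $\Cs$-pair relation with $|M|^2$ separation queries, then check a finite list of inequalities---and no bounded-pattern or Ramsey-style argument is needed. To salvage your approach you would, at a minimum, have to replace the pattern condition by one that allows $\omega$-power slack on each slot, at which point you would essentially be rederiving the equational characterization.
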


This result has the following corollary:

\begin{corollary}
  In any concatenation hierarchy of basis~\Cs:
  \begin{enumerate}
  \item If level $n$ has decidable separation, then level $n+\frac12$ has decidable membership.
  \item If level $n-\frac12$ has decidable separation, then level $n+\frac12$ has decidable membership.
  \end{enumerate}
  In particular, if \Cs is finite, then level $\frac52$ has decidable membership.
\end{corollary}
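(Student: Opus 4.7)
The plan is to assemble the corollary from two ingredients already at hand: Theorem~\ref{thm:sep:transfer}, which reduces $\pol{\Ds}$-membership to $\Ds$-separation for any quotienting lattice $\Ds$, and the two natural presentations of a half level in the hierarchy. Concretely, Item~1 is obtained by feeding the basis definition of the half level into Theorem~\ref{thm:sep:transfer}, while Item~2 requires first rewriting the half level via Proposition~\ref{prop:hintro:bypassfull} and then handling a complementation step.

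For Item~1, note that $\Cs[n+\tfrac12] = \pol{\Cs[n]}$ by the very definition of a concatenation hierarchy. Proposition~\ref{prop:hintro:concatvari} guarantees that $\Cs[n]$ is a \vari, hence in particular a quotienting lattice, so Theorem~\ref{thm:sep:transfer} applies and reduces $\Cs[n+\tfrac12]$-membership to $\Cs[n]$-separation. Decidability of the latter therefore entails decidability of the former.

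For Item~2, apply Proposition~\ref{prop:hintro:bypassfull} to obtain $\Cs[n+\tfrac12] = \pol{\overline{\Cs[n-\tfrac12]}}$. Proposition~\ref{prop:hintro:copolc} ensures that $\overline{\Cs[n-\tfrac12]}$ is a \pvari, thus a quotienting lattice, so Theorem~\ref{thm:sep:transfer} reduces $\Cs[n+\tfrac12]$-membership to $\overline{\Cs[n-\tfrac12]}$-separation. It remains to observe, as was pointed out after Proposition~\ref{prop:hintro:copolc}, that $(L_1,L_2)$ is $\overline{\Cs[n-\tfrac12]}$-separable if and only if $(L_2,L_1)$ is $\Cs[n-\tfrac12]$-separable. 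Hence decidability of $\Cs[n-\tfrac12]$-separation transports to decidability of $\overline{\Cs[n-\tfrac12]}$-separation, and through the above reduction to decidability of $\Cs[n+\tfrac12]$-membership. For the concluding ``in particular'' clause, assume \Cs is finite; Theorem~\ref{thm:sep:hiera} then yields decidable separation at level $\tfrac32$, and Item~2 with $n=2$ (so that $n-\tfrac12=\tfrac32$ and $n+\tfrac12=\tfrac52$) gives decidable membership at level $\tfrac52$.

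There is no real obstacle here: the corollary is a bookkeeping assembly of Theorem~\ref{thm:sep:transfer}, Proposition~\ref{prop:hintro:bypassfull}, and the self-duality of separation under complement. The one point that deserves care is verifying that each class we feed into Theorem~\ref{thm:sep:transfer} genuinely is a quotienting lattice, but this is precisely what Propositions~\ref{prop:hintro:concatvari} and~\ref{prop:hintro:copolc} provide.
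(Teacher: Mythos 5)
Your proof is correct and follows essentially the same path as the paper's: Item~1 is the direct reading of $\Cs[n+\tfrac12]=\pol{\Cs[n]}$ through Theorem~\ref{thm:sep:transfer}; Item~2 rewrites the half level as $\pol{\overline{\Cs[n-\tfrac12]}}$ (you cite Proposition~\ref{prop:hintro:bypassfull}, the paper unfolds Lemma~\ref{lem:hintro:elimbool} directly, but this is the same step), notes that complementing a class swaps the two arguments of separation, and applies Theorem~\ref{thm:sep:transfer} again; and the final clause is Item~2 at $n=2$ combined with Item~3 of Theorem~\ref{thm:sep:hiera}. The only cosmetic difference is that you invoke the packaged Propositions~\ref{prop:hintro:concatvari} and~\ref{prop:hintro:copolc} to justify the quotienting-lattice hypotheses where the paper cites the underlying Theorem~\ref{thm:hintro:polc} and Proposition~\ref{prop:hintro:boolc}.
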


\begin{proof}
  By Proposition~\ref{prop:hintro:boolc} and Theorem~\ref{thm:hintro:polc}, all levels in the concatenation hierarchy of basis \Cs are \pvaris. Therefore, we may apply Theorem~\ref{thm:sep:transfer} to any such level. The first item in the corollary comes directly from the definition of level $n+\frac12$. For Item~2, let $\Ds=\Cs[n-1]$ and assume that $\pol\Ds=\Cs[n-\frac12]$ has decidable separation. Observe that $\overline{\pol\Ds}$ has also decidable separation, since $(K,L)$ are \pol\Ds-separable iff $(L,K)$ are $\overline{\pol\Ds}$-separable. Moreover, by Lemma~\ref{lem:hintro:elimbool}, we have $\Cs[n+\frac12]=\pol{\bpol{\Ds}} = \pol{\overline{\pol{\Ds}}}$. Therefore, it suffices to apply Theorem~\ref{thm:sep:transfer} to conclude the proof of Item~2. Finally, it follows from Item~3 in Theorem~\ref{thm:sep:hiera} that $\Cs[\frac52]$ has decidable membership if \Cs is finite.
\end{proof}

\section{Two fundamental concatenation hierarchies}\label{sec:two-fund-conc}
This section is devoted to the dot-depth and Straubing-Thérien hierarchies. The dot-depth is the concatenation hierarchy whose basis is:
\[
  \dotdp{0} = \{\emptyset,\{\varepsilon\},A^+,A^*\},
\]
while the Straubing-Thérien hierarchy is the concatenation hierarchy whose basis is:
\[
  \stzer = \{\emptyset,A^*\}.
\]
For $q \in \nat$ or $q\in \frac12+\nat$, we denote by \dotdp{q} level $q$ in the dot-depth hierarchy, and by $\sttp{q}$ level $q$ in the Straubing-Th\'erien hierarchy.
It is easy to see both hierarchies classify the star-free languages:
\[
  \sfr  = \bigcup_{n \geq 0} \dotdp{n} = \bigcup_{n \geq 0} \sttp{n}.
\]
This was the original motivation of Brzozowski and Cohen for considering the dot-depth hierarchy: for each language, one counts the number of alternations between complement and concatenation that are required to define it.

The Straubing-Thérien hierarchy quickly gained attention in the mid 80s after it was observed to be ``more fundamental'' than the dot-depth. This informal claim is motivated by the two following properties:
\begin{enumerate}
\item Straubing~\cite{StrauVD} showed that any full level in the dot-depth hierarchy is obtained by applying a uniform operation to the corresponding level in the Straubing-Thérien hierarchy. This result makes it possible to reduce membership for a level in the dot-depth to the same problem for corresponding level the Straubing-Thérien hierarchy. This was pushed later to half-levels~\cite{PinWeilVD} and to separation~\cite{pzsucc,pzsuccfull}.

\item An important result that we already stated in Section~\ref{sec:memb-separ-conc} is that separation is decidable up to level~$\frac32$ in any hierarchy with a finite basis. In the particular case of the Straubing-Thérien hierarchy, it follows from a theorem of Pin and Straubing~\cite{pin-straubing:upper} that the levels $\frac{3}{2}$ and above are also the levels $\frac{1}{2}$ and above in another hierarchy whose basis is also finite. While simple, this result is crucial, as it allows us to lift the decidability results from Section~\ref{sec:memb-separ-conc} up to level $\frac{5}{2}$ in the Straubing-Thérien~hierarchy (and therefore in the dot-depth as well by the first item above).
\end{enumerate}

Thanks to the generic analysis carried out in previous  sections, we know that levels of both hierarchies satisfy robust properties: since their bases are \varis, it follows from Proposition~\ref{prop:hintro:concatvari} that all half-levels are \pvaris and all full levels are \varis.
Moreover, it follows from Proposition~\ref{prop:hintro:closconcat} that all half-levels are closed under concatenation and marked concatenation.

In fact the Straubing-Thérien hierarchy is closely related to the dot-depth: the two hierarchies are interleaved as we state in the next proposition.
\begin{proposition}\label{prop:hintro:inclus}
  For any $n \in \nat$, the following inclusions hold:
  \[
    \sttp{n} \subseteq \dotdp{n} \subseteq \sttp{n+1} \quad \text{and} \quad \sttp{n + \tfrac{1}{2}} \subseteq \dotdp{n + \tfrac{1}{2}} \subseteq \sttp{n + \tfrac{3}{2}}.
  \]
\end{proposition}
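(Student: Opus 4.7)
The plan is to prove the interleaving by a straightforward induction on $n$, based on two ingredients: (i) the base-case inclusions $\stzer \subseteq \dotdp{0} \subseteq \sttp{1}$, and (ii) the monotonicity of the two closure operators $\pol{(-)}$ and $\bool{(-)}$ with respect to class inclusion. Since every level of both hierarchies is obtained from level~$0$ by iterating these two operators in the same pattern, any inclusion between level~$0$ of one hierarchy and level~$k$ of the other automatically propagates upwards, shifted by~$k$ at every subsequent level.

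The first task is to establish the two base inclusions. The inclusion $\stzer \subseteq \dotdp{0}$ is immediate from the definitions, since $\{\emptyset, A^*\} \subseteq \{\emptyset, \{\varepsilon\}, A^+, A^*\}$. The inclusion $\dotdp{0} \subseteq \sttp{1}$ requires checking each of the four basis languages individually: $\emptyset$ and $A^*$ already belong to $\stzer \subseteq \sttp{1}$; next, writing $A^+ = \bigcup_{a \in A} A^* a A^*$ exhibits $A^+$ as a finite union of $\stzer$-monomials of degree~$1$, so $A^+ \in \sttp{\tfrac{1}{2}} \subseteq \sttp{1}$; finally, $\{\varepsilon\} = A^* \setminus A^+$ is obtained by complementation and therefore lies in $\bool{\sttp{\tfrac{1}{2}}} = \sttp{1}$. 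This is the only step where any actual verification is needed, and it is where the one-level shift in the statement comes from: complementing $A^+$ to obtain $\{\varepsilon\}$ forces us into a Boolean closure.

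The induction step then writes itself. Assume $\sttp{n} \subseteq \dotdp{n} \subseteq \sttp{n+1}$ holds. Applying $\pol{(-)}$ to both inclusions and using monotonicity gives
\[
  \sttp{n+\tfrac{1}{2}} = \pol{\sttp{n}} \subseteq \pol{\dotdp{n}} = \dotdp{n+\tfrac{1}{2}} \subseteq \pol{\sttp{n+1}} = \sttp{n+\tfrac{3}{2}},
\]
which is the second chain of inclusions in the statement. Then applying $\bool{(-)}$ to the two inclusions $\sttp{n+\tfrac{1}{2}} \subseteq \dotdp{n+\tfrac{1}{2}} \subseteq \sttp{n+\tfrac{3}{2}}$ and using monotonicity of $\bool{(-)}$ yields $\sttp{n+1} \subseteq \dotdp{n+1} \subseteq \sttp{n+2}$, advancing the induction.

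There is no genuine obstacle here: once the base case $\dotdp{0} \subseteq \sttp{1}$ is secured via the identity $\{\varepsilon\} = A^* \setminus A^+$, monotonicity of the two closure operators carries everything through. The only subtle point worth stressing in the write-up is that the Straubing-Thérien hierarchy lags the dot-depth by exactly half a level on one side and leads it by half a level on the other, which reflects precisely the Boolean step needed to produce $\{\varepsilon\}$ from $A^+$ at the base.
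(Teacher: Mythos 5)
Your proposal is correct and takes essentially the same approach as the paper: establish the base-level inclusions $\stzer \subseteq \dotdp{0}$ and $\dotdp{0} \subseteq \sttp{1}$ (the latter via $A^+ = \bigcup_{a\in A} A^*aA^*$ and $\{\varepsilon\} = A^* \setminus A^+$), then propagate them upward through the hierarchies. The paper simply leaves the monotonicity/induction step implicit (``it suffices to observe that $\dotdp{0} \subseteq \sttp{1}$''), where you spell it out.
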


\begin{proof}
  The inclusions $\sttp{n} \subseteq \dotdp{n}$ and $\sttp{n + \tfrac{1}{2}} \subseteq \dotdp{n + \tfrac{1}{2}}$ are immediate since it is clear that $\stzer \subseteq \dotzer$. For the other inclusions it suffices to observe that $\dotzer \subseteq \stone$. This holds since $\dotzer = \{\emptyset,\{\varepsilon\},A^+,A^*\}$, and we have $\emptyset,A^* \in \stzer\subseteq\stone$ and $A^+ = \bigcup_{a \in A} A^*aA^* \in \sthone \subseteq \stone$. Finally, $\{\varepsilon\} = A^* \setminus A^+ \in \stone$.
\end{proof}

Theorem~\ref{thm:hintro:strict} shows that over an alphabet of size at least 2, the dot-depth hierarchy is strict (as its basis \dotdp{0} is finite). Thus, Theorem~\ref{thm:hintro:ddstrict} is a simple corollary of Theorem~\ref{thm:hintro:strict}.

An immediate consequence of Proposition~\ref{prop:hintro:inclus} is the strictness of the Straubing-Thérien hierarchy, which follows from the strictness of the dot-depth (Theorem~\ref{thm:hintro:ddstrict}). Of course, this is also a consequence of Theorem~\ref{thm:hintro:strict} since the basis \stzer is finite.

\begin{remark}
  We proved Theorem~\ref{thm:hintro:ddstrict} for the dot-depth as the corollary of a more general theorem. However, there exist many specific proofs. This includes the original one by Brzozowski and Knast~\cite{BroKnaStrict} who exhibit languages $L_n$ for all $n \geq 1$ such that $L_n$ has dot-depth $n$ but not dot-depth $n-1$. The definition is as follows $L_1 = (ab)^*$, and for $n \geq 2$, $L_n = (aL_{n-1}b)^*$. As expected, the difficulty is proving that $L_n$ does \emph{not} have dot-depth $n-1$. This proof has often been revisited. Let us mention the game theoretic proofs of Thomas~\cite{ThomStrict2,ThomStrict} or the one by Thérien~\cite{therien:powersurvey}.

  Let us also mention the proof of Straubing~\cite{StrauConcat} which relies on a different approach based on algebra. Instead of working with the classes \dotdp{n}, this argument proves strict inclusions between associated algebraic classes (namely, semigroup varieties).
\end{remark}

\medskip\noindent\textbf{Examples.}
Let us present some typical examples of languages that belong or not to the first levels of the dot-depth hierarchy.
\begin{example}[Languages of dot-depth 1/2]
  Let $\Cs=\dotdp{0}$. Observe that $w \sim_\Cs w'$ if and only if $w,w'$ are either both empty or both nonempty. In particular, the period of \Cs is $p = 1$ for any alphabet.
Let $A = \{a,b\}$ and consider $\dotdp{\frac12}= \pol{\Cs}$. We show that the language $L = a^*b^*$ does not belong to $\dotdp{\frac12}$. Indeed, for any $k \in \nat$, consider the two following words:
  \[
    u_k = a^{2^{k+1}}b \quad \text{and} \quad v_k = a^{2^{k+1}}b^{2^{k+1}}a^{2^{k+1}}b.
  \]
  Clearly, we have $u_k \in L$ and $v_k \not\in L$ for any $k \in \nat$. Since the period of \dotdp0 is $1$ and $a \sim_\Cs b^{2^{k+1}}$ (both words are nonempty), we obtain from Lemma~\ref{lem:hintro:propreo2} that,
  \[
    u_k \polrelk v_k \quad \text{where $\polrelk$ is the canonical preorder associated to \polk{\Cs}}.
  \]
  Thus $L$ is not definable in $\pol\Cs=\dotdp{\frac12}$.
\end{example}

\begin{example}[Languages of dot-depth one]
  Consider the alphabet $A = \{a,b\}$. The typical example of language having dot-depth one is $(ab)^*$. Indeed, we have $(ab)^* = A^* \setminus (bA^* \cup A^*a \cup A^*aaA^* \cup A^*bbA^*)$. Hence, $(ab)^*$ has dot-depth one.
\end{example}

\begin{example} \label{ex:hintro:dd1}
  Let again $\Cs=\dotdp0$.  We prove that the language $K = (a(ab)^*b)^*$ does not belong to $\dotdp1=\bpol{\Cs}$. For any $k \in \nat$ consider the three following words,
  \[
    w_k = (ab)^{2^{k+1}} \qquad x_k = (aw_kbw_k)^{2^{k+1}} \qquad y_k = (aw_kaw_kbw_k)^{2^{k+1}}.
  \]
  Observe that for any $k \in \nat$, $x_k \in K$ and $y_k \not\in K$. We now prove that $x_k \bpolrelk y_k$, where $\bpolrelk$ is the canonical preorder relation associated to \bpolk{\Cs} (because \bpolk{\Cs} is a Boolean algebra, one may verify $\bpolrelk$ is in fact an equivalence relation). It will then follow that $K \not\in \bpol{\Cs}$.

  This amounts to proving that $x_k \polrelk y_k$ and $y_k \polrelk x_k$. Observe that $ab \sim_\Cs a \sim_\Cs b$ (all these words are nonempty). By definition of $w_k$, we obtain from Lemma~\ref{lem:hintro:propreo2} that,
  \begin{equation} \label{eq:hintro:notdd1}
    \begin{array}{lll}
      w_k \polrelk w_kaw_k, \\
      w_k \polrelk w_kbw_k.
    \end{array}
  \end{equation}
  Since \polrelk is compatible with concatenation, it immediately follows from the first inequality in~\eqref{eq:hintro:notdd1} that $x_k \polrelk y_k$. Indeed, we have,
  \[
    x_k = (aw_kbw_k)^{2^{k+1}} \polrelk (aw_kaw_kbw_k)^{2^{k+1}} = y_k.
  \]
  Conversely, observe that using compatibility with concatenation again and the second item in~\eqref{eq:hintro:notdd1}, we get,
  \[
    aw_kaw_kbw_k \polrelk aw_kbw_kaw_kbw_k = (aw_kbw_k)^2.
  \]
  Therefore, we have $y_k \polrelk (aw_kbw_k)^{2 \times 2^{k+1}}$. Finally, using Lemma~\ref{lem:hintro:propreo1}, we obtain:
  \[
    (aw_kbw_k)^{2 \times 2^{k+1}} \polrelk (aw_kbw_k)^{2^{k+1}} = x_k.
  \]
  Hence, we conclude that $y_k \polrelk x_k$ which terminates the proof.
\end{example}

\begin{example}
  Let $A = \{a,b\}$. One can verify that $(ab)^*$ does not belong to \stone. However, it belongs to \sttwo. Indeed, observe that the singleton language $\{\varepsilon\}$ belongs to \stone ($\{\varepsilon\} = A^* \setminus \left(\bigcup_{a \in A} A^*aA^*\right)$). Thus, $bA^*,A^*a,A^*aaA^*$ and $A^*bbA^*$ belong to \sthtwo and we may use the usual approach: $(ab)^*$ is the complement of $bA^*\cup A^*a \cup A^*aaA^* \cup A^*bbA^*$ and therefore belongs to \sttwo.
\end{example}

\begin{example}[Languages of dot-depth two]
  Consider the alphabet $A = \{a,b\}$. The language $(a(ab)^*b)^*$ has dot-depth two. Indeed, one may verify that it is the complement of the following language:
  \[
    (ab)^*bA^* + A^*aa(ba)^*aA^* + A^*b(ba)^*bbA^* + A^*a(ab)^*.
  \]
  Clearly, the above language has dot-depth $\frac{3}{2}$ since $(ab)^*$ and $(ba)^*$ have dot-depth one by the previous example. Hence, $(a(ab)^*b)^*$ has dot-depth two.
\end{example}

\medskip\noindent{\textbf{Membership and Separation.}} Theorem~\ref{thm:ddredst} shows that if membership of some level in the Straubing-Thérien hierarchy is decidable, then so is the corresponding level in the dot-depth. Actually, the state of the art with respect to membership is the same for the Straubing-Thérien hierarchy as the one for the dot-depth. In~\cite{pzsucc,pzsuccfull}, we generalized Theorem~\ref{thm:ddredst} to cope with separation and covering, by defining a generic operation on classes of languages, that maps each level of the Straubing-Thérien hierarchy on the corresponding level in the dot-depth hierarchy, and which preserves decidability of covering and separation. We outline the state of the art in \figurename~\ref{fig:hintro:strauther}.

\begin{theorem}[Place \& Zeitoun~\cite{pzsucc,pzsuccfull}]\label{thm:enrichment}
  For any level $q\in\nat$ or $q\in\frac12+\nat$, if\/$\sttp{q}$ has decidable separation (resp.\ covering), then so has $\dotdp{q}$.
\end{theorem}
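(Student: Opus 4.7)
\medskip\noindent\textbf{Proof proposal.}
The plan is to reduce $\dotdp{q}$-separation (and covering) over an alphabet $A$ to $\sttp{q}$-separation (resp.\ covering) over a suitably enriched alphabet $\hat A$. The intuition is provided by Theorems~\ref{thm:perr-pin-citepp} and~\ref{thm:thomas-citethomequ}: the gap between $\sttp{q}$ and $\dotdp{q}$ corresponds exactly to the gap between the signatures $\{<\}$ and $\sigen$, and the extra predicates $\min,\max,+1,\varepsilon$ only inspect boundary positions and adjacent pairs of letters. This \emph{local} information can therefore be pre-encoded into the letters, flattening the enriched signature to the order signature over the new alphabet.

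First I would define $\hat A$ as an alphabet tagging each position with enough local data to recover the enriched predicates from the order alone (for instance, pairs $(a,t)$ where $a\in A$ and $t\in\{\ell,m,r\}$ indicates whether the position is leftmost, internal, or rightmost), and a rational map $w\mapsto \hat w$ from $A^*$ to $\hat A^*$. The image $\widehat{A^*}\subseteq \hat A^*$ is regular and the map is an effective bijection onto $\widehat{A^*}$. For $L\subseteq A^*$, set $\hat L=\{\hat w:w\in L\}$; for $K\subseteq \hat A^*$, set $K^{\flat}=\{w\in A^*:\hat w\in K\}$. The technical core of the proof is the pair of implications
\[
  L\in\dotdp{q}\ \Longrightarrow\ \hat L\in\sttp{q}\quad\text{and}\quad K\in\sttp{q}\ \Longrightarrow\ K^{\flat}\in\dotdp{q},
\]
established by induction on $q$, following the Boolean/polynomial closure structure.

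Once this correspondence is established, the reduction is routine. Given a $\dotdp{q}$-separator $H$ for $(L_1,L_2)$, the language $\hat H$ is in $\sttp{q}$ and separates $(\hat L_1,\hat L_2)$. Conversely, given an $\sttp{q}$-separator $K$ for $(\hat L_1,\hat L_2)$, its pull-back $K^{\flat}$ is a $\dotdp{q}$-separator for $(L_1,L_2)$: indeed $L_1\subseteq K^{\flat}$ since $\hat L_1\subseteq K$, and $K^{\flat}\cap L_2=\emptyset$ since $K\cap\hat L_2=\emptyset$. The encoding being computable on regular inputs, decidability of $\sttp{q}$-separation yields decidability of $\dotdp{q}$-separation. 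The covering case is identical, replacing pairs by finite families of languages and separators by covers.

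The main obstacle lies in the second implication, namely in showing that pulling back through the encoding preserves the hierarchy level. The ``$\Rightarrow$'' direction of the correspondence is comparatively direct: translating each basic $\dotdp{0}$-language and propagating through polynomial and Boolean closures is straightforward, because every enriched test used to define a dot-depth language becomes a quantifier-free letter test in $\hat A$. The ``$\Leftarrow$'' direction is more subtle: one must argue that marked concatenations $K_1\hat a K_2$ performed in $\hat A$ pull back, after restriction to well-formed words in $\widehat{A^*}$, to operations expressible within $\dotdp{q}$. This requires a careful analysis of how the boundary tags $\ell$, $m$, $r$ interact when two encoded factors are glued together (an $\ell$-tag in the middle or two $r$-tags both forbid membership in $\widehat{A^*}$), and of how these interactions simulate the successor and boundary predicates. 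Handling these boundary subtleties uniformly across all $q$, and for both half and full levels, is where the bulk of the technical work concentrates; it is precisely this analysis that forces the tagging to be chosen rich enough to reconstruct $+1,\min,\max,\varepsilon$ but parsimonious enough not to inflate the hierarchy level.
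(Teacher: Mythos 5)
The paper does not actually prove Theorem~\ref{thm:enrichment}; it is quoted from~\cite{pzsucc,pzsuccfull}, where it is obtained by defining a generic enrichment operation on classes and proving a transfer theorem for it. Your alphabet-enrichment reduction is close in spirit to that approach, but as written it contains a concrete flaw that makes the reduction give wrong answers already at the first half level.

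The central issue is the treatment of the empty word. Any letter-by-letter tagging (such as your $(a,t)$ with $t\in\{\ell,m,r\}$) necessarily sends $\varepsilon\in A^*$ to $\varepsilon\in\hat A^*$, and then your forward direction fails. Take $L_1=\{\varepsilon\}$ and $L_2=A^+$: these are $\dotdp{\frac12}$-separable, since $\{\varepsilon\}\in\dotdp 0\subseteq\dotdp{\frac12}$. But $(\hat L_1,\hat L_2)=(\{\varepsilon\},\widehat{A^+})$ is not $\sttp{\frac12}(\hat A)$-separable: $\sttp{\frac12}(\hat A)=\pol{\{\emptyset,\hat A^*\}}$ consists of finite unions of languages $\hat A^*c_1\hat A^*\cdots c_n\hat A^*$, and the only such union containing $\varepsilon$ is $\hat A^*$, which meets $\widehat{A^+}$. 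So your reduction reports ``not separable'' when the correct answer is ``separable''; the encoding must instead map $\varepsilon$ to a \emph{nonempty} marker word. Two further repairs are needed. The literal claim ``$\hat H\in\sttp q$'' cannot hold: the direct image $\hat H$ lies inside the proper regular subset $\widehat{A^*}\subsetneq\hat A^*$ of well-tagged words, and already $\widehat{A^*}\notin\sttp 0(\hat A)=\{\emptyset,\hat A^*\}$ although $A^*\in\dotdp 0$. What the reduction actually needs is the weaker statement that some $K\in\sttp q(\hat A)$ satisfies $K^{\flat}=H$, which suffices because a separator may contain garbage words outside $\widehat{A^*}$. Finally, tags in $\{\ell,m,r\}$ cannot recover the successor predicate $+1$; each position must also carry the neighbouring letter(s). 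Getting these design choices right and then carrying out the level-by-level induction that marked concatenations over $\hat A$ pull back to $\dotdp q$ operations over $A$ constitute the entire substance of the theorem; the surrounding reduction logic that you did write out is the easy part.
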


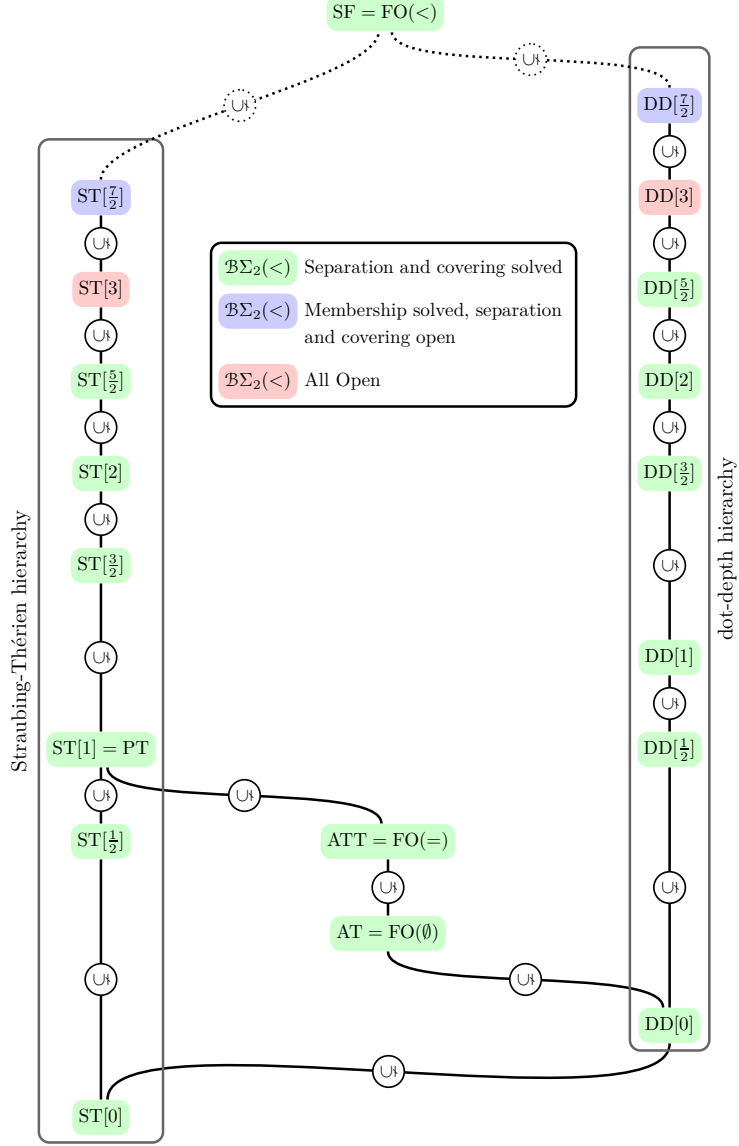
\begin{figure}[!ht]
  \centering
  \scalebox{.8}{
    \begin{tikzpicture}

      \def\step{1.7}
      \def\separ{1.5}
      \def\scal{0.9}
      \def\decal{0.5}

      \begin{scope}[scale=\scal,every node/.style={scale=\scal}]

        \node[gbox] (b0w) at (-3.5,-1*\step) {\stzer};
        \node[gbox] (s1w) at ($(b0w)+(0.0,3*\step)$) {\sthone};
        \node[gbox] (b1w) at ($(b0w)+(0.0,4*\step)$) {$\stone = \pt$};

        \node[gbox] (s2w) at ($(b1w)+(0.0,2*\step)$) {\sthtwo};
        \node[gbox] (b2w) at ($(b1w)+(0.0,3*\step)$) {\sttwo};

        \node[gbox] (s3w) at ($(b2w)+(0.0,\step)$) {\sththree};
        \node[rbox] (b3w) at ($(b2w)+(0.0,2*\step)$) {\stthree};

        \node[bbox] (s4w) at ($(b3w)+(0.0,\step)$) {\sthfour};

        \node[gbox] (folab) at (1.8,\step) {$\at = \folab$};
        \node[gbox] (foeq) at ($(folab)+(0.0,\step)$) {$\att = \foeq$};
        \draw[ledg] (folab) to node[linc] {$\subsetneq$} (foeq);

        \draw[ledg] (foeq.110) to [out=90,in=-90,looseness=0.4] node[linc] {$\subsetneq$} (b1w.-70);

        \node[gbox] (b0ws) at (7.0,0) {\dotzer};
        \node[gbox] (s1ws) at ($(b0ws)+(0.0,3*\step)$) {\dothone};
        \node[gbox] (b1ws) at ($(b0ws)+(0.0,4*\step)$) {\dotone};

        \node[gbox] (s2ws) at ($(b1ws)+(0.0,2*\step)$) {\dothtwo};
        \node[gbox] (b2ws) at ($(b1ws)+(0.0,3*\step)$) {\dottwo};

        \node[gbox] (s3ws) at ($(b2ws)+(0.0,\step)$) {\doththree};
        \node[rbox] (b3ws) at ($(b2ws)+(0.0,2*\step)$) {\dotthree};

        \node[bbox] (s4ws) at ($(b3ws)+(0.0,\step)$) {\dothfour};

        \draw[ledg] (b0w.north) to [out=90,in=-90,looseness=0.4] node[linc] {$\subsetneq$} (s1w);

        \draw[ledg] (b0w.70) to [out=90,in=-90,looseness=0.4] node[linc] {$\subsetneq$} (b0ws.south);

        \draw[ledg] (s1w.north) to [out=90,in=-90,looseness=0.4] node[linc] {$\subsetneq$} (b1w);

        \draw[ledg] (b1w.north) to [out=90,in=-90,looseness=0.4] node[linc] {$\subsetneq$} (s2w.south);

        \draw[ledg] (s2w.north) to [out=90,in=-90,looseness=0.4] node[linc] {$\subsetneq$} (b2w.south);

        \draw[ledg] (b2w.north) to [out=90,in=-90,looseness=0.4] node[linc] {$\subsetneq$} (s3w.south);
        \draw[ledg] (s3w.north) to [out=90,in=-90,looseness=0.4] node[linc] {$\subsetneq$} (b3w.south);
        \draw[ledg] (b3w.north) to [out=90,in=-90,looseness=0.4] node[linc] {$\subsetneq$} (s4w.south);

        \draw[ledg] (b0ws.north) to [out=90,in=-90,looseness=0.4] node[linc] {$\subsetneq$} (s1ws);

        \draw[ledg] (b0ws.110) to [out=90,in=-90,looseness=0.4] node[linc] {$\subsetneq$} (folab.south);

        \draw[ledg] (s1ws.north) to [out=90,in=-90,looseness=0.4] node[linc] {$\subsetneq$} (b1ws);
        \draw[ledg] (b1ws.north) to [out=90,in=-90,looseness=0.4] node[linc] {$\subsetneq$} (s2ws);

        \draw[ledg] (s2ws.north) to [out=90,in=-90,looseness=0.4] node[linc]
        {$\subsetneq$} (b2ws.south);

        \draw[ledg] (b2ws.north) to [out=90,in=-90,looseness=0.4] node[linc]
        {$\subsetneq$} (s3ws.south);
        \draw[ledg] (s3ws.north) to [out=90,in=-90,looseness=0.4] node[linc]
        {$\subsetneq$} (b3ws.south);
        \draw[ledg] (b3ws.north) to [out=90,in=-90,looseness=0.4] node[linc]
        {$\subsetneq$} (s4ws.south);
      \end{scope}

      \node[lbox,fit={(b0ws) (s1ws) (s2ws) ($(s4ws)+(0.0,0.5*\step)$)}] (bobox) {};

      \node[anchor=north,rotate=90,align=center] (fosbox) at (bobox.east)
      {dot-depth hierarchy};

      \node[lbox,fit={(b0w) (b1w) (s2w) ($(s4w)+(0.0,0.5*\step)$)}] (bobox2) {};

      \node[anchor=south,rotate=90,align=center] (fosbox) at (bobox2.west)
      {Straubing-Thérien hierarchy};

      \begin{scope}[scale=\scal,every node/.style={scale=\scal}]
        \node[gbox] (fow) at (1.75,11*\step) {$\sfr = \fow$};

        \draw[ledg,dotted] (s4w.north) to [out=90,in=-90,looseness=0.4] node[linc,dotted]
        {$\subsetneq$} (fow.-110);

        \draw[ledg,dotted] (s4ws.north) to [out=90,in=-90,looseness=0.4] node[linc,dotted]
        {$\subsetneq$} (fow.-70);
      \end{scope}

      \begin{scope}[scale=\scal,every node/.style={scale=\scal}]

        \node[gbox,text opacity=0] (gr) at (-0.6,14.0) {\bswd};
        \node[bbox,text opacity=0] (bl) at (-0.6,13.2) {\bswd};
        \node[rbox,text opacity=0] (re) at (-0.6,11.9) {\bswd};
        \node[anchor=west] (gr2) at (gr.east) {Separation and covering solved};
        \node[anchor=west] (bl2) at (bl.east) {Membership solved, separation};

        \node[anchor=west] at ($(bl.east)-(0,0.5)$) {and covering open};

        \node[anchor=west] (re2) at (re.east) {All Open};

      \end{scope}

      \node[lbox,draw=black,fit={(gr) (re) (gr2) (bl2)}] {};

    \end{tikzpicture}
  }
  \caption{Overview of classes. For the sake of avoiding clutter, inclusions between levels in the Straubing-Thérien and dot-depth hierarchies are omitted (see Proposition~\ref{prop:hintro:inclus}).}
 \label{fig:hintro:strauther}
\end{figure}

\medskip\noindent\textbf{The alphabet trick.} We now connect the Straubing-Thérien hierarchy with the concatenation hierarchy whose basis is the class \at of alphabet testable languages. While simple, this theorem has important consequences.

Let us briefly recall the definition of the alphabet testable languages. We shall connect two classes to the Straubing-Thérien hierarchy: \at itself and a weaker one which we denote by \wat.  For any alphabet $A$, recall that $\at$ consists of all Boolean combinations of languages of the form,
$A^*aA^*$ for $a \in A$.
Moreover, $\wat$ contains all unions of languages $B^*$ for $B \subseteq A$. We already know that \at is a \vari of regular languages and one may verify that \wat is a \pvari (closure under intersection follows from the fact that $B^* \cap C^* = (B \cap C)^*$).

It was proved by Pin and Straubing~\cite{pin-straubing:upper} that the level $\frac{3}{2}$ in the Straubing-Thérien hierarchy\footnote{In fact, the original formulation of Pin and Straubing considers level $2$ in the Straubing-Thérien hierarchy and not level $\frac{3}{2}$.} is also the class \pol{\wat}.

\begin{remark}
  Another formulation (which is essentially the original one of Pin and Straubing) is to say that \sthtwo contains exactly the unions of languages of the form,
  \[
    B_0^*a_1B_1^*a_2B_2^* \cdots a_n B_n^* \quad \text{with $B_0,\dots,B_n \subseteq A$}. \dropQED
  \]
\end{remark}

We reformulate this result in the following crucial theorem.

\begin{theorem}[Pin and Straubing~\cite{pin-straubing:upper}]\label{thm:hintro:alphatrick}
  Level $\frac{3}{2}$ in the Straubing-Thérien hierarchy satisfies the following property:
  \[
    \sttp{\tfrac{3}{2}} = \pol{\wat} = \pol{\at}.
  \]
  In particular, any level $n \geq \frac{3}{2}$ (half or full) in the Straubing-Thérien hierarchy corresponds exactly to level $n-1$ in the concatenation hierarchy of basis \at.
\end{theorem}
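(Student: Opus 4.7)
The plan is to establish the chain of inclusions
\[
  \pol{\wat} \subseteq \pol{\at} \subseteq \sttp{\tfrac{3}{2}} \subseteq \pol{\wat};
\]
once these hold, the ``in particular'' statement will follow by a level-by-level induction.

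Two of these inclusions will be immediate. For every $B \subseteq A$ one has $B^* = A^* \setminus \bigcup_{a \in A \setminus B} A^*aA^*$, so $\wat \subseteq \at$; monotonicity of \pol\ then gives $\pol{\wat} \subseteq \pol{\at}$. For the second inclusion, each $A^*aA^*$ is a $\stzer$-monomial and therefore lies in $\pol{\stzer}$; since $\at$ is the Boolean closure of these generators, $\at \subseteq \bool{\pol{\stzer}} = \stone$, and applying \pol\ gives $\pol{\at} \subseteq \pol{\stone} = \sttp{\tfrac{3}{2}}$.

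The heart of the proof will be the inclusion $\sttp{\tfrac{3}{2}} \subseteq \pol{\wat}$. Since $\wat$ is a \pvari of regular languages, Theorem~\ref{thm:hintro:polc} makes $\pol{\wat}$ a \pvari (in particular a lattice) of regular languages, and Lemma~\ref{lem:hintro:polidem} gives $\pol{\pol{\wat}} = \pol{\wat}$. It will therefore suffice to prove $\stone \subseteq \pol{\wat}$. Now $\stone = \bool{\pol{\stzer}}$ is generated as a Boolean algebra by the languages $L = A^*a_1A^* \cdots a_nA^*$, and every such $L$ trivially lies in $\pol{\wat}$ because $A^* \in \wat$; thanks to closure of $\pol{\wat}$ under finite unions and intersections, the only remaining task is to show that the complement of each such $L$ also belongs to $\pol{\wat}$.

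This last step is where I expect to focus the work. My plan is to classify each $w \notin L$ by the largest $j$ with $0 \leq j < n$ such that $a_1 \cdots a_j$ is a subword of $w$, and to use the greedy leftmost embedding of $a_1 \cdots a_j$ into $w$ to obtain a unique factorization $w = w_0 a_1 w_1 \cdots a_j w_j$ in which $w_i$ avoids the letter $a_{i+1}$ for every $0 \leq i \leq j$ (for $i = j$ this uses the maximality of $j$). Setting $B_i = A \setminus \{a_{i+1}\}$, this will yield the identity
\[
  A^* \setminus L \;=\; \bigcup_{j=0}^{n-1} B_0^*\, a_1\, B_1^*\, a_2 \cdots a_j\, B_j^*,
\]
in which every term is a $\wat$-monomial since $B_i^* \in \wat$; hence $A^* \setminus L \in \pol{\wat}$, closing the argument. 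The ``in particular'' part then comes from an immediate induction: after matching $\sttp{\tfrac{3}{2}}$ with $\pol{\at}$ (level $\tfrac{1}{2}$ of the $\at$-based hierarchy), each successive Boolean or polynomial closure step transfers identically between the two hierarchies. The only genuinely delicate point is the greedy decomposition above; everything else is a routine application of the generic closure machinery developed earlier in the paper.
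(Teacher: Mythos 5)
Your proposal is correct, and it diverges from the paper's argument precisely at the step you identified as the heart of the matter. The outer shell is the same in both cases: $\pol{\wat}\subseteq\pol{\at}\subseteq\sttp{\frac32}$ is routine, and the real content is $\sttp{\frac32}\subseteq\pol{\wat}$. The paper reaches this via Proposition~\ref{prop:hintro:bypassfull}, reducing to $\overline{\sttp{\frac12}}\subseteq\pol{\wat}$, whereas you reduce directly to $\stone\subseteq\pol{\wat}$ using idempotence of $\pol{}$ together with the description of $\stone$ as the Boolean closure of the generators $A^*a_1A^*\cdots a_nA^*$; both reductions are sound, and both leave the same residual task, namely showing the complement of each such generator belongs to $\pol{\wat}$. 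Here the two arguments genuinely part ways. The paper's Lemma~\ref{lem:hintro:atmainlem} argues by induction on $n$, isolating the \emph{rightmost} occurrence of $a_n$ to obtain the recursion $L_n = (A\setminus\{a_n\})^*\cup L_{n-1}a_n(A\setminus\{a_n\})^*$. You instead classify each $w\notin L$ by the longest prefix $a_1\cdots a_j$ ($j<n$) appearing as a scattered subword, use the greedy \emph{leftmost} embedding to get the factorization $w=w_0a_1w_1\cdots a_jw_j$ with $w_i\in B_i^*$ where $B_i=A\setminus\{a_{i+1}\}$, and thus obtain the closed-form union $A^*\setminus L=\bigcup_{j=0}^{n-1}B_0^*a_1B_1^*\cdots a_jB_j^*$. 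I checked both inclusions of this identity: for $\subseteq$ the greedy embedding and the maximality of $j$ supply the constraints on $w_0,\dots,w_j$; for $\supseteq$ one shows by a short induction that any embedding of $a_1\cdots a_{j+1}$ into a word of $B_0^*a_1\cdots a_jB_j^*$ would have to place $a_{j+1}$ in $w_j$, which is impossible. The net effect is that your argument is non-recursive and exposes the combinatorial structure (a union over the ``failure depth'' $j$) more transparently, at the modest cost of having to verify the greedy decomposition carefully; the paper's recursion is shorter to state but hides the same content in the unfolding of $L_{k-1}$. The ``in particular'' clause is handled the same way in both cases, by a trivial level-by-level transfer once the base case $\sttp{\frac32}=\pol{\at}$ is in place.
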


The important point here is that while \at is more involved than \stzer as a basis, it remains \emph{finite}. Therefore, Theorem~\ref{thm:hintro:alphatrick} states that any level $n \geq \frac{3}{2}$ in the Straubing-Thérien hierarchy is also level $n-1$ in another hierarchy whose basis remains finite. This result implies that we ``gain'' one level for the decidability results, therefore yielding Theorem~\ref{thm:5272}.

Indeed, we know that separation is decidable for levels $\frac{1}{2}$, $1$ and $\frac{3}{2}$ of \emph{any} concatenation hierarchy whose basis is finite. This of course applies to the Straubing-Thérien hierarchy since \stzer is clearly finite. However, Theorem~\ref{thm:hintro:alphatrick} allows us to go one step further and to lift these results to levels $2$ and $\frac{5}{2}$ in the particular case of the Straubing-Thérien hierarchy. Indeed, they are also levels $1$ and $\frac{3}{2}$ in the hierarchy of finite basis \at.

\medskip

We now prove Theorem~\ref{thm:hintro:alphatrick}. Since it clear that $\wat \subseteq \at$, the inclusion $\pol{\wat} \subseteq \pol{\at}$ is trivial. We show that $\pol{\at} \subseteq \sthtwo \subseteq \pol{\wat}$.

The inclusion $\pol{\at} \subseteq \sthtwo$ is simple. Indeed, we know from the definition that $\sthtwo = \pol{\stone}$. Hence, it suffices to prove that $\at \subseteq \stone$. Let $L \in \at$. By definition, $L$ is the Boolean combination of languages $A^*aA^*$ for some $a \in A$. Clearly, $A^*aA^* \in \sthone$ for any $a \in A$. Hence, $L \in \stone = \bool{\sthone}$.

The inclusion $\sthtwo \subseteq \pol{\wat}$ is more involved. We first reduce the proof to that of a simpler inclusion. Recall that we showed in Proposition~\ref{prop:hintro:bypassfull} that,
\[
  \sttp{\tfrac{3}{2}} = \pol{\overline{\sttp{\tfrac{1}{2}}}}
\]
Therefore, in order to prove that $\sthtwo \subseteq \pol{\wat}$, it suffices to show the following inclusion:
\begin{equation}\label{eq:hintro:inclus}
  \overline{\sttp{\tfrac{1}{2}}} \subseteq \pol{\wat}
\end{equation}
Indeed, it will follow that $\pol{\overline{\sttp{\tfrac{1}{2}}}} \subseteq \pol{\pol{\wat}} = \pol{\wat}$ since the polynomial closure operation is idempotent by Lemma~\ref{lem:hintro:polidem}. We now concentrate on proving~\eqref{eq:hintro:inclus}. This is a consequence of the following lemma.

\begin{lemma}\label{lem:hintro:atmainlem}
  For any $a_1,\dots,a_n \in A$, we have $A^* \setminus A^*a_1A^* \cdots a_n A^* \in \pol{\wat}$.
\end{lemma}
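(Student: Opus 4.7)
The approach is to describe the complement explicitly. Observe that $A^*a_1A^*\cdots a_nA^*$ consists of exactly the words that contain $a_1\cdots a_n$ as a (scattered) subword. Hence its complement is the set of words that \emph{avoid} $a_1\cdots a_n$ as a subword, and the plan is to give a direct expression of this avoidance set as a finite union of \wat-monomials.

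The main step will be a greedy-decomposition lemma: given a word $w$ avoiding $a_1\cdots a_n$, run the standard ``leftmost matching'' procedure, reading $w$ from left to right and marking, for $j=1,2,\ldots$, the first occurrence of $a_j$ strictly after the previous marked position. Since $w$ avoids $a_1\cdots a_n$, there is a smallest index $i\in\{1,\ldots,n\}$ for which no such occurrence of $a_i$ exists. This index $i$ induces a unique factorisation
\[
  w \;=\; u_1\,a_1\,u_2\,a_2\,\cdots\,u_{i-1}\,a_{i-1}\,u_i,
\]
where $u_j\in(A\setminus\{a_j\})^*$ for $j=1,\ldots,i-1$ (by leftmostness) and $u_i\in(A\setminus\{a_i\})^*$ (because no $a_i$ occurs after position $p_{i-1}$). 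Conversely, every word admitting such a factorisation for some $i$ clearly avoids $a_1\cdots a_n$ as a subword, so we obtain the equality
\[
  A^* \setminus A^*a_1A^*\cdots a_nA^*
  \;=\; \bigcup_{i=1}^{n} (A\setminus\{a_1\})^*\,a_1\,(A\setminus\{a_2\})^*\,a_2\,\cdots\,a_{i-1}\,(A\setminus\{a_i\})^*.
\]

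It then suffices to observe that each term in this finite union is a \wat-monomial: each factor $(A\setminus\{a_j\})^*$ is of the form $B^*$ with $B\subseteq A$, hence belongs to \wat by definition, and the letters $a_j$ appear as the marked concatenations. By Theorem~\ref{thm:hintro:polc} (applied to the \pvari \wat, which is finite hence in particular consists of regular languages), \pol{\wat} is closed under finite union, so the whole complement lies in \pol{\wat}, proving the lemma.

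The only subtle point is making sure the decomposition is both exhaustive and coherent, i.e.\ that the choice of $i$ really does partition the avoiding words into pieces of the declared form; this is handled by taking $i$ to be the smallest index where the leftmost greedy matching fails, which guarantees both that all earlier letters $a_1,\ldots,a_{i-1}$ really appear (and are preceded only by letters different from the relevant $a_j$) and that no $a_i$ appears in the final suffix. No further ingredient is needed beyond the closure properties of \pol{\wat} recalled in Section~\ref{sec:polyn-bool-clos}.
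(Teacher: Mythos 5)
Your proof is correct, but it takes a genuinely different route from the paper's. You write down a closed-form expression for the complement,
\[
  A^* \setminus A^*a_1A^*\cdots a_nA^*
  \;=\; \bigcup_{i=1}^{n} (A\setminus\{a_1\})^*\,a_1\,(A\setminus\{a_2\})^*\,a_2\,\cdots\,a_{i-1}\,(A\setminus\{a_i\})^*,
\]
obtained by \emph{forward} greedy matching of $a_1,\ldots,a_n$ and indexed by the first letter $a_i$ at which the matching fails. The paper instead proceeds by induction on $n$: it sets $L_k = A^*\setminus A^*a_1A^*\cdots a_kA^*$ and shows $L_k = (A\setminus\{a_k\})^* \cup L_{k-1}\,a_k\,(A\setminus\{a_k\})^*$ by peeling off the \emph{rightmost} occurrence of $a_k$, a backward greedy decomposition. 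Unfolding the paper's recursion gives a union indexed by suffixes $a_{j+1},\ldots,a_n$ rather than by prefixes $a_1,\ldots,a_{i-1}$, so the two formulas are literally different although both correct. Your version is more explicit and has the nice feature of producing the actual \wat-monomials in one shot; the paper's inductive version has the advantage that each step is elementary and both inclusions are spelled out. The one place where your write-up leans on ``clearly'' is precisely the direction that deserves an argument: that any word of the form $u_1a_1u_2a_2\cdots a_{i-1}u_i$ with $u_j\in(A\setminus\{a_j\})^*$ really avoids $a_1\cdots a_n$ as a subword. This needs a short (secondary) induction showing that in any candidate subword occurrence, the position chosen for $a_j$ must lie at or after the explicit $a_j$, so the occurrence dies in $u_i$; it is the analogue of the inclusion $H_k\subseteq L_k$ that the paper proves explicitly. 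Also a small remark: closure of $\pol{\wat}$ under finite union is immediate from the definition of polynomial closure, so you do not need to invoke Theorem~\ref{thm:hintro:polc} for that step (that theorem is about closure under intersection, which you do not use here).
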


Before we prove Lemma~\ref{lem:hintro:atmainlem}, let us use it to show that the inclusion~\ref{eq:hintro:inclus} holds. By definition, any language $L \in  \overline{\sthone}$ is the complement of another language in the class $\sthone = \pol{\stzer}$. Hence, by definition of polynomial closure, there exist \stzer-monomials $K_1,\dots,K_m$ such that,
\[
  L = A^* \setminus \left(\bigcup_{i \leq m} K_i\right) = \bigcap_{i \leq m} A^* \setminus K_i
\]
Since $\stzer = \{\emptyset,A^*\}$ all \stzer-monomials $K_1,\dots,K_m$ are of the form $A^*a_1A^* \cdots a_n A^*$. Thus, it follows from Lemma~\ref{lem:hintro:atmainlem} that $A^* \setminus K_i \in \pol{\wat}$ for all $i \leq m$. Finally, since \pol{\wat} is closed under intersection, we conclude that $L \in \pol{\wat}$ which terminates the proof.

\begin{remark}
  While the above argument may seem simple, let us point out that we implicitly used Theorem~\ref{thm:hintro:polc} which is an involved result. On one hand, we used the original definition of polynomial closures for \pol{\stzer} (\emph{i.e.}, it contains the unions of \stzer-monomials). On the other hand, we used the fact that \pol{\wat} is closed under intersection which is not immediate from the definition: this is where we need Theorem~\ref{thm:hintro:polc}.
\end{remark}

It remains to prove Lemma~\ref{lem:hintro:atmainlem}. Consider $n$ letters $a_1,\dots,a_n \in A$. By a \emph{piece} of a word, we mean a scattered subword. Our objective is to show that $A^* \setminus A^*a_1A^* \cdots a_n A^* \in \pol{\wat}$. For all $k \leq n$, we let $L_k = A^* \setminus A^*a_1A^* \cdots a_kA^*$. Observe that by definition, for any $k \leq n$, $L_k$ contains all words $w$ such that $a_1 \cdots a_k$ is {\bf not} a piece of $w$. We prove by induction on $k$ that $L_k \in \pol{\wat}$ for all $k \leq n$.

When $k = 1$, this is immediate since $L_1 = (A \setminus \{a_1\})^*$ which belongs to \wat (and therefore to \pol{\wat}) by definition. We now assume that $k \geq 2$. Consider the following language $H_k$:
\[
  H_k = (A \setminus \{a_k\})^* \cup L_{k-1}a_k(A \setminus \{a_k\})^*.
\]
By induction hypothesis, we have $L_{k-1} \in \pol{\wat}$. Moreover, it is immediate from the definition of \wat that  $(A \setminus \{a_k\})^* \in \wat \subseteq \pol{\wat}$. Hence, we conclude that $H_k \in \pol{\wat}$ using closure under marked concatenation and union. We now show that $L_k = H_k$ which terminates the proof.

\medskip

We begin with $L_k \subseteq H_k$. Let $w \in L_k$. We consider two cases depending on whether $w$ contains the letter $a_k$ or not. If $a_k \not\in \cont{w}$, then $w \in (A \setminus \{a_k\})^*$ which is a subset of $H_k$ by definition. Hence, $w \in H_k$. Otherwise, $a_k \in \cont{w}$. We decompose $w$ by looking at the rightmost copy of the letter $a_k$: $w = ua_kv$ with $u \in A^*$ and $v \in (A \setminus \{a_k\})^*$. It is now immediate that $u \in L_{k-1}$. Indeed, otherwise the word $a_1 \cdots a_{k-1}$ would be a piece of $u$ and therefore $a_1 \cdots a_k$ would be a piece of $w$ which is not possible since $w \in L_k$ by hypothesis. We conclude that $w = ua_kv \in L_{k-1}a_k(A \setminus \{a_k\})^*$ which a subset of $H_k$ by definition. Thus, $w \in H_k$.

\medskip

We turn to the second inclusion: $H_k \subseteq L_k$. Let $w \in H_k$. If $w \in (A \setminus \{a_k\})^*$, then it is clear that $a_1 \cdots a_k$ is not a piece of $w$ which means that $w \in L_k$. Otherwise, $w \in L_{k-1}a_k(A \setminus \{a_k\})^*$. Thus, $w = ua_kv$ with $u \in L_{k-1}$ and $a_k \not\in \cont{v}$. By contradiction, assume that $a_1 \cdots a_k$ is a piece of $w$. Since $a_k \not\in \cont{v}$, it follows that $a_1 \cdots a_{k-1}$ must be a piece of $u$ which is impossible since $u \in L_{k-1}$. We conclude that $a_1 \cdots a_k$ is not a piece of $w$ which means that $w \in L_k$. This terminates the proof of Lemma~\ref{lem:hintro:atmainlem}.

\section{The link with logic}\label{sec:link-with-logic}
In this section, we present \emph{quantifier alternations hierarchies}, whose levels are defined by fragments of first-order logic. Such hierarchies classify languages according to the type of sentences needed to define them: the classifying parameter is the number of alternations between $\exists$ and $\forall$ quantifiers that are necessary to define a language.

The main theorem here is a \emph{generic} correspondence between concatenation and quantifier alternation hierarchies. For any basis \Cs, the concatenation hierarchy of basis \Cs corresponds exactly to the quantifier alternation hierarchy within a well-chosen variant of first-order logic. This generic connection was originally observed by Thomas~\cite{ThomEqu} who obtained it in a specific case. He showed that the dot-depth hierarchy corresponds exactly to the quantifier alternation hierarchy within the variant $\fo(\sigenr)$ of first-order~logic.

We first present first order logic and quantifier alternation hierarchies.
Then, we state and prove the main theorem of Section~\ref{sec:link-with-logic}: the correspondence between concatenation and quantifier alternation hierarchies. Finally, we instantiate this result on the dot-depth and Straubing-Thérien hierarchies.

\medskip\noindent\textbf{Quantifier alternation hierarchies.}
For defining languages with first-order logic, we view words as relational structures: a word of length~$n$ is a sequence of positions $\{1,\ldots,n\}$ labeled over alphabet~$A$. A \emph{signature} is a set of predicate symbols, each of them having an arity. Given a word of length $n$, a predicate of arity $k$ is interpreted as a $k$-ary relation on the set $\{1,\ldots,n\}$ of positions of the~word.
Important examples of predicates are the following:
\begin{itemize}
\item $\varepsilon$, the nullary ``empty'' predicate, which holds when the word is empty. That is, given a word $w$, the predicate $\varepsilon$ holds when $w=\varepsilon$.
\item For each $a\in A$, a unary ``label'' predicate, also denoted by $a$. Given a word $w$ and a position $i$ in $w$, $a(i)$ holds when position~$i$ in $w$ carries letter~$a$.
\item $\min(x)$, the unary ``minimum'' predicate, which selects the first position of a~word.
\item $\max(x)$, the unary ``maximum'' predicate, which selects the last position.
\item $<$, the binary ``order'' predicate, interpreted as the linear order on positions.
\item $+1$, the binary ``successor'' predicate, interpreted as the successor relation.
\end{itemize}

Each signature $\Ss$ defines a variant of first-order logic, which we denote by $\fo(\Ss)$. For concrete signatures, we will not write the label predicates, \emph{i.e.}, they will be always understood. For instance, $\fow$ denotes the variant of first-order logic for the signature consisting of the order predicate \emph{and} all label predicates.

For a given signature $\Ss$, we define the semantics of $\fo(\Ss)$ of first-order logic as follows: one may quantify over positions of a word, use Boolean connectives as well as the $\top$ (true) and $\bot$ (false) formulas, and test properties of the quantified positions using the predicate symbols from~$\Ss$. Each first-order sentence of $\fo(\Ss)$ therefore defines a language over $A^*$.

More formally, let $w = b_1 \cdots b_n \in A^*$ be a word and $\Xs$ be some finite set of first-order variables, an \emph{assignment of \Xs in $w$} is a map $\mu$ from \Xs to the set of positions of $w$ (i.e., $\mu: \Xs \to \{1,\dots,n\}$). In particular, if $\mu$ is an assignment of \Xs in $w$, $x$ a variable (not necessarily in $\Xs$) and $i$ a position in $w$, we will denote by $\mu[x \mapsto i] : (\Xs \cup \{x\}) \to \{1,\dots,n\}$, the assignment of $\Xs \cup \{x\}$ in $w$ that is identical to $\mu$ except that it maps $x$ to $i$. We can now define the semantic of a first-order formula.

Let $\varphi$ be a first-order formula and assume \Xs contains all free variables of $\varphi$. Then, for any word $w = b_1 \cdots b_n \in A^*$ and any assignment $\mu$ of \Xs in $w$, we say that $w$ \emph{satisfies} $\varphi$ under $\mu$, written $w,\mu \models \varphi$, when one the following properties hold:
\begin{itemize}
\item $\varphi :=$ ``$\top$''.
\item $\varphi :=$ ``$P(x_1,\dots,x_k)$'' for some predicate $P \in \Ss$ and $P(\mu(x_1),\dots,\mu(x_k))$ holds.
\item $\varphi :=$ ``$\exists x\ \Psi$'' and there exists a position $i \in \{1,\dots,n\}$ such that $w,\mu[x \mapsto i] \models \Psi$.
\item $\varphi :=$ ``$\forall x\ \Psi$'' and for any position $i \in \{1,\dots,n\}$, we have $w,\mu[x \mapsto i] \models \Psi$.
\item $\varphi :=$ ``$\Psi \vee \Gamma$'' and $w,\mu \models \Psi$ or $w,\mu \models \Gamma$.
\item $\varphi :=$ ``$\Psi \wedge \Gamma$'' and $w,\mu \models \Psi$ and $w,\mu \models \Gamma$.
\item $\varphi :=$ ``$\neg \Psi$'' and $w,\mu \not\models \Psi$ ($w$ does not satisfy $\Psi$ under $\mu$).
\end{itemize}
When $\varphi$ is a sentence, whether $w, \mu \models \varphi$ does not depend on $\mu$. In that case, we simply write $w, \mu \models \varphi$.  Any sentence $\varphi$ defines the language $\{w\in A^*\mid w\models\varphi\}$.

We now define a hierarchy of fragments within $\fo(\Ss)$ by classifying all $\fo(\Ss)$ sentences according to the number of quantifier alternations within their parse trees. For $i \in \nat$, a formula is $\sic{i}$(\Ss) (resp.~$\pic{i}(\Ss)$) if its prenex normal form has $(i-1)$ quantifier alternations (\emph{i.e.}, $i$ blocks of quantifiers) and starts with an $\exists$ (resp.\ a~$\forall$) quantification. For example, a formula whose prenex normal form is
\[
\forall x_1 \forall x_2\; \exists x_3\; \forall x_4
\ \varphi(x_1,x_2,x_3,x_4) \quad \text{(with $\varphi$ quantifier-free)}
\]
\noindent
is $\pic3(\Ss)$. Observe that a $\pic{i}(\Ss)$ formula is the negation of a $\sic{i}(\Ss)$ formula. Finally, a $\bsc{i}(\Ss)$ formula is a Boolean combination of $\sic{i}(\Ss)$ formulas. Note that by definition, we have $\bsc{i-1}(\Ss) \subseteq \sic{i}(\Ss) \subseteq \bsc{i}(\Ss)$ and $\bsc{i-1}(\Ss) \subseteq \pic{i}(\Ss) \subseteq \bsc{i}(\Ss)$ for any $i\geq1$. It is also clear that any $\fo(\Ss)$ formula belongs to some of these~classes.

We lift this syntactic definition to the semantic level: for $X = \fo(\Ss)$, $\sic{i}(\Ss)$, $\pic{i}(\Ss)$ or $\bsc{i}(\Ss)$, we say that a language $L$ is $X$-definable if it can be defined by an
$X$-formula. Abusing notation, we also denote by $X$ the class of $X$-definable languages. This gives us a hierarchy of languages depicted in \figurename~\ref{fig:qalt:hiera}.

\begin{figure}[H]
  \centering
  \begin{tikzpicture}
    \node[inner sep=1pt] (b0) at (0.0,0.0) {$\sic{0} = \pic{0} = \bsc{0}$ };

    \node[inner sep=1pt] (s1) at (1.5,-0.8) {\sicu};
    \node[inner sep=1pt] (p1) at (1.5,0.8) {\picu};
    \node[inner sep=1pt] (b1) at (3.0,0.0) {\bscu};

    \node[inner sep=1pt] (s2) at ($(b1)+(1.5,-0.8)$) {\sicd};
    \node[inner sep=1pt] (p2) at ($(b1)+(1.5,0.8)$) {\picd};
    \node[inner sep=1pt] (b2) at ($(b1)+(3.0,0.0)$) {\bscd};

    \node[inner sep=1pt] (s3) at ($(b2)+(1.5,-0.8)$) {\sict};
    \node[inner sep=1pt] (p3) at ($(b2)+(1.5,0.8)$) {\pict};
    \node[inner sep=1pt] (b3) at ($(b2)+(3.0,0.0)$) {\bsct};

    \node[inner sep=1pt] (fo) at ($(b3)+(1.8,0.0)$) {\fo};

    \draw [thick] (b0.-60) to [out=-90,in=180] node[linc2] {\scriptsize $\subseteq$} (s1.west);
    \draw [thick] (b0.60) to [out=90,in=-180] node[linc2] {\scriptsize $\subseteq$} (p1.west);

    \draw [thick] (s1.east) to [out=0,in=-90] node[linc2] {\scriptsize $\subseteq$} (b1.-120);
    \draw [thick] (p1.east) to [out=0,in=90] node[linc2] {\scriptsize $\subseteq$} (b1.120);

    \draw [thick] (b1.-60) to [out=-90,in=180] node[linc2] {\scriptsize $\subseteq$} (s2.west);
    \draw [thick] (b1.60) to [out=90,in=-180] node[linc2] {\scriptsize $\subseteq$} (p2.west);
    \draw [thick] (s2.east) to [out=0,in=-90] node[linc2] {\scriptsize $\subseteq$} (b2.-120);
    \draw [thick] (p2.east) to [out=0,in=90] node[linc2] {\scriptsize $\subseteq$} (b2.120);

    \draw [thick] (b2.-60) to [out=-90,in=180] node[linc2] {\scriptsize $\subseteq$} (s3.west);
    \draw [thick] (b2.60) to [out=90,in=-180] node[linc2] {\scriptsize $\subseteq$} (p3.west);
    \draw [thick] (s3.east) to [out=0,in=-90] node[linc2] {\scriptsize $\subseteq$} (b3.-120);
    \draw [thick] (p3.east) to [out=0,in=90] node[linc2] {\scriptsize $\subseteq$} (b3.120);

    \draw[thick,dotted] (b3) to [out=0,in=180] node[linc2,dotted] {\scriptsize
      $\subseteq$} (fo);
  \end{tikzpicture}
  \caption{Quantifier alternation hierarchy of first-order logic}
  \label{fig:qalt:hiera}
\end{figure}
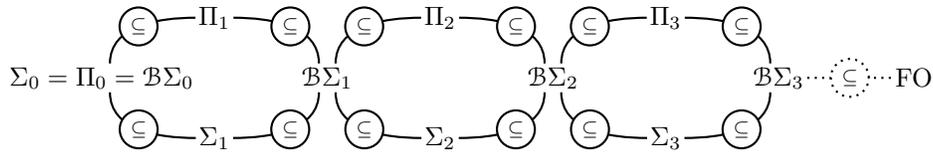

Whether a particular hierarchy is strict depends on its signature \Ss. The two most prominent hierarchies in the literature are known to be strict. These are:
\begin{itemize}
\item The \emph{order hierarchy} is the one associated to the logic \fow.
\item The \emph{enriched hierarchy} is the one associated to the logic \fows.
\end{itemize}

\begin{remark}
  It is a classical result that \fow and \fows have the same expressive power: all predicates available in \fows can be defined from the linear order. However, this is not the case for levels in their respective quantifier alternation hierarchies. Intuitively, the reason is that defining the predicates ``$+1$'', ``$\min$'' and ``$\max$'' from ``$<$'' costs quantifier alternations.
\end{remark}

Finally, a useful lemma is that we can bypass $\bsc{n}(\Ss)$ formulas in the definition of quantifier alternation~hierarchies.

\begin{lemma}\label{lem:qalt:bypassbool}
  For any $n \geq 0$, any $\sic{n+1}(\Ss)$ formula is equivalent to a formula of the form $\psi\vee\forall x\bot$ or $\psi\wedge\exists x\top$, where $\psi$ belongs to the closure of $\pic{n}(\Ss)$ under
  existential quantification.
\end{lemma}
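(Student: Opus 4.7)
The plan is to perform a direct case analysis on the prenex normal form of the given sentence $\Phi \in \sic{n+1}(\Ss)$. The crucial semantic observation is that $\exists x\,\top$ defines the language of nonempty words while $\forall x\,\bot$ defines the singleton $\{\varepsilon\}$, and a formula of the shape $\exists \bar{x}\,\alpha$ (resp.\ $\forall \bar{x}\,\alpha$) with $\bar{x}$ nonempty is automatically false (resp.\ true) on $\varepsilon$. The two forms in the conclusion are exactly the ``patches'' needed to reconcile this automatic empty-word behaviour of quantified formulas with the actual empty-word behaviour of $\Phi$.

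Put $\Phi$ in prenex normal form with $m$ blocks. By the definition of $\sic{n+1}(\Ss)$ recalled in the introduction, either $m = n+1$ with the leading block existential, or $m < n+1$ with no constraint on the leading block. I split into three cases according to the outermost block. (i) If it is existential, then $\Phi = \exists \bar{x}\,\alpha$ with $\alpha$ having at most $n$ blocks and starting with $\forall$ (or being quantifier-free), hence $\alpha \in \pic{n}(\Ss)$; in particular $\psi := \Phi$ lies in the existential closure of $\pic{n}(\Ss)$, and since $\Phi$ fails on $\varepsilon$ we get $\Phi \equiv \psi \wedge \exists x\,\top$. (ii) If it is universal, the ``exactly $n+1$ blocks starting with $\exists$'' clause is ruled out, so $m \leq n$ and $\Phi \in \pic{m}(\Ss) \subseteq \pic{n}(\Ss)$; since $\Phi$ holds vacuously on $\varepsilon$, taking $\psi := \Phi$ gives $\Phi \equiv \psi \vee \forall x\,\bot$. (iii) If $\Phi$ is quantifier-free, then $\Phi \in \pic{n}(\Ss)$ and we pick $\psi := \Phi$ with whichever of the two forms matches $\Phi$'s truth value on $\varepsilon$.

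The only real subtlety, and the reason the lemma presents two alternative shapes instead of a single normal form, lies in handling the empty word. An existentially closed $\pic{n}(\Ss)$ formula with at least one bound variable is always false on $\varepsilon$, so a $\sic{n+1}(\Ss)$ sentence that does hold on $\varepsilon$ (for instance a purely universal sentence with at most $n$ blocks, or the quantifier-free formula $\top$) simply cannot be captured by the conjunctive form $\psi \wedge \exists x\,\top$. This is precisely what forces the alternative disjunctive form $\psi \vee \forall x\,\bot$; once this is recognised, the argument reduces to the routine bookkeeping above.
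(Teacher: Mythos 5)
The overall strategy you pursue—take the prenex normal form, note that it lies in the existential closure of $\pic{n}(\Ss)$, and then use $\vee\,\forall x\,\bot$ or $\wedge\,\exists x\,\top$ as ``patches'' for the empty word—is exactly the paper's. However, your case analysis has a genuine gap: you decide \emph{which} patch to apply by looking at the outermost quantifier block of the prenex normal form, whereas the correct criterion (and the one the paper uses) is the truth value of the \emph{original} formula on~$\varepsilon$. These two criteria do not coincide, because prenex normalization is only semantics-preserving on words with a nonempty position set; on~$\varepsilon$ it can flip the truth value.

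Concretely, consider $\Phi_0 = (\exists x\, a(x)) \vee \varepsilon$, which is a $\sic{1}(\Ss)$ sentence (prenex form $\exists x\,(a(x)\vee\varepsilon)$, one existential block). Then $\Phi_0$ is \emph{true} on the empty word, but the prenex form is false there. Your case~(i) applies, you set $\psi$ to be the prenex form, assert ``since $\Phi$ fails on $\varepsilon$,'' and output $\psi \wedge \exists x\,\top$. This defines $L(\Phi_0)\cap A^+$, which is missing the empty word, so it is \emph{not} equivalent to~$\Phi_0$; the correct choice here is $\psi\vee\forall x\,\bot$. Dually, $\Phi_1 = (\forall x\, a(x)) \wedge \neg\varepsilon$ is false on~$\varepsilon$ while its prenex form $\forall x\,(a(x)\wedge\neg\varepsilon)$ is vacuously true there, so your case~(ii) outputs the wrong patch as well. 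The fix is small but essential: after fixing $\psi$ to be the prenex normal form, choose $\psi\vee\forall x\,\bot$ when the original formula holds on~$\varepsilon$ and $\psi\wedge\exists x\,\top$ otherwise; the shape of the prenex form plays no role in this choice. Note also that the second half of your closing paragraph rests on the false premise that a $\sic{n+1}(\Ss)$ sentence can only hold on~$\varepsilon$ if its prenex form is quantifier-free or starts with~$\forall$.
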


\begin{proof}
  On nonempty words, any formula from $\sic{n+1}(\Ss)$ is equivalent to its prenex normal form, which by definition either belongs itself to $\pic{n}(\Ss)$, or is of the form $\exists x_1\ldots\exists x_k\psi$ where $\psi$ is a $\pic{n}(\Ss)$ formula. The disjunction with $\forall x\bot$ and the conjunction with $\exists x\top$ are used to add or remove the empty word from the language of the formula. This concludes the proof.
\end{proof}

\medskip\noindent\textbf{Main theorem.}
We are ready to present and prove the generic correspondence existing between quantifier alternation and concatenation hierarchies. More precisely, we show that for any basis \Cs, one may define an appropriate signature (also denoted by \Cs) such that the concatenation hierarchy of basis \Cs and the quantifier alternation hierarchy within $\fo(\Cs)$ are identical.

Consider an arbitrary basis \Cs. We associate a signature to \Cs and consider the variant of first-order logic equipped with this signature. As usual, the signature associated to \Cs contains all label predicates: for any $a \in A$, we have a unary predicate (also denoted by ``$a$'') which is interpreted as the unary relation selecting all positions whose label is $a$. Moreover, for any language $L \in \Cs$, we add four predicates:
\begin{itemize}
\item A binary predicate $I_L$ interpreted as follows: given a word $w$ and two positions $i,j$ in $w$, $I_L(i,j)$ holds when $i < j$ and the infix $w]i,j[ $ belongs to $ L$.
\item A unary predicate $P_L$ interpreted as follows: given a word $w$ and a position  $i$ in~$w$, $P_L(i)$ holds when the prefix $w[1,i[$ belongs to $L$.
\item A unary predicate $S_L$ interpreted  as follows: given a word $w$ and a position $i$ in~$w$, $S_L(i)$ holds when the suffix $w]i,|w|]$ belongs to $L$.
\item A nullary predicate $N_L$ interpreted as follows: given a word $w$, $N_L$ holds when~$w$ belongs to $L$.
\end{itemize}
Abusing notation, we denote by $\fo(\Cs)$ the associated variant of first-order logic.

\begin{remark}
  Observe that these signatures always contain the label predicates and the linear order ``$<$''. Indeed, by definition, ``$<$'' is the binary predicate $I_{A^*}$, and $A^*$  belongs to \Cs since it is a \vari. Thus, all variants of first-order logic that we consider here are at least as expressive as \fow. In fact, $\fow = \fo(\Cs)$ when $\Cs = \{\emptyset,A^*\}$. We shall detail this point in the next section.
\end{remark}

We now state the  theorem establishing an exact correspondence between the concatenation hierarchy of basis \Cs and the quantifier alternation hierarchy within~$\fo(\Cs)$.

\begin{theorem} \label{thm:qalt:maintheo}
  Let \Cs be a \vari. For any alphabet $A$, any $n \in \nat$ and any language $L \subseteq A^*$, the two following properties hold:
  \begin{enumerate}
  \item $L \in \Cs[n]$ if and only if $L$ can be defined by a $\bsc{n}(\Cs)$ sentence.
  \item $L \in \Cs[n + \frac{1}{2}]$ if and only if $L$ can be defined by a $\sic{n+1}(\Cs)$ sentence.
  \end{enumerate}
\end{theorem}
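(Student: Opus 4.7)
The plan is to prove both items simultaneously by induction on $n$, observing that item (1) at level $n+1$ follows immediately from item (2) at level $n$ by Boolean closure: $\Cs[n+1] = \bool{\Cs[n+\tfrac{1}{2}]}$ and $\bsc{n+1}(\Cs) = \bool{\sic{n+1}(\Cs)}$. The base case $n = 0$ of item (1) is immediate: the signature of $\fo(\Cs)$ contains a nullary predicate $N_L$ for every $L \in \Cs = \Cs[0]$, so a Boolean combination of $\sic{0}(\Cs)$ sentences defines precisely a Boolean combination of languages of $\Cs$, which lies in $\Cs$ since $\Cs$ is a \vari. Thus the inductive step reduces to proving item (2) at level $n$ assuming item (1) at level $n$, in both directions.

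For the forward direction of item (2), I would take $L \in \Cs[n+\tfrac{1}{2}] = \pol{\Cs[n]}$ written as a finite union of monomials $L_0 a_1 L_1 \cdots a_k L_k$ with each $L_i \in \Cs[n]$ defined by some $\bsc{n}(\Cs)$-sentence $\psi_i$ (by the inductive hypothesis), then translate each monomial as
\[
  \exists x_1 < \cdots < x_k\ \bigwedge_{i=1}^{k} a_i(x_i) \,\wedge\, \psi_0^{<x_1} \wedge \bigwedge_{i=1}^{k-1}\psi_i^{]x_i,x_{i+1}[} \wedge \psi_k^{>x_k},
\]
where each relativized $\psi_i$ is obtained by systematically rewriting its atoms $N_{L'}, P_{L'}, S_{L'}, I_{L'}$ into signature-appropriate atoms describing the designated factor (e.g.\ a nullary $N_{L'}$ relativized to $]x_i,x_{i+1}[$ becomes $I_{L'}(x_i,x_{i+1})$). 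Since relativization does not change alternation depth, the body stays in $\bsc{n}(\Cs)$; and since $\bsc{n}(\Cs) \subseteq \sic{n+1}(\Cs)$ by standard prenex reshuffling, while $\sic{n+1}(\Cs)$ is closed under existential quantification and disjunction, the resulting sentence lies in $\sic{n+1}(\Cs)$.

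For the reverse direction, I would invoke Lemma~\ref{lem:qalt:bypassbool} to reduce to a sentence $\exists \vec{x}\,\Phi(\vec{x})$ with $\Phi \in \pic{n}(\Cs)$, handling the empty-word adjustment separately. Case-splitting over all orderings $\sigma$ of $\vec{x}$ and label assignments $\vec{b}$ partitions the satisfying words by their factorization $w = w_0 b_{\sigma(1)} w_1 \cdots b_{\sigma(k)} w_k$. A composition lemma will then express $\Phi$, restricted to such a configuration, as a finite disjunction of per-factor constraints $w_0 \in M_0 \wedge \cdots \wedge w_k \in M_k$ with each $M_\ell \in \Cs[n]$. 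Summing over configurations yields a finite union of $\Cs[n]$-monomials, placing $L$ in $\pol{\Cs[n]} = \Cs[n+\tfrac{1}{2}]$.

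The composition lemma is the main obstacle; I would prove it by structural induction, strengthened so that for any $\bsc{m}(\Cs)$-formula $\Phi(\vec{x})$ and any configuration $(\sigma,\vec{b})$, the set of satisfying factor tuples is a Boolean combination of per-factor $\Cs[m]$-constraints. Atoms $N_L, P_L, S_L, I_L$ for $L \in \Cs$ are handled by combining Myhill-Nerode (Theorem~\ref{thm:auto:nerode}) with closure of $\Cs$ under quotients and Booleans: for example, $\{(u,v) : uav \in L\}$ decomposes as a finite union $\bigcup_j U_j \times V_j$ with $U_j, V_j \in \Cs$, and this extends to infixes crossing several marker positions by iteration. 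Boolean connectives are absorbed by closure of $\Cs[m]$ under Booleans. The quantifier case $\exists y\,\Psi$ (and dually $\forall y\,\Psi$) splits on whether $y$ coincides with some $x_i$ or lies inside some factor $w_j$; in the latter case, applying the inductive hypothesis to $\Psi$ with the refined configuration gives constraints on $w_j$ of the shape $w_j \in U b V$ with $U,V \in \Cs[m-1]$, hence in $\pol{\Cs[m-1]} = \Cs[m-\tfrac{1}{2}] \subseteq \Cs[m]$ (or its complement for $\forall$). This is precisely where the syntactic quantifier alternation of $\fo(\Cs)$ gets translated into the successive polynomial and Boolean closure layers of the concatenation hierarchy.
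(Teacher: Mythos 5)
Your proposal is correct in outline and takes a genuinely different route from the paper's for the hard direction $\sic{n+1}(\Cs) \subseteq \Cs[n+\frac{1}{2}]$. The paper uses Büchi's encoding technique: it represents pairs $(w,\mu)$ of a word and a variable assignment as a single word ${[w]}_\mu$ over an extended alphabet $A_\ell = \{0,1\}^\ell \times A$, proves a free-variable correspondence (Proposition~\ref{prop:qalt:qaltsfr}) for languages over $A_\ell$, and handles the existential quantifier by applying the Splitting Lemma~\ref{lem:fo:split} over the extended alphabet and then projecting back via $\pi_{1,\dots,\ell,A}$ and $\alpha_\ell^{-1}$. You instead propose a bare-hands Feferman--Vaught/composition argument over the original alphabet: fix the factorization pattern induced by the existential witnesses and prove a ``composition lemma'' asserting that the set of satisfying factor tuples is a finite union of products of $\Cs[m]$-languages. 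The two approaches are essentially dual pictures of the same phenomenon --- your Myhill--Nerode decomposition of atoms plays exactly the role of the paper's Splitting Lemma, and ``project out $\exists y$'' corresponds in both cases to one application of $\pol{\cdot}$. What the paper's encoding buys is that free variables become ordinary letters, so the free-variable bookkeeping (orderings, coincidences, which factor a variable falls into) is absorbed into the word itself rather than into the statement of the composition lemma; what your approach buys is that you never leave the original alphabet and the translation is perhaps more transparent about where each concatenation layer comes from. Also, for the easy direction you work from $\Cs[n+\frac12]=\pol{\Cs[n]}$ directly while the paper goes through $\pol{\overline{\Cs[n-\frac12]}}$; both are fine.

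One genuine bookkeeping issue with the composition lemma as you state it: the invariant ``$\bsc{m}(\Cs)$-formula $\Rightarrow$ per-factor $\Cs[m]$-constraints'' is not strong enough to close the quantifier case if you peel existentials one at a time. For $\Phi=\exists y\,\Psi$ with $\Phi\in\sic{m}(\Cs)$, the inner $\Psi$ is in $\pic{m-1}(\Cs)\subseteq\bsc{m-1}(\Cs)$ only when $y$ is the last variable in the block; if more existentials remain, $\Psi$ is still $\sic{m}(\Cs)$, and your invariant would give $\Cs[m]$-constraints for $\Psi$, hence $\pol{\Cs[m]}=\Cs[m+\frac12]$ after the projection, which is too big. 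You should strengthen the invariant to track $\sic{m}(\Cs)\to\Cs[m-\frac12]$ separately (so an additional existential keeps you in $\pol{\Cs[m-\frac12]}=\Cs[m-\frac12]$ by idempotence, Lemma~\ref{lem:hintro:polidem}), and recover $\bsc{m}(\Cs)\to\Cs[m]$ from $\bool{\Cs[m-\frac12]}=\Cs[m]$. Likewise, the case-splitting over ``orderings $\sigma$'' of $\vec{x}$ must also fix the equalities among the $x_i$ (a weak order, not just a permutation), otherwise an atom $x_i=x_j$ is not determined by the configuration. Both points are readily repaired and do not invalidate your strategy, but the alternation accounting is exactly where the correspondence between $\sic{\bullet}$ and $\Cs[\bullet]$ lives, so it has to be stated precisely.
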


The rest of the section is devoted to proving Theorem~\ref{thm:qalt:maintheo}. A first observation is that we may concentrate on the second item as the first one is a simple corollary. Indeed, for $n = 0$, a $\bsc{0}(\Cs)$ sentence is by definition a Boolean combination of atomic formulas which do not involve variables. This includes $\bot,\top$ and the nullary predicates $N_L$ for $L \in \Cs$. By definition of the predicates $N_L$ and since \Cs is a Boolean algebra, it follows that $\bsc{0}(\Cs) = \Cs = \Cs[0]$. Next, for $n \geq 1$, we have $\bsc{n}(\Cs) = \bool{\sic{n}(\Cs)}$. Hence, the equality $\sic{n}(\Cs) = \Cs[n - \frac{1}{2}]$ immediately yields $\bsc{n}(\Cs) = \Cs[n]$.

We now concentrate on proving the second item in Theorem~\ref{thm:qalt:maintheo}. The proof is divided in two steps, one for each inclusion. We first show the easier one, namely,
\[
  \Cs[n + \tfrac{1}{2}] \subseteq \sic{n+1}(\Cs).
\]
The proof is an induction on $n$. The key ingredient is the following lemma which states that for any $n \in \nat$, $\sic{n}(\Cs)$ is closed under marked concatenation.

\begin{lemma} \label{lem:qalt:concat}
  Let $n \in\nat$ and $L_1,L_2 \subseteq A^*$ be two languages in $\sic{n}(\Cs)$. Then, for any $a \in A$, the marked concatenation $L_1aL_2$ also belong to $\sic{n}(\Cs)$.
\end{lemma}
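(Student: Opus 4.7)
The plan is to prove closure under marked concatenation by relativizing $\sic{n}(\Cs)$ sentences to prefixes and suffixes of a word. Given a $\sic{n}(\Cs)$ sentence $\varphi$ and a fresh position variable $i$, I would build formulas $\varphi^{<i}$ and $\varphi^{>i}$, each with $i$ as sole free variable, so that for any word $w$ and any position $j$ of $w$, interpreting $i$ as $j$ makes $\varphi^{<i}$ true in $w$ exactly when $w[1,j[\,\models\,\varphi$, and dually $\varphi^{>i}$ is true exactly when $w]j,|w|]\,\models\,\varphi$. Once these relativizations are in hand, if $\varphi_k$ is a $\sic{n}(\Cs)$ sentence defining $L_k$ for $k=1,2$, then $L_1aL_2$ is defined by
\[ \exists i\,\bigl( a(i) \wedge \varphi_1^{<i} \wedge \varphi_2^{>i}\bigr). \]

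I would define the relativization by structural induction on $\varphi$, exploiting the fact that the signature associated to \Cs is designed precisely to make restriction to prefixes and suffixes definable. For the prefix version: label predicates $a(x)$, infix predicates $I_L(x,y)$ and prefix predicates $P_L(x)$ are unchanged; the suffix predicate $S_L(x)$ becomes $I_L(x,i)$, since the tail of the prefix beyond position $x$ is the infix $w]x,i[$; the nullary $N_L$ becomes $P_L(i)$, since the whole relativized word is the prefix $w[1,i[$. Quantifiers are guarded so as to range only over positions within the prefix: $\exists x\,\chi \mapsto \exists x\,(x<i \wedge \chi^{<i})$ and $\forall x\,\chi \mapsto \forall x\,(x<i \to \chi^{<i})$. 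The suffix version $\varphi^{>i}$ is dual. Semantic correctness is then a routine induction, and the degenerate cases $i=1$ (empty prefix) and $i=|w|$ (empty suffix) are handled automatically: the guards render all quantifiers vacuous, while the replacement $N_L \mapsto P_L(i)$ correctly checks whether $\varepsilon \in L$.

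The central technical point is verifying that this relativization preserves the $\sic{n}(\Cs)$ quantifier alternation class. Writing $\varphi$ in prenex form $\exists \vec{x}_1\, \forall \vec{x}_2 \cdots \chi$ with quantifier-free matrix $\chi$, the relativized formula can be put back in prenex form with the same quantifier prefix, by pushing the inserted guards into the matrix (conjunctively beneath each existential block, as implication antecedents beneath each universal block). Hence $\varphi_k^{<i}$ and $\varphi_k^{>i}$ remain $\sic{n}(\Cs)$ formulas (with one free variable). The conjunction $\varphi_1^{<i} \wedge \varphi_2^{>i}$ is then brought into $\sic{n}(\Cs)$ shape by renaming bound variables and merging the $k$-th quantifier blocks pairwise, which is legitimate because corresponding blocks carry the same polarity; the outer $\exists i$ finally merges with the leading $\exists$ block of the combined formula. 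The main obstacle is the bookkeeping on prenex forms: one must confirm that merging equal-polarity blocks does not introduce spurious alternations, and that Lemma~\ref{lem:qalt:bypassbool} can be used to normalize the formulas into a shape where this merging step is transparent.
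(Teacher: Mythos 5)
Your construction is the same as the paper's: both define $\varphi'_k(x)$ by guarding quantifiers to range over positions strictly to one side of $x$, replacing $S_L(y)$ with $I_L(y,x)$ and $N_L$ with $P_L(x)$ (dually for the suffix), and then existentially quantify over the split position. The paper simply asserts that this relativization preserves the $\sic{n}(\Cs)$ class, whereas you spell out the prenex-form bookkeeping; the underlying argument is identical.
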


\begin{proof}
  Let $L_1$ and $L_2$ be languages defined by two $\sic{n}(\Cs)$ sentences $\varphi_1$ and $\varphi_2$. We have to construct a third sentence $\Psi$ that defines $L_1aL_2$. Let $x$ be a fresh variable with respect to both $\varphi_1$ and $\varphi_2$. We build two formulas $\varphi'_1(x)$ and $\varphi'_2(x)$ (each with $x$ as a single free variable) with the following semantics. Given $w \in A^*$ and $\mu$ an assignment for $w$ with domain $\{x\}$:
  \begin{itemize}
  \item $w,\mu \models \varphi'_1(x)$ if and only if the prefix $w[1,\mu(x)[$ belongs to $L_1$ (that is, iff $w[1,\mu(x)[ \models \varphi_1$). Observe that this prefix may be empty when $\mu(x) = 1$.
  \item $w,\mu \models \varphi'_2(x)$ if and only if the suffix $w]\mu(x),|w|]$ belongs to $L_2$ (that is, iff $w]\mu(x),|w|] \models \varphi_2$). Observe that this suffix may be empty when $\mu(x) = |w|$.
  \end{itemize}
  The constructions of $\varphi'_1(x)$ and $\varphi'_2(x)$ are symmetrical. Let us describe that of $\varphi'_1(x)$. We build it from $\varphi_1$ as follows:
  \begin{enumerate}
  \item We relativize quantifications to positions that are to the left of $x$. That is, we replace every sub-formula of the form $\exists y\ \Gamma$ (resp. $\forall y\ \Gamma$) by $\exists y\ y < x \wedge \Gamma$ (resp. $\forall y\ \neg(y < x) \vee \Gamma$).
  \item We replace atomic formulas of the form $N_L$ for some $L \in \Cs$ by $P_L(x)$.
  \item We replace atomic formulas of the form $S_L(y)$ for some $L \in \Cs$ by $I_L(y,x)$.
  \end{enumerate}
  Clearly, $\varphi'_1(x)$ is also a $\sic{n}(\Cs)$ formula and one may verify that it satisfies the above property. We can now define $\Psi$ for $L_1aL_2$ as follows,  \[
    \Psi  = \exists x\ a(x) \wedge \varphi'_1(x) \wedge \varphi'_2(x).
  \]
  It is obvious that  $\Psi$ is a $\sic{n}(\Cs)$ sentence defining the language $L_1aL_2$.
\end{proof}

We may now prove that $\Cs[n + \frac{1}{2}] \subseteq \sic{n+1}(\Cs)$ for any $n \in \nat$. We proceed by induction on $n$. When $n = 0$, we first note that $\Cs \subseteq \sic{1}(\Cs)$. Indeed, any language $L$ of $\Cs$ is defined by the atomic sentence $N_L$. Therefore, $\Cs[\frac{1}{2}] = \pol{\Cs}\subseteq\sic{1}(\Cs)$, since
$\sic{1}(\Cs)$ is closed under union and marked concatenation.

When $n \geq 1$, we know that $\Cs[n + \frac{1}{2}] = \pol{\overline{\Cs[n - \frac{1}{2}]}}$ by Proposition~\ref{prop:hintro:bypassfull}. By induction hypothesis, we have $\Cs[n - \frac{1}{2}] \subseteq \sic{n}(\Cs)$ and therefore,
\[
  \overline{\Cs[n - \tfrac{1}{2}]} \subseteq \bsc{n}(\Cs) \subseteq \sic{n+1}(\Cs).
\]
Hence, since $\sic{n+1}(\Cs)$ is closed under union and marked concatenation, we obtain as desired that $\Cs[n+\frac{1}{2}] \subseteq \sic{n+1}(\Cs)$, finishing the proof for this direction.

\medskip
It remains to establish the converse inclusion, \emph{i.e.}, that for any $n \in \nat$:
\begin{equation}\label{eq:concatlog}
  \sic{n+1}(\Cs) \subseteq \Cs[n + \tfrac{1}{2}].
\end{equation}
Since the proof works inductively on the formulas, we have to explain how we handle free variables. We do this using Büchi's classical idea, \emph{i.e.}, by encoding a word and a assignment of first-order variables as a single word over an extended alphabet.

Let $\Xs = \{x_1,x_2,x_3,\dots\}$ be an infinite linearly ordered set of first-order variables. One may assume that all $\fo(\Cs)$ formulas that we consider only use variables from~\Xs. Given $\ell \in \nat$, we use the alphabet $A_{\ell} = {\{0,1\}}^{\ell} \times A$ to represent pairs $(w,\mu)$ with $w \in A^*$ and $\mu$ an assignment of $\{x_1,\dots,x_{\ell}\}$ in the positions of $w$.

For any $h \leq \ell$, we denote by $\pi_h : A_\ell \to \{0,1\}$ the projection on component $h$. Similarly, we denote by $\pi_A : A_\ell \to A$ the projection on the rightmost component (component $\ell+1$). Note that there are actually several mapping $\pi_A$, one for each value of $\ell$, and similarly for $\pi_h$. Which mapping we use will be clear from the context.

 We can now present the encoding. Let $w = a_1 \cdots a_n \in A^*$ and let $\mu$ be an assignment of $\{x_1,\dots,x_{\ell}\}$ in $w$. We encode the pair $(w,\mu)$ by the word ${[w]}_{\mu} = \overline{b_1} \cdots \overline{b_n} \in {(A_{\ell})}^*$ such that for all $i \leq n$, $\overline{b_i} \in A_\ell$ is defined as follows,

\begin{itemize}
\item $\pi_{A}(\overline{b_i}) = a_i$.
\item For all $h \leq \ell$,
  \begin{itemize}
  \item If $i = \mu(x_h)$, $\pi_h(\overline{b_{i}}) = 1$.
  \item If $i \neq \mu(x_h)$, $\pi_h(\overline{b_i}) = 0$.
  \end{itemize}
\end{itemize}

Note that when $\ell = 0$, we have $A_0 = A$ and ${[w]}_{\mu} = w$ ($\mu$ is the empty assignment). Clearly, the map $(w,\mu) \mapsto {[w]}_\mu$ is injective (however, it is not surjective since for any $h \leq \ell$, there is  exactly one position $i$ such that $\pi_h(\overline{b_i}) = 1$).
For $\ell \geq 1$, we define the following class of languages over the alphabet $A_\ell$:
\[
  \Cs_\ell \stackrel{\text{def}}= \{\pi_A^{-1}(L)  \subseteq A_\ell^* \mid L \in \Cs\}.
\]
It straightforward to verify that $\Cs_\ell$ is a \vari of regular languages. Moreover, for any $\ell \in \nat$, we define a morphism $\alpha_\ell: A_\ell^* \to A_{\ell+1}^*$ as follows: given $(i_1,\dots,i_\ell,a) \in A_\ell$, we let $\alpha_\ell(i_1,\dots,i_\ell,a) = (i_1,\dots,i_\ell,0,a) \in A_{\ell+1}$. We now state a connection between the concatenation hierarchies of bases $\Cs_\ell$ and~$\Cs_{\ell+1}$.

\begin{fact} \label{fct:qalt:invmorph}
  For any $\ell,n\in\nat$ and any $K \in \Cs_{\ell+1}[n+\frac{1}{2}]$, we have $\alpha_\ell^{-1}(K) \in \Cs_\ell[n+\frac{1}{2}]$.
\end{fact}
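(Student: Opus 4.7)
The plan is a straightforward induction on $n$, resting on two structural observations about the morphism $\alpha_\ell$. First, $\alpha_\ell^{-1}$ maps $\Cs_{\ell+1}$ into $\Cs_\ell$: by definition $\Cs_{\ell+1}$ consists of preimages $\pi_A^{-1}(L)$ for $L\in\Cs$, and since the composition $\pi_A \circ \alpha_\ell$ is precisely the projection $A_\ell^*\to A^*$, we get $\alpha_\ell^{-1}(\pi_A^{-1}(L)) = \pi_A^{-1}(L)\in\Cs_\ell$. Combined with the trivial commutation of $\alpha_\ell^{-1}$ with Boolean operations, this handles the ``atomic'' ingredients that appear in any level of the hierarchy.

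The pivotal computation is the interaction of $\alpha_\ell^{-1}$ with marked concatenation: for every $a\in A_{\ell+1}$ and every $K_1,K_2\subseteq A_{\ell+1}^*$,
\[
  \alpha_\ell^{-1}(K_1\,a\,K_2)=
  \begin{cases}
    \alpha_\ell^{-1}(K_1)\cdot a'\cdot \alpha_\ell^{-1}(K_2) & \text{if } a=\alpha_\ell(a'),\\
    \emptyset & \text{if } a\notin\alpha_\ell(A_\ell).
  \end{cases}
\]
This uses that $\alpha_\ell$ is an injective letter-to-letter morphism whose image consists of words whose letters all carry $0$ in the $(\ell{+}1)$-th component, so any factorization $\alpha_\ell(w)=u\,a\,v$ forces $a\in\alpha_\ell(A_\ell)$ and $u,v\in\alpha_\ell(A_\ell^*)$.

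The induction then proceeds as expected. For $n=0$, any $K\in\pol{\Cs_{\ell+1}}$ is a finite union of $\Cs_{\ell+1}$-monomials, and iterating the three observations above rewrites $\alpha_\ell^{-1}(K)$ as a finite union of $\Cs_\ell$-monomials (some summands collapsing to $\emptyset$ when an internal letter lies outside $\alpha_\ell(A_\ell)$), placing it in $\pol{\Cs_\ell}=\Cs_\ell[\tfrac12]$. For the step $n\geq1$, Proposition~\ref{prop:hintro:bypassfull} gives $\Cs_{\ell+1}[n+\tfrac12]=\pol{\overline{\Cs_{\ell+1}[n-\tfrac12]}}$. By the induction hypothesis $\alpha_\ell^{-1}$ sends $\Cs_{\ell+1}[n-\tfrac12]$ into $\Cs_\ell[n-\tfrac12]$, hence by Boolean commutation it sends $\overline{\Cs_{\ell+1}[n-\tfrac12]}$ into $\overline{\Cs_\ell[n-\tfrac12]}$, and the polynomial-closure argument of the base case (applied to this latter class in place of $\Cs_{\ell+1}$) then sends $\pol{\overline{\Cs_{\ell+1}[n-\tfrac12]}}$ into $\pol{\overline{\Cs_\ell[n-\tfrac12]}}$, which by a second use of Proposition~\ref{prop:hintro:bypassfull} equals $\Cs_\ell[n+\tfrac12]$.

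The only real content is the marked-concatenation identity; everything else is routine propagation through the inductive definitions. I do not foresee any genuine obstacle, provided one is careful to account for the $\emptyset$ branch whenever an intermediate letter of a monomial fails to lie in the image of $\alpha_\ell$.
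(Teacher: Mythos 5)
Your proof is correct and follows essentially the same route the paper takes: an induction on $n$ powered by the commutation of $\alpha_\ell^{-1}$ with unions, complements, and marked concatenations, plus the observation that $\alpha_\ell^{-1}$ sends $\Cs_{\ell+1}$ into $\Cs_\ell$. The only cosmetic difference is that you use Proposition~\ref{prop:hintro:bypassfull} to pass through complements of half levels rather than explicitly threading through the full levels via $\bool{\cdot}$ as the paper's sketch implicitly does; both are equivalent, and your careful treatment of the $\emptyset$ branch when an internal letter of a monomial falls outside $\alpha_\ell(A_\ell)$ is a welcome clarification of the identity $\alpha_\ell^{-1}(K_1 a K_2) = \alpha_\ell^{-1}(K_1)\,\alpha_\ell^{-1}(a)\,\alpha_\ell^{-1}(K_2)$ that the paper states without comment.
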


\begin{proof}
  This is immediate by induction on $n$ and the definition of concatenation hierarchies using the following properties. For any $K_1,K_2 \subseteq A_{\ell+1}^*$, we have,
  \begin{enumerate}
  \item By definition of \Cs, when $K_1 \in \Cs_{\ell+1}$, we have $\alpha_\ell^{-1}(K_1) \in \Cs_\ell$.
  \item $\alpha_\ell^{-1}(K_1 \cup K_2) = \alpha_\ell^{-1}(K_1) \cup \alpha_\ell^{-1}(K_2)$.
  \item $\alpha_\ell^{-1}(A_{\ell+1}^* \setminus K_1) = A_{\ell}^* \setminus \alpha_\ell^{-1}(K_1)$.
  \item For any $\overline{b} \in A_{\ell+1}$, we have $\alpha_\ell^{-1}(K_1aK_2) = \alpha_\ell^{-1}(K_1) \alpha_\ell^{-1}(a)\alpha_\ell^{-1}(K_2)$.
  \end{enumerate}
  This concludes the proof of Fact~\ref{fct:qalt:invmorph}.
\end{proof}

The proof of the remaining inclusion~\eqref{eq:concatlog} relies on the following proposition.

\begin{proposition}\label{prop:qalt:qaltsfr}
  Let $\ell,n \in \nat$ and let $\varphi$ be a $\sic{n+1}(\Cs)$ formula whose set of free variables is included in $\{x_1,\dots,x_{\ell}\}$. Then, there exists a language $L_{\ell,\varphi} \in \Cs_\ell[n+\frac{1}{2}]$ such that for any $w \in A^*$ and any assignment $\mu$ of $\{x_1,\dots,x_{\ell}\}$ in $w$, we have,
  \begin{equation}\label{eq:w_mu-in-l_ell}
    {[w]}_\mu \in L_{\ell,\varphi} \quad \text{if and only if} \quad w,\mu \models \varphi.
\end{equation}

\end{proposition}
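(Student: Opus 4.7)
We prove Proposition~\ref{prop:qalt:qaltsfr} by induction on $n$, combined with a structural induction on the formula $\varphi$ at each stage. For the outer base case $n = 0$, we will show that every $\varphi \in \sic{1}(\Cs)$ with free variables in $\{x_1, \dots, x_\ell\}$ admits a language $L_{\ell,\varphi} \in \pol{\Cs_\ell} = \Cs_\ell[\tfrac12]$. For the outer inductive step, Lemma~\ref{lem:qalt:bypassbool} reduces any $\sic{n+2}(\Cs)$ formula to a trivial Boolean combination (with $\forall x\bot$ or $\exists x\top$, which only adjusts whether the empty word is accepted) of a formula built from $\pic{n+1}(\Cs)$ formulas by existential quantification, conjunction, and disjunction. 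A $\pic{n+1}(\Cs)$ formula is the negation of a $\sic{n+1}(\Cs)$ formula, so the outer IH together with complementation in $A_\ell^*$ gives a language in $\overline{\Cs_\ell[n+\tfrac12]}$. Since $\Cs_\ell[n+\tfrac32] = \pol{\overline{\Cs_\ell[n+\tfrac12]}}$ by Proposition~\ref{prop:hintro:bypassfull}, it will suffice to show that disjunction, conjunction, and existential quantification preserve membership in this class.

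In both stages, the structural induction breaks into three kinds of cases. First, atomic formulas translate to simple languages: for instance, $P_L(x_h)$ with $L \in \Cs$ defines the language $\pi_A^{-1}(L) \cdot E_{\ell,h} \cdot A_\ell^*$, where $E_{\ell,h} \subseteq A_\ell$ is the set of letters whose $h$-th coordinate equals $1$; since $\pi_A^{-1}(L) \in \Cs_\ell$ and $E_{\ell,h}$ is a finite subset of $A_\ell$, this is a finite union of marked concatenations of languages in $\Cs_\ell$, hence lies in $\pol{\Cs_\ell}$. The atomic formulas $a(x_h)$, $I_L(x_h,x_k)$, $S_L(x_h)$, $N_L$ and the auxiliary predicates drawn from $\Cs$ are treated analogously by explicit constructions using $\pi_A^{-1}(-)$ and the letters $E_{\ell,h}$. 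Second, disjunction and conjunction preserve membership because the target class is a lattice, by Theorem~\ref{thm:hintro:polc} applied to $\Cs_\ell$ (in the base case) or to $\overline{\Cs_\ell[n+\tfrac12]}$ (in the inductive step).

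The main obstacle is the existential quantification step. Given a language $L$ in the ambient polynomial closure over $A_{\ell+1}$—namely $\pol{\Cs_{\ell+1}}$ in the base case and $\pol{\overline{\Cs_{\ell+1}[n+\tfrac12]}}$ in the inductive step—we must construct a language $L'$ in the corresponding class over $A_\ell$ such that $[w]_\mu \in L'$ iff there exists a position $i$ with $[w]_{\mu[x_{\ell+1}\mapsto i]} \in L$. Writing $L$ as a union of monomials $K_0 \overline{c_1} K_1 \cdots \overline{c_r} K_r$ (with the $K_j$ in the relevant basis and the $\overline{c_j} \in A_{\ell+1}$), the plan is to show that each such monomial contributes to $L'$ the monomial $\alpha_\ell^{-1}(K_0) \cdot \overline{c_1}^\Gamma \cdots \overline{c_r}^\Gamma \cdot \alpha_\ell^{-1}(K_r)$, where $\overline{c}^\Gamma \in A_\ell$ denotes the letter obtained from $\overline{c} \in A_{\ell+1}$ by dropping the $(\ell+1)$-th coordinate. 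The analysis distinguishes whether, in a valid augmented coding $[w]_{\mu[x_{\ell+1}\mapsto i]}$ lying in the monomial, the unique position with $1$ in the $(\ell+1)$-th coordinate coincides with one of the marked letters $\overline{c_j}$ or lies inside some $K_j$; a short verification shows that in both situations the $A_\ell$-projection yields the same monomial, and edge cases (such as $r=0$ or empty intermediate pieces) are handled by intersecting with $A_\ell^+ \in \pol{\Cs_\ell}$.

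Finally, the pieces $\alpha_\ell^{-1}(K_j)$ lie in the basis over $A_\ell$: in the base case this is immediate since $K_j = \pi_A^{-1}(M_j)$ for some $M_j \in \Cs$ yields $\alpha_\ell^{-1}(K_j) = \pi_A^{-1}(M_j) \in \Cs_\ell$, and in the inductive step we apply Fact~\ref{fct:qalt:invmorph} combined with the observation that inverse morphisms commute with complement, so that $\alpha_\ell^{-1}(K_j) \in \overline{\Cs_\ell[n+\tfrac12]}$ whenever $K_j \in \overline{\Cs_{\ell+1}[n+\tfrac12]}$. Taking the union over all monomials of $L$ and invoking closure under finite union yields the desired $L'$ in the target class, which completes the induction.
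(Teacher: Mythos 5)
Your overall architecture matches the paper's: induction on $n$, use of Lemma~\ref{lem:qalt:bypassbool}, a structural sub-induction with atomic formulas, disjunction/conjunction handled via Theorem~\ref{thm:hintro:polc}, and the use of the auxiliary alphabets $A_\ell$, the injection $\alpha_\ell$, and Fact~\ref{fct:qalt:invmorph}. The genuine departure, and the genuine gap, is in the existential-quantification step.

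You propose to write $L_{\ell+1,\psi}$ directly as a union of monomials $K_0\overline{c_1}K_1\cdots\overline{c_r}K_r$ (with pieces $K_j$ in the relevant basis) and to claim that the good-word projection of each monomial equals $\alpha_\ell^{-1}(K_0)\,\overline{c_1}^{\Gamma}\cdots\overline{c_r}^{\Gamma}\,\alpha_\ell^{-1}(K_r)$, asserting that ``a short verification'' covers the case where the unique $1$ in coordinate $\ell+1$ falls inside some $K_j$. This verification does \emph{not} go through once $n\geq 1$. In the inductive step the pieces $K_j$ lie in $\overline{\Cs_{\ell+1}[n-\tfrac12]}$ and genuinely depend on coordinate $\ell+1$: they are built from sub-formulas referring to the variable $x_{\ell+1}$. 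If a good word $u=k_0\overline{c_1}k_1\cdots\overline{c_r}k_r$ has its unique $1$ inside $k_j$, then $\alpha_\ell\bigl(\pi_{1,\ldots,\ell,A}(k_j)\bigr)$ differs from $k_j$ by flipping that $1$ to $0$, so there is no reason for $\pi_{1,\ldots,\ell,A}(k_j)$ to lie in $\alpha_\ell^{-1}(K_j)$; the forward inclusion fails. Conversely, a word $v$ in your candidate monomial has a canonical lift $\alpha_\ell(v_0)\overline{c_1}\alpha_\ell(v_1)\cdots$ whose $(\ell+1)$-coordinate is $0$ everywhere except possibly at the fixed letters $\overline{c_j}$; if none of these has a $1$ there, the lift is not good and $v$ is not witnessed to be in $L_{\ell,\varphi}$, so the backward inclusion also fails. (In the base case $n=0$ this happens to be salvageable because the pieces $K_j=\pi_A^{-1}(M_j)\in\Cs_{\ell+1}$ are oblivious to coordinate $\ell+1$, but that argument does not extend.)

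This is exactly what the paper's Splitting Lemma (Lemma~\ref{lem:fo:split}) is designed to fix. Rather than using an arbitrary monomial decomposition of $L_{\ell+1,\psi}$, one first intersects with $A_{\ell+1}^*BA_{\ell+1}^*$, where $B$ is the set of letters with $1$ in coordinate $\ell+1$, and re-decomposes the result (via quotients and the Myhill--Nerode theorem) as $\bigcup_j P_j\overline{b_j}S_j$ with $\overline{b_j}\in B$ and $P_j,S_j$ still in $\Cs_{\ell+1}[n+\tfrac12]$. The crucial point is that the decomposition is forced to mark \emph{the} $B$-letter: on a good word, $P_j$ and $S_j$ only ever read letters whose $(\ell+1)$-coordinate is $0$, so $\alpha_\ell$ is an honest inverse on those parts, and the projection $\bigcup_j\alpha_\ell^{-1}(P_j)\,\overline{c_j}\,\alpha_\ell^{-1}(S_j)$ is exactly $L_{\ell,\varphi}$ (Fact~\ref{fct:qalt:finalarg}). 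You need this re-decomposition step; projecting an arbitrary monomial decomposition coordinate-wise is not sound.
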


Note that the special case $\ell = 0$ of Proposition~\ref{prop:qalt:qaltsfr} yields the following corollary.

\begin{corollary}\label{cor:qalt:qaltsfr}
  Let $n \in \nat$ and let $\varphi$ be a $\sic{n+1}(\Cs)$ sentence. Then, there exists a language $L \in \Cs[n+\frac{1}{2}]$ such that for any $w \in A^*$, we have,
  \[
    w \in L \quad \text{if and only if} \quad w \models \varphi.
  \]
\end{corollary}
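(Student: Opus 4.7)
The plan is to prove the proposition by induction on $n$, simultaneously with its dual counterpart for $\pic{n+1}(\Cs)$ formulas (which should yield languages in $\overline{\Cs_\ell[n+\frac{1}{2}]}$). By Lemma~\ref{lem:qalt:bypassbool}, any $\sic{n+1}(\Cs)$ formula is, modulo the empty-word corrections ``$\vee\,\forall x\,\bot$'' or ``$\wedge\,\exists x\,\top$'' (each amounting to adjoining or excising $\{\varepsilon\}$, which belongs to $\Cs_\ell[n+\frac{1}{2}]$), equivalent to an existential closure $\exists y_1\cdots\exists y_m\,\theta$ of some $\pic{n}(\Cs)$ formula $\theta$ with free variables in $\{x_1,\dots,x_{\ell+m}\}$. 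Applying the inductive hypothesis to the $\sic{n}(\Cs)$ formula $\neg\theta$ and then complementing yields a language $L_\theta\in\overline{\Cs_{\ell+m}[n-\frac{1}{2}]}\subseteq\Cs_{\ell+m}[n+\frac{1}{2}]$ for $\theta$, the last inclusion following from Proposition~\ref{prop:hintro:bypassfull}.

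For the base case $n=0$ one checks directly that each atomic formula yields a finite union of marked concatenations $K_0\,\bar{c_1}\,K_1\cdots\bar{c_k}\,K_k$ of letters $\bar{c_i}\in A_\ell$ encoding the atom's labels and marks, separated by languages $K_i=\pi_A^{-1}(M_i)\in\Cs_\ell$, and hence lies in $\pol{\Cs_\ell}=\Cs_\ell[\frac{1}{2}]$. Negations of atomic formulas yield similar languages using that $\Cs$ is a Boolean algebra closed under quotient, and Boolean combinations (after pushing $\neg$ to the atoms) preserve this by closure of $\pol$ under union and intersection (Theorem~\ref{thm:hintro:polc}).

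The heart of both the base case and the inductive step is a Key Lemma asserting that a single existential quantifier preserves the half-level: for any $L\in\Cs_{\ell+1}[n+\frac{1}{2}]$,
\[
  E(L)=\bigcup_{\bar{a}\in A_\ell}\bigl\{u_1\bar{a}u_2\in A_\ell^*\mid \alpha_\ell(u_1)\,\beta(\bar{a})\,\alpha_\ell(u_2)\in L\bigr\}\in\Cs_\ell[n+\tfrac{1}{2}],
\]
where $\beta\colon A_\ell\to A_{\ell+1}$ adds a $1$ in the $(\ell+1)$-th component. Iterating $E$ exactly $m$ times transforms $L_\theta$ into the desired $L_{\ell,\varphi}$. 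To prove the Key Lemma, one uses Proposition~\ref{prop:hintro:bypassfull} to write $L$ as a polynomial over $\overline{\Cs_{\ell+1}[n-\frac{1}{2}]}$, reduces by distributivity to a single monomial $L=L_0b_1L_1\cdots b_qL_q$ with each $L_i\in\overline{\Cs_{\ell+1}[n-\frac{1}{2}]}$, and case-splits on where $\beta(\bar{a})$ lies in the factorization of $\alpha_\ell(u_1)\beta(\bar{a})\alpha_\ell(u_2)$ witnessing membership in $L$.

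The main obstacle is the subcase in which $\beta(\bar{a})$ falls strictly inside some $L_j$, forcing a split of $L_j$ at the letter $\beta(\bar{a})$. When instead $\beta(\bar{a})$ coincides with some $b_j$ (which forces $b_j=\beta(\bar{a})$ and each remaining $b_i$ to lie in $\alpha_\ell(A_\ell)$), the contribution is the marked concatenation $\alpha_\ell^{-1}(L_0b_1\cdots L_{j-1})\cdot\bar{a}\cdot\alpha_\ell^{-1}(L_jb_{j+1}\cdots L_q)$, which lies in $\Cs_\ell[n+\frac{1}{2}]$ by Fact~\ref{fct:qalt:invmorph} and closure under marked concatenation. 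For the ``inside'' subcase, write $L_j=A_{\ell+1}^*\setminus L'_j$ with $L'_j\in\Cs_{\ell+1}[n-\frac{1}{2}]$ and exploit the duality $E(A_{\ell+1}^*\setminus L'_j)=A_\ell^*\setminus F(L'_j)$, where $F$ is the corresponding universal operation; the simultaneous dual part of the induction supplies $F(L'_j)\in\Cs_\ell[n-\frac{1}{2}]$, whose complement lies in $\overline{\Cs_\ell[n-\frac{1}{2}]}\subseteq\Cs_\ell[n+\frac{1}{2}]$, closing the induction.
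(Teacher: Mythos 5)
Your overall plan---handle one existential quantifier at a time through a projection operator $E$, expand $L\in\Cs_{\ell+1}[n+\frac12]$ into monomials via Proposition~\ref{prop:hintro:bypassfull}, and case-split on where the marked letter $\beta(\bar{a})$ falls in the factorization---is a plausible alternative route, and the ``outside'' subcase (where $\beta(\bar{a})$ coincides with some $b_j$) is handled correctly. However, there is a genuine gap in the ``inside'' subcase.

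The duality $E(A_{\ell+1}^*\setminus L'_j)=A_\ell^*\setminus F(L'_j)$ is fine, but the claim that the simultaneous dual part of the induction yields $F(L'_j)\in\Cs_\ell[n-\frac12]$ for $L'_j\in\Cs_{\ell+1}[n-\frac12]$ is false. Complementing the Key Lemma only shows that $F$ preserves the \emph{complement} class $\overline{\Cs_\ell[m+\frac12]}$, not $\Cs_\ell[m+\frac12]$ itself, and $F$ simply does not preserve half levels. Concretely, take $A=\{a\}$, $\Cs=\{\emptyset,A^*\}$, $\ell=0$, $n=1$, and $L'_j=A_1^*(0,a)A_1^*\in\Cs_1[\frac12]$, where $(0,a)$ denotes the unmarked letter of $A_1=\{0,1\}\times A$. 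A direct computation gives $F(L'_j)=\{\varepsilon\}\cup\{a^k:k\geq2\}$, which is not a finite union of the only nonempty $\Cs_0$-monomials over $\Cs_0=\{\emptyset,A^*\}$ (namely $\{a^m:m\geq k\}$), so $F(L'_j)\notin\Cs_0[\frac12]$. The same problem is even more acute at $n=0$, which you also need (you say the Key Lemma is the ``heart'' of the base case): there Proposition~\ref{prop:hintro:bypassfull} is unavailable and there is no earlier level of the induction to invoke. Intuitively, a universal projection $F$ jumps one alternation \emph{up}, into $\overline{\Cs_\ell[n+\frac12]}$---precisely the level you are in the middle of establishing---so any appeal to the Key Lemma or its dual here is either circular or false.

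The missing ingredient is the Splitting Lemma (Lemma~\ref{lem:fo:split}), which is what the paper actually uses: for any \pvari of regular languages $\Ds$ and any $L\in\Ds$, the set $L\cap A^*BA^*$ is a finite union of marked concatenations $PbS$ with $b\in B$ and $P,S\in\Ds$. This is a self-contained, non-inductive statement proved from quotients and Myhill--Nerode. Applying it to each $L_j\in\overline{\Cs_{\ell+1}[n-\frac12]}$ with $B$ the set of marked letters---and noting that a good word has a unique marked position, so the splitting necessarily occurs there---would repair your ``inside'' subcase. Better still, the paper applies the Splitting Lemma once to the whole language $L_{\ell+1,\psi}\in\Cs_{\ell+1}[n+\frac12]$; Fact~\ref{fct:qalt:finalarg} then gives $L_{\ell,\varphi}=\bigcup_{j\le m}\alpha_\ell^{-1}(P_j)\,\overline{c_j}\,\alpha_\ell^{-1}(S_j)$ directly, with no monomial expansion and no case analysis at all, which is both simpler and avoids the pitfall above.
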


Corollary~\ref{cor:qalt:qaltsfr} implies that for all $n \in \nat$, we have $\sic{n+1}(\Cs) \subseteq \Cs[n + \tfrac{1}{2}]$, which is the inclusion~\eqref{eq:concatlog} that remained to be proved, concluding the proof of Theorem~\ref{thm:qalt:maintheo}.

\medskip

It remains to prove Proposition~\ref{prop:qalt:qaltsfr}. Let $\ell,n \in \nat$ and let $\varphi$ be a $\sic{n+1}(\Cs)$ formula whose set of free variables is included in $\{x_1,\dots,x_{\ell}\}$. We construct $L_{\ell,\varphi} \in \Cs_\ell[n+\frac{1}{2}]$ satisfying the conditions in Proposition~\ref{prop:qalt:qaltsfr} by induction on $n$.

Recall that we showed in Lemma~\ref{lem:qalt:bypassbool} that we may assume without loss of generality that $\varphi$ is built from negations of $\sic{n}(\Cs)$ formulas using
existential quantifications. We use a sub-induction on this construction. We start with the base case which is different depending on whether $n = 0$ or $n \geq 1$ (essentially the former amounts to treating atomic formulas while the later is immediate by induction on $n$).

\medskip\noindent
\textbf{Base case.} $\varphi$ is the negation $\varphi = \neg \psi$ of some $\sic{n}(\Cs)$ formula $\psi$.

We first treat the case $n \geq 1$, which is where we use induction on $n$. Indeed, induction yields $L_{\ell,\psi} \in \Cs_\ell[n-\frac{1}{2}]$ such that for any $w \in A^*$ and any assignment $\mu$ of $\{x_1,\dots,x_{\ell}\}$ in $w$, we have,
\[
  {[w]}_\mu \in L_{\ell,\psi} \quad \text{if and only if} \quad w,\mu \models \psi.
\]
Hence, it suffices to choose $L_{\ell,\varphi} = A_\ell^* \setminus L_{\ell,\psi} \in \Cs_\ell[n] \subseteq \Cs_\ell[n+\frac{1}{2}]$, which clearly meets the conditions in Proposition~\ref{prop:qalt:qaltsfr}.

It remains to treat the case $n = 0$. By definition, the $\sic{0}(\Cs)$ formulas are the quantifier-free formulas. Thus, $\varphi = \neg \psi$ is itself a $\sic{0}(\Cs)$ formula. In other words $\varphi$ is a Boolean combination of atomic formulas. Moreover, if we allow the equality predicate in the signature, we may eliminate all negations in $\varphi$. Indeed, using DeMorgan's laws, one may push all negations to atomic formulas. Furthermore, given any atomic formula, its negation is equivalent to a $\sic{0}(\Cs)$ formula without negation (this is where we need equality). Indeed, given $a \in A$, $\neg a(x)$ is equivalent to $\bigvee_{c \neq a} c(x)$. Finally, for any $K \in \Cs$, we have the following (recall that since \Cs is a \vari, $A^* \setminus H $ belongs to $\Cs$ as well),
\begin{itemize}
\item $\neg I_K(x,y)$ is equivalent to $I_{A^*}(y,x) \vee x = y \vee I_{A^*\setminus K}(x,y)$.
\item $\neg P_K(x)$ is equivalent to $P_{A^*\setminus K}(x)$.
\item $\neg S_K(x)$ is equivalent to $S_{A^*\setminus K}(x)$.
\item $\neg N_K$ is equivalent to $N_{A^*\setminus K}$.
\end{itemize}

Hence, we may assume without loss of generality that there are no negation in $\varphi$, which is therefore in $\sic{0}(\Cs)$. Hence, $\varphi$ is built from atomic formulas using conjunctions and disjunctions. We may handle disjunctions and conjunctions in the obvious way. Hence, it suffices to treat the cases when $\varphi$ is atomic.

There are two kinds of atomic formulas: those involving the label predicates and those which are specific to \Cs. Moreover, we also need to treat equality since we used it above to eliminate negations. Let us first assume that $\varphi =a(x_h)$ for some $h \leq \ell$ and some $a \in A$. Consider the set $B$ of all letters in $A_\ell$ whose component $h$ is equal to $1$ and whose component $\ell+1$ is equal to $a$:
\[
  B = \{\overline{b} \in A_\ell \mid \pi_{h}(\overline{b}) = 1 \text{ and } \pi_A(\overline{b}) = a\}.
\]
It now suffices to define $L_{\ell,\varphi} = A_\ell^* B A_\ell^* \in \pol{\Cs_\ell} = \Cs_\ell[\frac{1}{2}]$. It is then immediate from the definitions that $L_{\ell,\varphi}$ satisfies the conditions in Proposition~\ref{prop:qalt:qaltsfr}.

\smallskip

We now assume that $\varphi :=$ ``$x_g = x_h$'' for some $g,h \leq \ell$. We now let $B$ as the set of all letters in $A_\ell$ whose components $g$ and $h$ are both equal to $1$.
\[
  B = \{\overline{b} \in A_\ell \mid \pi_{g}(\overline{b}) = 1 \text{ and } \pi_h(\overline{b}) = 1\}.
\]
It now suffices to define $L_{\ell,\varphi} = A_\ell^* B A_\ell^* \in \pol{\Cs_\ell} = \Cs_\ell[\frac{1}{2}]$. It is then immediate from the definitions that $L_{\ell,\varphi}$ satisfies the conditions in Proposition~\ref{prop:qalt:qaltsfr}.

\smallskip

It remains to treat the predicates given by \Cs. Since the argument is the same for all four kinds, we only treat the case when $\varphi =I_K(x_i,x_j)$, for some $K \in \Cs$ and $i,j \leq \ell$. We may assume that $i\not=j$, since $I_K(x_i,x_i)$ is equivalent to $\bot$. By symmetry, we may then assume that $i<j$. Let $B_i$ and $B_j$ be the following sub-alphabets of $A_\ell$:
\[
  \left\{
  \begin{array}{lll}
    B_i & = & \{\overline{b} \in A_\ell \mid \pi_i(\overline{b}) = 1\},\\
    B_j  & = & \{\overline{b} \in A_\ell \mid \pi_j(\overline{b}) = 1\}.
  \end{array}
  \right.
\]
We define $L_{\ell,\varphi} = A_\ell^* B_i \pi_A^{-1}(K) B_j A_\ell^*$. Recall that the language $\pi_A^{-1}(K) \subseteq A_\ell^*$ belongs to $\Cs_\ell$ (by definition of~$\Cs_\ell$). Hence, we have $L_{\ell,\varphi} \in \pol{\Cs} = \Cs[\frac{1}{2}]$. One may then verify that $L_{\ell,\varphi}$ satisfies the conditions in Proposition~\ref{prop:qalt:qaltsfr}.

\medskip

This concludes the base case of our structural induction on the formula $\varphi$. We now consider the inductive case which are handled uniformly for $n = 0$ and $n \geq 1$.

\medskip\noindent
\textbf{Inductive case: First-order quantification}. Assume that $\varphi$ is of the form $\exists x\ \psi$. Since variables can be renamed, we may assume without loss of generality that $x = x_{\ell+1}$, \emph{i.e.}, $\varphi=\exists x_{\ell+1}\ \psi$. This means that all free variables of $\psi$ belong to $\{x_1,\dots,x_{\ell+1}\}$. Applying induction to $\psi$ yields a language $L_{\ell+1,\psi} \in \Cs_{\ell+1}[n+\frac{1}{2}]$ such that for any $w \in A^*$ and any assignment $\gamma$ of $\{x_1,\dots,x_{\ell+1}\}$ in $w$, we have,
\[
  {[w]}_\gamma \in L_{\ell+1,\psi} \quad \text{if and only if} \quad w,\gamma \models \psi.
\]
We first define $L_{\ell,\varphi} \in  \Cs_{\ell}[n+\frac{1}{2}]$ and then prove that it satisfies~\eqref{eq:w_mu-in-l_ell}. Given any word $u \in A_{\ell+1}^*$, we say that $u$ is \emph{good} when there exists exactly one position in $u$ whose label $\overline{b}$ satisfies $\pi_{\ell+1}(\overline{b}) = 1$ (which implies that the labels~$\overline{c}$ of all other positions satisfy $\pi_{\ell+1}(\overline{c}) = 0$).  Let $\pi_{1,\dots,\ell,A}: A_{\ell+1}^* \to A_{\ell}^*$ be the projection which discards component $\ell+1$ in words belonging to $A_{\ell+1}^*$. More precisely,
\[
  \pi_{1,\dots,\ell,A}(i_1,\dots,i_{\ell+1},a) = (i_1,\dots,i_\ell,a).
\]
We now define $L_{\ell,\varphi} \subseteq A_\ell^*$ as the following language:
\[
  L_{\ell,\varphi} = \{\pi_{1,\dots,\ell,A}(u) \mid \text{$u \in L_{\ell+1,\psi}$ and $u$ is good}\}.
\]
It remains to prove that $L_{\ell,\varphi} \in \Cs_{\ell}[n+\frac{1}{2}]$ and that it satisfies Property~\eqref{eq:w_mu-in-l_ell} from Proposition~\ref{prop:qalt:qaltsfr}.
We first deal with Property~\eqref{eq:w_mu-in-l_ell}.

\begin{lemma} \label{lem:qalt:exist1}
  Let $w \in A^*$ and let $\mu$ be an assignment of $\{x_1,\dots,x_{\ell}\}$ in the positions of $w$. Then, we have,
  \[
    {[w]}_\mu \in L_{\ell,\varphi} \quad \text{if and only if} \quad w,\mu \models \varphi.
  \]
\end{lemma}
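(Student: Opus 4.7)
The plan is to prove Lemma~\ref{lem:qalt:exist1} as a straightforward unpacking of the definitions, leveraging the outer induction hypothesis applied to $\psi$. The key observation is that the map $(u, i) \mapsto [u]_{\nu[x_{\ell+1} \mapsto i]}$, where $\nu$ is an assignment of $\{x_1,\dots,x_\ell\}$ in $u$, establishes a bijection between pairs consisting of a word in $A^*$ equipped with an assignment of $\{x_1,\dots,x_{\ell+1}\}$ on one side, and \emph{good} words in $A_{\ell+1}^*$ on the other. Indeed, a good word has a unique position marked in component $\ell+1$, which recovers the value of~$x_{\ell+1}$, while the remaining components encode the rest of the assignment and the underlying letter. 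Moreover, projecting a good word via $\pi_{1,\dots,\ell,A}$ simply forgets the $x_{\ell+1}$-component, yielding exactly $[w]_\mu$.

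I would then argue the two inclusions separately. For the forward direction, assume $[w]_\mu \in L_{\ell,\varphi}$. By definition of $L_{\ell,\varphi}$, there is a good word $u \in L_{\ell+1,\psi}$ with $\pi_{1,\dots,\ell,A}(u) = [w]_\mu$. Using the bijection above, $u = [w]_{\mu[x_{\ell+1} \mapsto i]}$ for the unique position $i$ of $u$ whose $(\ell+1)$-component is~$1$. The outer induction hypothesis applied to~$\psi$ (which is a $\sic{n+1}(\Cs)$ formula with free variables in $\{x_1,\dots,x_{\ell+1}\}$) then yields $w, \mu[x_{\ell+1} \mapsto i] \models \psi$, whence $w,\mu \models \exists x_{\ell+1}\, \psi = \varphi$.

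For the reverse direction, assume $w,\mu \models \varphi$. Unfolding the existential quantifier gives some position $i$ in $w$ such that $w, \mu[x_{\ell+1} \mapsto i] \models \psi$. Let $u = [w]_{\mu[x_{\ell+1} \mapsto i]}$. The induction hypothesis on $\psi$ yields $u \in L_{\ell+1,\psi}$. Moreover, $u$ is good by construction, since exactly one position (namely $i$) has its $(\ell+1)$-component equal to $1$. Finally, $\pi_{1,\dots,\ell,A}(u) = [w]_\mu$ because erasing the $(\ell+1)$-component of $[w]_{\mu[x_{\ell+1} \mapsto i]}$ produces $[w]_\mu$. Hence $[w]_\mu \in L_{\ell,\varphi}$ by definition.

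The only subtle point is verifying the correspondence between good words and extended assignments; everything else is bookkeeping. I do not anticipate any real obstacle here, since the definition of $L_{\ell,\varphi}$ was explicitly crafted to reflect existential quantification via the projection $\pi_{1,\dots,\ell,A}$ together with the goodness condition. A separate argument (handled elsewhere in the surrounding inductive proof of Proposition~\ref{prop:qalt:qaltsfr}, via Fact~\ref{fct:qalt:invmorph} and closure under marked concatenation of half levels) will be needed to show the class membership $L_{\ell,\varphi} \in \Cs_\ell[n+\tfrac{1}{2}]$, but that is outside the scope of the present lemma, which only asserts the semantic equivalence~\eqref{eq:w_mu-in-l_ell}.
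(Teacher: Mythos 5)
Your proposal is correct and follows essentially the same route as the paper's proof: decode the good witness $u$ as $[w]_{\mu[x_{\ell+1}\mapsto i]}$ for the unique position $i$ marked in component $\ell+1$, encode $\mu[x_{\ell+1}\mapsto i]$ back as a good word in the reverse direction, and invoke the inductive correspondence for $\psi$ in both cases. One small imprecision worth noting: the map $(w,\gamma)\mapsto [w]_\gamma$ is \emph{not} a bijection onto all good words, since goodness only constrains component $\ell+1$ to carry a single $1$, whereas a genuine encoding $[w]_\gamma$ also has a unique $1$ in each of components $1,\dots,\ell$; this is harmless here because the forward direction additionally imposes $\pi_{1,\dots,\ell,A}(u)=[w]_\mu$, which already forces the first $\ell$ components of $u$ to be well-formed, so the decoding goes through exactly as in the paper.
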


\begin{proof}
  Assume first that ${[w]}_\mu \in L_{\ell,\varphi}$. By definition, there exists $u \in L_{\ell+1,\psi}$ which is good and such that $\pi_{1,\dots,\ell,A}(u) = {[w]}_\mu$. Since $u$ is good, there exists exactly one position in $u$ whose label $\overline{b}$ satisfies $\pi_{\ell+1}(\overline{b}) = 1$. Let $i$ be this position and let $\gamma$ be the assignment $\mu[x_{\ell+1} \mapsto i]$ of $\{x_1,\dots,x_{\ell+1}\}$ in $w$. It follows immediately from the definitions that $u = {[w]}_{\gamma}$.
  Since $u \in L_{\ell+1,\psi}$, it follows that $w,\gamma \models \psi$, which exactly says that $w,\mu \models \varphi$ since $\varphi=\exists x_{\ell+1}\ \psi$ and $\gamma = \mu[x_{\ell+1} \mapsto i]$.

  Conversely, assume that $w,\mu \models \varphi$. It follows that there exists a position $i$ in $w$ such that $w,\mu[x_{\ell+1} \mapsto i] \models \psi$. Let $\gamma = \mu[x_{\ell+1} \mapsto i]$. By definition of $L_{\ell+1,\psi}$, we have, ${[w]}_{\gamma} \in L_{\ell+1,\psi}$.  Clearly, ${[w]}_{\gamma}$ is good and therefore, we have, ${[w]}_{\mu} = \pi_{1,\dots,\ell,A}({[w]}_{\gamma}) \in L_{\ell,\varphi}$.  This concludes the proof.
\end{proof}

\noindent
It remains to prove that $L_{\ell,\varphi} \in \Cs_{\ell}[n+\frac{1}{2}]$. The argument is based on the next lemma.

\begin{lemma}[Splitting lemma]\label{lem:fo:split}
  Let $\Cs$ be a \pvari of regular languages over $A$ and let $B \subseteq A$. Consider a language $L \in \Cs$. Then, $L \cap A^*BA^*$ is a finite union of languages of the form $PbS$ where $b \in B$ and $P,S \in \Cs$.
\end{lemma}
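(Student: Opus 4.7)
The plan is to prove the Splitting Lemma by giving an explicit finite decomposition of $L\cap A^*BA^*$. First, using the obvious union $L\cap A^*BA^*=\bigcup_{b\in B}(L\cap A^*bA^*)$, it suffices to handle a single letter $b$.

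The key idea is to pick the right canonical decomposition. For each word $u\in A^*$, the set $(ub)^{-1}L$ is in $\Cs$ by closure under quotient, and since $L$ is regular, there are only finitely many distinct such right quotients by Theorem~\ref{thm:auto:nerode}. Let $Q_1,\ldots,Q_n$ enumerate them. For each index $i$, define
\[
  P_i=\bigcap_{v\in Q_i}L(bv)^{-1}.
\]
Since $L$ is regular, there are only finitely many distinct right quotients $L(bv)^{-1}$, so $P_i$ is a \emph{finite} intersection of members of $\Cs$ (each $L(bv)^{-1}\in\Cs$ by closure under quotient). Because $\Cs$ is a lattice, $P_i\in\Cs$. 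I would then prove the key identity
\[
  L\cap A^*bA^*=\bigcup_{i=1}^{n}P_i\,b\,Q_i.
\]
For the inclusion $\subseteq$, take $w=ubv\in L$ and let $i$ be such that $(ub)^{-1}L=Q_i$; then $v\in Q_i$, and the defining condition of $P_i$ is exactly ``$ubv'\in L$ for all $v'\in Q_i$'', which holds by choice of $i$, so $u\in P_i$. For $\supseteq$, observe that for $x\in P_i$ and $y\in Q_i$ we have $x\in L(by)^{-1}$, hence $xby\in L$.

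Assembling these decompositions over $b\in B$ gives a finite union of languages of the form $P\,b\,S$ with $P,S\in\Cs$ and $b\in B$, which is precisely the desired conclusion. The main point to be careful about is that both the number of relevant right quotients $(ub)^{-1}L$ \emph{and} the number of right quotients $L(bv)^{-1}$ are finite—this dual use of Myhill--Nerode is what makes $P_i$ a finite intersection and keeps the outer union finite. No closure under complement is required, so the argument works in the \pvari setting and not merely for Boolean algebras.
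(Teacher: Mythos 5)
Your proof is correct and takes essentially the same route as the paper's: the paper writes the decomposition as a union over all $u\in A^*$ of $\bigl(\bigcap_{v\in(ub)^{-1}L}L(bv)^{-1}\bigr)\cdot b\cdot(ub)^{-1}L$ and then invokes Myhill--Nerode to see that the union and intersections collapse to finitely many distinct terms, whereas you enumerate the finitely many distinct quotients up front. This is a purely presentational difference; the decomposition, the two inclusions, and the dual appeal to Myhill--Nerode are identical.
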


\begin{proof}
  First observe that we may assume without loss of generality that $B$ is a singleton $\{b\}$. Indeed, we have
  $L \cap A^*BA^* = \bigcup_{b \in B} L \cap A^*bA^*$.
  Hence, it suffices to apply the lemma in the singleton case for each language $L \cap A^*bA^*$. Therefore, we now assume that $B$ is a singleton $\{b\}$.

  For any $u \in A^*$, let $Q_u = (ub)^{-1}L = \{v \in A^* \mid ubv \in L\}$. Consider the following language $L'$:
  \begin{equation} \label{eq:fo:split}
    L' = \bigcup_{u \in A^*} \left(\bigcap_{v \in Q_u} L(bv)^{-1}\right) \cdot b \cdot (ub)^{-1}L
  \end{equation}
  We claim that $L \cap A^*bA^* = L'$. Before we prove this equality, let us explain why it concludes the proof. Since $L \in \Cs$, we know by hypothesis on~\Cs that $L$ is regular. Hence, it follows from Myhill-Nerode Theorem (Theorem~\ref{thm:auto:nerode}) that there are finitely many quotients of $L$. In particular, this means that in~\eqref{eq:fo:split}, the union over all $u \in A^*$ and the intersections over all $v \in Q_u$ are actually finite. Moreover, since $\Cs$ is a \pvari, we obtain that for any $u \in A^*$,
  \[
    \bigcap_{v \in Q_u} L(bv)^{-1} \in \Cs \quad \text{and} \quad (ub)^{-1}L \in \Cs.
  \]
  Hence, this conclude the proof of Lemma~\ref{lem:fo:split}: $L \cap A^*bA^*$ is a finite union of languages of the form $PbS$ where $P,S \in \Cs$. It remains to prove that $L \cap A^*bA^* = L'$.

  To prove that $L \cap A^*bA^* = L'$, assume first that $w \in L \cap A^*bA^*$. It follows that $w = ubv' \in L$ for some $u,v' \in A^*$. Hence, $v' \in (ub)^{-1}L$. Moreover, $u \in L(bv)^{-1}$ for any $v \in Q_u$ by definition. Hence, $u \in \bigcap_{v \in Q_u} L(bv)^{-1}$. We now conclude that,
  \[
    w \in  \left(\bigcap_{v \in Q_u} L(bv)^{-1}\right) \cdot b \cdot (ub)^{-1}L.
  \]
  Therefore, $w \in L'$. We have proved that $L \cap A^*bA^* \subseteq L'$.

  Conversely, assume that $w \in L'$. We obtain $u \in A^*$ such that $w$ admits a decomposition $w = u'bv'$ with $u' \in \bigcap_{v \in Q_u} L(bv)^{-1}$ and $v' \in (ub)^{-1}L$. In particular, since $v' \in (ub)^{-1}L$, we have $ubv' \in L$ which means that $v' \in Q_u$ by definition. Combined with the fact that $u' \in \bigcap_{v \in Q_u} L(bv)^{-1}$, this yields $u' \in L(bv')^{-1}$, which exactly says that $w = u'bv' \in L \cap A^*bA^*$.
\end{proof}

Let $B \subseteq A_{\ell + 1}$ be the set of all letters in $A_{\ell+1}$ whose component $\ell+1$ is $1$:
\[
  B = \{\overline{b} \in A_{\ell+1} \mid \pi_{\ell+1}(\overline{b}) = 1\}.
\]
Note that by definition, any \emph{good} word $u \in A_{\ell+1}^*$ belongs to $A_{\ell+1}^*BA_{\ell+1}^*$. Since by Proposition~\ref{prop:hintro:concatvari}, $\Cs[n+\frac{1}{2}]$ is a \pvari of regular languages, we may apply Lemma~\ref{lem:fo:split} to $L_{\ell+1,\psi} \in \Cs_{\ell+1}[n+\frac{1}{2}]$:
\begin{equation}
  \label{eq:fo:papert}
  L_{\ell+1,\psi} \cap A_{\ell+1}^*BA_{\ell+1}^* = \bigcup_{j \leq m} P_j\overline{b_j}S_j
\end{equation}
where for all $j \leq m$, $\overline{b_j} \in B$ and $P_j,S_j \in \Cs_{\ell+1}[n+\frac{1}{2}]$. For all $j \leq m$, let $\overline{c_j} = \pi_{1,\dots,\ell,A}(\overline{b_j}) \in A_\ell$. Recall that $\alpha_\ell: A_\ell^* \to A_{\ell+1}^*$ is defined as the following morphism. For any letter $(i_1,\dots,i_\ell,a) \in A_\ell$, we have $\alpha(i_1,\dots,i_\ell,a) = (i_1,\dots,i_\ell,0,a) \in A_{\ell+1}$. We have the following fact.

\begin{fact} \label{fct:qalt:finalarg}
  $L_{\ell,\varphi} = \bigcup_{j \leq m} \alpha_\ell^{-1}(P_j) \overline{c_j} \alpha_\ell^{-1}(S_j)$
\end{fact}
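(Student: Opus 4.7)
The plan is to prove the two inclusions of the claimed equality separately, using the key observation that the morphism $\alpha_\ell$ exactly parameterises those letters of $A_{\ell+1}$ whose $(\ell+1)$-th component is $0$. Concretely, a word $u \in A_{\ell+1}^*$ lies in the image of $\alpha_\ell$ if and only if every letter of $u$ has $\pi_{\ell+1} = 0$, and in that case there is a unique $w \in A_\ell^*$ with $\alpha_\ell(w) = u$, namely $w = \pi_{1,\dots,\ell,A}(u)$. Since a \emph{good} word has exactly one position carrying a letter from $B$ (the letters with $\pi_{\ell+1}=1$), it can be factored uniquely as $\alpha_\ell(w_1)\cdot \overline{b}\cdot \alpha_\ell(w_2)$ with $\overline{b}\in B$. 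This factorisation is the bridge between the definition of $L_{\ell,\varphi}$ and the decomposition~\eqref{eq:fo:papert}.

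For the inclusion $\supseteq$, I would start from a word $w\in\alpha_\ell^{-1}(P_j)\cdot \overline{c_j}\cdot\alpha_\ell^{-1}(S_j)$, write it as $w = w_1\overline{c_j}w_2$ with $\alpha_\ell(w_1)\in P_j$ and $\alpha_\ell(w_2)\in S_j$, and set $u = \alpha_\ell(w_1)\cdot\overline{b_j}\cdot\alpha_\ell(w_2)$. Then $u\in P_j\overline{b_j}S_j\subseteq L_{\ell+1,\psi}$ by~\eqref{eq:fo:papert}; $u$ is good because $\overline{b_j}\in B$ is its unique $B$-letter (the others come from the image of $\alpha_\ell$ hence have $\pi_{\ell+1}=0$); and by definition of $\overline{c_j}=\pi_{1,\dots,\ell,A}(\overline{b_j})$ we get $\pi_{1,\dots,\ell,A}(u) = w_1\overline{c_j}w_2 = w$. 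Hence $w\in L_{\ell,\varphi}$.

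For the converse inclusion, take $w\in L_{\ell,\varphi}$ and pick a good word $u\in L_{\ell+1,\psi}$ with $\pi_{1,\dots,\ell,A}(u)=w$. Since $u$ is good, it belongs to $A_{\ell+1}^*BA_{\ell+1}^*$, and then~\eqref{eq:fo:papert} gives an index $j$ and a decomposition $u = u_1\overline{b_j}u_2$ with $u_1\in P_j$ and $u_2\in S_j$. The ``good'' property forces $\overline{b_j}$ to be the unique $B$-position of $u$, so every letter of $u_1$ and $u_2$ has $\pi_{\ell+1}=0$; consequently $u_1 = \alpha_\ell(w_1)$ and $u_2=\alpha_\ell(w_2)$ for uniquely determined $w_1,w_2\in A_\ell^*$, which yields $w_1\in\alpha_\ell^{-1}(P_j)$ and $w_2\in\alpha_\ell^{-1}(S_j)$. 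Applying $\pi_{1,\dots,\ell,A}$ gives $w = w_1\overline{c_j}w_2$, exhibiting $w$ as an element of the right-hand side.

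I do not expect a genuine obstacle here: the only subtlety is the bookkeeping between the three morphisms $\alpha_\ell$, $\pi_{1,\dots,\ell,A}$, and $\pi_{\ell+1}$, together with correctly exploiting the uniqueness of the $B$-position in a good word. Once this dictionary is set up, both inclusions are forced. The real content of the argument is therefore concentrated in the splitting lemma and the closure properties of $\Cs_\ell[n+\tfrac12]$, which will be used immediately after this fact to conclude that $L_{\ell,\varphi}\in\Cs_\ell[n+\tfrac12]$ via Fact~\ref{fct:qalt:invmorph} and closure under marked concatenation and union.
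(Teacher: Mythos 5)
Your proof is correct and matches the paper's argument essentially step for step: both directions exploit that a good word in $A_{\ell+1}^*$ factors uniquely as $\alpha_\ell(w_1)\,\overline{b}\,\alpha_\ell(w_2)$ with $\overline{b}\in B$, then translate between the decomposition~\eqref{eq:fo:papert} of $L_{\ell+1,\psi}\cap A_{\ell+1}^*BA_{\ell+1}^*$ and the projection $\pi_{1,\dots,\ell,A}$. The only difference is cosmetic (you state the bijection principle up front rather than rediscovering it mid-proof), so nothing needs revising.
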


\begin{proof}
  We first consider $v \in  L_{\ell,\varphi}$. We have to find $j \leq m$ such that $v\in\alpha^{-1}(P_j) \overline{c_j} \alpha^{-1}(S_j)$. By definition of $L_{\ell,\varphi}$, we get $u \in L_{\ell+1,\psi}$ which is good and such that $v = \pi_{1,\dots,\ell,A}(u)$. Since $u$ is good, we have,
  \[
    u \in L_{\ell+1,\psi} \cap A_{\ell+1}^*BA_{\ell+1}^*
  \]
  It then follows from~\eqref{eq:fo:papert} that we have $u \in P_j\overline{b_j}S_j$ for some $j \leq m$. Hence, we may decompose $u$ as $u = u_1\overline{b_j} u_2$ with $u_1 \in P_j$ and $u_2 \in S_j$. Therefore, we have,
  \[
    v = \pi_{1,\dots,\ell,A}(u) = \pi_{1,\dots,\ell,A}(u_1 \overline{b_j} u_2) = \pi_{1,\dots,\ell,A}(u_1) \overline{c_j} \pi_{1,\dots,\ell,A}(u_2)
  \]
  Finally, since $u = u_1\overline{b_j} u_2$ is good and $\overline{b_j} \in B$, we know that $\overline{b_j}$ is the only letter in $u$ whose component $\ell+1$ is equal to $1$. Hence, the component $\ell+1$ of any letter in $u_1$ or $u_2$ is $0$. By definition of $\alpha$, it follows that $\alpha(\pi_{1,\dots,\ell,A}(u_1)) = u_1$ and $\alpha(\pi_{1,\dots,\ell,A}(u_2)) = u_2$. Thus, since $u_1 \in P_j$ and $u_2 \in S_j$, we get $\pi_{1,\dots,\ell,A}(u_1) \in \alpha^{-1}(P_j)$ and $\pi_{1,\dots,\ell,A}(u_2) \in \alpha^{-1}(S_j)$. Finally, this yields $v \in \alpha^{-1}(P_j) \overline{c_j} \alpha^{-1}(S_j)$ which concludes this direction of the proof.

  Conversely, assume that $v \in \alpha^{-1}(P_j) \overline{c_j} \alpha^{-1}(S_j)$ for some $j \leq m$. We have to prove that $v \in  L_{\ell,\varphi}$. By hypothesis, we have $v = v_1 \overline{c_j} v_2$ with $v_1 \in \alpha^{-1}(P_j)$ and $v_2 \in \alpha^{-1}(S_j)$. Consider the following word $u \in A_{\ell+1}^*$:
  \[
    u = \alpha(v_1) \overline{b_j} \alpha(v_2) \in P_j \overline{b_j} S_j.
  \]
  Observe that by definition, $u$ is good and $v = \pi_{1,\dots,\ell,A}(u)$. Moreover, it follows from~\eqref{eq:fo:papert} that $u \in L_{\ell+1,\psi}$. Thus, $v \in L_{\ell,\varphi}$ by definition of $L_{\ell,\varphi}$.
\end{proof}

Fact~\ref{fct:qalt:finalarg} concludes the proof since it is immediate from Fact~\ref{fct:qalt:invmorph} that for all $j \leq m$, $\alpha_\ell^{-1}(P_j)$ and $\alpha_\ell^{-1}(S_j)$ both belong to $\Cs_{\ell}[n+\frac{1}{2}]$.  Thus, we obtain Fact~\ref{fct:qalt:finalarg} that the language $L_{\ell,\varphi}$ is a finite union of marked concatenations of languages in $\Cs_{\ell}[n+\frac{1}{2}]$ and therefore belongs to $\Cs_{\ell}[n+\frac{1}{2}]$ itself.

\medskip\noindent\textbf{Back to the dot-depth and Straubing-Thérien hierarchies.}
We now apply Theorem~\ref{thm:qalt:maintheo} to the two classical examples: the dot-depth and Straubing-Thérien hierarchies. Note that the logical characterization of the dot-depth hierarchy is historically the first result of this kind which was discovered by Thomas~\cite{ThomEqu}. Therefore, Theorem~\ref{thm:qalt:maintheo} is a generalization of this original result.

Recall that the basis of the dot-depth hierarchy is $\dotzer =\{\emptyset,\{\varepsilon\},A^+,A^*\}$. It turns out that the associated variant of first-order logic ($\fo(\dotzer)$) is exactly  $\fo(\sigenr)$. Indeed, according to Theorem~\ref{thm:qalt:maintheo} the predicates available in $\fo(\dotzer)$ are as follows:
\begin{enumerate}
\item The label predicates.
\item The predicate $I_{A^*}$, which by definition is equivalent to the order predicate~$<$.
\item The predicates $P_{A^*},S_{A^*},N_{A^*}$ which always hold, hence they are equivalent to~$\top$.
\item The predicates $I_\emptyset,P_\emptyset,S_\emptyset,N_\emptyset$ which never hold, hence they are equivalent to  $\bot$.
\item The predicate $I_{\varepsilon}$, which by definition is equivalent to the successor predicate~$+1$.
\item The predicate $I_{A^+}$, and by definition $I_{A^+}$ is equivalent to $\neg I_{\varepsilon}$.
\end{enumerate}
Thus, the only useful predicates in $\fo(\dotdp{0})$ are exactly those that are available in $\fo(\sigenr)$: the label predicates, the linear order predicate, and the successor predicate. Therefore, we re-obtain Theorem~\ref{thm:thomas-citethomequ} as a corollary of Theorem~\ref{thm:qalt:maintheo}.

On the other hand, the basis of the Straubing-Thérien hierarchy is $\dotzer =\{\emptyset,A^*\}$, so that we miss the predicates $I_{\varepsilon}$ and $I_{A^+}$. Therefore, Theorem~\ref{thm:perr-pin-citepp} is also a corollary of Theorem~\ref{thm:qalt:maintheo}.

\section{Conclusion}\label{sec:conc}
In this paper, we surveyed 50 years of progress in the understanding of concatenation hierarchies. We presented a new proof that closure under intersection is implied by polynomial closure if the class we start from is a \pvari. We then established that if level 0 is a finite \vari, then the corresponding hierarchy is strict and we stated that its levels $\frac12$, 1, $\frac32$ have decidable separation. We stated a result transferring decidability of separation for some level to decidability of membership for the next half level, entailing that level $\frac52$ has decidable membership for finitely based hierarchies. We also observed that in the Straubing-Thérien hierarchy, level $q\geq1$ ($q\in \nat$ or $q\in\frac12+\nat$) coincides with level $q-1$ in a concatenation hierarchy whose basis is finite, hence we got decidability of separation for levels  $\frac12$,~1,~$\frac32$,~2 and~$\frac52$ in the Straubing-Thérien hierarchy, and decidability of membership for level~$\frac72$. We then transferred all these results to the dot-depth hierarchy via a generic construction. Finally, we proved a generic logical definition of concatenation hierarchies encompassing the ones established for the dot-depth and the Straubing-Thérien hierarchies.

\smallskip
Some of the research directions following this work are obvious: it is desirable to generalize this approach to capture all levels of such a concatenation hierarchy. This seems however to be difficult. We would also want to test such techniques for other structures than words, for instance, trees. Another short-term interesting topic is to reprove and generalize results that were obtained in the particular case of the Straubing-Thérien hierarchy regarding \emph{unambiguous closure}. We leave this question for a future~work.
 
\printbibliography

\end{document}